\documentclass[11pt]{article}
\usepackage[utf8]{inputenc}
\usepackage[margin=1in]{geometry}

\usepackage{mathtools}
\usepackage{amsmath,amssymb,shortcuts,mathtools,cleveref,tikz,subcaption,amsthm,amssymb}

\usepackage{multirow,booktabs,url}
\usepackage{natbib}

\usepackage[figuresright]{rotating}

\RequirePackage[normalem]{ulem} %
\PassOptionsToPackage{authoryear,sort}{natbib}
\usetikzlibrary{positioning,arrows,decorations,decorations.markings,decorations.pathreplacing,decorations.pathmorphing,shadows,positioning,arrows.meta,matrix,fit,shapes.swigs}
\usetikzlibrary{shapes,snakes}
\Crefname{assumption}{Assumption}{Assumptions}
\Crefname{exmp}{Example}{Examples}

\newcommand{\Acal}{\mathcal{A}}
\newcommand{\Xcal}{\mathcal{X}}
\newcommand{\Ycal}{\mathcal{Y}}
\newcommand{\Scal}{\mathcal{S}}
\newcommand{\Ucal}{\mathcal{U}}

\newcommand{\Vcal}{\mathcal{V}}
\newcommand{\Wcal}{\mathcal{W}}
\newcommand{\Sc}{S_{3}}
\newcommand{\Sb}{S_{2}}
\newcommand{\Sa}{S_{1}}
\newcommand{\hk}{\hat{h}_k}
\newcommand{\qk}{\hat{q}_k}
\newcommand{\out}{\text{OTC}}
\newcommand{\sel}{\text{SEL}}
\newcommand{\dr}{\text{DR}}
\newcommand{\nEa}{n_{E}^{(a)}}
\newcommand{\nOka}{n_{O, k}^{(a)}}

\newcommand{\score}{\mathrm{SC}}

\newtheorem{theorem}{Theorem}
\newtheorem{corollary}{Corollary}
\newtheorem{lemma}{Lemma}
\newtheorem{proposition}{Proposition}
\newtheorem{assumption}{Assumption}
\newtheorem{definition}{Definition}
\theoremstyle{definition}
\newtheorem{example}{Example}
\newtheorem{remark}{Remark}

\newcommand*\samethanks[1][\value{footnote}]{\footnotemark[#1]}
\title{Long-term Causal Inference Under Persistent Confounding via Data Combination}
\author{Guido Imbens$^1$\footnote{Corresponding author: imbens@stanford.edu}~\thanks{Alphabetical order.}, ~Nathan Kallus$^2$\samethanks, ~Xiaojie Mao$^3$\samethanks, ~Yuhao Wang$^{4, 5}$\samethanks}
\date{\small
$^1$ Graduate School of Business, Stanford University, Stanford, CA 94305, USA \\
$^2$ Cornell Tech, Cornell University, New York, NY 10044, USA; \\
$^3$ School of Economics and Management, Tsinghua University, Beijing 100084, China; \\
$^4$ Institute for Interdisciplinary Information Sciences, Tsinghua University, Beijing 100084, China; \\
$^5$ Shanghai Qi Zhi Institute, Shanghai 200232, China.
}

\begin{document}
\maketitle

\begin{abstract}
We study the identification and estimation of long-term treatment effects by combining short-term experimental data and long-term observational data subject to unobserved confounding.
This problem arises often when concerned with long-term treatment effects since experiments are often short-term due to operational necessity while observational data can be more easily collected over longer time frames but may be subject to confounding.
In this paper, we  tackle the challenge of \emph{persistent} confounding: unobserved confounders that can simultaneously affect the treatment, short-term outcomes, and long-term outcome. In particular, persistent confounding invalidates identification strategies in previous approaches to this problem. 
To address this challenge, we exploit the sequential structure of multiple short-term outcomes and develop several novel identification strategies for the average long-term treatment effect. 
Based on these,
we develop estimation and inference methods with asymptotic guarantees.
To demonstrate the importance of handling persistent confounders, we apply our methods to estimate the effect of a job training program on long-term employment using semi-synthetic data.\\

\end{abstract}

\noindent\emph{Keywords}: data combination, doubly robust estimation, long-term causal inference, proxy variables, unobserved confounding. 

\section{Introduction}\label{sec: intro}
Empirical researchers and decision-makers are often interested in learning the long-term treatment effects of interventions.
For example, labor economists are interested in the effect of early childhood education on lifetime earnings \citep{chetty2011does}, 
marketers are interested in the effects of promotions on long-term revenue \citep{yang2020targeting}, 
online platforms are interested in the effects of webpage designs on users' long-term behaviors \citep{hohnhold2015focusing}.
Since a long-term effect can be quite different from short-term effects \citep{kohavi2012trustworthy}, accurately evaluating the long-term effect is both difficult and crucial for comprehensively understanding the intervention of interest. 

Learning long-term treatment effects is very challenging in practice because long-term outcomes are seldom observed within the time frame of randomized experiments.
For example, randomized experiments in online platforms (often termed A/B tests within that context) usually last for only a few weeks, and practitioners in the industry commonly recognize evaluation of long-term effects as a paramount challenge \citep{gupta2019top}.
In contrast, observational data are often easier and cheaper to acquire and can be collected retroactively, so they are more likely to include long-term outcome observations. 
Nevertheless, observational data are susceptible to unmeasured confounding, which can lead to biased treatment effect estimates. 
Therefore, long-term causal inference is very challenging using only experimental or observational data, either due to missing long-term outcome (in experimental data) or unmeasured confounding (in observational data). 

In this paper, we study the identification and estimation of long-term treatment effects by combining \emph{both} experimental and observational data. 
By combining these two different types of data, we hope to leverage their complementary strengths, i.e., the randomized treatment assignments in the experimental data and the long-term observations in the observational data. 
In particular, we aim to tackle the presence of \emph{persistent confounding} in the observational data, which cannot be generally ruled out. That is, we allow some unobserved confounders to have persistent effects in the sense that they can affect \emph{not only} the short-term outcomes \emph{but also} the long-term outcome.  
Persistent confounders are prevalent in long-term studies. 
For example, in studying early childhood education's effect on lifetime earnings, students' innate intelligence and/or familial support systems can affect both short-term and long-term earnings. 
Our setup is summarized in the causal diagrams in \Cref{figure: DAG-a}.

A few previous works also consider data combination for long-term causal inference.
\cite{athey2019surrogate}, in a setting where the observational sample contains no information on the treatment, rely on a surrogate criterion first proposed by \cite{prentice1989surrogate}.  
 \cite{athey2020combining}, in the same setting as considered in the current paper, assume a latent unconfoundedness condition. 
{While these conditions make no explicit reference to persistent confounding and its absence, a nontrivial persistent confounder can generally violate these (see \Cref{sec: validation} for details). At the same time, both settings are \textit{just identified}, meaning the conditions imposed are minimal, {so that if a condition is dropped then another would be needed in its place to  guarantee identification.}

{In this paper, we leverage an assumed \emph{sequential} structure between \emph{multiple} short-term outcomes to tackle long-term causal inference in the presence of persistent confounders.
Our new identification and estimation strategies are based on using short-term outcomes as \emph{proxy variables} for the persistent confounders.}  
To the best of our knowledge, this is the first time that the internal structure of short-term outcomes is used to address unmeasured confounding in long-term causal inference. 
Indeed, although \cite{athey2019surrogate,athey2020combining} also advocate using multiple short-term outcomes, they view them as a whole without leveraging their internal structure.  
Our work therefore also provides new insights on the special role of using multiple short-term outcomes in long-term causal inference. 

Our contributions are summarized as follows:
\begin{itemize}
\item {We propose several novel identification strategies for the average long-term treatment effect in the presence of persistent confounders.
 These identification strategies rely on three groups of short-term outcomes, where two of these groups are used as informative proxy variables for the unobserved confounders (\Cref{assump: completeness}). These short-term outcomes, together with the long-term outcome, are assumed to follow a sequential structure encapsulated in a conditional independence condition (\Cref{assump: CI}).}
\item Based on the identification strategies, we propose estimators for the average long-term treatment effect. These estimators involve fitting two nuisance functions that are defined as solutions to two conditional moment equations. 
Our estimation procedures accommodate any nuisance estimator among many existing ones. We provide high level conditions for the asymptotic consistency and asymptotic normality of our estimators. 
\item We evaluate the performance of our proposed estimators based on large-scale experimental data for a job-training program with long-term employment observations. We combine part of the experimental data and some semi-synthetic observational data with realistic  persistent confounding.
We demonstrate that due to the persistent confounding, our proposed estimators have smaller error than estimators that do not handle persistent confounding.   
\end{itemize}

The rest of this paper is organized as follows. We first review the related literature in \Cref{sec: literature} and set up our problem in \Cref{sec: setup}. Then we discuss our identification strategies in \Cref{sec: identification}, where each subsection features one different identification strategy. In \Cref{sec: estimation}, we present our long-term treatment effect estimators and analyze their asymptotic properties.
We further dicuss some extensions in \Cref{sec: extension}.
In \Cref{sec: experiment}, we illustrate the performance of methods in a semi-synthetic experiment.
We finally conclude this paper in \Cref{sec: conclusion}. 

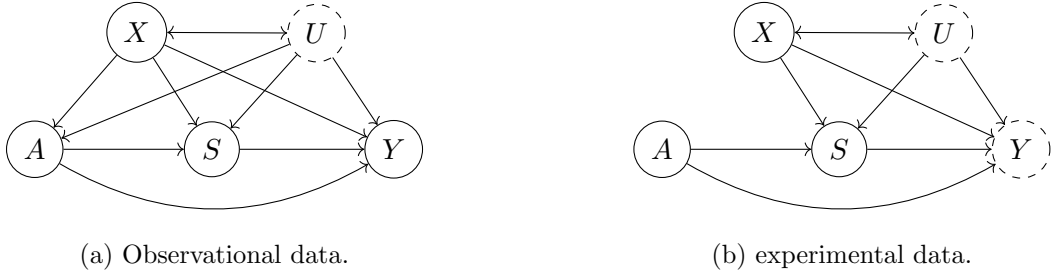
\begin{figure}[t]
\begin{subfigure}[b]{0.5\textwidth}
\centering 
\begin{tikzpicture}
\node[draw, circle, text centered] (A) {$A$};
\node[draw, circle, text centered, right=4cm of A] (Y) {$Y$};
\node[draw, circle, text centered, right=1.6cm of A] (S) {$S$};
\node[draw, circle, text centered, above right=1cm and 0.8cm of A] (X) {$X$};
\node[draw, circle, dashed, text centered, above right=1cm and 3.2cm of A] (U) {$U$};

\draw[->] (A) -- (S);
\draw[->] (A) to [bend right] (Y);
\draw[->] (S) -- (Y);
\draw[->] (X) -- (A);
\draw[->] (X) -- (S);
\draw[->] (X) -- (Y);
\draw[->] (U) -- (A);
\draw[->] (U) -- (S);
\draw[->] (U) -- (Y);
\draw[<->] (X) -- (U);
\end{tikzpicture}
\caption{Observational data.}
\label{figure: DAG-obs-a}
\end{subfigure}
\begin{subfigure}[b]{0.5\textwidth}
\centering 
\begin{tikzpicture}
\node[draw, circle, text centered] (A) {$A$};
\node[draw, dashed, circle, text centered, right=4cm of A] (Y) {$Y$};
\node[draw, circle, text centered, right=1.6cm of A] (S) {$S$};
\node[draw, circle, text centered, above right=1cm and 0.8cm of A] (X) {$X$};
\node[draw, circle, dashed, text centered, above right=1cm and 3.2cm of A] (U) {$U$};

\draw[->] (A) -- (S);
\draw[->] (A) to [bend right] (Y);
\draw[->] (S) -- (Y);
\draw[->] (X) -- (S);
\draw[->] (X) -- (Y);
\draw[->] (U) -- (S);
\draw[->] (U) -- (Y);
\draw[<->] (X) -- (U);
\end{tikzpicture}
\caption{Experimental data.}
\label{figure: DAG-exp-a}
\end{subfigure}
\caption{Causal diagrams for observational and experimental data with persistent confounders. Here $A$ denotes the treatment, $S$ denotes (multiple) short-term outcomes, $Y$ denotes the long-term outcome, $X$ denotes covariates, and $U$ denotes unobserved confounders.  Confounders $U$ in both samples and the long-term outcome $Y$ in the experimental data are unobserved, so they are indicated by dashed circles. Note that unobserved confounders $U$ can simultaneously affect short-term outcomes $S$ and the long-term outcome $Y$.}
\label{figure: DAG-a}
\end{figure}

\section{Related Literature}\label{sec: literature}
\subsection{Surrogates}
Our paper is related to a large body of biostatistics literature on surrogate outcomes; see reviews in \cite{weir2006statistical,vanderweele2013surrogate,joffe2009related}.

These literature consider using the causal  effect of an intervention on a surrogate outcome (e.g., patients' short-term health) as a proxy for its treatment effect on the  outcome of primary interest (e.g., long-term health). 
To this end, many criteria have been proposed to ensure the validity of the surrogate outcome. 
Examples include the {statistical surrogate criterion} \citep{prentice1989surrogate}, principal surrogate criterion \citep{frangakis2002principal}, consistent surrogate criterion \citep{chen2007criteria}, among many others. 
However, these criteria can easily run into a logical paradox\footnote{{The paradox refers to the phenomenon  that the sign of the treatment effect on the target outcome cannot be predicted by the sign of the treatment effect on the surrogate and the sign of the surrogate's effect on the target outcome. For example, it is possible that a treatment has a negative effect on the target outcome, even if both the treatment effect on the surrogate and the surrogate's effect on the target outcome are positive. }} \citep{chen2007criteria} or rely on unidentifiable quantities,
showing the challenge of causal inference when the primary outcome is completely missing.
When multiple surrogates are available, \cite{wang2020model,price2018estimation}
 consider transforming these surrogates to optimally approximate the primary outcome. Their approaches can avoid the surrogate paradox discussed in \cite{chen2007criteria}. 
Nevertheless, learning surrogate transformations requires having experimental data with long-term outcome observations. 

In contrast, our paper does not need long-term outcome observations in the experimental data but only need them in observational data. Moreover, our paper does not view short-term outcomes as proxies for the long-term outcome, so we avoid these previous surrogate criteria.
{Instead, we consider three groups of short-term outcomes $(S_1, S_2, S_3)$, viewing
$S_1$ and $S_3$ as proxies for unmeasured confounders and $S_2$ as a surrogate for the effect of $S_1$ on $S_3$ (it cannot,
however, serve as a surrogate for the effect of $A$ on $Y$ due to unmediated effects).}
See also discussions in \Cref{sec: proximal}.

\subsection{Data Combination for Long-term Causal Inference}\label{sec: literature-data-comb}
Following \cite{athey2019surrogate}, some  recent literature also combine experimental and observational data, and rely on the statistical surrogate criterion, either to estimate cumulative treatment effects in dynamic settings \citep{battocchi2021estimating} or learn long-term optimal treatment policies \citep{yang2020targeting,cai2021gear}. 
\cite{chen2021semiparametric} derive the efficiency lower bound for average long-term treatment effect in settings of \cite{athey2019surrogate,athey2020combining}.
\cite{singh2021finite,singh2022generalized} further develop debiased long-term treatment effect estimators based on machine learning nuisance estimation.
In contrast, \cite{kallus2020role,cai2021coda} combine two datasets that both satisfy unconfoundedness. 
Still, all of these works  rule out persistent confounding, which is the main problem tackled in  this paper.

{A concurrent and independent work by \cite{ghassami2022combining} uses alternative conditions or additional variables to alleviate latent confounding in long-term causal inference.
They propose three different identification strategies, and their proximal data fusion strategy is closely related to our approach in \Cref{sec: identify-OBF,sec: identify-SBF,sec: covariate-adaptive-identify}. 
Their approach requires auxiliary proxy variables satisfying certain generic conditions (in addition to the short-term outcomes). 
In contrast, our work specifically leverages the special sequential structure of multiple short-term outcomes and shows how such short-term outcomes can proxy the confounders. 
{This provides us with the necessary proxy variables for identification}
and allows us to understand the different types of confounders and which need to be controlled (see \cref{sec: partial}).
Importantly, we develop both estimation {and inference} methods with  theoretical guarantees and validate them in a concret case study.
Moreover, we provide an alternative control function  identification strategy in \Cref{sec: control-fun} and study how the short-term outcomes may help weaken a widely assumed external validity condition in \Cref{sec: ext-validity}. 
These results have no analogues in \cite{ghassami2022combining}.}

There is also growing interest in combining experimental and observational data to improve, rather than enable, causal inference \citep[e.g., ][]{chen2021minimax,cheng2021adaptive,yang2020elastic,yang2020improved,colnet2020causal,kallus2018removing,rosenman2022propensity,rosenman2020combining,yang2019combining}.
In these works, the outcome of interest is observed in both types of data, so causal-effect identification is already guaranteed by the experimental data. Instead, the aim of the data combination is to reduce variance.
In contrast to these works, in our setting, data combination is crucial for causal identification since any one data set alone cannot identify the long-term treatment effect. 

\subsection{Proximal Causal Inference}\label{sec: proximal}
Our identification proposals are related to how proximal causal inference 
deals with unmeasured confounding by leveraging proxy variables \citep{tchetgen2020introduction}.
The seminal work of \cite{Miao2016} demonstrated the identification of treatment effects with unobserved confounders given two different types of proxy variables: negative control outcomes, which are not affected by the treatment, and negative control treatments, which do not affect the outcome. 
Since then, a series of works have proposed a variety of different estimation methods based on this identification strategy \citep[e.g., ][]{kallus2021causal,GhassamiAmirEmad2021MKML,deaner2021proxy,SinghRahul2020KMfU,miao2018a,shi2020multiply,mastouri2021proximal,cui2020semiparametric}. 
The proximal causal inference framework has also been extended to longitudinal data analysis \citep{imbens2021controlling,ying2021proximal,shi2021theory}, mediation analysis \citep{dukes2021proximal,ghassami2021proximal}, and off-policy evaluation and learning \citep{bennett2021proximal,tennenholtz2020off,qi2021proximal,xu2021deep}.

The existing proximal causal inference literature focus on a single observational dataset. 
In contrast, in this paper we consider combining observational and experimental data.
We view short-term outcomes as proxy variables for persistent unmeasured  confounders. 
However, all of these short-term outcomes can be affected by the treatment (see \Cref{figure: DAG-b} below), so they do not satisfy the proxy conditions in \cite{Miao2016}. 
In this paper, we establish novel identification strategies that leverage the additional experimental data. See also discussions in \Cref{remark: proximal}.

\section{Problem Setup}\label{sec: setup}
We consider a binary treatment $A \in \Acal = \braces{0, 1}$ where $A = 1$ stands for the treated group and $A = 0$ stands for the control group.
We are interested in the treatment effect  on a long-term outcome. 
Using the potential outcome framework \citep{rubin1974estimating}, we postulate potential long-term outcomes  $Y(0), Y(1) \in\Ycal\subseteq\Rl$, which would be realized were the treatment assignment equal $0$ and $1$, respectively. 
In reality, we observe at most one of the potential outcomes per unit, corresponding to the actual treatment assignment, $Y = Y(A)$.

We may in fact observe neither potential long-term outcome in short-term experiments that end before these long-term outcomes can be observed.
Nevertheless,  it is usually still possible to observe some short-term outcomes. 
We postulate potential short-term outcomes $S(1) \in \Scal, S(0)\in \Scal$, and denote the observable realized short-term outcomes as $S = S(A)$. 
In this paper, we consider \emph{multiple} short-term outcomes, so we generally understand $S$ as a \emph{vector}. We discuss our assumptions on the inner structure of these short-term outcomes in \cref{sec: short-term}.
Additionally, we can observe some pre-treatment covariates  denoted as $X \in \Xcal$.

We have access to two samples: an observational (O) sample with $n_O$ units and an experimental (E) sample with $n_E$ units. 
We suppose that the observational sample is a random sample from the population of interest, where for each unit $i$ we can observe independently and identically distributed tuples $(X_i, A_i, S_i, Y_i)$. 
The experimental sample may be a selective sample from the same population, where for each unit $i$ we only observe $(X_i, A_i, S_i)$, but \emph{not} the long-term outcome. 
We use a binary indicator $G_i \in \braces{E, O}$ to denote which sample a unit $i$ belongs to. 
Without loss of generality, we consider a combined i.i.d sample of size $n = n_O + n_E$ from an artificial super-population, namely, $\mathcal{D} = \braces{(G_i, X_i, A_i, S_i, Y_i\indic{G_i = O}): i = 1, \dots, n_O + n_E}$. 
We use $\mathbb{P}$ and $\mathbb{E}$ to denote the probability and expectation with respect to this super-population, and use $p(\cdot)$ to denote the associated probability density function or probability mass function, as appropriate. 
We also denote the observational and experimental subsamples as $\mathcal{D}_O$ and $\mathcal{D}_E$, respectively. 

Our aim is to combine  the observational and experimental samples in order to learn the long-term treatment effect on the population associated with the observational data:
\begin{align}\label{eq: ATE}
&\tau = \mu\prns{1} - \mu\prns{0}, \text{ where } \mu\prns{a} = \Eb{Y\prns{a} \mid G = O}. 
 \end{align}
Our results easily extend to the average on the experimental or combined population. We focus on $\tau$ for concreteness and  we believe it captures the most commonly relevant estimand.

\subsection{Basic Assumptions for Observational and Experimental Data}
\begin{figure}[t]
\begin{subfigure}[b]{0.48\textwidth}
\centering 
\begin{tikzpicture}
\tikzset{swig hsplit={gap=3pt,
line color lower=red}}
\node[name=A, shape=swig hsplit]{
\nodepart{upper}{$A$}
\nodepart{lower}{{$a$}} };
\node[draw, ellipse, text centered, right=4cm of A] (Y) {$Y({a})$};
\node[draw, ellipse, text centered, right=1.6cm of A] (S) {$S({a})$};
\node[draw, circle, text centered, above right=1cm and 0.8cm of A] (X) {$X$};
\node[draw, circle, dashed, text centered, above right=1cm and 3.2cm of A] (U) {$U$};

\draw[->] (A) to [bend right] (S);
\draw[->] (A) to [bend right] (Y);
\draw[->] (S) -- (Y);
\draw[->] (X) -- (A);
\draw[->] (X) -- (S);
\draw[->] (X) -- (Y);
\draw[->] (U) -- (A);
\draw[->] (U) -- (S);
\draw[->] (U) -- (Y);
\draw[<->] (X) -- (U);
\end{tikzpicture}
\caption{Observational data.}
\label{figure: SWIG-obs}
\end{subfigure}
\hfill
\begin{subfigure}[b]{0.48\textwidth}
\centering 
\begin{tikzpicture}
\tikzset{swig hsplit={gap=3pt,
line color lower=red}}
\node[name=A, shape=swig hsplit]{
\nodepart{upper}{$A$}
\nodepart{lower}{{$a$}} };
\node[draw, ellipse, text centered, right=4cm of A] (Y) {$Y({a})$};
\node[draw, ellipse, text centered, right=1.6cm of A] (S) {$S({a})$};
\node[draw, circle, text centered, above right=1cm and 0.8cm of A] (X) {$X$};
\node[draw, circle, dashed, text centered, above right=1cm and 3.2cm of A] (U) {$U$};

\draw[->] (A) to [bend right] (S);
\draw[->] (A) to [bend right] (Y);
\draw[->] (S) -- (Y);
\draw[->] (X) -- (S);
\draw[->] (X) -- (Y);
\draw[->] (U) -- (S);
\draw[->] (U) -- (Y);
\draw[<->] (X) -- (U);
\end{tikzpicture}
\caption{Experimental data.}
\label{figure: SWIG-exp}
\end{subfigure}
\caption{{Single world intervention graphs (SWIG) corresponding to the causal diagrams in \Cref{figure: DAG-a}}.}
\label{figure: SWIG}
\end{figure}
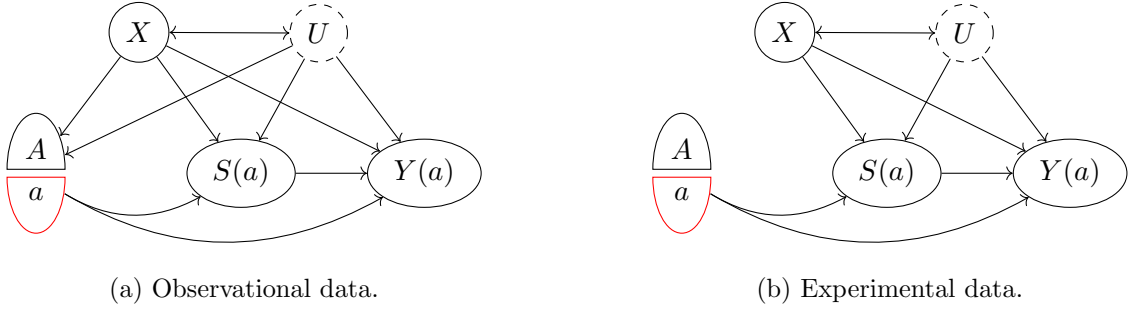

We now describe the basic assumptions that characterize the experimental and observational data sets as such. Unless otherwise stated, all of these assumptions are maintained throughout this paper. 

The observational data is generally confounded, that is, conditioning only on $X$ does \emph{not} render the treatment assignment independent of the potential short-term and long-term outcomes.
Instead, there exist some 
unobserved confounders $U \in \Ucal$ that are needed to account for the association between treatment and potential outcomes. {See \Cref{figure: SWIG-obs} for a single world intervention graph illustration \citep{richardson2013single} when intervening on the variable $A$.}

\begin{assumption}[Observational data]\label{assump: unconfound-obs}
For $a \in \braces{0, 1}$, 
\begin{align}\label{eq: unconf-obs}
\prns{Y\prns{a}, S\prns{a}} \perp A \mid U, X, G = O,
\end{align}
and $0 < \Prb{A = 1 \mid U, X, G = O} < 1$ almost surely.
\end{assumption}
\Cref{eq: unconf-obs} means that $U$ and $X$ together account for all confounding in the observational data, and generally the observed covariates $X$ alone are not enough. Moreover, we impose the overlap condition $0 < \Prb{A = 1 \mid U, X, G = O} < 1$, which is a standard assumption in causal inference literature.
Note that the existence of $U$ is without loss of generality because we can always take it to be the potential outcomes themselves.
Because of the unobserved confounders $U$, the observational data alone is not enough to identify the treatment effect parameter $\tau$ in \cref{eq: ATE}.

In contrast to the observational data, the treatments are assigned completely at random in the experimental data. {See \Cref{figure: SWIG-exp} for a single world intervention graph illustration.}
\begin{assumption}[Experimental Data]\label{assump: unconfound-exp}
For $a \in \braces{0, 1}$,
\begin{align}\label{eq: unconf-exp}
\prns{Y\prns{a}, S\prns{a}, U, X} \perp A \mid G = E,
\end{align}
and $0 < \Prb{A = 1 \mid G = E} < 1$ almost surely. 
\end{assumption}
{In \Cref{sec: extension}, we will relax \cref{assump: unconfound-exp} by allowing the treatment assignment in the experimental data to depend on the covariates $X$, so the conditions in \cref{assump: unconfound-exp} hold  conditioned on the covariates $X$.}

Although unconfounded, the experimental data do  not contain long-term outcome observations, so the experimental data alone is not enough to identify the treatment effect either. 
This motivates us to combine the observational and experimental data. 
To this end, we further impose the following assumption permitting such combination.
\begin{assumption}[External Validity]\label{assump: ext-valid}
For any $a \in \braces{0, 1}$,
\begin{align}\label{eq: external-validity-0}
\prns{S\prns{a}, U, X} \perp G,
\end{align}
and, almost surely, 
\begin{align}\label{eq: data-overlap}
\frac{p\prns{U, X \mid A = a, G = E}}{p\prns{U, X \mid A= a, G = O}} < \infty.
\end{align}
\end{assumption}

\Cref{assump: ext-valid} ensures that the two samples have enough commonality so it is meaningful to combine them.
\Cref{eq: external-validity-0} in \Cref{assump: ext-valid} means that the experimental data has \emph{external validity}, in that the distribution of $\prns{S\prns{a}, U, X}$ in the experimental data is the same as that in the observational data (i.e., the population of interest). 
Similar assumptions also appear in previous literature that attempt to combine different samples \citep[e.g., ][]{athey2020combining,athey2019surrogate,kallus2020role}.

In \Cref{sec: extension} we further relax \cref{eq: external-validity-0} to allow the distributions of covariates $X$ to be different in the two samples. 
Note that \cref{eq: external-validity-0} already allows the distributions of potential long-term outcome $Y(a)$ in the experimental and observational data to be different, so the long-term treatment effect on the experimental population can be different from our target.

\Cref{eq: data-overlap} in \Cref{assump: ext-valid} means that the conditional distributions of $(U, X) \mid A$ on the experimental and observational data have enough overlap, which is also a common assumption in missing data literature \citep{tsiatis2007semiparametric}.

\subsection{Three Groups of Short-term Outcomes}\label{sec: short-term}
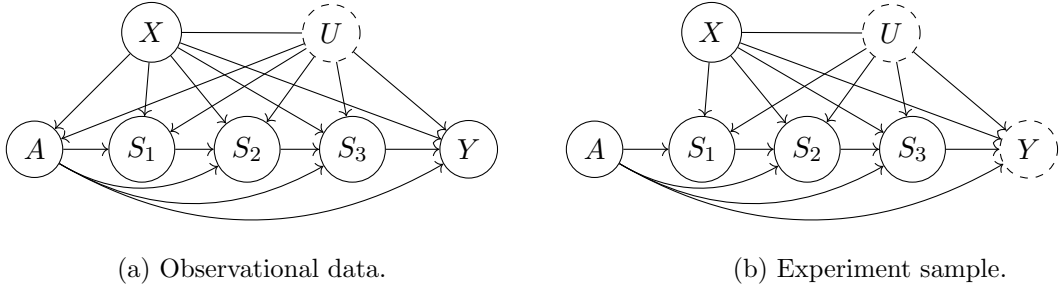
\begin{figure}[t]
\centering 
\begin{subfigure}[b]{0.49\textwidth}
\centering
\begin{tikzpicture}
\node[draw, circle, text centered] (A) {$A$};
\node[draw, circle, text centered, right=5cm of A] (Y) {$Y$};
\node[draw, circle, text centered, right=0.6cm of A] (S) {$S_1$};
\node[draw, circle, text centered, right=2cm of A] (S2) {$S_2$};
\node[draw, circle, text centered, right=3.4cm of A] (S3) {$S_3$};
\node[draw, circle, text centered, above right=1cm and 1cm of A] (X) {$X$};
\node[draw, circle, dashed, text centered, above right=1cm and 3.4cm of A] (U) {$U$};

\draw[->] (A) -- (S);
\draw[->] (A) to [bend right] (S2);
\draw[->] (A) to [bend right] (S3);
\draw[->] (A) to [bend right] (Y);
\draw[->] (X) -- (A);
\draw[->] (X) -- (S);
\draw[->] (X) -- (S2);
\draw[->] (X) -- (S3);
\draw[->] (U) -- (S);
\draw[->] (U) -- (S2);
\draw[->] (U) -- (S3);
\draw[->] (S) -- (S2);
\draw[->] (S2) -- (S3);
\draw[->] (S3) -- (Y);
\draw[->] (X) -- (Y);
\draw[->] (U) -- (A);
\draw[->] (U) -- (Y);
\draw[-] (X) -- (U);
\end{tikzpicture}
\caption{Observational data.}
\label{figure: DAG-obs-b}
\end{subfigure}
\begin{subfigure}[b]{0.49\textwidth}
\begin{tikzpicture}
\node[draw, circle, text centered] (A) {$A$};
\node[draw, dashed, circle, text centered, right=5cm of A] (Y) {$Y$};
\node[draw, circle, text centered, right=0.6cm of A] (S) {$S_1$};
\node[draw, circle, text centered, right=2cm of A] (S2) {$S_2$};
\node[draw, circle, text centered, right=3.4cm of A] (S3) {$S_3$};
\node[draw, circle, text centered, above right=1cm and 1cm of A] (X) {$X$};
\node[draw, circle, dashed, text centered, above right=1cm and 3.4cm of A] (U) {$U$};

\draw[->] (A) -- (S);
\draw[->] (A) to [bend right] (S2);
\draw[->] (A) to [bend right] (S3);
\draw[->] (A) to [bend right] (Y);
\draw[->] (X) -- (S);
\draw[->] (X) -- (S2);
\draw[->] (X) -- (S3);
\draw[->] (U) -- (S);
\draw[->] (U) -- (S2);
\draw[->] (U) -- (S3);
\draw[->] (S) -- (S2);
\draw[->] (S2) -- (S3);
\draw[->] (S3) -- (Y);
\draw[->] (X) -- (Y);
\draw[->] (U) -- (Y);
\draw[-] (X) -- (U);
\end{tikzpicture}
\caption{Experiment sample.}
\label{figure: DAG-exp-b}
\end{subfigure}
\caption{Sequential structure of three groups of short-term outcomes.}
\label{figure: DAG-b}
\end{figure}

To address general persistent confounding, we need some additional structure on the short-term outcomes. 
In this paper, we consider leveraging \emph{multiple, sequential short-term outcomes}. 
In particular, we consider a partitioning of the short-term outcomes into \emph{three groups} sorted in a temporal order, writing the potential short-term outcomes as $S(a) = \prns{\Sa(a), \Sb(a), \Sc(a)} \in \Scal_1 \times \Scal_2 \times \Scal_3$ and their observed counterparts as $S = \prns{\Sa, \Sb, \Sc}$. Given this partitioning, we assume the following conditional independence structure for the potential short-term and long-term outcomes. 
\begin{assumption}[Sequential Outcomes]\label{assump: CI}
For $a \in \braces{0, 1}$,
\begin{align}
\prns{Y(a), \Sc(a)} \perp \Sa(a) 
    &\mid \Sb(a), U, X, G = O, \label{eq: CI-potential-1}
\end{align}
\end{assumption} 
\Cref{assump: CI} requires that the effect of the first short-term outcome on the last short-term outcome and the long-term outcome is mediated by the intermediate short-term outcome.
Nonetheless, all outcomes can be related by \emph{unobserved} confounders, even in the experimental data, and the treatment can affect all outcomes both directly and indirectly.
This captures the sequential structure of the short-term and long-term outcomes (see \Cref{figure: DAG-b} for an example\footnote{{ \Cref{figure: DAG-b} is just one example diagram that satisfies \Cref{assump: CI}. The assumption could be also satisfied if there is an additional arrow from $S_2$ to $Y$. When $S_1(a), S_2(a), S_3(a), Y(a)$ follow a Markov process conditional on $U, X$ as discussed in this paragraph, the arrow from $S_2$ to $Y$ is indeed absent.}}).
For example, it holds when the potential outcomes follow autoregressive structural equations of suitable orders (see \Cref{ex: linear} below for a simple instance).
{Moreover, it holds when the potential outcomes $S_1(a), S_2(a), S_3(a), Y(a)$ follow a Markov process, conditional on $U, X$. Markov models are widely used in social sciences such as for modeling the dynamics of labor markets \citep{poterba1986reporting,mohapatra2007rise}, and in medical sciences for modeling chronic-disease progression \citep{marshall1995multi,liu2013glycemic,kay1986markov,liu2013glycemic}.}

{
Note that \Cref{assump: CI} significantly differs from the statistical surrogacy condition in \cite{athey2019surrogate}. The latter requires short-term outcomes $S$ to block all dependence between the long-term outcome $Y$ and the treatment $A$ given the covariates $X$, ruling out any unmediated direct effect of the treatment on the long-term outcome and any confounding between short- and long-term outcomes. In contrast, our \Cref{assump: CI} requires the potential short-term outcomes $S_2(a)$ to block the dependence between the potential outcomes $S_1(a)$ and $S_3(a), Y(a)$. This puts no restrictions on the treatment effect on the long-term outcome or the confounding between short- and long-term outcomes. 
See \Cref{sec: validation} for a diagram illustration that compares our assumptions with the surrogacy assumption in \cite{athey2019surrogate}. 
Moreover, previous  work usually view  multiple short-term outcomes as a single vector without any internal structure \citep[e.g.,~][]{athey2019surrogate,athey2020combining,kallus2020role}. In contrast, we introduce a sequential internal structure among the surrogates and the long-term outcome  (\Cref{assump: CI})  to address the challenge of persistent confounding, as we will demonstrate in the next section. 
}

We further assume that short-term outcomes $\prns{\Sa, \Sc}$ are sufficiently affected by the unobserved confounders $U$, formalized in the following completeness conditions.

\begin{assumption}[Completeness Conditions]\label{assump: completeness} 
For any $s_2 \in \Scal_2$, $a \in \braces{0, 1}$, $x \in \Xcal$, 
\begin{enumerate}
\item \label{assump: completeness-1} If $\Eb{g\prns{U} \mid \Sc, \Sb = s_2, A = a, X = x, G = O} = 0$ holds almost surely, then $g\prns{U} = 0$ almost surely.
\item \label{assump: completeness-2} If $\Eb{g\prns{U} \mid \Sa, \Sb = s_2, A = a, X = x, G = O} = 0$ holds almost surely, then $g\prns{U} = 0$ almost surely.
\end{enumerate}
\end{assumption}
These completeness conditions require that the short-term outcomes $\prns{\Sa, \Sc}$ are strongly dependent with the unobserved confounders, and they have  sufficient variability
relative to the unobserved confounders $U$. 
Under these conditions, $\prns{\Sa, \Sc}$ can be viewed as strong proxy variables\footnote{Note that we do not require $S_2$ to be strong proxy variables for the unobserved confounders $U$. Instead, we only require $S_2$ to block the path between $S_1$ and $S_3$ (see \Cref{figure: DAG-b}), so that \Cref{assump: CI} is plausible.} for the unobserved confounders $U$.
Completeness assumptions have been used extensively in recent literature on proximal causal inference  \citep{Miao2016,shi2020multiply,miao2018a,cui2020semiparametric,kallus2021causal}. 
However, these literature require proxy variables that are not causally affected by the treatment, termed negative controls. 
In contrast, here both of $\prns{\Sa, \Sc}$ can be  affected by the treatment and thus do not directly fit into this previous literature. 

{While in the main text we simply consider a single set of unobserved confounders $U$ that affect everything,
in \Cref{sec: partial} we further dissect persistent confounders into groups of variables and show that some unobserved confounders can be ignored and simply excluded from $U$, relaxing some of the above assumptions.}

\section{Identification}\label{sec: identification}
In this section, we establish three novel identification strategies for the average long-term treatment effect in presence of general persistent confounding. 

\subsection{Identification via Outcome Bridge Function}\label{sec: identify-OBF}
We first introduce the concept of an outcome bridge function, which will play an important role in our first identification strategy. 

\begin{assumption}[Outcome Bridge Function]\label{assump: bridge}
There exists an outcome bridge function $h_0: \Scal_3 \times \Scal_2 \times \Acal \times \Xcal \to \Rl$ defined as follows: 
\begin{align}\label{eq: bridge-U}
\Eb{Y \mid \Sb, A, U, X, G = O} = \Eb{h_0\prns{\Sc, \Sb, A, X} \mid \Sb, A, U, X, G = O}.
\end{align}
\end{assumption}
According to \Cref{eq: bridge-U}, an outcome bridge function $h_0$ gives a transformation of short-term outcomes $\prns{S_3, S_2}$, treatment $A$, and covariates $X$, such that the confounding effects of the unmeasured variables $U$  on this transformation can reproduce those on the long-term outcome $Y$. So we can expect outcome bridge functions to be useful in tackling  unmeasured confounding. 

In general nonparametric settings, \Cref{assump: bridge} holds as a consequence of \Cref{assump: completeness} condition \ref{assump: completeness-1} and some additional technical conditions.
See \Cref{sec: existence} for details. 
In some special cases detailed below, we can both directly guarantee \Cref{assump: bridge} and describe the functional form of outcome bridge functions. 

\begin{example}[Discrete Setting]\label{ex: discrete}
Suppose that $\Scal_1 = \Scal_2 = \Scal_3 = \braces{s_{(j)}: j = 1, \dots, M_s}$ and $\Ucal = \braces{u_{(k)}: k = 1, \dots, M_u}$. For any $s_2 \in \Scal_2, a \in\Acal, x\in\Xcal$, let $\Eb{{Y} \mid s_2, a, \mathbf{U}, x} \in \R{M_u}$ denote the vector  whose $k$th element is $\Eb{Y \mid S_2 = s_2, A = a, U = u_{(k)}, X= x, G = O}$ and $P(\mathbf{S}_3 \mid s_2, a, \mathbf{U}, x) \in \R{M_s\times M_u}$ the matrix whose $(j, k)$th element is 
$$\Prb{S_3 = s_{(j)} \mid S_2 = s_2, A = a, U = u_{(k)}, X = x, G = O}.$$
The existence of an outcome bridge function in \Cref{assump: bridge} is equivalent to the existence of a solution $z \in \R{M_s}$ to the following linear equation system for any $s_2 \in \Scal_2, a \in\Acal, x\in\Xcal$:
\begin{align}\label{eq: discrete-sol}
{P(\mathbf{S}_3 \mid s_2, a, \mathbf{U}, x)}^\top z = \Eb{{Y} \mid s_2, a, \mathbf{U}, x}
\end{align}
A sufficient condition for the existence of solutions to \Cref{eq: discrete-sol} is that the matrix $P(\mathbf{S}_3 \mid s_2, a, \mathbf{U}, x)$ has a full column rank for any $s_2 \in \Scal_2, a \in\Acal, x\in\Xcal$. 
This full column rank condition means that $S_3$ are strongly dependent with $U$ and it requires that the number of possible values of $S_3$ (i.e., $M_s$) is no smaller than the number of possible values of $U$ (i.e., $M_u$). 
In this example, the full column rank sufficient condition is equivalent to the completeness condition in \Cref{assump: completeness} condition \ref{assump: completeness-1}.
\end{example}

\begin{example}[Linear Model]\label{ex: linear}
Suppose that $\prns{Y, S_3, S_2, S_1}$ are generated from the following linear structural equation system: 
\begin{align*}
&Y = \tau_y A + \alpha_y^\top S_3 + \beta_y^\top X + \gamma_y^\top U + \epsilon_y, \\ 
&S_j = \tau_j A + \alpha_j S_{j-1} + \beta_j X + \gamma_j U + \epsilon_j, ~ j \in \braces{3, 2}\\
&S_1 = \tau_1 A +  \beta_1 X + \gamma_1 U + \epsilon_1,
\end{align*}
where $\tau_y, (\tau_j, \alpha_y, \beta_y, \gamma_y), (\alpha_j, \beta_j, \gamma_j)$ are scalars, vectors, and matrices of conformable sizes, respectively, and $\epsilon_y, \epsilon_j$ are independent mean-zero noise terms such that $\epsilon_y \perp (S, A, U, X)$ and $\epsilon_j \perp (S_{j-1}, \dots, S_1, A, U, X)$. 
\Cref{assump: bridge} holds if there exists a solution $\omega$ to the linear equation $\gamma_3^\top\omega = \gamma_y$, since for any such $\omega$, 
it can be easily shown that a valid outcome bridge function is
\begin{align*}
h_0\prns{s_3, s_2, a, x} = \theta_3^\top s_3 + \theta_2^\top s_2 + \theta_1 a + \theta_0^\top x,
\end{align*}
where $\theta_3 = \omega+ \alpha_y, \theta_2 = -\alpha_3^\top \omega, \theta_1 = \tau_y - \tau_3^\top \omega, \theta_0 = \beta_y - \beta_3^\top\omega$.
Therefore, a sufficient condition for the existence of outcome bridge functions is that $\gamma_3$ has a full column rank.
This full-column-rank condition again means that $S_3$ is sufficiently informative for the unobserved confounders $U$.
\end{example}

Note that outcome bridge functions in \Cref{eq: bridge-U} are defined in terms of unobserved confounders, so we cannot directly use this definition to learn outcome bridge functions from observed data. In the following lemma, we give an alternative characterization of outcome bridge functions, only in terms of distributions of observed data.
\begin{lemma}\label{lemma: bridge-obs}
Under {\Cref{assump: CI,assump: unconfound-obs}, the completeness condition in \Cref{assump: completeness} condition \ref{assump: completeness-2}}, any function $h_0$ that satisfies 
\begin{align}\label{eq: bridge-obs}
&\Eb{Y \mid \Sb, \Sa, A, X, G = O}   = \Eb{h_0\prns{\Sc, \Sb, A, X} \mid \Sb, \Sa, A, X, G = O}
\end{align}
is also a valid outcome bridge function in the sense of  \Cref{eq: bridge-U}.
\end{lemma}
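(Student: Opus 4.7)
The plan is to deduce the stronger identity \eqref{eq: bridge-U} (which conditions on the latent $U$) from the observable identity \eqref{eq: bridge-obs} (which conditions on $S_1$ instead) by invoking the completeness condition in \Cref{assump: completeness} condition \ref{assump: completeness-2}. Informally, that completeness statement means that $S_1$ carries as much information about $U$ as needed, so two conditional moments that agree after averaging over $U \mid S_1, S_2, A, X$ must actually agree pointwise in $U$.

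More concretely, fix $s_2, a, x$ in the observational population and define
\begin{align*}
g(u) \;=\; \Eb{Y - h_0(\Sc, \Sb, A, X) \mid \Sb = s_2, A = a, U = u, X = x, G = O}.
\end{align*}
I want to show $g(U) = 0$ almost surely. By completeness \ref{assump: completeness-2}, it suffices to verify $\Eb{g(U) \mid \Sa, \Sb = s_2, A = a, X = x, G = O} = 0$ almost surely. Here is the key step: first use \Cref{assump: unconfound-obs} to identify observed and potential outcomes given $A$, so that the potential-outcome independence \eqref{eq: CI-potential-1} becomes the observable conditional independence $(Y, \Sc) \perp \Sa \mid \Sb, A, U, X, G = O$. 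This lets me enlarge the conditioning set inside $g(u)$ without changing its value:
\begin{align*}
g(U) = \Eb{Y - h_0(\Sc, \Sb, A, X) \mid \Sb, A, U, X, \Sa, G = O}.
\end{align*}
Then by tower,
\begin{align*}
\Eb{g(U) \mid \Sa, \Sb, A, X, G = O} = \Eb{Y - h_0(\Sc, \Sb, A, X) \mid \Sa, \Sb, A, X, G = O},
\end{align*}
and the right-hand side equals zero by the hypothesis \eqref{eq: bridge-obs}.

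Applying completeness \ref{assump: completeness-2} then gives $g(U) = 0$ almost surely for each $(s_2, a, x)$ in a set of full measure, which is exactly \eqref{eq: bridge-U}. The only real subtlety is the middle step of lifting the conditional independence from potential outcomes (as stated in \Cref{assump: CI}) to observed outcomes, which is why \Cref{assump: unconfound-obs} is listed among the hypotheses of the lemma; once this is done, the proof is essentially a single application of completeness. I do not anticipate \Cref{assump: ext-valid} or \Cref{assump: unconfound-exp} playing a substantive role here since the statement is purely about conditional laws on the observational sample, although they are harmless to include as standing assumptions.
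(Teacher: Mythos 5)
Your proof is correct and follows essentially the same route as the paper's: the paper likewise first converts the potential-outcome conditional independence of \Cref{assump: CI} into the observed-data statement $(Y, S_3) \perp S_1 \mid S_2, A, U, X, G=O$ using \Cref{assump: unconfound-obs} (its Lemma \ref{cor: implication2}), then applies the tower property and the completeness condition \ref{assump: completeness-2} to conclude that $\Eb{Y - h_0(\Sc,\Sb,A,X) \mid \Sb, A, U, X, G=O} = 0$. Your observation that \Cref{assump: ext-valid} and \Cref{assump: unconfound-exp} play no substantive role here is also consistent with the paper's argument, which only invokes \Cref{assump: unconfound-obs,assump: CI} and the completeness condition.
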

In \Cref{lemma: bridge-obs}, we assume the completeness condition in \Cref{assump: completeness} condition \ref{assump: completeness-2}, which requires the short-term outcomes $S_1$ to be  informative enough for the unobserved confounders $U$. 
Under this additional assumption, outcome bridge functions can be equivalently characterized by the conditional moment equation in \Cref{eq: bridge-obs}. 
Note that \Cref{eq: bridge-obs} simply replaces the unobserved confounders $U$ in \Cref{eq: bridge-U} by the observed short-term outcomes $S_1$. 
The resulting conditional moment equation only depends on observed variables. 

We finally establish the identification of the average long-term treatment effect in the following theorem. 
\begin{theorem}\label{thm: identification1}
Under the conditions of \Cref{lemma: bridge-obs}, the average long-term treatment effect is identifiable: for any function $h_0$ satisfying \Cref{eq: bridge-obs}, 
{at least one of which exists}, we have
\begin{align}\label{eq: identification-1}
\tau 
    &= \Eb{h_0\prns{\Sc, \Sb, A, X} \mid A = 1, G = E}  - \Eb{h_0\prns{\Sc, \Sb, A, X} \mid A = 0, G = E}.
\end{align}
\end{theorem}

\Cref{thm: identification1} states that the average long-term treatment effect can be recovered by marginalizing \emph{any} outcome bridge function (which is defined on the observational data distribution) over the experimental data distribution. 
This shows how observational and experimental data can be combined together to identify the long-term treatment effect. 

\begin{remark}[Connection to \cite{athey2020combining}]\label{remark: connection}
The  proposed identification strategy in \Cref{eq: identification-1} can be viewed as a generalization of that in \cite{athey2020combining}. When there only exist short-term confounders, \cite{athey2020combining} shows that we only need a single group of short-term outcomes. We can let $S_1 = S_3 = \emptyset$ and  $S = S_2$, then $h_0\prns{S_2, A, X} = \Eb{Y \mid S, A, X, G=O}$ is the unique solution to \Cref{eq: bridge-obs}, and it can be plugged into \Cref{eq: identification-1} to identify the average long-term treatment effect. This recovers the identification strategy in Theorem 1 of \cite{athey2020combining} when specialized to the case of \Cref{assump: ext-valid} (\Cref{corollary: covariate-exp} in \Cref{sec: covariate-adaptive-identify} recovers it in the general case; see discussions therein).
Of course, when persistent confounding is present this identification fails. Instead, \Cref{thm: identification1} provides a more general identification strategy that can leverage structure in the surrogates to handle persistent confounders.
\end{remark}

\subsection{Identification via Selection Bridge Function}\label{sec: identify-SBF}

The second identification strategy involves an alternative bridge function below.

\begin{assumption}[Selection Bridge Function]\label{assump: bridge2}
There exists a selection bridge function $q_0: \Scal_2 \times \Scal_1 \times \Acal \times \Xcal \to \Rl$ defined as follows:
\begin{align}\label{eq: bridge2-U}
\frac{p\prns{\Sb, U, X\mid A, G = E}}{p\prns{\Sb, U, X \mid A, G = O}}  = \Eb{q_0\prns{\Sb, \Sa, A, X} \mid \Sb, A, U, X, G = O}.
\end{align}
\end{assumption}

According to \Cref{eq: bridge2-U}, a selection bridge function $q_0$ gives a transformation of short-term outcomes $\prns{S_2, S_1}$, treatment $A$, and covariates $X$, which can adjust for distributional differences between the experimental and observational data. In \Cref{sec: support} \Cref{lemma: data-overlap}, we prove that under \Cref{assump: ext-valid}, the density ratio in left hand side of \Cref{eq: bridge2-U} is almost surely finite, so \Cref{eq: bridge2-U} is well-defined. 

In general nonparametric models, the existence of a selection bridge function can be ensured by the completeness condition in \Cref{assump: completeness} condition \ref{assump: completeness-2} and some additional technical conditions. See \Cref{sec: existence} for details. 
This means that a selection bridge function exists when the short-term outcomes $S_1$ are sufficiently informative for the unobserved confounders $U$.  We can also derive more specialized existence conditions for \Cref{ex: linear,ex: discrete} (see \Cref{sec: selection-example}). 

Again, selection bridge functions in \Cref{eq: bridge2-U} are defined in terms of unobserved confounders.
Below, we derive alternative characterizations in terms of distributions of observed variables.

\begin{lemma}\label{lemma: bridge2-obs}
Under \cref{assump: CI,assump: ext-valid,assump: unconfound-exp,assump: unconfound-obs}, the completeness condition in \Cref{assump: completeness}  condition \ref{assump: completeness-1}, any function $q_0$ that satisfies 
\begin{align}\label{eq: bridge2-obs-1}
&\frac{p\prns{\Sc, \Sb, X\mid A, G = E}}{p\prns{\Sc, \Sb, X \mid A, G = O}}  = \Eb{q_0\prns{\Sb, \Sa, A, X} \mid \Sc, \Sb, A, X, G = O}
\end{align}
is also a valid selection bridge function in the sense of  \Cref{eq: bridge2-U}.
\end{lemma}

In \Cref{lemma: bridge2-obs}, we assume the completeness condition in \Cref{assump: completeness}  condition \ref{assump: completeness-1}, which requires the short-term outcomes $S_3$ to be  informative enough for the unobserved confounders $U$. 
Under this additional assumption, selection bridge functions can be equivalently characterized by the conditional moment equation in \Cref{eq: bridge2-obs-1}, which involves only observed variables.
\Cref{eq: bridge2-obs-1} is a direct analogue to \Cref{eq: bridge2-U}, replacing $U$ in \Cref{eq: bridge2-U} by $S_3$ in \Cref{eq: bridge2-obs-1}. We can also equivalently express \Cref{eq: bridge2-obs-1} as follows
\begin{align}\label{eq: bridge2-obs-2}
\Eb{\indic{G = O}\prns{\frac{\Prb{G = E \mid A}}{\Prb{G = O \mid A}}q_0\prns{\Sb, \Sa, A, X} + 1} \mid \Sb, \Sa, A, X} = 1.
\end{align}
\Cref{eq: bridge2-obs-2} is a more convenient formulation for estimation as it does not involve any conditional density function.

\begin{theorem}\label{thm: identification2}
Under conditions in \Cref{lemma: bridge2-obs}, the average long-term treatment effect is identifiable: for any function $q_0$ that satisfies  \Cref{eq: bridge2-obs-1} or \Cref{eq: bridge2-obs-2}, at least one of which exists, we have 
\begin{align}\label{eq: identification-2}
\tau 
    &= \Eb{q_0\prns{\Sb, \Sa, A, X}Y \mid A = 1, G = O} 
    - \Eb{q_0\prns{\Sb, \Sa, A, X}Y \mid A = 0, G = O}.
\end{align}
\end{theorem}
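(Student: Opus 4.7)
The plan is to reduce \Cref{thm: identification2} to the identity $\Eb{q_0(S_2, S_1, A, X)\,Y \mid A = a, G = O} = \mu(a)$ for each $a \in \braces{0, 1}$; taking the difference in $a$ then gives \eqref{eq: identification-2}. By \Cref{lemma: bridge2-obs}, any $q_0$ solving \eqref{eq: bridge2-obs-1} (or equivalently \eqref{eq: bridge2-obs-2}) also satisfies the confounder-based equation \eqref{eq: bridge2-U}, so I would work directly with that form rather than with the observed-data characterization.

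First I would condition on $(U, X, S_2)$ inside the event $A = a, G = O$ via the tower rule. The key conditional-independence step is $Y \perp S_1 \mid S_2, U, X, A = a, G = O$, which I would derive by combining the sequential conditional independence \Cref{assump: CI} (applied to the potential outcomes) with consistency and \Cref{assump: unconfound-obs}. This factors the inner conditional expectation into a product of $\Eb{q_0(S_2, S_1, a, X) \mid U, X, S_2, A = a, G = O}$ and $\Eb{Y \mid U, X, S_2, A = a, G = O}$. Substituting \eqref{eq: bridge2-U} into the first factor replaces it with the density ratio $p(S_2, U, X \mid A = a, G = E)/p(S_2, U, X \mid A = a, G = O)$.

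Next I would perform a change of measure. Writing the outer expectation as an integral against $p(s_2, u, x \mid A = a, G = O)$ and cancelling the denominator leaves an integral of $\Eb{Y \mid u, x, A = a, s_2, G = O}$ against $p(s_2, u, x \mid A = a, G = E)$. Applying \Cref{assump: unconfound-exp} to strip off the $A$-conditioning in the experimental sample, then \Cref{assump: ext-valid} to equate the joint law of $(S_2(a), U, X)$ across the two samples, then \Cref{assump: unconfound-obs} in the observational sample, I rewrite this density as $p(s_2 \mid u, x, A = a, G = O)\,p(u, x \mid G = O)$. Integrating out $s_2$ collapses the $Y$-factor to $\Eb{Y \mid U, X, A = a, G = O}$, which by \Cref{assump: unconfound-obs} and consistency equals $\Eb{Y(a) \mid U, X, G = O}$; averaging over $(U, X)$ under $G = O$ then returns $\mu(a)$.

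The main obstacle is the bookkeeping of these measure identities: several interchanges between $p(\cdot \mid A = a, G = E)$ and $p(\cdot \mid A = a, G = O)$, involving randomization and external validity in different directions, must be chained in the right order, and it is easy to confuse, for example, $p(u, x \mid A = a, G = O)$ with $p(u, x \mid G = O)$. That external validity is stated as the full joint independence $(S(a), U, X) \perp G$ rather than only marginally on $S(a)$ is exactly what makes these steps go through, and \Cref{lemma: data-overlap} ensures the density ratio appearing in \eqref{eq: bridge2-U} is almost surely finite so that the manipulations above are well-defined.
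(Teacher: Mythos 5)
Your proposal is correct and follows essentially the same route as the paper's proof: reduce via \Cref{lemma: bridge2-obs} to the confounder-based equation \eqref{eq: bridge2-U}, condition on $(S_2, U, X)$, factor using $Y \perp S_1 \mid S_2, A, U, X, G = O$ (which the paper isolates as a supporting lemma), substitute the density ratio, and chain the change of measure through \Cref{assump: unconfound-exp}, \Cref{assump: ext-valid}, and \Cref{assump: unconfound-obs} in exactly the order you describe. The only cosmetic difference is that the paper carries out the measure-change steps as nested conditional expectations rather than explicit density factorizations.
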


\Cref{thm: identification2} states that the average long-term treatment effect can be also identified by \emph{any} selection bridge function.
This provides an alternative to the identification strategy based on outcome bridge functions in \Cref{thm: identification1}. 

\begin{remark}[Comparison with Proximal Causal Inference]\label{remark: proximal}
As discussed in \Cref{sec: proximal}, our identification is related to identification in the proximal causal inference literature. 
Indeed, we also take a proxy-variable perspective, viewing short-term outcomes $(S_1, S_3)$ as proxy variables for the unobserved confounders $U$. 
Moreover, the characterization for outcome bridge function $h_0$ given in \Cref{eq: bridge-obs} has an analogue in \cite{miao2018a}.

Nevertheless, our setting is substantially different from the existing proximal causal inference literature. 
The short-term outcomes $(S_1, S_3)$ are both affected by the treatment, so they do not satisfy the proxy conditions in \cite{Miao2016}. 
Our identification strategies also feature a novel use of the experimental data. This is crucial in our setting, whereas proximal causal inference focuses on observational data only. 
Notably,  our identification in \Cref{thm: identification2} relies on a new selection bridge function. 
This bridge function, as defined in \cref{eq: bridge2-obs-1}, is specialized to our data combination setting, 
without {direct} analogue in the existing proximal causal inference literature {except the concurrent work \cite{ghassami2022combining}}. 
\end{remark}

\begin{remark}[\Cref{assump: completeness,assump: bridge,assump: bridge2} and the Conditioning on $S_2$]
{In \Cref{assump: completeness} we assume two completeness conditions and in \Cref{assump: bridge,assump: bridge2}, we assume the exsitence of outcome and selection bridge functions. These conditions roughly require $S_1, S_3$ to be strongly dependent with the unobserved confounders $U$ after accounting for $S_2, A$ and  $X$. Since $S_2$ also tend to be dependent
with $U$, conditioning on $S_2$ may explain away part of the dependence between $S_1, S_3$ and $U$. Thus \Cref{assump: completeness,assump: bridge,assump: bridge2} may be at risk if $S_2$ include very rich short-term outcomes and capture a very large amount of variations in $U$. They are more plausible as $S_1, S_3$ include richer informative short-term outcomes relative to $S_2$.}
\end{remark}

\subsection{Doubly Robust Identification}\label{sec: identify-DR}
In \Cref{sec: identify-OBF,sec: identify-SBF}, we present two different identification strategies, based on outcome bridge functions and selection bridge functions, respectively. 
We now combine them into a doubly robust identification strategy. 

\begin{theorem}\label{thm: identification-DR}
Fix functions $h: \Scal_3 \times \Scal_2 \times \Acal \times \Xcal \to \Rl$ and $q: \Scal_2 \times \Scal_1 \times \Acal \times \Xcal \to \Rl$. 
If either conditions in \Cref{thm: identification1} hold and $h = h_0$ satisfies \cref{eq: bridge-obs}, or conditions in \Cref{thm: identification2} hold and $q = q_0$ satisfies \cref{eq: bridge2-obs-1} or \cref{eq: bridge2-obs-2}, then the average long-term treatment effect is identified as:
\begin{equation}\label{eq: DR}\begin{aligned}
\tau
    &= \sum_{a\in\braces{0, 1}}(-1)^{1-a}\Eb{h\prns{\Sc, \Sb, A, X} \mid A = a, G = E}  \\
    &\phantom{=}+ \sum_{a\in\braces{0, 1}}(-1)^{1-a}\Eb{q\prns{\Sb, \Sa, A, X}\prns{Y - h\prns{\Sc, \Sb, A, X}} \mid A = a, G = O}.
\end{aligned}\end{equation}
\end{theorem}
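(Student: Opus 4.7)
The plan is to establish double robustness by splitting into the two cases: (a) when $h = h_0$ is a valid outcome bridge function, and (b) when $q = q_0$ is a valid selection bridge function. In case (a), the first sum in \Cref{eq: DR} already equals $\tau$ by \Cref{thm: identification1}, so I will show the second sum vanishes term-by-term for every $q$. In case (b), the $qY$-portion of the second sum equals $\tau$ by \Cref{thm: identification2}, so I will show that the remaining $qh$-portion exactly cancels the first sum for every $h$.

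For case (a), fix $a \in \braces{0, 1}$ and condition the inner expectation of the second sum on $(\Sb, \Sa, A, U, X)$. Since $q(\Sb, \Sa, A, X)$ is measurable with respect to the conditioning variables, it factors out, leaving $\Eb{Y - h_0(\Sc, \Sb, A, X)\mid \Sb, \Sa, A, U, X, G = O}$. \Cref{assump: CI}, combined with \Cref{assump: unconfound-obs} to pass from potential outcomes to observables under $A = a$, implies that $(Y, \Sc)$ is conditionally independent of $\Sa$ given $(\Sb, A, U, X, G = O)$, so this inner expectation collapses to $\Eb{Y - h_0\mid \Sb, A, U, X, G = O}$. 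This vanishes by the defining identity \Cref{eq: bridge-U} of the outcome bridge function, closing case (a).

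For case (b), I must show that $\Eb{h(\Sc, \Sb, A, X)\mid A = a, G = E} = \Eb{q_0(\Sb, \Sa, A, X)\, h(\Sc, \Sb, A, X)\mid A = a, G = O}$ for every $h$ and every $a$. First I condition on $(\Sb, A, U, X)$ on the left; under \Cref{assump: ext-valid} combined with \Cref{assump: unconfound-obs,assump: unconfound-exp}, the conditional law of $\Sc$ given $(\Sb, A = a, U, X)$ is identical in the two samples, so the inner expectation equals $\tilde h(\Sb, A, U, X) := \Eb{h\mid \Sb, A, U, X, G = O}$. A change of measure from $G = E$ to $G = O$ then introduces the density ratio $p(\Sb, U, X \mid A, G = E)/p(\Sb, U, X \mid A, G = O)$, which by \Cref{eq: bridge2-U} equals $\Eb{q_0\mid \Sb, A, U, X, G = O}$. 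Two further applications of iterated expectations—the second invoking \Cref{assump: CI} once more to rewrite $\tilde h(\Sb, A, U, X)$ as $\Eb{h\mid \Sb, \Sa, A, U, X, G = O}$ and then absorb $h$ inside—deliver $\Eb{q_0\, h\mid A = a, G = O}$, closing case (b).

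The main obstacle is case (b), where the selection bridge equation is stated in terms of $(\Sb, U, X)$ while the integrand $h$ depends on the additional short-term outcome $\Sc$. Reconciling these requires careful use of external validity to match the conditional laws of $\Sc$ across samples, together with two separate invocations of the sequential outcomes condition \Cref{assump: CI}—one to collapse the expectation of $h$ down to a function of $(\Sb, A, U, X)$, and one to expand it back up in a form that incorporates $\Sa$ so that $h$ can be pulled inside a single observational expectation.
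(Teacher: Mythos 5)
Your proof is correct, and its overall architecture (split into the two cases; in case (a) show the second sum vanishes for arbitrary $q$, in case (b) show the $q\,h$ portion cancels the first sum for arbitrary $h$) matches the paper's. Where you diverge is in the level at which you invoke the bridge-function equations. The paper works entirely with observed variables: in case (a) it conditions the cross term on $(\Sb,\Sa,A,X)$ and kills it in one line via the observational characterization \Cref{eq: bridge-obs}, and in case (b) it changes measure on $(\Sc,\Sb,X)$ and conditions on $(\Sc,\Sb,A,X)$ so that the cancellation follows in one line from \Cref{eq: bridge2-obs-1}. You instead condition on the latent $U$ and appeal to the defining identities \Cref{eq: bridge-U} and \Cref{eq: bridge2-U}, which forces you to first upgrade the hypotheses via \Cref{lemma: bridge-obs,lemma: bridge2-obs} and then redo the conditional-independence and external-validity manipulations (\ie, $(Y,\Sc)\perp\Sa\mid\Sb,A,U,X,G=O$ and $\Sc\perp G\mid\Sb,A,U,X$); your case (b) is essentially the proof of \Cref{thm: identification2} rerun with $h$ in place of $Y$, which is legitimate since $h$ depends only on $(\Sc,\Sb,A,X)$. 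Both routes are valid; the paper's is more economical because the theorem's hypotheses hand you the observational moment equations directly, whereas yours makes more transparent why the latent structure permits the cancellation. One small presentational point: when you "absorb $h$ inside" in case (b), you should state explicitly that you are using $\Eb{q_0\mid\Sb,A,U,X,G=O}\,\Eb{h\mid\Sb,A,U,X,G=O}=\Eb{q_0 h\mid\Sb,A,U,X,G=O}$, which holds because $q_0$ is $(\Sb,\Sa,A,X)$-measurable, $h$ is $(\Sc,\Sb,A,X)$-measurable, and $\Sa\perp\Sc$ given $(\Sb,A,U,X,G=O)$ by \Cref{cor: implication2}.
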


\Cref{thm: identification-DR} shows that \Cref{eq: DR} identifies the average long-term treatment effect when it uses either a valid outcome bridge function or a valid selection bridge function. 
But it does not need both bridge functions to be valid. This is why it is called doubly robust.

\section{Estimation and Inference}\label{sec: estimation}
In this section, we provide three different estimators for the average long-term treat effect, corresponding to the three different identification strategies in \Cref{sec: identification} respectively. 
This involves combining two samples, so we assume that as $n \to \infty$, $n_E/n_O \to \lambda$ where $0 < \lambda < \infty$. This is a common assumption in the  data combination literature \cite[e.g., ][]{angrist1992effect,graham2016efficient}.

In order to estimate the average long-term  treatment effect, we need to  first estimate the outcome and/or selection bridge functions. 
Estimating these bridge functions amounts to solving the conditional moment equations in \Cref{eq: bridge2-obs-2,eq: bridge-obs} based on a finite sample of data{, which corresponds to an ill-posed inverse problem \citep{Carrasco2007}}. 
A variety of estimation strategies can be used for this task, which we review in \Cref{remark: bridge-est} below. 
For now, we consider any generic bridge function estimators, which may be any from those reviewed in \Cref{remark: bridge-est}, and discuss different ways to use these to construct the long-term treatment effect estimator.

Below, we define three different estimators for the counterfactual mean parameter $\mu(a)$, $a \in \Acal$. They all use the cross-fitting technique when constructing bridge function estimators, {so that the data used to construct the bridge function estimators are independent with the data at which the estimators are evaluated}. This technique has been widely used to accommodate complex nuisance function estimators while preserving strong asymptotic guarantees \citep[e.g., ][]{chernozhukov2019double,zheng2011cross}. {Note that we  need to split only the observational data and not the experimental data. This is because only the outcome bridge function is evaluated on the experimental data, and it is estimated only from observational data. The experimental and observational datasets are already independent, rendering the bridge-function estimate and the data on which it is evaluated independent without any cross-fitting.} 

\begin{definition}[Cross-fitted Counterfactual Mean Estimator]\label{def: mean-est}
Fix $a \in \mathcal{A}$ and an integer $K \ge 2$.
\begin{enumerate}
\item Randomly split the observational data $\mathcal{D}_O$ into $K$ (approximately) even folds, denoted as $\mathcal{D}_{O, 1}, \dots, \mathcal{D}_{O, K}$, respectively. 
\item For $k = 1, \dots, K$, use all observational data other than the $k$th fold, i.e., $\mathcal{D}_{O, -k} \coloneqq \cup_{j \ne k} \mathcal{D}_{O, j}$,  to construct the outcome bridge function estimator based on \Cref{eq: bridge-obs} and/or the selection bridge function estimator based on \Cref{eq: bridge2-obs-2}. Denote them as $\hk(S_3, S_2, A, X)$ and $\qk(S_2, S_1, A, X)$, respectively. 
\item Use any of the following counterfactual mean estimators:
\begin{align*}
&\hat\mu_\out(a) = \frac{1}{K}\sum_{k=1}^K \bracks{\frac{1}{\nEa}\sum_{i \in \mathcal{D}_E} \indic{A_i = a}{\hk\prns{S_{3, i}, S_{2, i}, A_i, X_i}}}, \\
&\hat\mu_\sel(a) = \frac{1}{K}\sum_{k=1}^K \bracks{\frac{1}{\nOka}\sum_{i \in \mathcal{D}_{O, k}}\indic{A_i = a}{\qk\prns{S_{2, i}, S_{1, i}, A_i, X_i}Y_i}}, \\
&\hat\mu_\dr(a)  =  \frac{1}{K}\sum_{k=1}^K \bracks{\frac{1}{\nEa}\sum_{i \in \mathcal{D}_E} \indic{A_i = a}{\hk\prns{S_{3,i}, S_{2, i}, A_i, X_i}}}\\
    &\phantom{\hat\mu_\dr(a)=} + 
    \frac{1}{K}\sum_{k=1}^K \bracks{\frac{1}{\nOka}\sum_{i \in \mathcal{D}_{O, k}}\indic{A_i = a}{\qk\prns{S_{2, i}, S_{1, i}, A_i, X_i}\prns{Y_i - \hk\prns{S_{3, i}, S_{2, i}, A_i, X_i}}}},
\end{align*}
where $\nEa = \sum_{i \in \mathcal{D}_E} \indic{A_i = a}$ and $\nOka = \sum_{i \in \mathcal{D}_{O, k}} \indic{A_i = a}$ are the numbers of units with treatment level $a$ in the experimental data $\mathcal{D}_E$ and the $k$-th fold of observational data $\mathcal{D}_{O, k}$, respectively. 
\end{enumerate}
\end{definition}
Based on the counterfactual mean estimators in \Cref{def: mean-est}, we can construct average long-term treatment effect estimators:
\begin{align*}
\hat\tau_\out = \hat\mu_\out(1) - \hat\mu_\out(0), ~~ \hat\tau_\sel = \hat\mu_\sel(1) - \hat\mu_\sel(0), ~~ \hat\tau_\dr = \hat\mu_\dr(1) - \hat\mu_\dr(0).
\end{align*}

{To analyze the asymptotic properties of these treatment effect estimators, we need to impose some high level conditions on the estimation errors of the bridge function estimators.
Since these estimators solve ill-posed conditional moment equations, we quantify their estimation errors in terms of both weak metrics and the strong metrics, as this is a common practice in the literature \citep[e.g., ][]{chen2012estimation,DikkalaNishanth2020MEoC,bennett2023minimax}. In particular, we define a projection operator $T$ and its adjoint operator $T^\star$ given by $[Th](S_2, S_1, A, X) = \Eb{h(S_3, S_2, A, X) \mid S_2, S_1, A, X, G = O}$ and $[T^\star q](S_3, S_2, A, X) = \Eb{q(S_2, S_1, A, X) \mid S_3, S_2, A, X, G = O}$. 
For a given outcome bridge function estimator $\hat h$ and a given selection bridge function estimator $\hat q$, we can quantify their estimation errors relative to $h$ and $q$ in terms of the weak metrics  $\|T(\hat h - h)\|_{\mathcal{L}_2(\mathbb{P})}$ and $\|T^\star (\hat q - q)\|_{\mathcal{L}_2(\mathbb{P})}$ respectively. 
We can also quantify their estimation errors in terms of the strong metrics $\|\hat h - h\|_{\mathcal{L}_2(\mathbb{P})}$ and $\|\hat q - q\|_{\mathcal{L}_2(\mathbb{P})}$ respectively. 
The strong-metric errors can be much larger (even infinitely larger) than the corresponding weak-metric errors due to ill-posedness of the conditional moment equations. {See also  \cref{remark: bridge-est} for more discussions on the error rates. }
} 

{
\begin{assumption}[Error Rates of Bridge Function Estimators]\label{assump: rate}
\begin{enumerate}
 \item \label{assump: rate-h} There exist $\tilde h \in \Scal_3 \times \Scal_2 \times \Acal \times \Xcal \to \Rl$ and sequences $\delta_{h, n} \to 0$ and $\rho_{h, n} \to 0$ such that 
\begin{align*}
 \|T(\hk - \tilde h) \|_{\mathcal{L}_2\prns{\mathbb{P}}} = O_{\mathbb{P}}\prns{\delta_{h, n}}, ~ \|\hk - \tilde h \|_{\mathcal{L}_2\prns{\mathbb{P}}} = O_{\mathbb{P}}\prns{\rho_{h, n}}, ~~ \forall k \in \braces{1, \dots, K}.
\end{align*}
 \item \label{assump: rate-q} There exist $\tilde q \in \Scal_2 \times \Scal_1 \times \Acal \times \Xcal \to \Rl$ and sequences $\delta_{q, n} \to 0$ and $\rho_{q, n} \to 0$ such that 
 \begin{align*}
  \|T^\star\prns{\qk - \tilde q}\|_{\mathcal{L}_2\prns{\mathbb{P}}}  =  O_{\mathbb{P}}\prns{\delta_{q, n}}, ~  \|\qk - \tilde q\|_{\mathcal{L}_2\prns{\mathbb{P}}}  =  O_{\mathbb{P}}\prns{\rho_{q, n}}, ~~ \forall k \in\braces{ 1, \dots, K}.
 \end{align*}
 \end{enumerate} 
\end{assumption}
}

{
\Cref{assump: rate} specifies that the outcome bridge function estimator and selection bridge function estimator converge to some limits $\tilde h$ and $\tilde q$ respectively, in terms of both weak metrics and strong metrics. 
Note that we do not necessarily require these estimators to be consistent, i.e., we allow $\tilde h \ne h_0$ or $\tilde q \ne q_0$, as we show in the following theorem. 
}

\begin{theorem}[Estimation Consistency]\label{thm: consistency}
\begin{enumerate}
\item \label{thm: consistency-h} If conditions in \Cref{thm: identification1} 
and  \Cref{assump: rate} condition \ref{assump: rate-h} hold, and $\tilde h = h_0$, then $\hat\tau_\out$ consistent. 
\item \label{thm: consistency-q} If conditions in \Cref{thm: identification2}
and  \Cref{assump: rate} condition \ref{assump: rate-q} hold, and $\tilde q = q_0$, then $\hat\tau_\sel$ is consistent. 
\item \label{thm: consistency-dr} If the conditions in either of the two  statements above hold, then $\hat\tau_\dr$ is consistent.
\end{enumerate}
\end{theorem}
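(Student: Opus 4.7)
The plan is to show each of $\hat\tau_\out,\hat\tau_\sel,\hat\tau_\dr$ converges in probability to $\tau$ by controlling the counterfactual mean estimators $\hat\mu_\cdot(a)$ for $a\in\{0,1\}$ and then subtracting. Three ingredients recur: (i) the relevant identification theorem writes $\mu(a)$ as an expectation involving the true bridge function(s); (ii) conditional on the training fold $\mathcal{D}_{O,-k}$, the nuisance estimators $\hk,\qk$ are deterministic and the evaluation sum becomes an i.i.d.\ average to which a conditional law of large numbers (LLN) applies; (iii) the $\mathcal{L}_2$ rates from \Cref{assump: rate} combined with Cauchy--Schwarz bound the gap between the ``plug-in'' expectation and the target. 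Since $n_E/n_O\to\lambda\in(0,\infty)$, both $\nEa$ and $\nOka$ diverge with positive limiting frequencies, so the random denominators are handled by Slutsky.

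For part 1, \Cref{thm: identification1} gives $\mu(a)=\Eb{h_0(\Sc,\Sb,A,X)\mid A=a,G=E}$. Conditioning on $\mathcal{D}_{O,-k}$, the experimental units with $A_i=a$ remain i.i.d., so a LLN yields $\frac{1}{\nEa}\sum_{i\in\mathcal{D}_E}\indic{A_i=a}\hk=\Eb{\hk\mid A=a,G=E,\mathcal{D}_{O,-k}}+o_\mathbb{P}(1)$. The conditional expectation differs from $\mu(a)$ by $\Eb{\hk-h_0\mid A=a,G=E,\mathcal{D}_{O,-k}}$, which, using the bounded overlap between $\mathbb{P}(\cdot\mid A=a,G=E)$ and $\mathbb{P}$ built into \Cref{assump: unconfound-obs,assump: unconfound-exp,assump: ext-valid}, is bounded by a constant multiple of $\|\hk-h_0\|_{\mathcal{L}_2(\mathbb{P})}=o_\mathbb{P}(1)$ (via Cauchy--Schwarz and $\tilde h=h_0$). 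Averaging over $k$ and taking differences in $a$ gives $\hat\tau_\out\to\tau$. Part 2 is structurally identical, using \Cref{thm: identification2}, the held-out observational fold $\mathcal{D}_{O,k}$ as evaluation data, and an implicit second-moment bound on $Y$ to apply Cauchy--Schwarz to $(\qk-q_0)Y$.

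For the doubly robust case, the driving population identity is that whenever either $h=h_0$ or $q=q_0$, $\mu(a)=\Eb{h\mid A=a,G=E}+\Eb{q(Y-h)\mid A=a,G=O}$. When $h=h_0$, \Cref{lemma: bridge-obs} gives $\Eb{Y-h_0\mid\Sb,\Sa,A,X,G=O}=0$, so $\Eb{q(Y-h_0)\mid A=a,G=O}=0$ for any $q$; when $q=q_0$, the observed-data identity in \Cref{lemma: bridge2-obs} yields $\Eb{q_0 h\mid A=a,G=O}=\Eb{h\mid A=a,G=E}$ for square-integrable $h$, so the cross term again collapses. The cross-fitted error $\hat\mu_\dr(a)-\mu(a)$ then decomposes into the $\out$-type sum (handled above) plus $\frac{1}{\nOka}\sum_i\indic{A_i=a}\qk(Y_i-\hk)-\Eb{\tilde q(Y-\tilde h)\mid A=a,G=O}$. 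Expanding $\qk(Y-\hk)-\tilde q(Y-\tilde h)=(\qk-\tilde q)(Y-\tilde h)+\qk(\tilde h-\hk)$ when $\tilde h=h_0$ (or the symmetric decomposition $(\qk-\tilde q)(Y-\hk)+\tilde q(\tilde h-\hk)$ when $\tilde q=q_0$), applying the conditional LLN, and invoking \Cref{assump: rate} with Cauchy--Schwarz, each term vanishes: the ``first-order'' piece containing the true nuisance has zero conditional expectation by the orthogonality identity above, while the other piece is bounded by $\|\qk\|_{\mathcal{L}_2}\cdot\|\hk-h_0\|_{\mathcal{L}_2}$ (or its symmetric counterpart), which is $o_\mathbb{P}(1)$ once $\|\qk\|_{\mathcal{L}_2}$ is stochastically bounded.

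The hardest step is precisely this last $\mathcal{L}_2$-boundedness of the nuisance that is \emph{not} assumed to converge to truth in the doubly robust argument: consistency permits only one of $\tilde h=h_0,\tilde q=q_0$ to hold, so the other factor must enter only through a product with a vanishing quantity. The required bound on $\|\qk\|_{\mathcal{L}_2}$ (resp.\ $\|\hk\|_{\mathcal{L}_2}$) follows from \Cref{assump: rate} together with $\tilde q,\tilde h\in\mathcal{L}_2(\mathbb{P})$, but must be articulated carefully. Secondary technicalities are the control of $\nEa,\nOka$ via Slutsky and the bounding of conditional expectations under $\{G=E,A=a\}$ or $\{G=O,A=a\}$ by $\mathcal{L}_2(\mathbb{P})$ norms, both of which follow from the overlap conditions in \Cref{assump: unconfound-obs,assump: unconfound-exp,assump: ext-valid}; the $K$-fold averaging only contributes a harmless bounded factor.
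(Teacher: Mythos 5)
Your proposal is correct and follows essentially the same route as the paper: center the cross-fitted sums at the limits $(\tilde h,\tilde q)$, apply a conditional law of large numbers on each held-out fold, bound the plug-in gap via Cauchy--Schwarz with the $\mathcal{L}_2$ rates of \Cref{assump: rate}, and invoke \Cref{thm: identification1,thm: identification2,thm: identification-DR} to identify the population limit with $\mu(a)$. Your explicit treatment of the stochastic $\mathcal{L}_2$-boundedness of the non-consistent nuisance in the doubly robust case (which follows from $\|\qk-\tilde q\|_{\mathcal{L}_2}=O_{\mathbb{P}}(\rho_{q,n})$ and $\tilde q\in\mathcal{L}_2$) is a point the paper's proof leaves implicit behind a ``similarly,'' and for mere consistency the orthogonality identity you invoke is not needed--Cauchy--Schwarz alone already makes that cross term $o_{\mathbb{P}}(1)$.
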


\Cref{thm: consistency} shows that if the outcome bridge function estimator is consistent (i.e., $\tilde h = h_0$), then the corresponding treatment effect estimator $\hat\tau_\out$ is consistent. 
Similarly, if the selection bridge function estimator is consistent (i.e., $\tilde q = q_0$), then the corresponding treatment effect estimator $\hat\tau_\sel$ is also consistent. 
In contrast, the estimator $\hat\tau_\dr$ is more robust, in that it is consistent if \emph{either} of the two bridge function estimators is consistent. 

\Cref{thm: consistency} establishes the consistency of treatment effect estimators given only high level conditions on the bridge function estimators, regardless of how they are actually constructed. 
However, the actual ways to construct bridge function estimators generally do impact 
the asymptotic distributions of estimators $\hat\tau_\out$ and $\hat\tau_\sel$.
So we only focus on the asymptotic distribution of estimator $\hat\tau_\dr$, which can be derived even under generic high level conditions. 
\begin{theorem}[Asymptotic Distribution of Doubly Robust Estimator]\label{thm: dist-dr}
Suppose conditions in \Cref{thm: consistency} statements \ref{thm: consistency-h} and \ref{thm: consistency-q} hold and {$\min\braces{\delta_{h, n}\rho_{q, n}, \rho_{h, n}\delta_{q, n}} = o(n^{-1/2})$}. Then as $n \to \infty$, 
\begin{align*}
\sqrt{n}\prns{\hat\tau_{\dr}-\tau}  \rightsquigarrow  \mathcal{N}\prns{0, \sigma^2}, 
\end{align*}
where 
\begin{align*}
\sigma^2 
    &= \frac{1+\lambda}{\lambda}\Eb{\prns{\frac{A - \Prb{A=1\mid G=E}}{\Prb{A=1\mid G=E}}\prns{ h_0(\Sc, \Sb, A, X) - \mu(A)} }^2 \mid G = E}  \\
    &\phantom{=}+ \prns{1+\lambda} \Eb{\prns{\frac{A - \Prb{A=1\mid G=O}}{\Prb{A=1\mid G=O}}q_0\prns{\Sb, \Sa, A, X}\prns{Y - h_0(\Sc, \Sb, A, X)}}^2 \mid G = O}. \nonumber 
\end{align*}
\end{theorem}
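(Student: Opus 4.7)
The plan is to prove this CLT by the standard three-step recipe for cross-fitted doubly robust estimators: (i) a fold-wise bias decomposition using the double-robustness of the identifying functional, (ii) control of the remaining bias by the product of nuisance rates via Cauchy--Schwarz, and (iii) asymptotic linearization of the stochastic part followed by a classical CLT applied to two independent sample means.

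\textbf{Fold-wise bias.} Fix $a\in\{0,1\}$ and fold $k$, and condition on the training data $\mathcal{D}_{O,-k}$, so that $\hk$ and $\qk$ are deterministic. The conditional mean of the $k$-th fold's contribution to $\hat\mu_\dr(a)$ is $\Eb{\hk\mid A=a,G=E}+\Eb{\qk(Y-\hk)\mid A=a,G=O}$. By \Cref{thm: identification-DR}, the mapping $(h,q)\mapsto \Eb{h\mid A=a,G=E}+\Eb{q(Y-h)\mid A=a,G=O}$ equals $\mu(a)$ whenever $h=h_0$ or $q=q_0$; consequently both first-order Gateaux derivatives at $(h_0,q_0)$ vanish (Neyman orthogonality). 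A direct expansion collapses the conditional bias to the cross term $-\Eb{(\qk-q_0)(\hk-h_0)\mid A=a,G=O}$, which Cauchy--Schwarz together with \Cref{assump: rate} bounds by $\|\qk-q_0\|_{\mathcal{L}_2(\mathbb{P})}\|\hk-h_0\|_{\mathcal{L}_2(\mathbb{P})} = O_\mathbb{P}(\rho_{h,n}\rho_{q,n}) = o_\mathbb{P}(n^{-1/2})$.

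\textbf{Stochastic linearization.} Still conditional on $\mathcal{D}_{O,-k}$, the sums on $\mathcal{D}_E$ and on the held-out fold $\mathcal{D}_{O,k}$ are i.i.d.\ averages, so conditional Chebyshev yields
\[
\frac{1}{\sqrt{\nEa}}\sum_{i\in\mathcal{D}_E,\,A_i=a}\bigl(\hk-h_0-\Eb{\hk-h_0\mid A=a,G=E}\bigr)=o_\mathbb{P}(1),
\]
using only the consistency $\|\hk-h_0\|_{\mathcal{L}_2(\mathbb{P})}\to 0$ in probability implied by \Cref{assump: rate}, and analogously for $\qk-q_0$ on $\mathcal{D}_{O,k}$. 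Because $\mathcal{D}_E$ is reused in every fold while $\{\mathcal{D}_{O,k}\}_{k=1}^K$ partitions $\mathcal{D}_O$, averaging over $k$ collapses the estimator, up to $o_\mathbb{P}(n^{-1/2})$, into the oracle form
\[
\hat\mu_\dr(a) = \frac{1}{\nEa}\sum_{i\in\mathcal{D}_E,\,A_i=a} h_0 + \frac{1}{n_O^{(a)}}\sum_{i\in\mathcal{D}_O,\,A_i=a} q_0(Y-h_0)+o_\mathbb{P}(n^{-1/2}).
\]

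\textbf{CLT and variance bookkeeping.} Using the DR identity $\mu(a)=\Eb{h_0\mid A=a,G=E}+\Eb{q_0(Y-h_0)\mid A=a,G=O}$ to center each sum, $\hat\tau_\dr-\tau$ linearizes into the sum of two mutually independent zero-mean averages, one over $\mathcal{D}_E$ and one over $\mathcal{D}_O$. Applying the classical CLT to each independent component, together with Slutsky for the random denominators $\nEa/n_E\to\Prb{A=a\mid G=E}$ and $n_O^{(a)}/n_O\to\Prb{A=a\mid G=O}$, and with the scalings $n/n_E\to(1+\lambda)/\lambda$ and $n/n_O\to 1+\lambda$, yields the normal limit with the stated $\sigma^2$ once the $a=1$ and $a=0$ pieces are assembled into the difference $\hat\mu_\dr(1)-\hat\mu_\dr(0)$ via the IPW rewrite of a difference in means (exploiting complete randomization of $A$ in $\mathcal{D}_E$, \Cref{assump: unconfound-exp}). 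The main obstacle is the last-step variance bookkeeping: one must recognize that $\Eb{q_0(Y-h_0)\mid A=a,G=O}=\mu(a)-\Eb{h_0\mid A=a,G=E}$ so that the centerings of the two independent sample contributions glue together cleanly, and then repackage the combined difference into the single-weight form $(A-\Prb{A=1\mid G=E})/\Prb{A=1\mid G=E}$ appearing in the claimed $\sigma^2$. The bias and linearization steps are by now standard for Neyman-orthogonal functionals with cross-fitting; the novelty here is simply that the orthogonality is inherited from the data-combination DR identity of \Cref{thm: identification-DR} rather than from a classical propensity-score construction.
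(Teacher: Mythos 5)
Your proposal is correct and follows essentially the same route as the paper's proof: a fold-wise conditional decomposition in which the first-order terms cancel by the double-robustness identity (the paper's $\mathcal{R}_{k,1}+\mathcal{R}_{k,2}+\mathcal{R}_{k,3}$ computation reduces the conditional bias to exactly your cross term $\Eb{(q_0-\qk)(\hk-h_0)\mid A=a,G=O}$), a Cauchy--Schwarz bound giving $O_{\mathbb{P}}(\rho_{h,n}\rho_{q,n})=o(n^{-1/2})$, conditional Chebyshev/Markov for the stochastic fluctuations, and a final linearization into two independent sample averages to which the classical CLT and Slutsky apply. No gaps worth noting.
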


\Cref{thm: dist-dr} shows that if both bridge function estimators are consistent (i.e., $\tilde h = h_0$ and $\tilde q = q_0$), and the product of their convergence rates {in terms of one strong-metric error and one weak-metric error} is $o(n^{-1/2})$, then the doubly robust treatment effect estimator $\hat\tau_\dr$ is asymptotically normal with a closed-form asymptotic variance.
{Note that the rate condition is weaker than requiring the product of two strong-metric error rates to be $o(n^{-1/2})$.}
We can easily estimate this asymptotic variance by plugging estimates into all unknowns therein:
\begin{align*}
    \hat\sigma^2 
        &=  \frac{n}{n_EK}\sum_{k=1}^K \braces{\frac{1}{n_E}\sum_{i \in \mathcal{D}_E} \bracks{\frac{A_i - \hat\pi_E}{\hat\pi_{E}}\prns{\hk\prns{S_{3,i}, S_{2, i}, A_i, X_i}-\hat\mu_\dr(A_i)}}^2} \\
        &+ \frac{n}{n_OK}\sum_{k=1}^K \braces{\frac{1}{n_{O, k}}\sum_{i \in \mathcal{D}_{O, k}}\bracks{\frac{A_i - \hat\pi_O}{\hat\pi_O}\qk(S_{2, i}, S_{1, i}, A_i, X_i)\prns{Y_i - \hk\prns{S_{3, i}, S_{2, i}, A_i, X_i}}}^2},
\end{align*}
where $\hat\pi_E$ and $\hat\pi_O$ are sample frequency estimates for $\Prb{A=1\mid G=E}$ and $\Prb{A=1\mid G=O}$ respectively.
Then we can accordingly construct confidence intervals.
\begin{theorem}[Confidence Interval]\label{thm: CI}
Under conditions in \Cref{thm: dist-dr}, the confidence interval 
\begin{align*}
    \op{CI} = \bracks{\hat\tau_\dr - \Phi^{-1}(1-{\alpha}/{2})\hat\sigma/\sqrt{n}, ~ \hat\tau_\dr + \Phi^{-1}(1-{\alpha}/{2})\hat\sigma/\sqrt{n}}
\end{align*}
satisfies that $\Prb{\tau \in \op{CI}} \to 1-\alpha$ as $n \to \infty$.
\end{theorem}

In the following theorem, we further show that the   asymptotic variance in \Cref{thm: dist-dr} actually attains the local semiparametric efficiency lower bound, provided that the bridge functions uniquely exist and an additional regularity condition holds. 

\begin{theorem}[Asymptotic Efficiency]\label{thm: efficiency}
{Let $\mathbb{P}$ be a distribution instance such that \Cref{assump: bridge2,assump: bridge} hold with unique bridge functions and the corresponding linear operator $T$ defined above \Cref{assump: rate}  is bijective. 
Then, 
the efficiency lower bound for the average long-term treatment effect $\tau$ under \Cref{assump: CI,assump: unconfound-obs,assump: unconfound-exp,assump: ext-valid,assump: bridge},  locally evaluated at the distribution $\mathbb{P}$,  is equal to  $\sigma^2$ given in \Cref{thm: dist-dr}.}
\end{theorem}

\Cref{thm: efficiency} implies that under the asserted assumptions, treatment effect estimator 
$\hat\tau_\dr$ is asymptotically optimal, in the sense that it achieves the smallest asymptotic variance among all regular and asymptotically linear estimators \citep{van2000asymptotic}. 

\begin{remark}[Bridge Function Estimators {and Their Estimation Errors}]\label{remark: bridge-est}
Estimating bridge functions amounts to estimating roots of the conditional moment equations in \Cref{eq: bridge2-obs-2,eq: bridge-obs}. 
This can be implemented by many methods. Examples include Generalized Method of Moments (GMM) \citep[e.g., ][]{miao2018a,cui2020semiparametric,hansen1982large}, sieve methods \citep[e.g., ][]{ai2003efficient,newey2003instrumental,Hall05IV}, kernel density estimators \citep[e.g., ][]{darolles2010nonparametric,Hall05IV}, Reproducing Kernel Hilbert Space methods \citep[e.g., ][]{NEURIPS2019_17b3c706,GhassamiAmirEmad2021MKML}, neural network methods \citep[e.g., ][]{pmlr-v70-hartford17a,NIPS2019_8615}, and more generally, adversarial learning methods \citep[e.g., ][]{bennett2020variational,DikkalaNishanth2020MEoC,kallus2021causal,bennett2023minimax}. 
We can use any of these to estimate the bridge functions. 

{The estimation errors of these  estimators are typically characterized by  weak metrics or strong metrics in the sense of \Cref{assump: rate}. Weak-metric errors quantify the degree to which estimators violate the conditional moment equations, whereas strong-metric errors measure the deviation of estimators from specific solutions to these equations. In cases of highly ill-posed conditional moment equations, an estimator may nearly satisfy the conditional moment equation (i.e., exhibit small weak-metric error) yet still diverge significantly in function values from any solution to the equation (i.e., exhibit large strong-metric error).
The difference between these two types of errors reflects the level of ill-posedness. Weak-metric error rates often resemble those found in regular regression estimation and they are readily available in many of existing works. However, strong-metric error rates generally need additional restrictions on the conditional moment equations' ill-posedness \citep{chen2012estimation,DikkalaNishanth2020MEoC,bennett2023minimax}.}

{
 \Cref{thm: dist-dr} shows that the asymptotic normality of the doubly robust estimator needs the product of one strong-metric error rate and one weak-metric error rate to be $o(n^{-1/2})$. 
 This mirrors the product rate conditions found in many existing doubly robust estimators \citep{chernozhukov2018double}, which, however, do not distinguish between weak-metric and strong-metric errors due to their focus on well-posed regression functions. Our approach differentiates between these errors since we consider ill-posed conditional moment equations.
 The condition of a fast product rate between one weak-metric error and one strong-metric error implies a trade-off between the two rates,  allowing for the ill-posedness to impact only one of the two bridge function estimators. Consequently, even if one bridge function involves a very ill-posed conditional moment equation so it has a slow strong-metric error rate, the product rate condition could be still satisfied if the other bridge function has a fast weak-metric error rate. Moreover, the condition automatically adapts to the best product of the two types of errors. Similar conditions on the product of weak-metric and strong-metric errors also appear in other recent literature for inference on parameters related to ill-posed conditional moment equations \cite[e.g.,][]{singh2021finite,bennett2022inference,bennett2023source}. 
}
\end{remark}

\begin{remark}[Non-uniqueness of Bridge Functions]
In \Cref{thm: efficiency}, we assume that bridge functions uniquely exist, which is not necessarily true in practice. As we discussed in \Cref{sec: identify-OBF,sec: identify-SBF}, bridge functions exist if the short-term outcomes $S_1$ and $S_3$ are sufficiently informative for the unobserved confounders.
But when they are more informative than necessary, there may exist more than one bridge function. 
For example, in \Cref{ex: discrete}, when the matrix $P(\mathbf{S}_3 \mid s_2, a, \mathbf{U}, x)$ has full column rank and $S_3$ has more values than the unobserved confounders $U$, \Cref{eq: discrete-sol} admits many solutions $z$ and each of them corresponds to a different outcome bridge function.

The non-uniqueness of bridge functions has important implications for asymptotic properties of treatment effect estimators. 
Almost all previous results in proximal causal inference assume unique bridge functions when studying statistical inference. 
One exception is the penalized GMM estimator in \cite{imbens2021controlling}, 
which leverages penalization to power inference even with non-unique bridge functions.
But their approach only applies to parametric estimation of bridge functions.
{\cite{bennett2022inference}  proposes methods for inference on functionals of solutions to weakly identified nonparametric conditional moment equations, and consider proximal causal inference with non-unique bridge functions as a canonical example.}
\end{remark}

\section{Extensions}\label{sec: extension}
In this section, we extend our previous identification results. 
{We first relax \Cref{assump: unconfound-exp,assump: ext-valid} in \Cref{sec: relax-assump}. Then in \Cref{sec: control-fun} we provide an alternative identification via control functions rather than bridge functions. This can identify not only the average long term treatment effect but also the entire distribution of the counterfactual long term outcomes.}

\subsection{Relaxing  \Cref{assump: unconfound-exp,assump: ext-valid}}\label{sec: relax-assump}
We now extend our 
identification results by relaxing 
 \Cref{assump: unconfound-exp,assump: ext-valid}. 
 In particular, we relax \Cref{assump: ext-valid} by allowing the covariate distribution to be different in the experimental and observational data. 
 This is an important extension because these two types of data are often collected from different environments, where the covariate distributions are  likely to be different. 
 {For example, because  observational data are usually easier to collect and have larger scale than experimental data, the observational covariate distribution may be more representative of the entire population of interest, while experimental data may only correspond to a selective sub-population.} 
 Therefore, we consider the following assumption  to allow for different covariate distributions in  two types of data. 
\begin{assumption}[External Validity, Modified]\label{assump: ext-valid2}
Suppose that for any $a \in \braces{0, 1}$,
\begin{align}\label{eq: external-validity-0b}
\prns{S\prns{a}, U} \perp G \mid X,
\end{align}
and \Cref{eq: data-overlap} holds almost surely.
\end{assumption}

Moreover, we relax \Cref{assump: unconfound-exp} by allowing the treatment assignment in the experimental data to depend on covariates $X$, instead of being completely at random. 
This permits us to also accommodate stratified randomized designs for the experimental data. 
\begin{assumption}[Experimental Data, Modified]\label{assump: unconfound-exp2}
Suppose that for any $a \in \braces{0, 1}$,
\begin{align}
\prns{Y\prns{a}, S\prns{a}, U} \perp A \mid X, G = E,
\end{align}
and $0 < \Prb{A = 1 \mid X, G = E} < 1$ almost surely. 
\end{assumption}

Below we extend the doubly robust identification  in \Cref{thm: identification-DR}, which shows that the long-term average treatment effect is identifiable under the weaker \Cref{assump: ext-valid2,assump: unconfound-exp2}.

\begin{theorem}\label{corollary: covariate-exp-dr}
Fix functions $h: \Scal_3 \times \Scal_2 \times \Acal \times \Xcal \to \Rl$, $q: \Scal_2 \times \Scal_1 \times \Acal \times \Xcal \to \Rl$, and denote $\bar h_{E}\prns{a, x} = \Eb{h\prns{\Sc, \Sb, A, X} \mid A = a, X = x, G = E}$. Suppose \Cref{assump: CI,assump: ext-valid2,assump: unconfound-exp2,assump: unconfound-obs} hold, and either of the following two conditions holds:
\begin{enumerate}
\item The completeness condition in \Cref{assump: completeness} condition \ref{assump: completeness-2} and \Cref{assump: bridge} hold, and $h = h_0$ satisfies \Cref{eq: bridge-obs};
\item The completeness condition in \Cref{assump: completeness} condition \ref{assump: completeness-1} and \Cref{assump: bridge2} hold, and $q = q_0$ satisfies \Cref{eq: bridge2-obs-1} or \Cref{eq: bridge2-obs-2}.
\end{enumerate}
Then the average long-term treatment effect is identified as:
\begin{align*}
\tau
    &= \sum_{a\in\braces{0, 1}}(-1)^{1-a}\bigg\{\Eb{\bar{h}_{E}\prns{a, X} \mid G= O}  \nonumber \\
    &\, +\Eb{\frac{\Prb{G=E}}{\Prb{G=O}}\frac{\indic{A=a}\nu(X)}{e_a(X)}\prns{h\prns{\Sc, \Sb, A, X} - \bar{h}_E\prns{A, X} }\mid G=E} \nonumber \\
    &\, +\Eb{\frac{\Prb{G=E\mid A=a}}{\Prb{G=O\mid A=a}}\frac{\indic{A=a}\nu(X)}{e_a(X)}q\left(S_{2}, S_{1}, A, X\right)\left(Y-h\left(S_{3}, S_{2}, A, X\right)\right)\mid G=O}\bigg\},
\end{align*}
where $\nu(X) = \Prb{G=O\mid X}/\Prb{G=E\mid X}$ and  $e_a(X) = \Prb{A = a \mid X, G = E}$.
\end{theorem}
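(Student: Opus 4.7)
The target identity is doubly robust, so the plan is to verify it in the two separate cases (i) $h = h_0$ is a valid outcome bridge function, and (ii) $q = q_0$ is a valid selection bridge function. Write the three summands in the identification formula for a fixed $a$ as $T_1(a), T_2(a), T_3(a)$. I would first dispatch $T_2(a)$, which vanishes \emph{unconditionally}: by iterated expectation conditional on $(X, G=E)$ and \Cref{assump: unconfound-exp2}, $\mathbb{E}\!\left[\mathbb{I}\{A=a\}/\mathbb{P}(A=a\mid X, G=E)\cdot\{h-\bar{h}_E(A,X)\}\mid X, G=E\right]=0$ by the definition of $\bar h_E$, so $T_2(a)\equiv 0$ regardless of $h$ and $q$. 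So it remains to handle $T_1$ and $T_3$ in each case.

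\textbf{Case 1 ($h=h_0$).} The strategy is to apply the conditional-on-$X$ analogue of \Cref{thm: identification1}. Using \Cref{assump: CI}, \Cref{assump: unconfound-obs} and \Cref{assump: bridge} (via \Cref{lemma: bridge-obs}), integrating out $S_2$ and $U$ gives $\mathbb{E}[Y(a)\mid X,G=O] = \mathbb{E}[h_0(S_3(a),S_2(a),a,X)\mid X,G=O]$. \Cref{assump: ext-valid2} (which implies $S(a)\perp G\mid X$) together with \Cref{assump: unconfound-exp2} converts the right-hand side into $\bar h_E(a,X)$. Marginalizing over $X\mid G=O$ yields $T_1(a)=\mu(a)$. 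For $T_3(a)$, conditioning first on $(S_2,S_1,U,X,A)$ inside $G=O$ and invoking the conditional independence from \Cref{assump: CI} (which gives $(Y,S_3)\perp S_1\mid S_2,U,X,A,G=O$) together with \Cref{eq: bridge-U} produces $\mathbb{E}[Y - h_0(S_3,S_2,A,X)\mid S_2,S_1,U,X,A,G=O]=0$. The outer weight depends only on $(A,X)$ and $(S_2,S_1,A,X)$, so $T_3(a)=0$, giving $T_1+T_2+T_3=\mu(a)$.

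\textbf{Case 2 ($q=q_0$).} Here we exploit linearity of $T_3$ in $(Y,h)$ and show that the $h$-part of $T_3$ exactly cancels $T_1$, while the $Y$-part produces $\mu(a)$. Split $T_3(a)=T_{3,Y}(a)-T_{3,h}(a)$. For $T_{3,h}(a)=T_1(a)$: condition first on $(S_3,S_2,A,X)$ inside the $G=O$ expectation; only $q_0$ depends on $S_1$, and by \Cref{eq: bridge2-obs-1}, $\mathbb{E}[q_0\mid S_3,S_2,A,X,G=O]=p(S_3,S_2,X\mid A,G=E)/p(S_3,S_2,X\mid A,G=O)$. Using Bayes' rule to rewrite $p(X\mid A=a,G)=\mathbb{P}(A=a\mid X,G)p(X\mid G)/\mathbb{P}(A=a\mid G)$ for $G\in\{O,E\}$, the weight factor $\tfrac{\mathbb{P}(A=a\mid G=E)\mathbb{P}(X\mid G=O)}{\mathbb{P}(A=a\mid G=O)\mathbb{P}(X\mid G=E)\mathbb{P}(A=a\mid X,G=E)}$ cancels precisely, reducing $T_{3,h}(a)$ to $\int h(s_3,s_2,a,x)\,p(s_3,s_2\mid A=a,x,G=E)\,p(x\mid G=O)\,d(s_3,s_2,x)=\mathbb{E}[\bar h_E(a,X)\mid G=O]=T_1(a)$. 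For $T_{3,Y}(a)=\mu(a)$: observe that for each $x$, the rescaled function $\tilde q(s_2,s_1,a):=q_0(s_2,s_1,a,x)\cdot p(X=x\mid A=a,G=O)/p(X=x\mid A=a,G=E)$ is a conditional-on-$X$ selection bridge function, so the conditional-on-$X$ version of \Cref{thm: identification2} gives $\mathbb{E}[Y(a)\mid X=x,G=O]=\mathbb{E}[\tilde q\,Y\mid A=a,X=x,G=O]$. Marginalizing over $X\mid G=O$ and re-expressing via Bayes produces the weight inside $T_{3,Y}$ exactly, so $T_{3,Y}(a)=\mu(a)$. Combining, $T_1+T_3=\mu(a)$.

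\textbf{Wrap up and main obstacle.} Summing over $a\in\{0,1\}$ with signs $(-1)^{1-a}$ gives $\tau=\mu(1)-\mu(0)$, as claimed. The most delicate step is the Bayes bookkeeping in Case 2: verifying both $T_{3,h}=T_1$ and $T_{3,Y}=\mu(a)$ requires systematically rewriting the joint density ratios (across $G$, across $A$, and across $X$) so that they match the composite weight appearing in the theorem. The conceptual subtlety is that the \emph{same} $q_0$ satisfying the unconditional \Cref{eq: bridge2-obs-1} must be re-interpreted as (a rescaling of) a conditional-on-$X$ bridge function; once this reinterpretation is nailed down, everything else is routine manipulation using \Cref{assump: CI,assump: unconfound-obs,assump: unconfound-exp2,assump: ext-valid2}.
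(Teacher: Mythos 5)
Your proposal is correct and follows essentially the same route as the paper's proof: the same term-by-term decomposition, the same Bayes-rule rewriting of the composite weight into $\frac{\indic{A=a}}{\Prb{A=a\mid X,G=O}}\frac{p(X\mid A=a,G=O)}{p(X\mid A=a,G=E)}$, and the same conditional-on-$X$ analogues of \Cref{thm: identification1,thm: identification2} for the two cases. The only differences are cosmetic reorganizations (e.g., you observe $T_2\equiv 0$ unconditionally and then show $T_{3,h}=T_1$, whereas the paper verifies the equivalent cancellation $T_1+T_2-T_{3,h}=0$; and for some sub-steps you condition on observables via \Cref{eq: bridge-obs}/\Cref{eq: bridge2-obs-1} where the paper routes through $U$, or vice versa).
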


\Cref{corollary: covariate-exp-dr} shows that even under the weaker \Cref{assump: ext-valid2,assump: unconfound-exp2}, outcome and selection bridge functions can  still be used to identify the average long-term treatment effect. 
This again has the doubly robust property in that it only requires one of the bridge functions to be correct rather than both. 
Compared to \Cref{thm: identification-DR}, 
\Cref{corollary: covariate-exp-dr} additionally incorporates the ratio $\nu(X) = \Prb{G=O\mid X}/\Prb{G=E\mid X}$ to adjust for the discrepancy in the covariate distribution of the two types of data (\Cref{assump: ext-valid2}).
It also uses the propensity score $e_a(X) = \Prb{A=a\mid X, G=E}$ to account for the dependence of treatment $A$ on covariates $X$ in the experimental data (\Cref{assump: unconfound-exp2}).

In \Cref{sec: covariate-adaptive-identify}, we further show that by setting $q = 0, h = h_0$ or $h = 0, q = q_0$ in \Cref{corollary: covariate-exp-dr}, we can obtain direct analogues of \Cref{thm: identification1,thm: identification2} that involve only a single bridge function.  
In \Cref{sec: covariate-adaptive-estimate}, we prove that the estimating equation based on the doubly robust identification strategy in \Cref{corollary: covariate-exp-dr} satisfies the \emph{Neyman orthogonality} property \citep{chernozhukov2019double}, and show that the resulting treatment effect estimator has appealing asymptotic properties and is amenable to inference. 

In \Cref{sec: more-extension},  we present some additional extensions.
In \Cref{sec: pretreat}, we extend our identification strategies to the setting where pre-treatment outcomes are available. In \Cref{sec: partial},  
we show that it is possible to relax completeness conditions in \Cref{assump: completeness} so that short-term outcomes need only be rich enough to capture some of the unobserved confounders rather than all of them. 

\begin{remark}[Connection to \cite{ghassami2022combining}]
{The doubly robust identification strategy in \Cref{corollary: covariate-exp-dr} and its close variants based on only a single bridge function (see \Cref{corollary: covariate-exp} in \Cref{sec: covariate-adaptive-identify}) have close analogues in the concurrent and independent work \cite{ghassami2022combining}. 
Specifically, the proximal data fusion identification strategies in \cite{ghassami2022combining} use a set of short-term outcomes $M$ and an additional set of proxies $Z$ that satisfy $Z \perp (M, Y) \mid A, X, U, G = O$. We note that under our sequential outcome condition in \Cref{assump: unconfound-obs,assump: CI}, we have $S_1 \perp (S_3, Y) \mid S_2, A, X, U, G = O$. The identification strategies in \cite{ghassami2022combining}, when their $Z$  and $M$ are replaced by $S_1$ and $S_3$ respectively and $S_2$ is additionally conditioned on everywhere, are actually equivalent to our identification strategies. 
Despite the close relations to \cite{ghassami2022combining}, our paper uniquely shows that short-term outcomes alone suffice for addressing unmeasured confounding and enables this by assuming a novel sequential outcome condition.
In addition, we also provide many additional extensions that have no analogues in \cite{ghassami2022combining}. See \Cref{sec: literature-data-comb} for a summary.} 
\end{remark}

{
\subsection{A Control Function Approach}\label{sec: control-fun}
In previous parts, we identify the long-term treatment effect using bridge functions. 
In this part, we provide an alternative identification approach based on a \emph{control function}. Control functions are special variables constructed from existing variables that can help correct for confounding bias by conditioning on them \citep{wooldridge2015control}. Control functions are often constructed from instrumental variables \citep[e.g., ][]{blundell2001endogeneity,imbens2009identification,florens2008identification}, but \cite{nagasawa2018treatment} recently proposes control functions based on proxy variables under assumptions similar to those in the proximal causal inference literature (see the review in \Cref{sec: proximal}). 
We extend this approach to our setting of long term causal inference. This extension is not straightforward, noting that the assumptions of proximal causal inference are not exactly satisfied in our setting (see \Cref{remark: proximal}). 
}

{
Specifically, we will show that we can identify  the long term treatment effect by using 
the stochastic process $\Vcal \coloneqq \braces{p(s_3 \mid S_2, S_1, A, X, G = O): s_3 \in \mathcal{S}_3}$ as a control function. Here we consider identifying the expectation of any arbitrary transformation of the long term potential outcome, a more general parameter than the average effect parameter considered so far. 
\begin{theorem}\label{thm: control-fun}
Suppose \Cref{assump: CI,assump: ext-valid2,assump: unconfound-exp2,assump: unconfound-obs} and the completeness condition in 
\Cref{assump: completeness} condition \ref{assump: completeness-1} hold. Moreover, assume for $a \in \{0, 1\}$, the support of $\Vcal$ given $S_2, A = a, X, G = O$ is identical to the support of $\Vcal$ given $S_2, X, G = O$. 
Then for any function $r: \Ycal \mapsto \R{}$ and $a \in \braces{0, 1}$, 
\begin{align}\label{eq: identification-cf}
\begin{aligned}
    &\Eb{r(Y(a)) \mid G = O} \\
        &\qquad\qquad = \Eb{\Eb{\Eb{r(Y) \mid \Vcal, S_2, A = a,X, G = O} \mid A = a, X, G = E} \mid G = O}.
\end{aligned}
\end{align}
\end{theorem}
Besides the running \Cref{assump: CI,assump: ext-valid2,assump: unconfound-exp2,assump: unconfound-obs}, \Cref{thm: control-fun} also imposes the completeness condition in 
\Cref{assump: completeness} condition \ref{assump: completeness-1} and a common support condition. 
This completeness condition requires $S_3$ to be sufficiently informative for the unobserved confounders $U$, after taking into account other relevant variables. 
The common support condition enables us to vary $A$ while holding constant the control function $\Vcal$ after conditioning on $S_2, X, G = O$.
It is equivalent to the overlap condition that $0 < \Prb{A = 1 \mid \Vcal, S_2, X, G = O} < 1$ almost surely. 
This condition is possible only when $S_1$ can induce sufficient extra variations in $\Vcal$, or alternatively, when $S_1$ has a large support and it is sufficiently informative for $U$ \citep{nagasawa2018treatment}. 
Common support conditions like this are prevalent in the control function literature. See \cite{nagasawa2018treatment,imbens2009identification} for more discussions and justifications.  
}

{
We note that the identification formula in \Cref{eq: identification-cf} can be used to identify not only the average effect, but also the entire distribution of the counterfactual long term outcome $Y(a)$. This can be achieved by applying \Cref{eq: identification-cf} to the indicator function $r(\cdot) = \indic{\cdot \le y}$ for all $y \in \Ycal$. 
Actually, under the condition $2$ in \Cref{corollary: covariate-exp-dr}, we can also use a  selection bridge function to identify the entire distribution of $Y(a)$ (see \Cref{corollary: sel-bridge-dist} in \Cref{sec: covariate-adaptive-identify}). 
The condition $2$ in \Cref{corollary: covariate-exp-dr} (i.e., the existence of a selection bridge function and the completeness condition in 
\Cref{assump: completeness} condition \ref{assump: completeness-2}) has similar qualitative implications as the completeness condition and common support condition in \Cref{thm: control-fun}: they require both $S_1$ and $S_3$ to be sufficiently strong proxies for the unobserved confounders $U$. However, these two set of conditions are in general not directly comparable. See \cite{nagasawa2018treatment} for more discussions on the connections between conditions in the control function approach and conditions in the bridge function approach. 
}

{
Finally, we remark that estimating the target parameter based on the identification formula in \Cref{thm: control-fun} may be  challenging. 
On the one hand, the common support condition may fail in practical applications \citep{chernozhukov2020semiparametric}. 
We may follow \cite{nagasawa2018treatment,newey2021control} and impose additional (semi)-parametric restrictions on the function $\Eb{r(Y) \mid \Vcal, S_2, A = a,X, G = O}$. 
These assumptions can allow for model extrapolation across different values of $\Vcal$, thereby relaxing the common support condition. 
Another possibility is to derive partial identification bounds when the common support function is violated. 
On the other hand, the control function approach requires controlling for an infinitely dimensional stochastic process $\Vcal$, which cannot be  implemented exactly in practice. \cite{nagasawa2018treatment} proposes a dimension reduction technique for the  estimation of causal effects in the proximal causal inference setting. Similar techniques may be also useful in our setting. We leave the development of practical estimation methods based on the control function for the future study. 
}

\section{Numerical Studies}\label{sec: experiment}

\subsection{Real data analysis}\label{sec: real-data}
In this section, we illustrate the performance of our proposed estimators using data for the Greater Avenues to Independence (GAIN) job training program in California. 
GAIN is a job assistance program from the late 1980s designed to help low-income population. 
To evaluate its real impacts on employment, MDRC conducted a randomized experiment in 6 California counties.
{We use the dataset analyzed in \cite{athey2019surrogate} and focus on two counties: San Diego and Riverside.}
For each experiment participant, the dataset records a binary treatment variable indicating enrollment in the GAIN program, quarterly job employment information after treatment assignment,  and other covariate information (e.g., age, education, marriage).
See \citet{hotz2006evaluating,athey2019surrogate} for more information about the GAIN program.

In our numerical studies, we consider the  San Diego data as our experimental dataset $\mathcal{D}_E$, and 
construct an observational dataset $\mathcal{D}_O$ based on the Riverside data via a \emph{biased subsampling} described below. Then we apply our proposed estimators $\hat\tau_{\out}$, $\hat\tau_{\sel}$, and $\hat\tau_{\dr}$ to estimate the average treatment effect of the GAIN program on the long-term employment. 
Since the original data are from randomized experiments, we consider the average treatment effect thereof as the ``ground truth'' and use it to evaluate the errors of different estimators.

\subsubsection{Data Preparation}\label{sec: data}
For the experimental dataset, we directly use  data from San Diego, which include $n^{(1)}_{E} = 6978$ people in the treatment group and $n^{(0)}_{E} = 1154$  people in the control group.
For the observational dataset, we subsample from the Riverside data, which originally include $N_{1} = 4405$ people in the treatment group and $N_{0} = 1040$ people in the control group. 

We  randomly subsample units from the Riverside data according to a sampling probability function $\pi(A, U) \in (0, 1)$, where $A \in \braces{0, 1}$ is the treatment assignment and $U \in \braces{0, 1, 2, 3}$ is the highest education level (``$0$'' means below 9-th grade, ``$1$'' means $9$-th to $11$-th grade, ``$2$'' means $12$-th grade, and  ``$3$'' means above $12$-th grade).
This creates dependence between the treatment assignment and the education level for the units subsampled into  $\mathcal{D}_O$.
We choose education because it is quite likely to have \emph{persistent} effects on participants' potential employment in all quarters following the treatment. 
Then we drop the education level data from $\mathcal{D}_O$ (and also $\mathcal{D}_E$).
As a result, the education level becomes a plausible persistent  unmeasured  confounder in  $\mathcal{D}_O$.

To quantify the strength of unmeasured confounding in $\mathcal{D}_O$, we index the sampling probability function $\pi\prns{A, U}$ by a non-negative parameter $\eta$. 
We set the sampling probability for control units as $\pi(0, U) =  \max \{1 - {\eta U}/{3}, 0.2\}$ and the sampling probability for treated units as $\pi(1, U)$ that satisfies the following equation: 
\[
\frac{N_0}{N_0 + N_1} \pi(0, U) + \frac{N_1}{N_0 + N_1} \pi(1, U) = \frac{N_1}{N_1 + N_0} + \frac{N_0}{N_1 + N_0} \max\{1 - \eta, 0.2\}.
\]
It is easy to show that as $\eta$ grows, the discrepancy between $\pi(0, U)$ and $\pi(1, U)$ also grows. This implies stronger dependence between $U$ and $A$ in the observational dataset $\mathcal{D}_O$, thus stronger unmeasured confounding. 
In \Cref{sec: appendixsimulation} \Cref{prop: experiment}, we prove that with this choice of $\pi(1, U)$, the  subsampling  procedure does not shift the distribution of education level $U$, so that it does not violate \Cref{assump: ext-valid}.
Moreover, the subsampling procedure does not influence \Cref{assump: unconfound-exp,assump: unconfound-obs} since  the sampling probability function only depends on $A, U$. 

In our numerical studies, we consider  the short-term otucomes $(S_1, S_2, S_3)$ as the employment status in  the first two quarters, in the third and fourth quarters, and in the fifth and sixth quarters after the treatment respectively. We consider  the long-term outcome $Y$ as the $20$-th quarter employment.  
These are all binary variables indicating whether the participants are employed in the corresponding quarters after the treatment assignments.

\subsubsection{Results}

\begin{table}[!t]
\centering 
   \begin{tabular}{ccccccccccccccccc}
    \toprule
    \multirow{2}{*}{$\eta$} & & \multicolumn{4}{c}{$\hat\tau_{\out}$} & \multicolumn{4}{c}{$\hat\tau_{\sel}$} & \multicolumn{4}{c}{$\hat\tau_{\dr}$} &  \multicolumn{2}{c}{\scriptsize Athey et al.} & \multirow{2}{*}{\scriptsize Naive}  \\
     & & 0 & .33 & .67 & 1 &  0 & .33 & .67 & 1 & 0 & .33 & .67 & 1 & NR & CV &   \\
         \cmidrule(lr){1-2} \cmidrule(lr){3-6} \cmidrule(lr){7-10} \cmidrule(lr){11-14} \cmidrule(lr){15-16} \cmidrule(lr){17-17}
    \multirow{2}{*}{0} & MAE & 67 & 89 & 84 & 82 & 81 & 81 & 80 & 80 & 71 & 95 & 90 & 88 & 11 & 17 & 0.053 \\\vspace{0.25cm}
& Med & 67 & 89 & 84 & 82 & 81 & 81 & 80 & 80 & 71 & 95 & 90 & 88 & 11 & 17 & 0.053 \\
\multirow{2}{*}{0.2} & MAE & 18 & 84 & 80 & 78 & 79 & 79 & 79 & 79 & 24 & 89 & 86 & 85 & 19 & 15 & 0.059 \\\vspace{0.25cm}
& Med & 61 & 84 & 80 & 78 & 79 & 79 & 79 & 79 & 65 & 90 & 87 & 85 & 18 & 15 & 0.059\\
\multirow{2}{*}{0.4} & MAE & 17 & 79 & 75 & 74 & 76 & 76 & 76 & 76 & 23 & 84 & 82 & 80 & 25 & 13 & 0.067 \\\vspace{0.25cm}
& Med & 62 & 79 & 76 & 74 & 77 & 76 & 76 & 76 & 65 & 85 & 83 & 82 & 25 & 13 & 0.067\\
\multirow{2}{*}{0.6} & MAE & 10 & 73 & 70 & 69 & 72 & 72 & 72 & 72 & 17 & 79 & 77 & 76 & 31 & 11 & 0.076 \\\vspace{0.25cm}
& Med & 60 & 74 & 71 & 69 & 73 & 73 & 72 & 72 & 63 & 80 & 78 & 77 & 31 & 10 & 0.076 \\
\multirow{2}{*}{0.8} & MAE & -25 & 66 & 64 & 62 & 67 & 67 & 67 & 67 & -11 & 72 & 71 & 70 & 33 & 8 & 0.088 \\\vspace{0.25cm}
& Med & 57 & 66 & 64 & 62 & 68 & 67 & 67 & 67 & 59 & 73 & 72 & 71 & 32 & 8 & 0.088 \\
\multirow{2}{*}{1} & MAE & 24 & 65 & 63 & 62 & 68 & 68 & 67 & 67 & 32 & 72 & 71 & 70 & 35 & 6 & 0.095\\\vspace{0.25cm}
& Med & 57 & 65 & 63 & 62 & 69 & 68 & 68 & 68 & 60 & 73 & 72 & 71 & 36 & 6 & 0.095  \\
\multirow{2}{*}{1.2} & MAE & -267 & 64 & 62 & 61 & 68 & 68 & 68 & 67 & -323 & 72 & 70 & 70 & 37 & 5 & 0.104 \\\vspace{0.25cm}
& Med & 56 & 65 & 62 & 61 & 70 & 69 & 69 & 68 & 59 & 74 & 72 & 71 & 38 & 5 & 0.104  \\
\multirow{2}{*}{1.4} & MAE & -13 & 62 & 59 & 58 & 69 & 68 & 67 & 67 & -12 & 71 & 70 & 69 & 38 & 4 & 0.115 \\\vspace{0.25cm} 
& Med & 51 & 63 & 60 & 58 & 72 & 71 & 71 & 70 & 53 & 75 & 74 & 73 & 38 & 4 & 0.115 \\
\multirow{2}{*}{1.6} & MAE & 5 & 61 & 58 & 56 & 68 & 68 & 67 & 66 & 10 & 71 & 70 & 68 & 40 & 4 & 0.124 \\
& Med & 49 & 61 & 58 & 56 & 71 & 71 & 70 & 68 & 52 & 74 & 73 & 72 & 40 & 3 & 0.124\\
\bottomrule
\end{tabular} 
\caption{Percentage improvement in error over the naive unadjusted difference-in-mean estimator for different estimators: our proposed estimators $\hat\tau_{\out}$, $\hat\tau_{\sel}$ and  $\hat\tau_{\dr}$, and the estimator proposed in \cite{athey2020combining}.
Larger percentage decrease means better performance.
For reference, the last column shows the error of the naive unadjusted estimator. 
For our estimators, we fit bridge functions either using no regularization (the ``0'' column) or ridge regularization with $\lambda = 0.33 / n^{(a)}_O, 0.67 / n^{(a)}_O$ and $1/n^{(a)}_O$ (the ``.33'', ``.67'' and ``1'' columns) respectively. For \citet{athey2020combining}, we considered using no regularization (the ``NR'' column) and using ridge regularization where the regularization parameter is selected by cross validation (the ``CV'' column). 
}
\label{table: numerical}
\end{table}

\Cref{table: numerical} reports the performance of different estimators over 1000 replications\footnote{{When $\eta = 0$, the sampling probabilities satisfy $\pi(0, U) = \pi(1, U) = 1$, so there is no subsampling and all replications are identical.}} of the data subsampling.
Each replication results in  a different observational dataset $\mathcal{D}_{O}$ with different number of treated units $n_O^{(1)} < N_1$ and different number of control units $n_O^{(0)} < N_0$.
For evaluation we consider two criterions over the $1000$ replications:  Mean Absolute Error (MAE) and Median of Abolute Errors (MedAE).

In \Cref{table: numerical}, we compare the performance of our proposed estimators $\hat\tau_{\out}$, $\hat\tau_{\sel}$ and $\hat\tau_{\dr}$  in \Cref{sec: estimation} with two benchmarks: the naive difference-in-mean estimator that uses only the observational dataset and the imputation estimator proposed in Section 4.1 of \cite{athey2020combining}, which uses both datasets and information of all short-term outcomes $S=\prns{S_1, S_2, S_3}$. 
The naive estimator completely ignores confounding, and the estimator in \cite{athey2020combining} can only account for  short-term confounding but not persistent confounding. To evaluate the performance of our estimators and \citet{athey2020combining}, we consider the percentage decrease in either of our error criteria relative to the naive difference-in-mean  estimator. 
A positive value corresponds to improvement over the naive estimator, and a larger value indicates better performance. A negative value means worse error than the naive estimator. 

In our estimators and the imputation estimator in \citet{athey2020combining}, we need to first estimate some nuisance functions. 
We specify the outcome bridge function in our estimators and the imputation function in \cite{athey2020combining} to be linear functions, and specify the selection bridge function in our estimators to be of the form $q(s_2, s_1, a, x) = \exp(\beta^\top_{2, a} s_2 + \beta_{1, a}^\top s_1 + \beta_{0, a}^\top x + \gamma_a)$.
Since these are all simple parametric functions, we do not need the cross-fitting technique described in \Cref{sec: estimation}, but instead use the same data for nuisance estimation and the final plug-in estimation.
To estimate the bridge functions, we employ the generalized method of moment (GMM) approach in \cite{cui2020semiparametric}. We consider a standard GMM apporach and the approach with additional ridge regularization, i.e., regularizing the $L_2$ norms of bridge function coefficients in the GMM objectives, as suggested by \cite{imbens2021controlling}. When we estimate the bridge function corresponding to the treatment level $a\in\braces{0, 1}$, we set the regularization tuning parameter as $\lambda = \lambda_0 (n^{(a)}_O)^{-1}$ for $\lambda_0 \in\{ 0, 0.33, 0.67, 1\}$ (here $\lambda_0 = 0$ corresponds to no regularization).
For the imputation function of \cite{athey2020combining}, we implement it using either ordinary least squares or cross-validated ridge regression, for which we use
the default options in the \emph{R}  package \texttt{glmnet}~\citep{glmnet}.

From \Cref{table: numerical}, we observe that with $\lambda_0 \neq 0$, the performance of our proposed estimators $\hat\tau_{\out}, \hat\tau_{\sel}, \hat\tau_{\dr}$ is stable. They consistently outperform the benchmarks, in terms of both criteria. In particular, the doubly robust estimator $\hat\tau_{\dr}$ performs the best, reducing the estimation errors of benchmark methods by large margins. 
Notably, although the benchmark estimator proposed by \cite{athey2020combining} improves upon the naive estimator, it is always outperformed by our proposed estimators.
This may be due to the fact that  the estimator in \cite{athey2020combining} cannot handle persistent confounding.
We also observe that as the unmeasured confounding becomes stronger  (i.e., as $\eta$ grows), all estimators have higher estimation errors, especially the naive estimator. 

We observe that the MAE of our estimators when not using regularization is sometimes worse than the estimator of \citet{athey2020combining} and even the naive estimator. 
This is because unregularized estimators can be unstable and MAE is sensitive to outlier estimates. 
Indeed, estimating bridge functions requires solving  inverse problems defined by conditional moment equations, which can be intrinsically difficult. 
This problem is common in proximal causal inference, and regularization has been shown to be sometimes key for valid inference \citep{imbens2021controlling}.
Nevertheless, the MedAE, which is robust to outliers, for our estimators is still lower than the benchmarks. This shows that our proposed estimators, regularized or not, all  effectively address the confounding bias. {In the supplementary material \Cref{sec: appendixsimulation}, we heuristically probe the plausibility of \Cref{assump: bridge,assump: bridge2} according to the characterization of bridge functions in a discrete setting (see \cref{ex: discrete}). Moreover, we also 
report the performance of different estimators by varying the number of quarters used for surrogate construction.
Our result shows that our approach is consistently more accurate than the approach in \citet{athey2020combining}.}

\subsection{A simulation study}\label{sec: simulation}

{In \Cref{sec: real-data}, we focus on parametric estimation of bridge functions. In this part, we use a simulation study to further demonstrate the performance of our approach when bridge functions are nonlinear and estimated by more flexible neural networks.}

{
Specifically, for both the experimental and observational data, we first generate random vectors $\tilde X$ and $U$ from the multivariate normal distribution with mean zero and covariance matrix $0.5 \mathbf{I}$, where $\mathbf{I}$ is an identity matrix with suitable size. We fix the dimension of $U$ as $5$ and vary the dimension of $\tilde{X}$ over $\{5, 10, 15, 20\}$. We further generate $Y(a) \in \R{}, \tilde S_1(a) \in \R{5}, \tilde S_2(a) \in \R{}, \tilde S_3(a) \in \R{5}$ according to the following process:
\begin{align*}
&Y(a) = \tau_y a + \alpha_y^\top \tilde S_3(a) + \beta_y^\top \tilde X + \gamma_y^\top U + \epsilon_y, \\ 
&\tilde S_j(a) = \tau_j a + \alpha_j \tilde S_{j-1}(a) + \beta_j \tilde X + \gamma_j U + \epsilon_j, ~ j \in \braces{3, 2}\\
&\tilde S_1(a) = \tau_1 a +  \beta_1 \tilde X + \gamma_1 U + \epsilon_1,
\end{align*}
where $\tau_y, (\tau_j, \alpha_y, \beta_y, \gamma_y), (\alpha_j, \beta_j, \gamma_j)$ are scalers, vectors, and matrices of conformable sizes, and $\epsilon_y, \epsilon_j$ are independent mean-zero Gaussian terms with variance $0.5$. We generate the entries in $\tau_y, (\tau_j, \alpha_y, \beta_y, \gamma_y), (\alpha_j, \beta_j, \gamma_j)$ by first drawing numbers from the uniform distribution over the $[0, 1]$ interval and then rescaling them so that the $\ell_2$-norms of the vectors $(\tau_j, \alpha_y, \beta_y, \gamma_y)$ and the columns of $(\alpha_j, \beta_j, \gamma_j)$ are all equal to $0.5$. 
Moreover, we draw the treatment indicator $A$ according to $\mathbb{P}({A = 1 \mid \tilde{X}, U, G = E}) = \frac{1}{2}$ and $\mathbb{P}({A = 1 \mid \tilde{X}, U, G = O}) = ({1 + \exp(\kappa_1^\top \tilde{X} + \kappa_2^\top U)})^{-1}$, where the coefficients $\kappa_1$ and $\kappa_2$ are similarly generated by sampling and rescaling. 
According to \Cref{ex: linear} and \Cref{sec: linear}, the outcome and selection bridge functions exist under certain rank conditions. Moreover, the outcome bridge function is linear in $\tilde S_3, \tilde S_2, \tilde X, A$ and the selection bridge function is an exponential transformation of a linear function of $\tilde S_2, \tilde S_1, \tilde X, A$. 
To introduce nonlinear bridge functions, we apply a nonlinear transformation $g(\cdot) = \op{sign}(\cdot)\abs{\cdot}^q$ for $q \in \{1, 1.5, 2\}$ to each element of $\tilde X, \tilde S_1, \tilde S_2, \tilde S_3$, leading to $X,  S_1,  S_2,  S_3$ respectively. This is an invertible transformation that ensures a one-to-one correspondence between the original variables and transformed variables. With these transformations, the bridge functions with respect to the transformed variables  $X,  S_1,  S_2,  S_3$ are linear when $q = 1$ and nonlinear when $q = 1.5$ or $2$.
}

{We repeat generating data according to the process above for 200 times. In each replicate, we draw new values for all the model parameters and generate observational data and experimental data accordingly with equal sizes $n_O = n_E = 2000$. 
We apply our proposed doubly robust estimator and associated confidence intervals to the datasets. We estimate the bridge functions in two ways. One way is to use the minimax learning estimators in \cite{kallus2021causal,DikkalaNishanth2020MEoC}. Specifically, a minimax bridge function estimator is obtained as the solution to a minimax optimization problem derived from the corresponding conditional moment equation. In our study, we follow \cite{kallus2021causal} and specify the outer minimization function class (i.e., the class used to model the bridge function) as a neural network class and the inner maximization function class (i.e., the class used to guarantee the equivalence between the minimax optimization formulation and the  conditional moment equation formulation) as a Reproducing Kernel Hilbert Space (RKHS). For implementation details, we refer the readers to \Cref{sec: appendixsimulation}.
The other way is to use parametric estimators for the bridge functions, where the specifications are identical to those in \Cref{sec: real-data}. These model  specifications are correct when $q = 1$ but wrong when $q \in \{1.5, 2\}$. In both approaches, we also use a ridge regularization with $\lambda = 1 / n_O^{(a)}$, i.e., same as the ``1'' column in \Cref{table: numerical}. 
}

{
Table~\ref{table: simulation} reports the performance of our estimator and confidence intervals based on two kinds of bridge estimators, over the $200$ replicates. 
When the covariate dimension is relatively low (i.e., $\dim(X) = 5, 10$ or $15$), the empirical coverage of the minimax based approach is close to the $95\%$ nominal level in most of the specifications. For the higher dimensional regime $\dim(X) = 20$, the empirical coverage is slightly worse, which could be due to the curse of dimensionality, especially for the inner maximization over RKHS. }

{
When $q = 1$, the average bias of the parametric model based approach is consistently smaller than the minimax based approach. This is expected as the parametric model is  correctly specified in this case. In contrast, in the nonlinear settings with  $q = 1.5$ and $2$, the parametric models are misspecified, so the RMSE and the average bias of the parametric based approach are overall worse than the minimax based approach, especially for the more nonlinear scenario $q = 2$. This shows the benefit of using flexible function classes to model complex bridge functions. }

{
Interestingly, when $q = 1.5, 2$, the confidence intervals based on the parametric bridge function estimators do not under-cover the truth, even though the corresponding point estimators have larger bias and RMSE. Actually, they tend to over-cover the truth in many specifications. 
This is perhaps due to the fact that the asymptotic variance tends to be over-estimated, leading to excessively conservative confidence interval lengths.  As shown in \Cref{table: simulation}, the average confidence interval length produced by the parametric based approach is much larger than the minimax based approach; sometimes, it can even be 4 times larger. Consequently, even with higher average bias, the parametric based approach is still able to get high empirical coverage.} {In \Cref{sec: appendixsimulation}, we include additional results when we increase the dimension of $S_1, S_2, U$ from $5$ to $10$. The estimators and confidence interval coverage perform slightly worse in the higher dimensional setting.}

{
\begin{table}[t!]
    \centering
    \begin{tabular}{cccccccccc}
    \toprule
    \multirow{2}{*}{$q$} & & \multicolumn{2}{c}{$\textrm{dim}(X) = 5$}
     & \multicolumn{2}{c}{$\textrm{dim}(X) = 10$} & \multicolumn{2}{c}{$\textrm{dim}(X) = 15$} & \multicolumn{2}{c}{$\textrm{dim}(X) = 20$}\\ 
        & & MinMax & Param. &  MinMax & Param.  & MinMax & Param.  & MinMax & Param. \\ \cmidrule(lr){1-2}\cmidrule(lr){3-4}\cmidrule(lr){5-6}\cmidrule(lr){7-8}\cmidrule(lr){9-10} 
       \multirow{4}{*}{1} & CP & 90.0\% & 94.5\% &  96.0\% & 95.5\% & 94.0\% & 95.5\% & 90.5\% & 95.0\% \\
       & CI Len. & 0.541 & 0.576 & 0.541 & 0.578  & 0.548 & 0.579  & 0.546 & 0.579 \\
       & RMSE & 0.157 & 0.151 &  0.135 & 0.139 & 0.146 & 0.147  & 0.155 & 0.154 \\\vspace{0.25cm}
       & Bias & 0.058 & 0.017 & 0.033 & 0.015  & 0.036 & 0.002  & 0.054 & 0.001 \\ 
       \multirow{4}{*}{1.5} & CP & 96.5\% & 97.5\% & 95.5\% & 97.0\% & 95.0\% & 97.0\% & 92.5\% & 97.0\% \\
       & CI Len. &  0.619 & 0.823 & 0.576 & 0.828 & 0.578 & 0.819 & 0.574 & 0.823 \\
       & RMSE & 0.159 & 0.184 & 0.143 & 0.199 & 0.160 & 0.202  & 0.157 & 0.197 \\\vspace{0.25cm}
       & Bias & 0.033 & 0.031 & 0.036 & 0.075 & 0.034 & 0.052  & 0.020 & 0.056 \\ 
       \multirow{4}{*}{2} & CP & 95.5\% & 96.0\% & 94.0\% & 97.0\% & 93.0\% & 97.5\% & 93.0\% & 98.0\% \\
       & CI Len. & 0.698 & 2.229 & 0.611 &  2.023 & 0.612 & 1.842 & 0.595 & 2.028 \\
       & RMSE & 0.187 & 0.698 & 0.157 & 0.581 & 0.192 & 0.521  & 0.165 & 0.574 \\
       & Bias & 0.073 & 0.028 & 0.049 & 0.085  & 0.073 & 0.096 & 0.032 &  0.113 \\
       \bottomrule
    \end{tabular}
    \caption{The empirical coverage probability (CP) of the $95\%$-confidence interval and its average length (CI len.), and the root mean squared error (RMSE) and the average absolute bias (Bias) of the doubly robust estimators, with bridge functions estimated by the minimax approach (MinMax) and parametric approach (Param.) respectively.  
    The covariate dimension varies from $5$ to $20$ and the degree of nonlinearity varies from $q = 1$ to $2$. 
    Here $q = 1$ corresponds to linear (or exponential linear) bridge functions.}
    \label{table: simulation}
\end{table}}

\section{Conclusions}\label{sec: conclusion}
In this paper, we consider combining experimental and observational data for long-term causal inference. 
We are particularly interested in the challenge of persistent confounding, i.e., the presence of unobserved confounders that affect both the short-term and long-term outcomes. 
To overcome this challenge, we leverage the sequential structure of multiple  short-term outcomes and use part of them as proxy variables for the unobserved confounders. 
We propose several novel identification strategies for the average long-term treatment effect. Based on them, we design flexible treatment effect estimators and inference methods, for which we provide asymptotic guarantees. 
Our results show that the long-term treatment effect can be identified and estimated under much more general conditions  than before. 

Beyond these specific results, our work reveals an interesting role for the structure of short-term outcomes in long-term causal inference. 
To the best of our knowledge, the structure of repeated outcome measurements is largely unexplored in the long-term causal inference literature. 
We hope that our work will inspire other researchers to study other plausible structures for 
short-term outcomes and benefits these can have for
long-term causal inference.

\section*{Acknowledgments}
The authors thank
the associate editor and two anonymous reviewers for their insights and suggestions, which have led to significant improvement of this paper. \\

% \noindent\emph{Conflict of interest}: We have no conflict of interest to disclose.

\section*{Funding}

Guido Imbens thanks the Office of Naval Research for support under grant numbers N00014-17-1-2131 and N00014-19-1-2468 and Amazon for a gift.
Nathan Kallus acknowledges that this material is based upon work supported by the National Science Foundation under Grant No. 1846210. Xiaojie Mao is supported in part by National Natural Science Foundation of China (grant numbers 72201150, 72322001, and 72293561) and National Key
R\&D Program of China (grant number 2022ZD0116700). Yuhao Wang is supported in part by National Key R \& D Program of China (2022YFA1008100), the 2030 Innovation Megaprojects of China (Programme on New Generation Artificial Intelligence) Grant No. 2021AAA0150000, and the grant of National Natural Science Foundation of China (NSFC) 12201341.

\section*{Data availability}
The California GAIN dataset analyzed in \Cref{sec: real-data} contains sensitive individual data and cannot be shared publicly. It may be shared upon request. The data analyzed in \Cref{sec: simulation} are simulated according to the processes described in that section. The code script used to generate the simulated data is available at \url{https://github.com/CausalML/LongTermCausalInference}.

\bibliographystyle{plainnat}
\bibliography{semiparametric}

\newpage 
\appendix

This appendix is organized as follows. In \Cref{sec: validation}, we compare the identification assumptions in our paper with assumptions in some existing literature. In \Cref{sec: selection-example}, we provide examples for the selection bridge function in a discrete data setting and a linear model setting respectively. In \Cref{sec: existence}, we discuss sufficient conditions for the existence of bridge functions in general nonparametric models. \Cref{sec: covariate-adaptive} is supplementary to \Cref{sec: relax-assump}, discussing the identification and estimation of the long-term average treatment effect when \Cref{assump: unconfound-exp,assump: ext-valid} are relaxed. \Cref{sec: more-extension} presents several extensions of our framework, while \Cref{sec: appendixsimulation} provides more simulation results and implementation details. 
Finally, \Cref{sec: proof} collects all proofs. 

\appendix

{%
\section{Comparison to Other Identifying Conditions}\label{sec: validation}

To identify the average long-term treatment effect using data combination, restrictions must be imposed on unobserved confounders. In this paper, we crucially leverage an assumed sequential structure in the short-term outcomes and an assumption that these are sufficiently strong proxies (our \Cref{assump: CI,assump: completeness}). In this section, we compare to two other sets of assumptions that, in addition to \Cref{assump: unconfound-obs,assump: unconfound-exp,assump: ext-valid}, minimally ensure identification, and we discuss their relationship to persistent confounding. Each of the following provide an alternative setting that is \emph{just identified}, meaning dropping any one assumption breaks identification. Indeed, in our paper we needed \Cref{assump: CI,assump: completeness} for identification.

\subsection{Comparison to \citet{athey2020combining}}

\citet{athey2020combining} assume \emph{latent unconfoundedness}: $Y(a) \perp A \mid S(a), X, G = O$. The assumption, which makes no explicit reference to presence or absence of persistent confounding, states that, were it observed, controlling for $(S(a),X)$ would be sufficient. Along with \Cref{assump: unconfound-obs,assump: unconfound-exp,assump: ext-valid} (or, \Cref{assump: ext-valid2}), they show this assumption ensures identification.

There are many ways to potentially satisfy this abstract assumption. One possibility is if $S(a)=f_a(U)$ is an invertible transformation of $U$. Such a production-function approach calls to mind, for example, assumption 3.2 of \citet{Imbens2006}. This, however, precludes lossyness or noise in the relationship between short-term outcomes and confounders. Alternatively, we can consider restrictions encoded solely by causal diagrams that would ensure latent unconfoundedness holds. 
One such diagram is shown in \Cref{figure: DAG-obs-1}: here the unobserved confounders are only \emph{short-term confounders} ($U_s$) in they that can only affect the treatment and short-term outcomes, but not the long-term outcome. Another diagram is shown in \Cref{figure: DAG-obs-2}: here the unobserved confounders are only \emph{outcome confounders} ($U_o$) in that they simultaneously affect the short-term and long-term outcomes, but not the treatment.

Latent unconfoundedness generally may not hold in a diagram where confounders are persistent (\Cref{figure: DAG-a}), and in fact it does not whenever a distribution is ``well-represented'' by such a diagram. 
{In \Cref{figure: DAG-obs-app},
we duplicate the single world intervention graph in  
\Cref{figure: SWIG-obs} for the observational data}. 
This summarizes the statistical independences among the potential outcomes and other variables in the observational data. 
In this graph, the path $A \leftarrow U \rightarrow Y(a)$ is \emph{not} blocked by the nodes $S(a)$ and $X$, so $A$ and $Y(a)$ are \emph{not} $d$-separated by $S(a)$ and $X$. This means that the latent unconfoundedness assumption in \cite{athey2020combining}
is violated when the distribution of the random variables $(X, U, A, S(a), Y(a))$ is \emph{faithful} to the single world intervention graph in \Cref{figure: DAG-obs-app} \citep{spirtes2000causation}, roughly meaning that the graph is minimal for the distribution.\footnote{Formally, we say that a distribution $\mathbb{P}$ on the nodes of graph $\mathcal{G}$
 is faithful to the graph $\mathcal{G}$ when for any random variables $(A, B, C)$ in the graph, $A \perp B \mid C$ under the distribution $\mathbb{P}$ \emph{if and only if} $A$ and $B$ are $d$-separated
 by $C$ in  the graph $\mathcal{G}$ \citep{spirtes2000causation}.}

\begin{figure}[t]
\begin{subfigure}[b]{0.5\textwidth}
\centering 
\begin{tikzpicture}
\node[draw, circle, text centered] (A) {$A$};
\node[draw, circle, text centered, right=4cm of A] (Y) {$Y$};
\node[draw, circle, text centered, right=1.6cm of A] (S) {$S$};
\node[draw, circle, text centered, above right=1cm and 0.8cm of A] (X) {$X$};
\node[draw, circle, dashed, text centered, above right=1cm and 3.2cm of A] (U) {$U_{\op{s}}$};

\draw[->] (A) -- (S);
\draw[->] (A) to [bend right] (Y);
\draw[->] (S) -- (Y);
\draw[->] (X) -- (A);
\draw[->] (X) -- (S);
\draw[->] (X) -- (Y);
\draw[->] (U) -- (A);
\draw[->] (U) -- (S);
\draw[<->] (X) -- (U);
\end{tikzpicture}%
\caption{Short-term confounders $U_{\op{s}}$.}%
\label{figure: DAG-obs-1}%
\end{subfigure}
\begin{subfigure}[b]{0.5\textwidth}
\centering 
\begin{tikzpicture}
\node[draw, circle, text centered] (A) {$A$};
\node[draw, circle, text centered, right=4cm of A] (Y) {$Y$};
\node[draw, circle, text centered, right=1.6cm of A] (S) {$S$};
\node[draw, circle, text centered, above right=1cm and 0.3cm of A] (X) {$X$};
\node[draw, circle, dashed, text centered, above right=1cm and 3.2cm of A] (U) {$U_{\op{o}}$};

\draw[->] (A) -- (S);
\draw[->] (A) to [bend right] (Y);
\draw[->] (S) -- (Y);
\draw[->] (X) -- (A);
\draw[->] (X) -- (S);
\draw[->] (X) -- (Y);
\draw[->] (U) -- (S);
\draw[->] (U) -- (S);
\draw[->] (U) -- (Y);
\draw[<->] (X) -- (U);
\end{tikzpicture}%
\caption{Outcome confounders $U_{\op{o}}$.}%
\label{figure: DAG-obs-2}%
\end{subfigure}
\caption{Short-term confounders and outcome confounders in the observational data.}%
\end{figure}
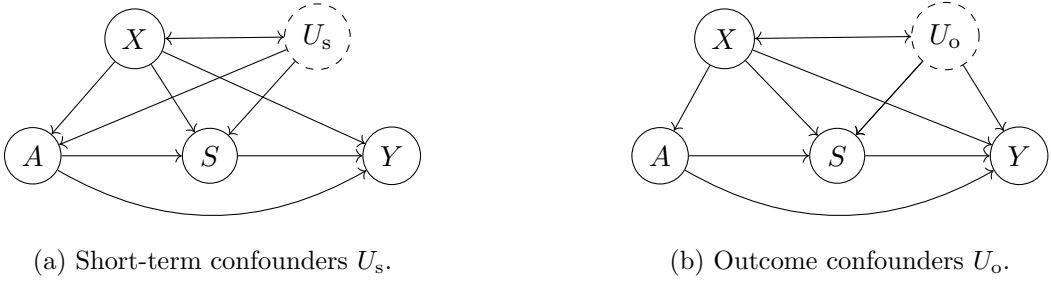

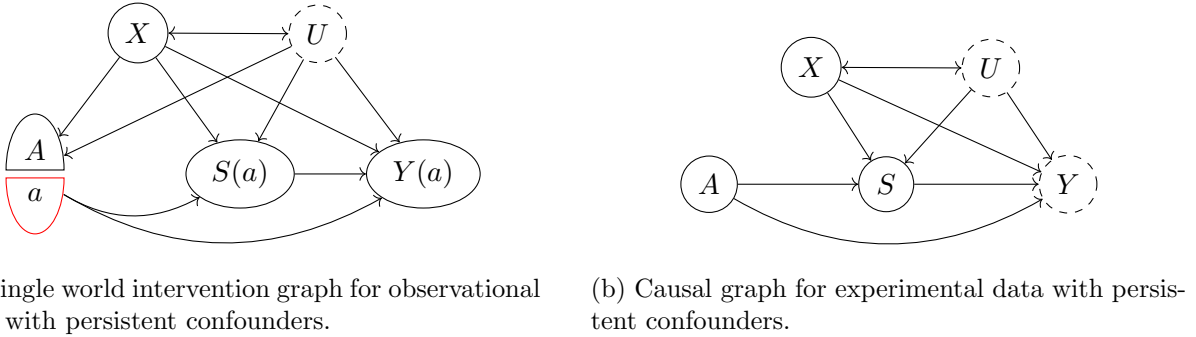
\begin{figure}[h]
\begin{subfigure}[b]{0.48\textwidth}
\centering 
\begin{tikzpicture}
\tikzset{swig hsplit={gap=3pt,
line color lower=red}}
\node[name=A, shape=swig hsplit]{
\nodepart{upper}{$A$}
\nodepart{lower}{{$a$}} };
\node[draw, ellipse, text centered, right=4cm of A] (Y) {$Y({a})$};
\node[draw, ellipse, text centered, right=1.6cm of A] (S) {$S({a})$};
\node[draw, circle, text centered, above right=1cm and 0.8cm of A] (X) {$X$};
\node[draw, circle, dashed, text centered, above right=1cm and 3.2cm of A] (U) {$U$};

\draw[->] (A) to [bend right] (S);
\draw[->] (A) to [bend right] (Y);
\draw[->] (S) -- (Y);
\draw[->] (X) -- (A);
\draw[->] (X) -- (S);
\draw[->] (X) -- (Y);
\draw[->] (U) -- (A);
\draw[->] (U) -- (S);
\draw[->] (U) -- (Y);
\draw[<->] (X) -- (U);
\end{tikzpicture}
\caption{Single world intervention graph for observational data with persistent confounders.}
\label{figure: DAG-obs-app}
\end{subfigure}
\hfill
\begin{subfigure}[b]{0.48\textwidth}
\centering 
\begin{tikzpicture}
\node[draw, circle, text centered] (A) {$A$};
\node[draw, dashed, circle, text centered, right=4cm of A] (Y) {$Y$};
\node[draw, circle, text centered, right=1.6cm of A] (S) {$S$};
\node[draw, circle, text centered, above right=1cm and 0.8cm of A] (X) {$X$};
\node[draw, circle, dashed, text centered, above right=1cm and 3.2cm of A] (U) {$U$};

\draw[->] (A) -- (S);
\draw[->] (A) to [bend right] (Y);
\draw[->] (S) -- (Y);
\draw[->] (X) -- (S);
\draw[->] (X) -- (Y);
\draw[->] (U) -- (S);
\draw[->] (U) -- (Y);
\draw[<->] (X) -- (U);
\end{tikzpicture}
\caption{Causal graph for experimental data with persistent confounders.}
\label{figure: DAG-exp-app}
\end{subfigure}
\caption{Graphs for persistent confounders.}
\label{figure: DAG-app}
\end{figure}

\subsection{Comparison to \citet{athey2019surrogate}}

\citet{athey2019surrogate} assume that the long-term outcome is independent of the treatment given the short-term outcomes, $A \perp Y \mid S, X, G = E$. This is based on the surrogate criterion proposed by \cite{prentice1989surrogate}. Crucially they show this condition enables identification even when $A$ is missing in the observational data, which can be extremely practical. This condition, however, rules out any direct effect of the treatment on the long-term outcome and any confounding between short-term and long-term outcomes, as might be induced by a persistent confounder. 

In \Cref{figure: DAG-exp-app}, we duplicate the causal diagram in \Cref{figure: DAG-exp-a}, which describes the causal relationship between variables in the experimental data. 
 In the setting of \Cref{figure: DAG-exp-app}, the surragacy condition is violated when the distribution of the random variables $(X,U,A,S,Y)$ is faithful. 
 (Note we do not use a single world intervention graph here as the assumption is made on factual variables, rather than on potential outcomes.)
 Indeed, in \Cref{figure: DAG-exp-app}, the short-term outcomes $S$ are colliders between the treatment $A$ and the persistent confounders $U$, so conditioning on $S$ induces dependence between the treatment $A$ and the long-term outcome $Y$. 
 Moreover, the treatment $A$ can also have direct causal effect on the long-term outcome $Y$. 
 Therefore, unless the dependence due to the direct causal effect of the treatment and the dependence due to conditioning on colliders happen to cancel with each other (which cannot happen if the distribution is faithful), the surrogacy condition in \cite{athey2019surrogate} is violated.
}

\section{Selection Bridge Functions in Special Examples}\label{sec: selection-example}
\subsection{Discrete Setting}
Recall that in \Cref{ex: discrete}, we consider $\Scal_1 = \Scal_2 = \Scal_3 = \braces{s_{(j)}: j = 1, \dots, M_s}$ and $\Ucal = \braces{u_{(k)}: k = 1, \dots, M_u}$. 
For any $s_2 \in \Scal_2, a \in \Acal, x \in \Xcal$, let $P(\mathbf{S}_1 \mid s_2, a, \mathbf{U}, x) \in \R{M_s\times M_u}$ denote the matrix whose $(j, k)$th element is 
$$\Prb{S_1 = s_{(j)} \mid S_2 = s_2, A = a, U = u_{(k)}, X = x, G = O},$$
and $r(s_2, \mathbf{U}, x; a) \in \R{M_u}$ denote the vector whose $k$th element is 
$${p\prns{s_2, u_{(k)}, x\mid a, G = E}}/{p\prns{s_2, u_{(k)}, x \mid a, G = O}}.$$
The existence of a selection bridge function is equivalent to the existence of a solution $z \in \R{M_s}$ to the following linear equation system for any $s_2 \in \Scal_2, a \in\Acal, x\in\Xcal$:
\begin{align*}
\bracks{P(\mathbf{S}_1 \mid s_2, a, \mathbf{U}, x)}^\top z = r(s_2, \mathbf{U}, x; a).
\end{align*}
One sufficient condition for the existence of solutions to the equation above is that 
the matrix $P(\mathbf{S}_1 \mid s_2, a, \mathbf{U}, x)$ has full column rank for any $s_2 \in \Scal_2, a \in\Acal, x\in\Xcal$. 
This full column rank condition means that $S_1$ is strongly informative for $U$.

\subsection{Linear Models}\label{sec: linear}
Recall that in \Cref{ex: linear}, $\prns{Y, S_3, S_2, S_1}$ are generated from the following linear structural equation system: 
\begin{align*}
&Y = \tau_y A + \alpha_y^\top S_3 + \beta_y^\top X + \gamma_y^\top U + \epsilon_y, \\ 
&S_j = \tau_j A + \alpha_j S_{j-1} + \beta_j X + \gamma_j U + \epsilon_j, ~ j \in \braces{3, 2}\\
&S_1 = \tau_1 A +  \beta_1 X + \gamma_1 U + \epsilon_1,
\end{align*}
where $\tau_y, (\tau_j, \alpha_y, \beta_y, \gamma_y), (\alpha_j, \beta_j, \gamma_j)$ are scalers, vectors, and matrices of conformable sizes respectively, and $\epsilon_y, \epsilon_j$ are independent mean-zero noise terms such that $\epsilon_y \perp (S, A, U, X)$ and $\epsilon_j \perp (S_{j-1}, \dots, S_1, A, U, X)$. 

We further assume 
$$\Prb{A = 1\mid U, X, G = E} = 1/2, ~~ \Prb{A = 1\mid U, X, G = O} = \prns{1+\exp(\kappa_1^\top U + \kappa_2^\top X)}^{-1}.$$
We also assume that $\prns{\epsilon_3, \epsilon_2, \epsilon_1}$ follows a joint Gaussian distribution with zero mean and a diagonal covariance matrix. 
Denote the covariance matrix for $\epsilon_j$ as $\sigma_j^2 I_j$ for $j = 1, \dots, 3$ where $I_j$ is an identity matrix of formable size. 

\begin{proposition}\label{prop: sel-linear}
Given the data generating process described above, $S_1 \mid S_2, A, U, X, G = O$ follows a Gaussian distribution with conditional expectation
\begin{align*}
\Eb{S_1 \mid S_2, A, U, X, G = O} = \lambda_1 S_2 + \lambda_2 A + \lambda_3 X + \lambda_4 U,
\end{align*} 
where 
\begin{align*}
&\lambda_1 = \sigma_1^2\alpha_2^\top\prns{\sigma_1^2 \alpha_2\alpha_2^\top + \sigma_2^2 I_2}^{-1}, \\
&\lambda_2 = \prns{I_1 - \sigma_1^2 \alpha_2^\top\prns{\sigma_1^2 \alpha_2\alpha_2^\top + \sigma_2^2 I_2}^{-1}\alpha_2}\tau_1 - \sigma_1^2\alpha_2^\top\prns{\sigma_1^2 \alpha_2\alpha_2^\top + \sigma_2^2 I_2}^{-1}\tau_2 \\
&\lambda_3 = \prns{I_1 - \sigma_1^2 \alpha_2^\top\prns{\sigma_1^2 \alpha_2\alpha_2^\top + \sigma_2^2 I_2}^{-1}\alpha_2}\beta_1 - \sigma_1^2\alpha_2^\top\prns{\sigma_1^2 \alpha_2\alpha_2^\top + \sigma_2^2 I_2}^{-1}\beta_2 \\
&\lambda_4 = \prns{I_1 - \sigma_1^2 \alpha_2^\top\prns{\sigma_1^2 \alpha_2\alpha_2^\top + \sigma_2^2 I_2}^{-1}\alpha_2}\gamma_1 - \sigma_1^2\alpha_2^\top\prns{\sigma_1^2 \alpha_2\alpha_2^\top + \sigma_2^2 I_2}^{-1}\gamma_2.
\end{align*}
When $\lambda_4$ has full column rank, then for any $\tilde\theta_1$ such that $\tilde\theta_1^\top\lambda_4 = \kappa_2^\top$ and $a \in \Acal$, there exists a selection bridge function of the  following form for some matrices $\tilde\theta_2, \tilde\theta_0$ of conformable sizes and some constants $c_{1, a}, c_{0, a}$: 
\begin{align*}
q_0\prns{S_2, S_1, a, X} = c_{1, a}\exp\prns{(-1)^a\prns{\tilde\theta_2^\top S_2 + \tilde\theta_1^\top S_1 + \tilde \theta_0^\top X}} + c_{0, a}.
\end{align*}
\end{proposition}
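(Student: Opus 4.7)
My plan is to verify the proposition via direct Gaussian calculations in three steps. \textbf{First}, derive the conditional distribution of $S_1$ given $(S_2, A, U, X, G=O)$. Under the linear structural equations, conditional on $(A, U, X)$ the stacked vector $(S_1, S_2)$ is jointly Gaussian because $\epsilon_1, \epsilon_2$ are independent Gaussians. Substituting $S_1$ into the equation for $S_2$ and reading off moments yields $\mathrm{Var}(S_1 \mid A,U,X) = \sigma_1^2 I_1$, $\mathrm{Var}(S_2 \mid A,U,X) = \sigma_1^2 \alpha_2\alpha_2^\top + \sigma_2^2 I_2$, and $\mathrm{Cov}(S_1, S_2 \mid A,U,X) = \sigma_1^2 \alpha_2^\top$. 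Applying the standard Gaussian conditional expectation formula and collecting the coefficients of $S_2, A, X, U$ recovers the claimed $\lambda_1, \lambda_2, \lambda_3, \lambda_4$. I also record
\begin{equation*}
\Sigma_1 \;:=\; \mathrm{Var}(S_1 \mid S_2, A, U, X, G=O) \;=\; \sigma_1^2 I_1 - \sigma_1^4 \alpha_2^\top\bigl(\sigma_1^2 \alpha_2\alpha_2^\top + \sigma_2^2 I_2\bigr)^{-1}\alpha_2,
\end{equation*}
which will appear in a normalizing constant below.

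\textbf{Second}, reduce the left-hand side of \Cref{eq: bridge2-U} to a closed form. By \Cref{eq: external-validity-0} together with \Cref{assump: unconfound-exp,assump: unconfound-obs}, we have $(U, X) \perp G$ and $p(S_2 \mid A, U, X, G=E) = p(S_2 \mid A, U, X, G=O)$, so Bayes' rule collapses the density ratio to a ratio of propensity scores. Plugging in $\Prb{A=1 \mid U, X, G=O} = (1+\exp(\kappa_1^\top U + \kappa_2^\top X))^{-1}$ and $\Prb{A=1 \mid U, X, G=E} = 1/2$ then gives, for each $a \in \Acal$,
\begin{equation*}
\frac{p(\Sb, U, X \mid A=a, G=E)}{p(\Sb, U, X \mid A=a, G=O)} \;=\; b_a\bigl(1 + \exp\bigl((-1)^a(\kappa_1^\top U + \kappa_2^\top X)\bigr)\bigr),
\end{equation*}
with a scalar $b_a$ depending only on the marginal sample-propensities and $a$.

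\textbf{Third}, substitute the trial form $q_0(S_2, S_1, a, X) = c_{1,a}\exp\bigl((-1)^a(\tilde\theta_2^\top S_2 + \tilde\theta_1^\top S_1 + \tilde\theta_0^\top X)\bigr) + c_{0,a}$ into the right-hand side of \Cref{eq: bridge2-U} and apply the Gaussian moment generating function with mean $\lambda_1 S_2 + \lambda_2 a + \lambda_3 X + \lambda_4 U$ and variance $\Sigma_1$. The result is $c_{0,a}$ plus an exponential whose exponent is affine in $(S_2, X, U)$. Matching term by term against the Step~2 formula: the $U$-coefficient produces the linear system $\lambda_4^\top \tilde\theta_1 = \kappa_1$, which is solvable whenever $\lambda_4$ has full column rank; the $S_2$-coefficient forces $\tilde\theta_2 = -\lambda_1^\top \tilde\theta_1$; the $X$-coefficient forces $\tilde\theta_0 = \kappa_2 - \lambda_3^\top \tilde\theta_1$; and the scalar prefactor $\exp(\tfrac{1}{2}\tilde\theta_1^\top \Sigma_1 \tilde\theta_1)\exp((-1)^a a\tilde\theta_1^\top \lambda_2)$ together with the additive constant determines $c_{1,a}$ and sets $c_{0,a} = b_a$.

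The main obstacle will be Step~1, where the block-matrix conditional formula must be executed carefully so that the four claimed matrix expressions $\lambda_1,\dots,\lambda_4$ drop out cleanly, and Step~3, where the MGF computation must be verified to yield an exponent that is linear in $U$ with no cross-terms so that coefficient matching succeeds. Once those two reductions are verified, the remaining algebra is routine bookkeeping.
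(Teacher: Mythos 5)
Your proposal is correct and follows essentially the same route as the paper's own proof: reduce the density ratio in \Cref{eq: bridge2-U} to $\Prb{A=a\mid G=O}/\Prb{A=a\mid U,X,G=O}$ via \Cref{lemma: data-overlap} and Bayes' rule, compute the Gaussian conditional law of $S_1$ given $(S_2,A,U,X)$ from the joint covariance of $(S_1,S_2)$, and match exponents after applying the Gaussian moment generating function to the posited form of $q_0$. The only divergence is that your coefficient matching on $U$ yields $\lambda_4^\top\tilde\theta_1=\kappa_1$ rather than the statement's $\tilde\theta_1^\top\lambda_4=\kappa_2^\top$; your version is the one that actually follows from the matching (the paper carries $\kappa_2$ as an apparent typo in both its statement and its proof), so this is a correction rather than a gap.
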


\begin{figure}
     \centering
     \begin{subfigure}[b]{0.48\textwidth}
         \centering
         \includegraphics[width=\textwidth]{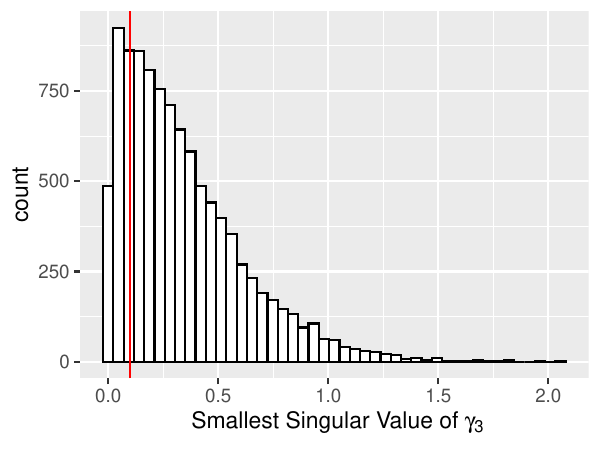}
     \end{subfigure}
     \hfill
     \begin{subfigure}[b]{0.48\textwidth}
         \centering
         \includegraphics[width=\textwidth]{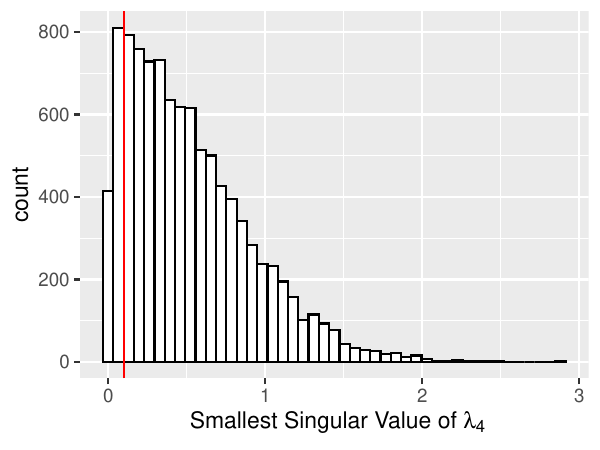}
     \end{subfigure}
        \caption{Distributions of the smallest singular values of the $\lambda_3$ and $\gamma_4$ matrices over $10000$ replications. The vertical bar corresponds to the $0.1$ singular value.}
        \label{fig: smallest-sv}
\end{figure}

{According to \Cref{prop: sel-linear,ex: linear}, the outcome bridge function and selection bridge function exist in this linear model setting if the matrix $\gamma_3$ and $\lambda_4$ have full column rank.  
To illustrate these existence conditions, we also run a simple simulation study. Specifically, we generate data according  to the linear model above. We set 
$\text{dim}(S_1) = \text{dim}(S_2) = \text{dim}(S_3) = 2$, draw all coefficients $\tau,\alpha,\beta,\gamma$'s from the standard normal distribution, and all noise terms from the mean-zero normal distribution with variance $0.5$. We generate $10000$ instances, compute the smallest singular values of the corresponding $\gamma_3$ and $\lambda_4$ matrices, and report their distributions in \Cref{fig: smallest-sv}.    
We observe that the smallest singular value of $\gamma_3$ is larger than $0.1$ around $88\%$ of time and the smallest singular value of $\lambda_4$ is larger than $0.1$ around $81\%$ of time. 
    These show that the existence of bridge functions in \Cref{assump: bridge,assump: bridge2}  may not  always hold but it does hold in quite many scenarios. }

\section{Completeness Conditions and Existence of Bridge Functions}\label{sec: existence}
The conditional moment equations in \Cref{eq: bridge-U,eq: bridge2-U} that define  outcome bridge functions and selection bridge functions are Fredholm integral equations of the first kind. 
Following \cite{Miao2016}, we characterize the existence of their solutions (\ie, the outcome and selection bridge functions) by singular value decomposition of compact operators \citep{Carrasco2007}. 

Let $L_2\prns{p(z)}$ denote the space of all square integrable functions of $z$ with respect to the distribution $p(z)$. It is a Hilbert space with inner product $\langle f_1, f_2\rangle = \int f_1(z)f_2(z)p(z)\diff z$.
Consider linear operators $\mathcal{T}_{s_2, a, x}: L_2\prns{p(s_3 \mid s_2, a, x)} \to L_2\prns{p(u \mid s_2, a, x)}$, $\mathcal{T}'_{s_2, a, x}: L_2\prns{p(s_1 \mid s_2, a, x)} \to L_2\prns{p(u \mid s_2, a, x)}$ defined as follows: 
\begin{align*}
&\bracks{\mathcal{T}_{s_2, a, x}h}\prns{s_2, a, u, x} = \Eb{h\prns{S_3, S_2, A, X} \mid S_2 = s_2, A = a, U = u, X = x, G = O}, \\
&\bracks{\mathcal{T}'_{s_2, a, x}q}\prns{s_2, a, u, x} = \Eb{q\prns{S_2, S_1, A, X} \mid S_2 = s_2, A = a, U = u, X = x, G = O}.
\end{align*}
\begin{assumption}\label{assump: compact-opt}
For any $s_2 \in \Scal_2, a \in \Acal, x \in \Acal$, 
\begin{enumerate}
\item $\iint p(s_3 \mid s_2, a, u, x)p(u \mid s_3, s_2, a, x)\diff s_3 \diff u < \infty$.
\item $\iint p(s_1 \mid s_2, a, u, x)p(u \mid s_1, s_2, a, x)\diff s_1 \diff u < \infty$.
\end{enumerate}
\end{assumption}
According to Example 2.3 in \cite{Carrasco2007}, \Cref{assump: compact-opt} ensures that for any $s_2 \in \Scal_2, a \in \Acal, x \in \Acal$, $\mathcal{T}_{s_2, a, x}, \mathcal{T}'_{s_2, a, x}$ are both compact operators. 
Then by Theorem 2.41 in \cite{Carrasco2007}, both of them admit singular value decomposition. Namely, there exist $(\lambda_{s_2, a, x, j}, \psi_{s_2, a, x, j}, \phi_{s_2, a, x, j})_{j=1}^{\infty}$ and $(\lambda'_{s_2, a, x, j}, \psi'_{s_2, a, x, j}, \phi'_{s_2, a, x, j})_{j=1}^{\infty}$ such that for any $j$, 
\begin{align*}
 &\mathcal{T}_{s_2, a, x} \psi_{s_2, a, x, j} = \lambda_{s_2, a, x, j} \phi_{s_2, a, x, j}  \\
 &\mathcal{T}'_{s_2, a, x} \psi'_{s_2, a, x, j} = \lambda'_{s_2, a, x, j} \phi'_{s_2, a, x, j} .
 \end{align*} 
 \begin{assumption}\label{assump: existence-reg}
 For any $s_2 \in \Scal_2, a \in \Acal, x \in \Acal$, 
\begin{enumerate}
\item $\Eb{Y \mid s_2, a, u, x, G = O}$ and $\frac{p(s_2, u, x \mid a, G = E)}{p(s_2, u, x \mid a, G = O)}$ both belong to $L_2\prns{p(u \mid s_2, a, x)}$.
\item $\sum_{j=1}^n  \lambda_{s_2, a, x, j}^{-2}\abs{\langle \Eb{Y \mid s_2, a, u, x, G = O}, \phi_{s_2, a, x, j} \rangle}^2 < \infty$.
\item $\sum_{j=1}^n  \lambda_{s_2, a, x, j}^{\prime-2}\abs{\langle \frac{p(s_2, u, x \mid a, G = E)}{p(s_2, u, x \mid a, G = O)}, \phi'_{s_2, a, x, j} \rangle}^2 < \infty$.
\end{enumerate}
 \end{assumption}
 Under regularity conditions in \Cref{assump: existence-reg,assump: compact-opt}, it can be shown that completeness conditions in \Cref{assump: completeness} guarantee the existence of bridge functions. 
 \begin{proposition}[Existence of Bridge Functions]\label{prop: existence}
 Suppose that \Cref{assump: existence-reg,assump: compact-opt} hold. 
\begin{enumerate}
\item If the completeness condition in \Cref{assump: completeness}  condition \ref{assump: completeness-1} holds, then there exists an outcome bridge function $h_0$ satisfying \Cref{eq: bridge-U}. 
\item If the completeness condition in \Cref{assump: completeness}  condition \ref{assump: completeness-2} holds, then there exists an outcome bridge function $q_0$ satisfying \Cref{eq: bridge2-U}. 
\end{enumerate}
 \end{proposition}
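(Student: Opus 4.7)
The plan is to recast each defining equation as an operator equation of the first kind and apply Picard's theorem (Theorem~2.41 in \cite{Carrasco2007}) pointwise in $(s_2, a, x)$. Fix $s_2 \in \Scal_2$, $a \in \Acal$, $x \in \Xcal$ throughout.

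First I would rewrite \Cref{eq: bridge-U}, for the fixed slice $(s_2,a,x)$, as the linear equation
\begin{align*}
\mathcal{T}_{s_2, a, x}\, h(\cdot, s_2, a, x) \;=\; \phi_{s_2, a, x}, \qquad \phi_{s_2, a, x}(u) := \Eb{Y \mid S_2=s_2, A=a, U=u, X=x, G=O},
\end{align*}
and similarly \Cref{eq: bridge2-U} as $\mathcal{T}'_{s_2, a, x}\, q(\cdot, s_2, a, x) = \psi_{s_2, a, x}$ with $\psi_{s_2, a, x}(u) := p(s_2, u, x \mid a, G=E)/p(s_2, u, x \mid a, G=O)$. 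By \Cref{assump: existence-reg} part~1, both $\phi_{s_2, a, x}$ and $\psi_{s_2, a, x}$ lie in $L_2(p(u \mid s_2, a, x))$, so each is a legitimate right-hand side for an equation in the appropriate Hilbert space.

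Next, by \Cref{assump: compact-opt} together with Example~2.3 in \cite{Carrasco2007}, the integral operators $\mathcal{T}_{s_2,a,x}$ and $\mathcal{T}'_{s_2,a,x}$ are compact (Hilbert–Schmidt), hence they admit the singular value decompositions in the setup. A direct Fubini computation shows that the Hilbert-space adjoint of $\mathcal{T}_{s_2, a, x}$ acts by $[\mathcal{T}^*_{s_2, a, x} g](s_3) = \Eb{g(U) \mid S_3=s_3, S_2=s_2, A=a, X=x, G=O}$, and analogously for $\mathcal{T}'^*_{s_2,a,x}$ with $S_1$ replacing $S_3$. Then the completeness condition in \Cref{assump: completeness}\ref{assump: completeness-1} is exactly the statement that $\mathcal{T}^*_{s_2, a, x}$ has trivial kernel, whence $\mathcal{R}(\mathcal{T}_{s_2, a, x})$ is dense in $L_2(p(u\mid s_2,a,x))$; symmetrically, \Cref{assump: completeness}\ref{assump: completeness-2} yields dense range for $\mathcal{T}'_{s_2, a, x}$.

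Finally, Picard's theorem says that for a compact operator $T$ with dense range, an element $\phi$ lies in $\mathcal{R}(T)$ (not merely its closure) if and only if the Picard series $\sum_j \lambda_j^{-2}|\langle \phi, \phi_j\rangle|^2$ converges. Parts~2 and~3 of \Cref{assump: existence-reg} are precisely these Picard conditions for $\phi_{s_2, a, x}$ and $\psi_{s_2, a, x}$ respectively, so explicit $L_2$ solutions $h(\cdot; s_2, a, x)$ and $q(\cdot; s_2, a, x)$ exist for every slice. Assembling the slices into joint functions $h_0(s_3, s_2, a, x)$ and $q_0(s_2, s_1, a, x)$ recovers \Cref{eq: bridge-U} and \Cref{eq: bridge2-U}. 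The main obstacle I anticipate is the measurable-selection step that glues pointwise $L_2$ solutions into a jointly measurable function of all arguments: the minimum-norm solution $T^+\phi_{s_2,a,x}$ depends measurably on $(s_2,a,x)$ because the Moore–Penrose pseudoinverse varies measurably with the kernel, and I would invoke this (as is customary in the proximal inference literature) to finish.
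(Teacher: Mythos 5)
Your proposal is correct and takes essentially the same route as the paper: the paper's proof of \Cref{prop: existence} is a one-line pointer to Picard's theorem (Theorem 15.18 of Kress) and to Lemma 2 of \cite{Miao2016} and Lemmas 13--14 of \cite{kallus2021causal}, and your argument --- compactness of $\mathcal{T}_{s_2,a,x}$ and $\mathcal{T}'_{s_2,a,x}$ from \Cref{assump: compact-opt}, dense range obtained from the completeness conditions via triviality of the adjoint's null space (with the correct pairing of \Cref{assump: completeness} condition \ref{assump: completeness-1} to $h_0$ and condition \ref{assump: completeness-2} to $q_0$), and the Picard series conditions supplied by \Cref{assump: existence-reg} --- is precisely what those cited lemmas carry out. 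You in fact give more detail than the paper, including the measurable-gluing caveat that the literature typically leaves implicit; the only nit is that Picard's theorem should be attributed to Kress rather than to Theorem 2.41 of \cite{Carrasco2007}, which is the singular-value-decomposition result you correctly invoke separately.
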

\Cref{prop: existence} can be proved by Picard's Theorem \citep[Theorem 15.18]{kress1989linear}. See Lemma 2 in \cite{Miao2016} or Lemma 13 and 14 in \cite{kallus2021causal} for details. 

\section{Relaxing \Cref{assump: unconfound-exp,assump: ext-valid}}\label{sec: covariate-adaptive}
In this section, we present additional identification results under \Cref{assump: unconfound-exp2,assump: ext-valid2} instead of the stronger conditions in \Cref{assump: unconfound-exp,assump: ext-valid}, and discuss their relations to the existing literature. 
We also discuss how to estimate the average long-term treatment effect in this case, based on the doubly robust identification strategy in \Cref{corollary: covariate-exp-dr}.

\subsection{Identification}\label{sec: covariate-adaptive-identify}
In \Cref{corollary: covariate-exp-dr}, we consider extending the doubly robust identification strategy in \Cref{thm: identification-DR}, which involves both outcome and selection bridge functions.
We now show that based on \Cref{corollary: covariate-exp-dr}, we can also extend \Cref{thm: identification1,thm: identification2}.

\begin{corollary}\label{corollary: covariate-exp}
Suppose \Cref{assump: CI,assump: ext-valid2,assump: unconfound-exp2,assump: unconfound-obs} hold. 
\begin{enumerate}
\item If further the completeness condition in \Cref{assump: completeness}  condition \ref{assump: completeness-2} and \Cref{assump: bridge} hold, then the average long-term treatment effect can be identified by any function $h_0$ that satisfies \Cref{eq: bridge-obs}:
\begin{align}\label{eq: identification-1-X}
\tau 
    &= \sum_{a \in \braces{0, 1}}\prns{-1}^{1-a} \Eb{\Eb{h_0\prns{\Sc, \Sb, A, X} \mid A = a, X, G = E} \mid G= O}.
\end{align}
\item If further the completeness condition in \Cref{assump: completeness} condition \ref{assump: completeness-1} and \Cref{assump: bridge2} hold, then the average long-term treatment effect can be identified by any function
 $q_0$ that satisfies \Cref{eq: bridge2-obs-1} or \Cref{eq: bridge2-obs-2}:
 \begin{align}\label{eq: identification-2-X}
 \tau 
    &= \sum_{a \in \braces{0, 1}}\prns{-1}^{1-a} \mathbb{E}\bigg[
    \frac{\Prb{G=E\mid A=a}\Prb{G=O\mid X}}{\Prb{G=O\mid A=a}\Prb{G=E\mid X}}\frac{\indic{A=a}}{\Prb{A=a\mid X, G=E}} \nonumber\\
    &\qquad\qquad\qquad\qquad\qquad\qquad\qquad\qquad\qquad\qquad \times q_0\left(S_{2}, S_{1}, A, X\right)Y\mid G=O\bigg]
 \end{align}
\end{enumerate} 
\end{corollary}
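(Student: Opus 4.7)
The plan is to derive \Cref{corollary: covariate-exp} as a direct specialization of the doubly robust identification result in \Cref{corollary: covariate-exp-dr}, which has the advantage that all the hard conditional-independence manipulations have already been packaged into that single doubly robust formula. For part 1, I would apply \Cref{corollary: covariate-exp-dr} with the choice $h = h_0$ (a valid outcome bridge function) and $q \equiv 0$; for part 2, I would apply it with $q = q_0$ (a valid selection bridge function) and $h \equiv 0$. Since \Cref{corollary: covariate-exp-dr} only requires \emph{either} of its two sub-conditions to hold in order for the identity to be valid for the chosen $(h,q)$ pair, these two specializations are legitimate.

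For part 1, plugging $q \equiv 0$ immediately kills the last summand in \Cref{corollary: covariate-exp-dr}. The remaining task is to show that the second summand, namely
\begin{align*}
\sum_{a \in \{0,1\}}(-1)^{1-a}\mathbb{E}\!\left[\frac{\Pr(G=E)\Pr(G=O\mid X)}{\Pr(G=O)\Pr(G=E\mid X)}\frac{\mathbf{1}\{A=a\}}{\Pr(A=a\mid X, G=E)}\bigl(h_0(\Sc,\Sb,A,X)-\bar h_{E}(A,X)\bigr)\,\Big|\,G=E\right],
\end{align*}
vanishes. I would verify this by conditioning first on $(X, G=E)$ and using the tower property: the inverse-propensity factor $\mathbf{1}\{A=a\}/\Pr(A=a\mid X, G=E)$ isolates $A=a$, and on that slice $\mathbb{E}[h_0(\Sc,\Sb,A,X)\mid A=a,X,G=E] = \bar h_E(a,X)$ by definition, so the bracketed expectation is identically zero. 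What remains is precisely $\sum_a (-1)^{1-a}\mathbb{E}[\bar h_E(a,X)\mid G=O]$, which is the desired \Cref{eq: identification-1-X}.

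For part 2, plugging $h \equiv 0$ forces $\bar h_E(a,X)\equiv 0$, which simultaneously kills the first summand and the bracketed residual $h - \bar h_E$ in the second summand. Only the selection-bridge summand survives, and with $h \equiv 0$ the factor $Y - h(\Sc,\Sb,A,X)$ reduces to $Y$, yielding exactly \Cref{eq: identification-2-X}. All of the heavy lifting, namely combining \Cref{assump: CI,assump: ext-valid2,assump: unconfound-exp2,assump: unconfound-obs} with the defining moment equations \eqref{eq: bridge-obs} or \eqref{eq: bridge2-obs-1}/\eqref{eq: bridge2-obs-2}, is inherited from the already-proved \Cref{corollary: covariate-exp-dr}, so no new probabilistic arguments are needed.

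The main obstacle, and essentially the only nontrivial calculation, is verifying that the middle term in the doubly robust formula annihilates when $q \equiv 0$; once one is comfortable with the iterated-expectation argument sketched above, the rest of the proof is mechanical bookkeeping. I would also briefly note that both specializations are consistent with the discussion below \Cref{corollary: covariate-exp-dr} that motivates viewing these identification formulas as ``one-bridge'' analogues of \Cref{thm: identification1,thm: identification2}, thereby confirming that \Cref{corollary: covariate-exp} indeed recovers \Cref{thm: identification1,thm: identification2} in the special case $\Pr(G=O\mid X) = \Pr(G=O)$ and $\Pr(A=a\mid X,G=E)=\Pr(A=a\mid G=E)$.
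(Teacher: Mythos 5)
Your proposal is correct and is exactly the paper's own proof: the authors likewise obtain \Cref{eq: identification-1-X} by setting $h = h_0,\, q = 0$ and \Cref{eq: identification-2-X} by setting $q = q_0,\, h = 0$ in \Cref{corollary: covariate-exp-dr}. Your extra verification that the middle term annihilates via the tower property is the same observation already embedded in the proof of \Cref{corollary: covariate-exp-dr} (its equation \eqref{eq: X-dr-1a}), so nothing new is needed.
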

\begin{proof}[Proof for \Cref{corollary: covariate-exp}]
Obviously, \Cref{eq: identification-1-X} can be proved by setting $h = h_0, q = 0$ in \Cref{corollary: covariate-exp-dr} and \Cref{eq: identification-2-X} can be proved by setting $q = q_0, h = 0$ in \Cref{corollary: covariate-exp-dr}.
\end{proof}

We note that the two identification strategies in \Cref{corollary: covariate-exp} are closely related to those in \cite{athey2020combining,ghassami2022combining}. 
As we discussed in \Cref{remark: connection}, when there is no persistent confounder, we can let $S_1 = S_3 = \emptyset$ and $S = S_2$.
Then
$h_0\prns{S_2, A, X} = \Eb{Y \mid S, A, X, G=O}$ is the unique solution to \Cref{eq: bridge-obs}. As a result, the identification strategy in \Cref{eq: identification-1-X} exactly recovers the identification strategy in Theorem 1 in \cite{athey2020combining}.
Moreover, in the setup of the concurrent work  \cite{ghassami2022combining}, if we use $S_3$ as their short-term outcomes, $S_1$ as their auxiliary proxies, and condition on $S_2$ appropriately, then the identification strategies in their Theorems 9 and 10 coincide with ours in \Cref{eq: identification-1-X,eq: identification-2-X} respectively.
See also discussions in \Cref{sec: literature-data-comb} for additional comparisons.

Under the weaker conditions in \Cref{assump: unconfound-exp2,assump: ext-valid2}, \Cref{corollary: covariate-exp,corollary: covariate-exp-dr} shows that we need more complex identification strategies for the average long-term treatment effect over the observational data distribution. Actually, even in this case, the simpler identification strategies in \Cref{sec: identify-OBF,sec: identify-SBF,sec: identify-DR} are still useful. 
Below we show that under an additional assumption, they can identify average long-term treatment effect over the experimental data distribution.
\begin{corollary}\label{corollary: identify-exp}
Suppose \Cref{assump: CI,assump: ext-valid2,assump: unconfound-exp2,assump: unconfound-obs,assump: completeness,assump: bridge,assump: bridge2} hold and $Y(a) \perp G \mid S(a), U, X$. Then \Cref{eq: identification-1} in \Cref{thm: identification1}, \Cref{eq: identification-2} in \Cref{thm: identification2} and \Cref{eq: DR} in \Cref{thm: identification-DR} all identify the average long-term treatment effect over the experimental data distribution, \ie,
\begin{align*}
\tau_E = \Eb{Y(1)-Y(0) \mid G = E},
\end{align*}
\end{corollary}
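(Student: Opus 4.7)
The plan is to mirror the proofs of Theorems~\ref{thm: identification1}, \ref{thm: identification2}, and \ref{thm: identification-DR}, tracing how the additional condition $Y(a)\perp G\mid S(a),U,X$ shifts the target of each formula from $\mathbb{E}[Y(a)\mid G=O]$ to $\mathbb{E}[Y(a)\mid G=E]$. The preparatory step is to derive a transportation identity for the potential outcome regression. Since \Cref{assump: ext-valid2} gives $(S(a),U)\perp G\mid X$, the conditional law of $S(a)\mid U,X$ is common to both samples; combining this with the new conditional independence and iterated expectation yields
\[
\mathbb{E}[Y(a)\mid U,X,G=E]=\mathbb{E}\bigl[\mathbb{E}[Y(a)\mid S(a),U,X,G=O]\mid U,X,G=O\bigr]=\mathbb{E}[Y(a)\mid U,X,G=O].
\]

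For formula~\eqref{eq: identification-1}, I would invoke \Cref{lemma: bridge-obs} together with \Cref{assump: unconfound-obs}, chaining them as in the proof of Theorem~\ref{thm: identification1} to deliver $\mathbb{E}[h_0(S_3(a),S_2(a),a,X)\mid U,X,G=O]=\mathbb{E}[Y(a)\mid U,X,G=O]$. \Cref{assump: ext-valid2} transports the left-hand side to $G=E$, while the transportation identity above does the same on the right. Next I would use \Cref{assump: unconfound-exp2} to replace potential short-term outcomes by observed ones under $\{A=a\}$ and integrate $(U,X)$ against the experimental conditional density given $A=a$, arriving at $\mathbb{E}[h_0(S_3,S_2,A,X)\mid A=a,G=E]=\mathbb{E}[Y\mid A=a,G=E]$, which a final appeal to \Cref{assump: unconfound-exp2} identifies with $\mathbb{E}[Y(a)\mid G=E]$.

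For formula~\eqref{eq: identification-2}, I would adapt the proof of Theorem~\ref{thm: identification2}. The key computation uses \Cref{assump: CI} to decouple $q_0$ and $Y$ conditional on $(S_2,A,U,X,G=O)$, and then \Cref{lemma: bridge2-obs} to substitute $\mathbb{E}[q_0\mid S_2,A,U,X,G=O]$ with the density ratio $p(S_2,U,X\mid A,G=E)/p(S_2,U,X\mid A,G=O)$; that ratio performs a change of measure, converting $\mathbb{E}[q_0 Y\mid A=a,G=O]$ into an expectation against the experimental distribution of $(S_2,U,X)\mid A=a$. Combined with the transportation identity and \Cref{assump: unconfound-exp2}, this closes the chain to $\mathbb{E}[Y(a)\mid G=E]$. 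The doubly robust identification~\eqref{eq: DR} then follows by linearly combining the two arguments, exactly as in Theorem~\ref{thm: identification-DR}.

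The hard part will be the bookkeeping around \Cref{assump: unconfound-exp2}, which permits $A$ to depend on $X$ within the experimental sample. The final $(U,X)$-marginalization naturally integrates against $p(X\mid A=a,G=E)$ rather than $p(X\mid G=E)$, so one must carefully verify that this subtlety is absorbed by the conditional exchangeability $Y(a)\perp A\mid X,G=E$ afforded by \Cref{assump: unconfound-exp2}, yielding the unconditional experimental mean $\mathbb{E}[Y(a)\mid G=E]$ rather than a covariate-reweighted variant.
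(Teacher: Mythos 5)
Your strategy is essentially the paper's: re-run the chains from Theorems~\ref{thm: identification1}--\ref{thm: identification-DR} and use the extra condition $Y(a)\perp G\mid S(a),U,X$, together with the shared law of $(S(a),U)$ given $(U,X)$ implied by \Cref{assump: ext-valid2}, to flip the inner conditional mean of $Y(a)$ from the observational to the experimental distribution. The paper performs this flip at the level of $\Eb{Y(a)\mid S_2(a),U,X,G}$ inside the displayed chain; your ``transportation identity'' does it one marginalization later, at the level of $(U,X)$. These are equivalent, and your derivation of the identity is correct. Your treatment of \Cref{eq: identification-2} via the change of measure induced by the selection bridge, and of \Cref{eq: DR} by superposition of the two cases, also matches the paper's proof.

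The subtlety you flag at the end is real, and it is \emph{not} absorbed by \Cref{assump: unconfound-exp2}. Under that assumption one only gets $\Eb{Y(a)\mid A=a,G=E}=\Eb{\Eb{Y(a)\mid X,G=E}\mid A=a,G=E}$, i.e.\ an average of $\Eb{Y(a)\mid X=x,G=E}$ against $p(x\mid A=a,G=E)$; this equals $\Eb{Y(a)\mid G=E}$ only if $A\perp X$ in the experimental sample, which \Cref{assump: unconfound-exp2} does not guarantee. The same problem already appears one step earlier, when the conditioning on $A=a$ is dropped from the outer expectation over $(S_2(a),U,X)$. The paper's own proof elides exactly this point: by writing ``following the proof of \Cref{thm: identification1}'' it silently reuses the step $\Eb{\,\cdot\mid A=a,G=E}=\Eb{\,\cdot\mid G=E}$, which in that proof was justified by the complete-randomization \Cref{assump: unconfound-exp}, not by \Cref{assump: unconfound-exp2}. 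So your proposal is as complete as the paper's argument; to actually close the gap one must either additionally assume $A\perp X\mid G=E$ (restoring \Cref{assump: unconfound-exp}) or replace the unconditional formulas by their covariate-conditional analogues as in \Cref{corollary: covariate-exp}.
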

In \Cref{corollary: identify-exp}, we still assume the weaker conditions in \Cref{assump: unconfound-exp2,assump: ext-valid2}. But we additionally require that the experimental and observational data share a common conditional distribution of the potential long-term outcome.
This additional assumption ensures that the bridge functions defined in terms of the observational data distribution can also be used to identify the 
average long-term treatment effect over the experimental data distribution.

{Finally, we note that the selection bridge functions can be used to identify more general parameters than the average treatment effects considered so far.}
{
\begin{corollary}\label{corollary: sel-bridge-dist}
Suppose \Cref{assump: CI,assump: ext-valid2,assump: unconfound-exp2,assump: unconfound-obs} and the assumptions in \Cref{corollary: covariate-exp} condition 2 hold. Then for any function
 $q_0$ that satisfies \Cref{eq: bridge2-obs-1} or \Cref{eq: bridge2-obs-2}, and any transformation $r: \Ycal\mapsto \R{}$, we have 
 \begin{align*}
 \Eb{r(Y(a)) \mid G = O} 
    &= \mathbb{E}\bigg[
    \frac{\Prb{G=E\mid A=a}\Prb{G=O\mid X}}{\Prb{G=O\mid A=a}\Prb{G=E\mid X}}\frac{\indic{A=a}}{\Prb{A=a\mid X, G=E}} \nonumber\\
    &\qquad\qquad\qquad\qquad\qquad\qquad\qquad\qquad\qquad\qquad \times q_0\left(S_{2}, S_{1}, A, X\right)r(Y)\mid G=O\bigg].
 \end{align*}
\end{corollary}
In particular, when applying \Cref{corollary: sel-bridge-dist} to the indicator function $r(\cdot) = \indic{\cdot \le y}$ for all $y \in \Ycal$, we can identify the entire distribution of the counterfactual long term outcome $Y(a)$. 
}

\subsection{Estimation}\label{sec: covariate-adaptive-estimate}
{
    We can again leverage the doubly robust identification strategy in \Cref{corollary: covariate-exp-dr} to estimate the average long-term treatment effect. This involves some nuisance functions/parameters $\eta^* = (h_0, \bar{h}_E, q_0, \alpha_0, \beta_0)$, 
    where $h_0, q_0$ are bridge functions given in \cref{eq: bridge-obs,eq: bridge2-obs-2}, $\bar{h}_{E}$ is given in \Cref{corollary: covariate-exp-dr}, and 
    \begin{align*}
    \alpha_0(A, X) 
        &= \frac{\Prb{G=E}\Prb{G=O\mid X}}{\Prb{G=O}\Prb{G=E\mid X}}\times \frac{1}{A\Prb{A=1\mid X, G=E} + (1-A)\Prb{A=0\mid X, G=E}} \\
    \beta_0(A, X) 
        &= \frac{\Prb{G=E\mid A}\Prb{G=O\mid X}}{\Prb{G=O\mid A}\Prb{G=E\mid X}} \times \frac{1}{A\Prb{A=1\mid X, G=E} + (1-A)\Prb{A=0\mid X, G=E}}, \\
    \end{align*}
    According to \Cref{corollary: covariate-exp-dr}, once we know these nuisance functions/parameters, we immediately have 
\begin{align}
\tau = \sum_{a\in\braces{0, 1}}\prns{-1}^{1-a}\big\{
   &\Eb{\phi_1\prns{Y, S, a, X; \eta^*} \mid G = E}\nonumber\\
    &+ \Eb{\phi_2\prns{Y, S, a, X; \eta^*} \mid G = O} + \Eb{\phi_3\prns{Y, S, a, X; \eta^*} \mid G = O}\big\},\label{eq: DR-est-eq}
\end{align}
where 
\begin{align*}
&\phi_1\prns{Y, S, a, X; \eta^*} = \indic{A=a}\alpha_0(A, X)\prns{h_0\prns{\Sc, \Sb, A, X} - \bar{h}_{E, 0}\prns{A, X} }, \\
&\phi_2\prns{Y, S, a, X; \eta^*} = \bar{h}_{E, 0}\prns{a, X} = \Eb{h_0\prns{\Sc, \Sb, a, X} \mid A = a, X, G = E}, \\
&\phi_3\prns{Y, S, a, X; \eta^*} = \indic{A=a}\beta_0(A, X)q_0\left(S_{2}, S_{1}, A, X\right)\left(Y-h_0\left(S_{3}, S_{2}, A, X\right)\right).
\end{align*}
}

{
    In the following lemma, we prove that the doubly robust equation in \Cref{eq: DR-est-eq} satisfies the so-called \emph{Neyman Orthogonality} property.
\begin{lemma}\label{lemma: neyman-orthogonality}
The estimating equation implied by \Cref{eq: DR-est-eq} satisfies the Neyman Orthogonality property, namely, the pathwise derivative of the following map at $\eta^*$ along any feasible direction is equal to $0$:
\begin{align}
\eta \mapsto \sum_{a\in\braces{0, 1}}\prns{-1}^{1-a}\big\{
   &\Eb{\phi_1\prns{Y, S, a, X; \eta} \mid G = O}\nonumber \\
   &+ \Eb{\phi_2\prns{Y, S, a, X; \eta} \mid G = E} + \Eb{\phi_3\prns{Y, S, a, X; \eta} \mid G = O}\big\}. \label{eq: orthogonality-map}
\end{align}
\end{lemma}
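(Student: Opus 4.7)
The plan is to verify Neyman orthogonality directly by computing the Gâteaux derivative of the functional \eqref{eq: orthogonality-map} at $\eta^*$ with respect to each of the seven nuisance components separately and showing each vanishes. Since the functional is a linear combination of three conditional expectations ($\mathbb{E}[\phi_1\mid G=O]$, $\mathbb{E}[\phi_2\mid G=E]$, $\mathbb{E}[\phi_3\mid G=O]$), the derivative in a direction $\tilde\eta_j$ splits into up to three contributions that I will show sum to zero. Three identities valid at $\eta^*$ drive the whole argument: (a) $\mathbb{E}[h_0(S_3,S_2,a,X)-\bar h_{E,0}(a,X)\mid A=a,X,G=E]=0$ by the definition of $\bar h_{E,0}$; (b) $\mathbb{E}[Y-h_0(S_3,S_2,A,X)\mid S_1,S_2,A,U,X,G=O]=0$ by \Cref{assump: bridge} combined with \Cref{assump: CI}; and (c) the importance-weighting identity $\mathbb{E}[\eta_7^*\eta_5^*(X)\mathbf{1}\{A=a\}g(X)/\pi_E^*(a\mid X)\mid G=E]=\mathbb{E}[g(X)\mid G=O]$, which follows from Bayes' rule and \Cref{assump: ext-valid2,assump: unconfound-exp2}.

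I will first dispose of the easy cases. Perturbing any of $\eta_4,\eta_5,\eta_6,\eta_7$ produces a derivative that appears as a multiplicative weight in front of $(h_0-\bar h_{E,0})$ inside $\phi_2$ or $(Y-h_0)$ inside $\phi_3$; after conditioning on $(A,X,G=E)$ or $(S_1,S_2,A,U,X,G=O)$, identities (a) and (b) immediately give zero. Perturbing $\eta_2=\bar h_{E,0}$ in a direction $\tilde h_E$ yields $\mathbb{E}[\tilde h_E(a,X)\mid G=O]$ from $\phi_1$ and $-\mathbb{E}[\eta_7^*\eta_5^*(X)\mathbf{1}\{A=a\}\tilde h_E(a,X)/\pi_E^*(a\mid X)\mid G=E]$ from $\phi_2$, which cancel by (c). Perturbing $\eta_3=q_0$ only affects $\phi_3$, and (b) kills that derivative after conditioning on $(S_1,S_2,A,U,X,G=O)$, since the direction $\tilde q(S_2,S_1,a,X)$ is measurable with respect to the conditioning sigma-field.

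The only case with real content is the derivative with respect to $\eta_1=h_0$ in a direction $\tilde h$, because both $\phi_2$ and $\phi_3$ depend on $h_0$ and the two resulting contributions have no obvious symmetry. The $\phi_2$ contribution simplifies, via (c) together with \Cref{assump: unconfound-exp2,assump: ext-valid2}, to $\mathbb{E}[\tilde h(S_3(a),S_2(a),a,X)\mid G=O]$. For the $\phi_3$ contribution, I will condition on $(S_3,S_2,A,X,G=O)$ and apply the selection-bridge equation \eqref{eq: bridge2-obs-1} to replace $\mathbb{E}[q_0(S_2,S_1,a,X)\mid S_3,S_2,A=a,X,G=O]$ by the density ratio $p(S_3,S_2,X\mid A=a,G=E)/p(S_3,S_2,X\mid A=a,G=O)$; expanding that ratio via Bayes and using \Cref{assump: unconfound-exp2,assump: ext-valid2} to push $(S_3,S_2)$ through to potential outcomes, the prefactor $\eta_6^*\eta_5^*(X)/\pi_E^*(a\mid X)$ combined with the density ratio collapses to $1$. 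The $\phi_3$ derivative therefore reduces to $-\mathbb{E}[\tilde h(S_3(a),S_2(a),a,X)\mid G=O]$, exactly cancelling the $\phi_2$ contribution.

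I expect this last step to be the main obstacle, since it requires tracking a long product of conditional density ratios between the $G=E$ and $G=O$ populations and verifying algebraically that everything collapses. Conceptually, this is the infinitesimal version of the double-robustness identity proved in \Cref{corollary: covariate-exp-dr}: Neyman orthogonality at $\eta^*$ is nothing but the local statement of double robustness, so the manipulations should largely mirror the proof of that theorem rather than introduce genuinely new machinery.
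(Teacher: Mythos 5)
Your proposal is correct and follows essentially the same route as the paper: a component-wise Gâteaux derivative in each of the seven nuisances, with the $\eta_4,\dots,\eta_7$ and $\eta_2,\eta_3$ cases killed by the defining property of $\bar h_{E,0}$, the bridge equation for $h_0$, and the change-of-measure identity, and the $\eta_1$ case handled by showing the $\phi_2$ and $\phi_3$ contributions both equal $\pm\Eb{\tilde h(S_3(a),S_2(a),a,X)\mid G=O}$ (the paper's Equations (eq: eliminate-1)--(eq: eliminate-2)) and cancel. The only cosmetic difference is that you condition on $U$ in identity (b) and invoke \Cref{eq: bridge2-obs-1} where the paper works at the level of \Cref{eq: bridge-obs} and \Cref{eq: bridge2-U}; these are interchangeable under the lemma's hypotheses.
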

The Neyman orthogonality property plays a central role in the recent debiased machine learning literature \citep[\eg, ][]{chernozhukov2019double}. This property guarantees that the treatment effect estimators constructed from the doubly robust equation is insensitive to estimation errors of the nuisance functions/parameters. In particular, we  can construct a treatment effect estimator by adapting the cross-fitted estimator in \Cref{def: mean-est} to the doubly robust equation in \cref{eq: DR-est-eq}.
While \Cref{def: mean-est} only randomly splits the observational data, here we also need to split the experimental data. 
This is because the nuisance functions $\bar{h}_{E, 0}$ and $\alpha_0$ to be evaluated on the experimental data are themselves estimated from the experimental data. 
\begin{definition}\label{def: dr-cf-est}
Fix $a \in \mathcal{A}$ and an integer $K \ge 2$.
\begin{enumerate}
\item Randomly split the observational data $\mathcal{D}_O$ and experimental data $\mathcal{D}_E$ into $K$ (approximately) even folds  denoted as $\mathcal{D}_{O, 1}, \dots, \mathcal{D}_{O, K}$ and $\mathcal{D}_{E, 1}, \dots, \mathcal{D}_{E, K}$, respectively. 
\item For $k = 1, \dots, K$, use all  data other than the $k$th fold observational and experimental data, \ie, $\prns{\cup_{j \ne k} \mathcal{D}_{O, j}} \cup \prns{\cup_{j \ne k} \mathcal{D}_{E, j}}$,  to construct the bridge function estimators $\hat h_k, \hat q_k$, estimator of the function $\bar{h}_{E, 0}$ denoted as $\hat{\bar{h}}_k$, and estimators for $\alpha_0, \beta_0$ denoted by $\hat \alpha_k, \hat \beta_k$ respectively. 
\item Use the following estimator 
\begin{align*}
&\hat\tau = \frac{1}{K}\sum_{k=1}^K\sum_{a\in\{0, 1\}}(-1)^{(1-a)}\bigg[\frac{1}{n_{E, k}}\sum_{i \in \mathcal{D}_{E, k}}\indic{A_i = a}\hat\alpha_k(A_i, X_i)\prns{\hat h_k(S_{3, i}, S_{2, i}, A_i, X_i) - \hat{\bar{h}}_k(A_i, X_i)} \\
&\qquad + \frac{1}{n_{O, k}}\sum_{i\in \mathcal{D}_{O, k}}\hat{\bar{h}}_k(a, X_i) +  \indic{A_i = a}\hat\beta_k(A_i, X_i)\hat q_k(S_{2, i}, S_{1, i}, A_i, X_i)\prns{Y_i - \hat h_k(S_{3, i}, S_{2, i}, A_i, X_i)}\bigg],
\end{align*}
where $n_{E, k}$ and $n_{O, k}$ are the sample sizes of the $k$-th folds of experimental data $\mathcal{D}_{E, k}$ and observational data $\mathcal{D}_{O, k}$, respectively. 
\end{enumerate}
\end{definition}
}

{We note that the form of the estimator $\hat\tau$ in \Cref{def: dr-cf-est} is similar to the estimator in \cite{singh2021finite}, so we can similarly show that the estimator $\hat\tau$ is $\sqrt{n}$-consistent and asymptotically normal under high level conditions on the nuisance estimators.}  

\begin{theorem}\label{thm: DR-est-normality-X}
{
    Suppose \Cref{assump: CI,assump: ext-valid2,assump: unconfound-exp2,assump: unconfound-obs,assump: completeness,assump: bridge,assump: bridge2} hold. Assume that  for $k = 1, \dots, K$, the nuisance estimator $\hat\eta_k = (\hat h_k, \hat{\bar{h}}_{k}, \hat q_k, \hat\alpha_k, \hat\beta_k)$ is a  consistent estimator for $\eta^*$, and satisfies the following convergence rate conditions: 
    \begin{align*}
\|\hat\alpha_k - \alpha_0\|_{\mathcal{L}_2(\mathbb{P})}\|\hat{\bar{h}}_k - \bar{h}_{E, 0}\|_{\mathcal{L}_2(\mathbb{P})} &= o_{\mathbb{P}}(n^{-1/2}),  \\
({\|\hat\alpha_k - \alpha_0\|_{\mathcal{L}_2(\mathbb{P})} + \|\hat\beta_k - \beta_0\|_{\mathcal{L}_2(\mathbb{P})}})\|T(\hat h_k - h_0)\|_{\mathcal{L}_2(\mathbb{P})} &= o_{\mathbb{P}}(n^{-1/2}), \\
\min\left\{\|T(\hat h_k - h_0)\|_{\mathcal{L}_2(\mathbb{P})}\|\hat q_k - q_0\|_{\mathcal{L}_2(\mathbb{P})},\|\hat h_k - h_0\|_{\mathcal{L}_2(\mathbb{P})} \|T^\star(\hat q_k - q_0)\|_{\mathcal{L}_2(\mathbb{P})}\right\} &= o_{\mathbb{P}}(n^{-1/2}).
\end{align*}
}
Then 
\begin{align*}
\sqrt{n}\prns{\hat\tau - \tau} \rightsquigarrow  \mathcal{N}\prns{0, \sigma^2},
\end{align*}
where 
\begin{align*}
\sigma^2 
   &= \prns{1 + \lambda}\Eb{\prns{\phi_1\prns{Y, S, 1, X; \eta^*} - \phi_1\prns{Y, S, 0, X; \eta^*} - \tau}^2\mid G=O} \\
   &+ \frac{1 + \lambda}{\lambda}\Eb{\prns{\phi_2\prns{Y, S, 1, X; \eta^*} - \phi_2\prns{Y, S, 0, X; \eta^*}}^2 \mid G=E} \\
   &+ \prns{1 + \lambda}\Eb{\prns{\phi_3\prns{Y, S, 1, X; \eta^*} - \phi_3\prns{Y, S, 0, X; \eta^*}}^2\mid G=O}.
\end{align*}
\end{theorem}

\section{Additional Extensions}\label{sec: more-extension}
In this section, we extend our identification results to more settings. For simplicity, we focus on extending the first identification strategy in \Cref{thm: identification1}.

\subsection{Pre-treatment Outcomes}\label{sec: pretreat}
In the main text, the short-term outcomes $S = \prns{S_1, S_2, S_3}$ are all post-treatment outcomes. 
In this part, we let part of the outcomes be pre-treatment. 

We first consider the setting where $S_1$ is pre-treatment but $S_2, S_3$ are post-treatment. 
Below we modify \Cref{assump: unconfound-obs,assump: unconfound-exp,assump: CI,assump: ext-valid} accordingly. 

\begin{assumption}[Pre-treatment $S_1$]\label{assump: S1-pretreat}
Suppose the following hold for $a \in \braces{0, 1}$: 
\begin{enumerate}
\item On the observational data,  we have $\prns{Y\prns{a}, S_3\prns{a}, S_2\prns{a}} \perp A \mid S_1, U, X, G = O$ and $0 < \Prb{A = 1 \mid S_1, U, X, G = O} < 1$ almost surely. 
\item On the experimental data, we have $\prns{Y\prns{a}, S_3\prns{a}, S_2\prns{a}, U} \perp A \mid S_1, X, G = E$ and $0 < \Prb{A = 1 \mid S_1, X, G = O} < 1$ almost surely. 
\item The external validity $\prns{S_3\prns{a}, S_2\prns{a}, U} \perp G \mid S_1, X$ and overlap 
$$\frac{p\prns{S_1, U, X \mid A = a, G = E}}{p\prns{S_1, U, X\mid A= a, G = O}} < \infty, ~~ \text{almost surely.}$$
\item The sequential structure  $\prns{Y(a), S_3(a)} \perp S_1 
    \mid S_2(a), U, X, G = O$.
\end{enumerate}
\end{assumption}
Note that \Cref{assump: S1-pretreat} consider the most general setting: we allow the treatment assignments in the observational and experimental data to depend on pre-treatment outcomes $S_1$, and also allow the distribution of $S_1$ to be different on the two datasets. Now we extend our identification strategy
to this setting. 
\begin{corollary}\label{corollary: S1-pretreat}
Suppose conditions in \Cref{assump: S1-pretreat}, the completeness condition in \Cref{assump: completeness}  condition \ref{assump: completeness-2} and \Cref{assump: bridge} hold. Then the average long-term treatment effect is identifiable: for any function $h_0$ that satisfies \Cref{eq: bridge-obs}, 
\begin{align}\label{eq: identification-1-S1}
\tau 
    &= \Eb{\Eb{h_0\prns{\Sc, \Sb, A, X} \mid S_1, A = 1, X, G = E} \mid G = O} \nonumber \\
    &\qquad\qquad - \Eb{\Eb{h_0\prns{\Sc, \Sb, A, X} \mid S_1, A = 0, X, G = E} \mid G = O}.
\end{align}
\end{corollary}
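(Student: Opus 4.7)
The plan is to closely adapt the proof of \Cref{thm: identification1}, treating $(\Sa, X)$ as the conditioning vector wherever that proof used $X$ alone in the external-validity and experimental-unconfoundedness steps, since \Cref{assump: S1-pretreat} states those assumptions conditionally on the pre-treatment $\Sa$. The bridge function $h_0$ is still characterized by the latent equation in \Cref{assump: bridge}, which does not involve $\Sa$, so the main technical task is reconciling the latent conditioning set $(\Sb, A, U, X)$ with the observable set $(\Sb, \Sa, A, X)$ that appears in \Cref{eq: bridge-obs}.

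The key preparatory step is to derive $(Y(a), \Sc(a)) \perp A \mid \Sb(a), U, X, G = O$, the pre-treatment analog of the observational unconfoundedness used in the original proof. By Weak Union applied to \Cref{assump: S1-pretreat}(1), we have $(Y(a), \Sc(a)) \perp A \mid \Sb(a), \Sa, U, X, G = O$; combining this with the sequential structure \Cref{assump: S1-pretreat}(4) via Contraction yields $(Y(a), \Sc(a)) \perp (A, \Sa) \mid \Sb(a), U, X, G = O$, and Decomposition gives the desired statement. Consistency on the event $A = a$ then also yields the observable counterpart $(Y, \Sc) \perp \Sa \mid \Sb, A, U, X, G = O$, which we use below.

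Next I would establish the analog of \Cref{lemma: bridge-obs}: any $h_0$ solving \Cref{eq: bridge-obs} also solves \Cref{eq: bridge-U}. The observable $\Sa$-independence just obtained implies that $\Eb{Y - h_0(\Sc, \Sb, A, X) \mid \Sb, \Sa, A, U, X, G = O}$ does not depend on $\Sa$, so it equals some function $\psi(\Sb, A, U, X)$. By the tower rule, \Cref{eq: bridge-obs} becomes $\Eb{\psi(\Sb, A, U, X) \mid \Sa, \Sb, A, X, G = O} = 0$, and completeness condition~\ref{assump: completeness-2} then forces $\psi \equiv 0$, which is exactly \Cref{eq: bridge-U}.

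The identification chain then proceeds as follows. By iterated expectations, $\mu(a) = \Eb{\Eb{Y(a) \mid \Sb(a), U, X, G = O} \mid G = O}$. Using the preparatory independence with consistency rewrites the inner expectation as $\Eb{Y \mid \Sb, A = a, U, X, G = O}$; applying \Cref{eq: bridge-U} and reversing the consistency step turns it into $\Eb{h_0(\Sc(a), \Sb(a), a, X) \mid \Sb(a), U, X, G = O}$; and one further tower step gives $\mu(a) = \Eb{h_0(\Sc(a), \Sb(a), a, X) \mid G = O}$. The conditional external validity \Cref{assump: S1-pretreat}(3) then switches $G = O$ to $G = E$ inside the $(\Sa, X)$-conditional expectation, and experimental unconfoundedness \Cref{assump: S1-pretreat}(2) together with consistency restores the observed $(\Sc, \Sb, A)$ under $A = a$, producing the claimed formula. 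The main obstacle is the Contraction step: without it the $\Sa$ inherited from \Cref{assump: S1-pretreat}(1) stays trapped in the conditioning set, and \Cref{eq: bridge-U} cannot be invoked on observed quantities.
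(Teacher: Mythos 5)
Your proposal is correct and follows essentially the same route as the paper, which simply sketches the argument as ``follow the proof of Theorem~\ref{thm: identification1} to obtain $\Eb{h_0\prns{\Sc,\Sb,A,X}\mid \Sa, A=a, X, G=E}=\Eb{\Eb{Y(a)\mid \Sb(a),U,X,G=O}\mid \Sa, A=a, X, G=E}$ and then proceed as in Corollary~\ref{corollary: covariate-exp}.'' The only substantive content you add is the explicit weak-union/contraction/decomposition derivation of $(Y(a),\Sc(a))\perp (A,\Sa)\mid \Sb(a),U,X,G=O$ from \Cref{assump: S1-pretreat}(1) and (4), which the paper leaves implicit but which is exactly the step needed to port \Cref{lemma: bridge-obs} and \Cref{cor: implication2} to the pre-treatment-$\Sa$ setting.
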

The identification strategy in \Cref{eq: identification-1-S1} is very similar to \Cref{eq: identification-1-X}. 
\Cref{eq: identification-1-S1} essentially augments the covariates $X$ with the pre-treatment outcomes $S_1$. 

Similarly, we can also consider the setting where both $S_1$ and  $S_2$ are pre-treatment. 
\begin{assumption}[Pre-treatment $(S_1, S_2)$]\label{assump: S1-S2-pretreat}
Suppose the following hold for $a \in \braces{0, 1}$: 
\begin{enumerate}
\item On the observational data,  we have $\prns{Y\prns{a}, S_3\prns{a}} \perp A \mid S_2, S_1, U, X, G = O$ and $0 < \Prb{A = 1 \mid S_2, S_1, U, X, G = O} < 1$ almost surely. 
\item On the experimental data, we have $\prns{Y\prns{a}, S_3\prns{a}, U} \perp A \mid S_2, S_1, X, G = E$ and $0 < \Prb{A = 1 \mid S_2, S_1, X, G = O} < 1$ almost surely. 
\item The external validity $\prns{S_3\prns{a}, U} \perp G \mid S_2, S_1, X$ and overlap 
$$\frac{p\prns{S_2, S_1, U, X \mid A = a, G = E}}{p\prns{S_2, S_1, U, X\mid A= a, G = O}} < \infty, ~~ \text{almost surely.}$$
\item The sequential structure  $\prns{Y(a), S_3(a)} \perp S_1 
    \mid S_2, U, X, G = O$.
\end{enumerate}
\end{assumption}
We can analogously identify the long-term average treatment effect 
\begin{corollary}\label{corollary: S1-S2-pretreat}
Suppose conditions in \Cref{assump: S1-S2-pretreat}, the completeness condition in \Cref{assump: completeness}  condition \ref{assump: completeness-2} and \Cref{assump: bridge} hold. Then the average long-term treatment effect is identifiable: for any function $h_0$ that satisfies \Cref{eq: bridge-obs}, 
\begin{align}\label{eq: identification-1-S1-S2}
\tau 
    &= \Eb{\Eb{h_0\prns{\Sc, \Sb, A, X} \mid S_2, S_1, A = 1, X, G = E} \mid G = O} \nonumber \\
    &\qquad\qquad - \Eb{\Eb{h_0\prns{\Sc, \Sb, A, X} \mid S_2, S_1, A = 0, X, G = E} \mid G = O}.
\end{align}
\end{corollary}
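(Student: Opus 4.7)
The plan is to adapt the proof of \Cref{thm: identification1} by treating $(S_1, S_2)$ as augmented pre-treatment covariates, while carefully using the sequential structure (item 4 of \Cref{assump: S1-S2-pretreat}) to ensure that the outcome bridge function defined in \Cref{assump: bridge} still captures all relevant confounding.

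First, I would establish an analog of \Cref{lemma: bridge-obs}: under the stated conditions, any $h_0$ satisfying \Cref{eq: bridge-obs} also satisfies the $U$-characterization $\Eb{Y \mid \Sb, A, U, X, G = O} = \Eb{h_0\prns{\Sc, \Sb, A, X} \mid \Sb, A, U, X, G = O}$. The argument mirrors \Cref{lemma: bridge-obs}: take the difference of the two sides of this target equation, condition on $(\Sa, \Sb, A, X, G = O)$, and use the tower rule together with the completeness condition in \Cref{assump: completeness} item \ref{assump: completeness-2}. The sequential-structure assumption $\prns{Y(a), S_3(a)} \perp S_1 \mid S_2, U, X, G = O$ ensures that the conditional distribution of $(Y, S_3)$ given $(\Sb, A, U, X)$ is unaffected by also conditioning on $\Sa$, which is precisely what makes the analog of \Cref{eq: bridge-obs} match the $U$-level definition.

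Next, for each $a \in \{0,1\}$, I would identify $\mu(a) = \Eb{Y(a) \mid G = O}$ by the chain:
\begin{align*}
\mu(a)
&= \Eb{\Eb{Y(a) \mid \Sb, \Sa, U, X, G = O} \mid G = O} \\
&= \Eb{\Eb{Y \mid \Sb, \Sa, A = a, U, X, G = O} \mid G = O} \\
&= \Eb{\Eb{h_0\prns{\Sc, \Sb, A, X} \mid \Sb, \Sa, A = a, U, X, G = O} \mid G = O} \\
&= \Eb{\Eb{h_0\prns{\Sc, \Sb, A, X} \mid \Sb, \Sa, A = a, U, X, G = E} \mid G = O} \\
&= \Eb{\Eb{h_0\prns{\Sc, \Sb, A, X} \mid \Sb, \Sa, A = a, X, G = E} \mid G = O},
\end{align*}
where the first line is the tower rule, the second uses observational unconfoundedness (item 1) and the fact that $\Sb, \Sa$ are pre-treatment so $\Sb(a) = \Sb$, $\Sa(a) = \Sa$; the third invokes the $U$-characterization just established together with the sequential-structure assumption (so that adding $\Sa$ into the conditioning does not change the relevant conditional expectation); the fourth uses external validity $\prns{\Sc(a), U} \perp G \mid \Sb, \Sa, X$ combined with experimental unconfoundedness (item 2) to equate the inner conditional expectation across samples; and the last step iterates out $U$ under $G = E$, using that $U \perp G \mid (\Sb, \Sa, X)$ means the conditional law of $U$ given $(\Sb, \Sa, X)$ is common to both samples, while $U \perp A \mid (\Sb, \Sa, X, G = E)$ allows absorbing the event $\{A = a\}$ into the integration. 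Subtracting $\mu(0)$ from $\mu(1)$ gives the desired identification formula \Cref{eq: identification-1-S1-S2}.

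The main technical hurdle is the fourth transition, which requires simultaneously invoking external validity and experimental unconfoundedness for $U$. Concretely, one needs the joint conditional distribution $p(\Sc, U \mid \Sb, \Sa, A = a, X, G = \cdot)$ to agree across $G \in \{O, E\}$. This follows from combining $(\Sc(a), U) \perp G \mid (\Sb, \Sa, X)$ in item 3 with both the observational and experimental unconfoundedness (items 1 and 2) applied at the augmented covariate $(\Sb, \Sa, X)$—together these give $\prns{\Sc(a), U} \perp A \mid \Sb, \Sa, X, G = g$ for each $g$, and then consistency $\Sc = \Sc(A)$ transfers this to the observed $\Sc$. The rest of the argument is a careful but routine application of the tower rule.
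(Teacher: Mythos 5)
Your proposal is correct and follows essentially the same route as the paper: you first upgrade the observable characterization \Cref{eq: bridge-obs} to the $U$-level characterization via the completeness condition and the derived independence $(Y,S_3)\perp S_1 \mid S_2, A, U, X, G=O$, and then run the same chain of tower-rule/unconfoundedness/external-validity steps that the paper invokes by reference to \Cref{thm: identification1} and \Cref{corollary: covariate-exp}, merely written in the forward direction from $\mu(a)$ rather than backward from the $h_0$ functional.
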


\subsection{Partial Confounding Adjustments}\label{sec: partial}
\begin{figure}[t]
\centering 
\begin{subfigure}[b]{0.48\textwidth}
\begin{tikzpicture}
\node[draw, circle, text centered] (A) {$A$};
\node[draw, dashed, circle, text centered, right=5cm of A] (Y) {$Y$};
\node[draw, circle, text centered, right=0.6cm of A] (S) {$S_1$};
\node[draw, circle, text centered, right=2cm of A] (S2) {$S_2$};
\node[draw, circle, text centered, right=3.4cm of A] (S3) {$S_3$};
\node[draw, circle, dashed, text centered, above=1cm of S2] (U) {$U_\Diamond$};
\node[draw, circle, dashed, color= blue, text centered, left=0.7cm of U] (Ul) {$\color{blue}U_\dag$};
\node[draw, circle, dashed, color=magenta, text centered, right=0.7cm of U] (Ur) {$\color{magenta} U_\sharp$};

\draw[->] (A) -- (S);
\draw[->] (A) to [bend right] (S2);
\draw[->] (A) to [bend right] (S3);
\draw[->] (A) to [bend right] (Y);
\draw[->] (U) -- (S);
\draw[->] (U) -- (S2);
\draw[->] (U) -- (S3);
\draw[->] (S) -- (S2);
\draw[->] (S2) -- (S3);
\draw[->] (S3) -- (Y);
\draw[->] (U) -- (Y);
\draw[-] (U) -- (Ul);
\draw[-] (U) -- (Ur);
\draw[->,color=blue] (Ul) -- (S);
\draw[->,color=blue] (Ul) -- (S2);
\draw[->,color = magenta] (Ur) -- (S3);
\draw[->,color = magenta] (Ur) -- (Y);
\draw[->] (U) -- (A);
\draw[->,color=blue] (Ul) -- (A);
\end{tikzpicture}
\caption{Observational data.}
\label{figure: DAG-obs-c}
\end{subfigure}
\begin{subfigure}[b]{0.48\textwidth}
\begin{tikzpicture}
\node[draw, circle, text centered] (A) {$A$};
\node[draw, dashed, circle, text centered, right=5cm of A] (Y) {$Y$};
\node[draw, circle, text centered, right=0.6cm of A] (S) {$S_1$};
\node[draw, circle, text centered, right=2cm of A] (S2) {$S_2$};
\node[draw, circle, text centered, right=3.4cm of A] (S3) {$S_3$};
\node[draw, circle, dashed, text centered, above=1cm of S2] (U) {$U_\Diamond$};
\node[draw, circle, dashed, color= blue, text centered, left=0.7cm of U] (Ul) {$\color{blue}U_\dag$};
\node[draw, circle, dashed, color=magenta, text centered, right=0.7cm of U] (Ur) {$\color{magenta} U_\sharp$};

\draw[->] (A) -- (S);
\draw[->] (A) to [bend right] (S2);
\draw[->] (A) to [bend right] (S3);
\draw[->] (A) to [bend right] (Y);
\draw[->] (U) -- (S);
\draw[->] (U) -- (S2);
\draw[->] (U) -- (S3);
\draw[->] (S) -- (S2);
\draw[->] (S2) -- (S3);
\draw[->] (S3) -- (Y);
\draw[->] (U) -- (Y);
\draw[-] (U) -- (Ul);
\draw[-] (U) -- (Ur);
\draw[->,color=blue] (Ul) -- (S);
\draw[->,color=blue] (Ul) -- (S2);
\draw[->,color = magenta] (Ur) -- (S3);
\draw[->,color = magenta] (Ur) -- (Y);
\end{tikzpicture}
\caption{Experiment sample.}
\label{figure: DAG-exp-c}
\end{subfigure}
\caption{unobserved confounders $\prns{{\color{magenta}U_\sharp}, {\color{blue}U_\dag}}$ that can be ignored. Additional covariates $X$ can be present but we do not draw them to avoid cluttering the graphs. }
\label{figure: DAG-c}
\end{figure}
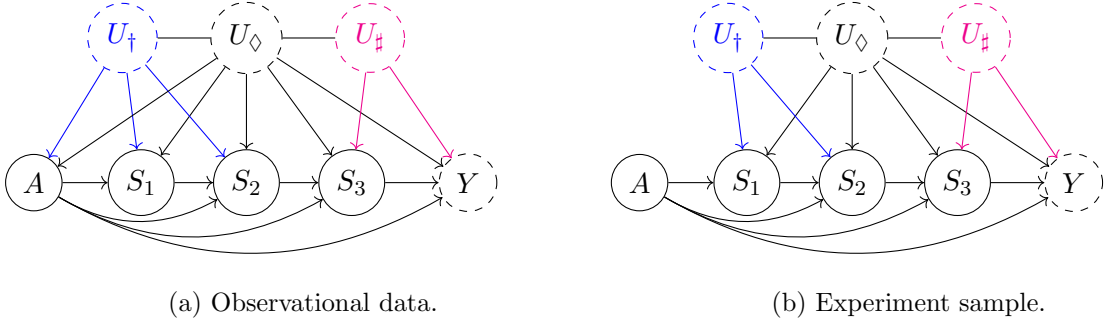

In the main text, the unobseved variables  $U$ stand for all unobserved confounders that can possibly affect the treatment, the short-term outcomes, the long-term outcome, or any subset of them (see \Cref{figure: DAG-b}). The identification strategies in \Cref{sec: identification} require the short-term outcomes $(S_1, S_3)$ to be sufficiently rich relative to all of the unobserved confounders. 
In this part, we show that actually we do not need to use the short-term outcomes to handle all such unobserved confounders. 
Instead, we can achieve identification under lower requirements for the short-term outcomes, still using the same identification strategies.

In \Cref{figure: DAG-c}, we plot three different types of unobserved confounders: confounders $U_{\diamond}$ can affect any of $(Y, S_3, S_2, S_1, A)$, confounders $U_{\dag}$ can affect $(S_2, S_1, A)$ but not $(S_3, Y)$, while confounders $U_{\sharp}$ can affect $(S_3, Y)$ but not $(S_2, S_1, A)$. 
Naively, one can view $U = \prns{U_{\diamond},U_{\dag}, U_{\sharp}}$ and argue identifiability following any of \Cref{thm: identification1,thm: identification2,thm: identification-DR}. 
This would require the short-term outcomes $(S_1, S_3)$ to be rich enough relative to all of 
$\prns{U_{\diamond},U_{\dag}, U_{\sharp}}$. 
Now we show that this is not necessary. Instead, we  need  $(S_1, S_3)$ to be rich enough relative to \emph{only} $U_{\diamond}$, but \emph{not} necessarily $\prns{U_{\dag}, U_{\sharp}}$. 

We first extend \Cref{assump: CI,assump: unconfound-obs,assump: unconfound-exp,assump: ext-valid} to the current setting, by substituting $U_{\diamond}$ for $U$ in these previous assumptions. 
\begin{assumption}\label{assump: partial}
Assume the following conditions hold for any $a \in \braces{0, 1}$ :
\begin{enumerate}
\item $\prns{Y(a), S_3\prns{a}} \perp A \mid S_2(a), U_{\diamond}, X, G = O$ and $0 < \Prb{A = 1 \mid U_{\diamond}, X, G = O} < 1$ almost surely. 
\item $\prns{S_2\prns{a}, U_{\diamond}} \perp A \mid X, G = E$ and $0 < \Prb{A = 1 \mid X, G = E} < 1$ almost surely. 
\item $\prns{S_3\prns{a}, S_2\prns{a}, U_{\diamond}} \perp G \mid X$, and 
\begin{align*}
\frac{p\prns{U_{\diamond}, X \mid A = a, G = E}}{p\prns{U_{\diamond}, X \mid A= a, G = O}} < \infty.
\end{align*}
\item $\prns{Y(a), S_3\prns{a}} \perp S_1\prns{a} \mid S_2\prns{a}, U_{\diamond}, X, G = O$.
\end{enumerate}
\end{assumption}

It is easy to verify that the current setting depicted in \Cref{figure: DAG-c} can satisfy \Cref{assump: partial}.
Moreover, below we modify the completeness condition in \Cref{assump: completeness}  condition \ref{assump: completeness-2} and the outcome bridge function assumption in \Cref{assump: bridge}.
\begin{assumption}\label{assump: partial-2}
\begin{enumerate}
\item For any $s_2 \in \Scal_2$, $a \in \braces{0, 1}$, $x \in \Xcal$, 
$$\text{if } \Eb{g\prns{U_{\diamond}} \mid \Sa, \Sb = s_2, A = a, X = x, G = O} = 0 ~ \text{holds almost surely},$$ 
then $g\prns{U_{\diamond}} = 0$ almost surely.
\item There exists an outcome bridge function $h_0: \Scal_3 \times \Scal_2 \times \Acal \times \Xcal \to \Rl$ such that 
\begin{align}\label{eq: bridge-U-partial}
\Eb{Y \mid \Sb, A, U_{\diamond}, X, G = O} = \Eb{h_0\prns{\Sc, \Sb, A, X} \mid \Sb, A, U_{\diamond}, X, G = O}.
\end{align}
\end{enumerate}
\end{assumption}

In \Cref{assump: partial-2}(a), we assume a partial completeness condition, which only require the short-term outcomes $S_1$ to be rich enough relative to $U_{\diamond}$. In \Cref{assump: partial-2}(b), we only require the bridge function to capture the unmeasured confounding due to $U_{\diamond}$. This is possible when the short-term outcomes $S_3$ are rich enough relative to $U_{\diamond}$. 
Importantly, we do not need $S_1, S_3$ to be rich enough relative to $\prns{U_{\diamond},U_{\dag}, U_{\sharp}}$ together. 

Then we show that the long-term average treatment effect can be identified according to the equation we derived in \Cref{corollary: covariate-exp}. 
This means actually the same identification strategy still works under lower requirements on the short-term outcomes. 
\begin{corollary}\label{corollary: partial}
Suppose \Cref{assump: partial,assump: partial-2} hold. Then the average long-term treatment effect is identifiable: for any function $h_0$ that satisfies \Cref{eq: bridge-obs}, \Cref{eq: identification-1-X} in \Cref{corollary: covariate-exp} holds.  
\end{corollary}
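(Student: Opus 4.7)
The plan is to mirror the derivation of \Cref{eq: identification-1-X} in \Cref{corollary: covariate-exp}(1), substituting $U_{\diamond}$ for $U$ throughout and verifying that every step uses only the partial conditions in \Cref{assump: partial,assump: partial-2}. Since the nuisance confounders $U_{\dag}$ and $U_{\sharp}$ never enter the final identification formula, the formula itself is unchanged; only its justification needs to be reworked.

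The first step is to prove the partial-confounding analogue of \Cref{lemma: bridge-obs}: any $h_0$ satisfying \Cref{eq: bridge-obs} also satisfies the partial bridge identity \Cref{eq: bridge-U-partial}. Setting $\Delta = Y - h_0(\Sc, \Sb, A, X)$, \Cref{eq: bridge-obs} reads $\Eb{\Delta \mid \Sa, \Sb, A, X, G = O} = 0$. Conditioning through $U_{\diamond}$ and combining the sequential structure in \Cref{assump: partial}(4) with consistency yields $(Y, \Sc) \perp \Sa \mid \Sb, A, U_{\diamond}, X, G = O$, so that $\Eb{\Delta \mid \Sa, \Sb, A, U_{\diamond}, X, G = O}$ does not depend on $\Sa$. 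Hence
\[
0 = \Eb{\Eb{\Delta \mid \Sb, A, U_{\diamond}, X, G = O} \mid \Sa, \Sb, A, X, G = O},
\]
and the partial completeness in \Cref{assump: partial-2}(a), applied with $g(U_{\diamond}) = \Eb{\Delta \mid \Sb = s_2, A = a, U_{\diamond}, X = x, G = O}$ at each fixed $(s_2, a, x)$, forces the inner conditional expectation to vanish almost surely. This is exactly \Cref{eq: bridge-U-partial}.

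The identification argument then proceeds by the tower rule. Starting from $\mu(a) = \Eb{Y(a) \mid G = O}$, I condition on $(U_{\diamond}, X)$ and use the unconfoundedness in \Cref{assump: partial}(1) together with consistency and \Cref{eq: bridge-U-partial} to obtain
\[
\Eb{Y(a) \mid U_{\diamond}, X, G = O} = \Eb{h_0(\Sc, \Sb, A, X) \mid A = a, U_{\diamond}, X, G = O}.
\]
Next I transport the right-hand side to the experimental sample: the external validity $(\Sc(a), \Sb(a), U_{\diamond}) \perp G \mid X$ in \Cref{assump: partial}(3), combined with the experimental-randomization assertion in \Cref{assump: partial}(2) (which yields $U_{\diamond} \perp A \mid X, G = E$ and $\Sb(a) \perp A \mid U_{\diamond}, X, G = E$) and consistency, matches the conditional law of $(\Sc, \Sb) \mid A = a, U_{\diamond}, X, G = E$ with the corresponding law under $G = O$, so the expectation equals $\Eb{h_0(\Sc, \Sb, A, X) \mid A = a, U_{\diamond}, X, G = E}$. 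Averaging over $U_{\diamond} \mid X$ (common to both samples by \Cref{assump: partial}(3)) yields $\Eb{Y(a) \mid X, G = O} = \Eb{h_0(\Sc, \Sb, A, X) \mid A = a, X, G = E}$, and a final expectation over $X \mid G = O$ gives $\mu(a) = \Eb{\bar h_E(a, X) \mid G = O}$. Differencing the $a = 1$ and $a = 0$ expressions delivers \Cref{eq: identification-1-X}.

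The main obstacle is the careful bookkeeping required when translating between potential and observed variables using only the partial assumptions. In particular, \Cref{assump: partial}(4) is stated for the potential outcome $\Sa(a)$; to obtain the observed-outcome version $(Y, \Sc) \perp \Sa \mid \Sb, A, U_{\diamond}, X, G = O$ used in Step~1, one must splice it together with the unconfoundedness in \Cref{assump: partial}(1) via a graphoid-style argument (equivalently, a d-separation check against the causal DAG containing $U_{\diamond}, U_{\dag}, U_{\sharp}$, noting that $U_{\dag}$ reaches $(Y, \Sc)$ only through variables we condition on). A parallel subtlety arises when transporting the inner conditional expectation across samples: because $h_0$ depends only on observables, it suffices to match the joint law of $(\Sc(a), \Sb(a), U_{\diamond}) \mid X$ between the two samples, which is precisely what \Cref{assump: partial}(3) provides, so no analogous independence for $Y(a)$ is needed. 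Once these translations are dispatched, the remainder is a direct adaptation of the proof of \Cref{corollary: covariate-exp}(1).
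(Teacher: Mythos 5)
Your overall architecture matches the paper's: first upgrade the potential-outcome conditional independence to its observed-variable form, then show that any solution of \Cref{eq: bridge-obs} satisfies the partial bridge identity \Cref{eq: bridge-U-partial} via the partial completeness condition, and finally rerun the argument of \Cref{corollary: covariate-exp}. Your first step is essentially the paper's, and your remark that the needed observed-variable independence requires splicing \Cref{assump: partial}(1) and (4) together (or a d-separation check) is a fair account of a point the paper itself glosses over.

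However, your identification step asserts two intermediate identities that are false under the partial assumptions, and it is exactly the residual confounder $U_{\dag}$ that breaks them. The identity $\Eb{Y(a) \mid U_{\diamond}, X, G = O} = \Eb{h_0\prns{\Sc, \Sb, A, X} \mid A = a, U_{\diamond}, X, G = O}$ would require integrating $\Sb$ on the right against the law of $S_2(a)$ given $\prns{U_{\diamond}, X, G = O}$, but the right-hand side as written integrates against the law of $\Sb$ given $\prns{A = a, U_{\diamond}, X, G = O}$; these differ because $S_2(a) \not\perp A \mid U_{\diamond}, X, G = O$ in general (the path $S_2 \leftarrow U_{\dag} \to A$ in \Cref{figure: DAG-obs-c} remains open after conditioning on $U_{\diamond}$ and $X$). \Cref{assump: partial}(1) only deconfounds $\prns{Y(a), S_3(a)}$ from $A$ given $S_2(a)$; it says nothing about $S_2(a)$ itself, and the whole point of \Cref{sec: partial} is that $U_{\diamond}$ alone does not deconfound the treatment. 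Your subsequent transport claim---that the law of $\prns{\Sc, \Sb} \mid A = a, U_{\diamond}, X$ is the same under $G = E$ and $G = O$---fails for the same reason: conditioning on $A = a$ in the observational sample tilts the distribution of $U_{\dag}$ and hence of $\Sb$, whereas in the experimental sample it does not. The two errors happen to cancel, since the composite statement is true: the law of $\prns{S_3(a), S_2(a)} \mid U_{\diamond}, X, G = O$ (unconditional on $A$) coincides with the law of $\prns{\Sc, \Sb} \mid A = a, U_{\diamond}, X, G = E$ by external validity plus experimental randomization. This is how the paper's proofs of \Cref{thm: identification1} and \Cref{corollary: covariate-exp-dr} proceed: they keep $\Eb{Y(a) \mid S_2(a), U, X, G = O}$ as the transported object and never form an observational-arm-specific average of $h_0$. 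You should rewrite your chain to pass directly from $\Eb{Y(a)\mid U_{\diamond}, X, G=O}$ to $\Eb{h_0\prns{\Sc,\Sb,A,X}\mid A=a, U_{\diamond}, X, G=E}$ without the observational $A=a$ waypoint.
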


\subsection{Relaxing the External Validity Assumption}\label{sec: ext-validity}

{In \Cref{sec: extension} \Cref{assump: ext-valid2}, we assumed the external validity condition that the distributions of the unobserved confounders $U$ on the two datasets, conditional on the covariates $X$, are identical. 
In this section, we show that this assumption can be weakened, provided that $S_1$ and $S_2$ are both pre-treatment outcomes.
Specifically, we assume the following condition.
}

{
\begin{assumption}\label{assump: latent-conditional-ext-valid}
Suppose that for any $a \in \braces{0, 1}$,
    \begin{align*}
        G \perp S_3\prns{a} \mid S_2, S_1, U, X.
    \end{align*}
\end{assumption}
}

{
\Cref{assump: latent-conditional-ext-valid} imposes that the distributions of the potential short-term outcome $S_3\prns{a}$ are identical on the two datasets, conditional on the pre-treatment outcomes $S_2, S_1$, the unobserved confounders $U$, and the observed covariates $X$. 
Importantly, this assumption is weaker than the condition $G \perp (S_3(a), U) \mid S_2, S_1, X$, 
allowing for distribution shift of the unobserved confounders $U$. 
To handle the lack of external validity, we again view the short-term outcomes as proxies for the unobserved confounders. 
Specifically, we rely on the following external validity bridge function. 
}

{
\begin{assumption}[External validity bridge function]\label{assump: bridge3}
There exists an external validity bridge function $\tilde{q}: \Scal_2 \times \Scal_1 \times \Acal \times \Xcal \to \R{}$ defined as follows: 
\begin{align}\label{eq: ev-bridge}
\frac{p(S_2, U, X \mid A, G = O)}{p(S_2, U, X \mid A, G = E)} = \E[\tilde{q}\prns{S_2, S_1, A, X} \mid S_2, A, U, X, G = E]
\end{align}
\end{assumption}
The external validity bridge function in \Cref{assump: bridge3} \Cref{eq: ev-bridge} is very similar to the selection bridge function in \Cref{assump: bridge2} \Cref{eq: bridge2-U}. There are only two differences: one is that the left hand side of \Cref{eq: ev-bridge} is the reciprocal of the left hand side of \Cref{eq: bridge2-U}, and the other is that the right hand side of \Cref{eq: ev-bridge} involves a conditional expectation over the experimental data rather than the observational data. 
We will show that the external validity bridge function can adjust for the discrepancy in the distributions of unobserved confounders between the two datasets. 
Since the external validity bridge function in \Cref{eq: ev-bridge} is defined in terms of unobserved confounders, we cannot directly use this definition to learn an external validity bridge function. 
Instead, we give an alternative formulation that involves only the observed variables.
} 

{
\begin{lemma}\label{lemma: ev-bridge-obs}
Assume \Cref{assump: S1-S2-pretreat} conditions 1, 2, 4, \Cref{assump: latent-conditional-ext-valid}, and the completeness condition in \Cref{assump: completeness}  condition \ref{assump: completeness-2}. Then  
any function $\tilde{q}$ that satisfies 
\[
\frac{p(S_3, S_2, X \mid A, G = O)}{p(S_3, S_2, X \mid A, G = E)} = \E[\tilde{q}\prns{S_1, S_2, X, A} \mid S_2, X, S_3, A, G = E]
\]
is also a valid external validity bridge function in the sense of \Cref{eq: ev-bridge}. 
\end{lemma}
}

{
In the theorem below, we further show that the the average treatment effect can be identified by any external validity function and any outcome bridge function.  
}

{
\begin{theorem}\label{thm:ext-valid}
Assume the assumptions in \Cref{lemma: ev-bridge-obs}, \Cref{assump: bridge}, and $(S_2, S_1, U, X) \perp A \mid G = E$ hold. Let $\tilde q$ and $h$ be any functions that satisfy \Cref{eq: ev-bridge} and \Cref{eq: bridge-obs} respectively. Then for $a \in \{0, 1\}$, we have 
    \begin{align*}
        \E[Y(a) \mid S_2, X, G = O] =  m(S_2, a, X) \frac{p(S_2, X \mid G = E)}{p(S_2, X \mid G = O)},
    \end{align*}
where
\begin{align*}
     m(S_2, a, X) :=  &\E \bigg[ \E[h(S_3, S_2, X, A) \mid S_2, S_1,  X, G = E, A = a] \\
        &\quad\quad\quad\quad \sum_{a'}\Prb{A = a' \mid G = O}  \tilde{q}\prns{S_2, S_1, X, a'} \mid S_2, X, A = a, G = E\bigg].
\end{align*}
Moreover, we have 
    \begin{align*}
        \tau  = \E\left[ m(S_2, 1, X) - m(S_2, 0, X) \mid G = E\right].
    \end{align*}
\end{theorem}
}

\section{Additional Results for Numerical Studies}\label{sec: appendixsimulation}
\subsection{Additional Details for \Cref{sec: real-data}}
In the following proposition, we justify the sampling probability function described in \Cref{sec: data}.
\begin{proposition}\label{prop: experiment}
Let $(Z_1, Z_2, A)$ be a random vector with $(Z_1, Z_2) \perp A$ and $A \in \{0,1\}$. Let $G \in \{0,1\}$ be a binary random variable such that $G \perp Z_2 \mid Z_1$ and 
\[
\Prb{G = 1\mid Z_1, A = 1} \Prb{A = 1} + \Prb{G = 1 \mid Z_1, A = 0} \Prb{A = 0} \equiv C,
\]
where $C$ is a positive constant. 
Then the probability density of $(Z_1, Z_2)$ satisfies that
\[
p(z_1, z_2 \mid G = 1) \equiv p(z_1, z_2), ~~ \forall z_1, z_2.
\]
\end{proposition}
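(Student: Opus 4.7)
The plan is to apply Bayes' rule to rewrite $p(z_1, z_2 \mid G = 1)$ in terms of the conditional selection probability $\Prb{G=1\mid z_1, z_2}$, then reduce this selection probability to the constant $C$ using the two conditional independence assumptions together with the given identity, and finally conclude that the conditioning on $G = 1$ leaves the joint density of $(Z_1, Z_2)$ unchanged.

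Concretely, I would first write
\[
p(z_1, z_2 \mid G = 1) = \frac{\Prb{G = 1 \mid z_1, z_2}\, p(z_1, z_2)}{\Prb{G = 1}}.
\]
By the hypothesis $G \perp Z_2 \mid Z_1$, the numerator factor simplifies to $\Prb{G = 1 \mid z_1}$. Next, I would condition on $A$ and use the law of total probability:
\[
\Prb{G = 1 \mid z_1} = \sum_{a \in \{0,1\}} \Prb{G = 1 \mid z_1, A = a}\, \Prb{A = a \mid z_1}.
\]
Since $(Z_1, Z_2) \perp A$ implies in particular $Z_1 \perp A$, I can replace $\Prb{A = a \mid z_1}$ by $\Prb{A = a}$. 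The assumed identity then immediately yields $\Prb{G = 1 \mid z_1} \equiv C$, and marginalizing over $Z_1$ gives $\Prb{G = 1} = C$ as well.

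Combining these, the ratio $\Prb{G = 1 \mid z_1}/\Prb{G = 1}$ equals $1$ almost surely, so $p(z_1, z_2 \mid G = 1) = p(z_1, z_2)$, as claimed.

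There is no real obstacle here; the only subtlety is keeping track of which conditional independence is invoked at each step. In particular, the assumption $G \perp Z_2 \mid Z_1$ is what allows the selection probability to depend only on $Z_1$ once $Z_2$ is marginalized out, while $(Z_1, Z_2) \perp A$ ensures the mixing weights across $A$ at a given $z_1$ coincide with the unconditional $\Prb{A = a}$; both are needed to make the sum of mixed selection probabilities equal the prescribed constant $C$.
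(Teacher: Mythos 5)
Your proof is correct and follows essentially the same route as the paper's: Bayes' rule, the conditional independence $G \perp Z_2 \mid Z_1$ to reduce the selection probability to a function of $Z_1$, and the law of total probability over $A$ combined with $Z_1 \perp A$ to show that probability equals the constant $C$. The only cosmetic difference is that you explicitly verify $\Prb{G=1}=C$ by marginalizing, whereas the paper concludes via the proportionality $p(z_1\mid G=1)\propto p(z_1)$ and normalization.
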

We can let $Z_1$ be the education level $U$, $Z_2$ be other covariates and the potential short-term outcomes, $A$ be the GAIN treatment assignment, and $G$ be the indicator for whether being selected into the observational dataset $\mathcal{D}_O$.
Then \Cref{prop: experiment} means that
the subsampling procedure does not change the distribution of latent confounders, covariates, and potential short-term outcomes. This explains why the subsampling is not against \Cref{assump: ext-valid}.

\subsection{Additional Results for \Cref{sec: real-data}}

\begin{table}[!t]
    \centering
    \begin{tabular}{c|c|c|c|c|c}
    \hline
      {\scriptsize $\dim(S_1)$} & $s_2$  & {\scriptsize $P (\mathbf{S}_1 \mid S_2 = s_2, A = 0, \mathbf{U})$} & {\scriptsize $P (\mathbf{S}_3 \mid S_2 = s_2, A = 0, \mathbf{U})$} & {\scriptsize $P (\mathbf{S}_1 \mid S_2 = s_2, A = 1, \mathbf{U})$} & {\scriptsize $P (\mathbf{S}_3 \mid S_2 = s_2, A = 1, \mathbf{U})$}\\ \hline
      \multirow{4}{*}{2} & $(0,0)$ &  $0.016$ & $0.012$ & 0.002 & 0.007\\ 
       & $(1,0)$ &  $0.015$ & $0.068$  & 0.059 & 0.018\\
       & $(0,1)$ &  $0.014$ & $0.019$ & 0.002 & 0.024\\
       & $(1,1)$ &  $0.026$ & $0.038$ & 0.010 &  0.020 \\\hline
       \multirow{4}{*}{4} 
       & $(0,0)$ &  $0.023$ & 0.023 & 0.018 & 0.009\\ 
       & $(1,0)$ &  $0.190$ & 0.125  & 0.072 & 0.055\\
       & $(0,1)$ &  $0.123$ & 0.185 & 0.067 & 0.032\\
       & $(1,1)$ &  $0.063$ & 0.084 & 0.036 &  0.028 \\\hline
       \multirow{4}{*}{6} 
       & $(0,0)$ &  0.028 & 0.020 & 0.020 & 0.014\\ 
       & $(1,0)$ & 0.134 & 0.145  & 0.068 & 0.053\\
       & $(0,1)$ &  0.151 & 0.165 & 0.070 & 0.083\\
       & $(1,1)$ &  0.084 & 0.072 & 0.040 &  0.037 \\\hline
    \end{tabular}
    \caption{List of smallest singular values of the empirical estimates of the conditional probability matrices $P (\mathbf{S}_1 \mid S_2 = s_2, A = a, \mathbf{U})$ and $P (\mathbf{S}_3 \mid S_2 = s_2, A = a, \mathbf{U})$ for $s_2 = (0, 0), \ldots, (1, 1), a = 0, 1$ with different dimension of $S_1$ and $S_3$. Here we keep throughout the dimension of $S_1$ and $S_3$ to be the same. The dimension of $S_1$ corresponds to the number of quarters included in the surrogate. Here $\dim(S_1) = 2$ means that $S_1, S_2$ and $S_3$ take the employment status of $1-2$-th quarters, $3-4$-th quarters and $5-6$-th quarters after the treatment respectively; $\dim(S_1) = 4$ means the three surrogates take $1-4$-th quarters, $5-6$-th quarters and $7-10$-th quarters respectively; and $\dim(S_1) = 6$ means the three surrogates take $1-6$-th quarters, $7-8$-th quarters and $9-14$-th quarters respectively. %
    }
    \label{table: singular}
\end{table}

{The results in \Cref{table: numerical} are shown in relative scale. \Cref{table: raw} further show the corresponding raw numbers. Moreover, we provide additional results on the RMSE and Bias of different methods across the replications of sampling. We find that our proposed estimators, when regularized appropriately, can achieve small bias and small variance. In contrast, the benchmark estimators that cannot handle general persistent confounding problem tend to have much higher bias. As a result, our estimators outperform the existing benchmarks by a large margin.}

\begin{sidewaystable}[!t]
\footnotesize
\centering 
 \begin{tabular}{cc|cccc|cccc|cccc|cc|c}
    \hline
    & & \multicolumn{4}{c}{$\hat\tau_{\out}$} & \multicolumn{4}{c}{$\hat\tau_{\sel}$} & \multicolumn{4}{c}{$\hat\tau_{\dr}$} &  \multicolumn{2}{c}{\scriptsize Athey et al.} & {\scriptsize Naive}  \\
    \hline
    $\eta$ & & 0 & .33 & .67 & 1 &  0 & .33 & .67 & 1 & 0 & .33 & .67 & 1 & NR & CV &  \\
         \hline
    \multirow{2}{*}{0} & MAE & 0.017 & 0.006 & 0.008 & 0.009 & 0.010 & 0.010 & 0.011 & 0.011 & 0.015 & 0.002 & 0.005 & 0.006 & 0.047 & 0.044 & 0.053 \\
& Med & 0.017 & 0.006 & 0.008 & 0.009 & 0.010 & 0.010 & 0.011 & 0.011 & 0.015 & 0.002 & 0.005 & 0.006 & 0.047 & 0.044 & 0.053\\
& RMSE & 0.017 & 0.006 & 0.008 & 0.009 & 0.010 & 0.010 & 0.011 & 0.011 & 0.015 & 0.002 & 0.005 & 0.006 & 0.047 & 0.044 & 0.053\\
& Bias & 0.017 & 0.006 & 0.008 & 0.009 & 0.010 & 0.010 & 0.011 & 0.011 & 0.015 & 0.002 & 0.005 & 0.006 & 0.047 & 0.044 & 0.053 \\
\hline
\multirow{2}{*}{0.2} & MAE & 0.048 & 0.010 & 0.012 & 0.013 & 0.012 & 0.013 & 0.013 & 0.013 & 0.045 & 0.007 & 0.008 & 0.009 & 0.048 & 0.050 & 0.059 \\
& Med & 0.023 & 0.010 & 0.012 & 0.013 & 0.012 & 0.013 & 0.013 & 0.013 & 0.021 & 0.006 & 0.008 & 0.009 & 0.048 & 0.050 & 0.059\\
& RMSE & 0.385 & 0.011 & 0.013 & 0.014 & 0.014 & 0.014 & 0.014 & 0.014 & 0.370 & 0.008 & 0.010 & 0.011 & 0.049 & 0.050 & 0.059\\
& Bias & 0.022 & 0.010 & 0.012 & 0.013 & 0.012 & 0.012 & 0.013 & 0.013 & 0.020 & 0.006 & 0.008 & 0.009 & 0.048 & 0.050 & 0.059\\
\hline
\multirow{2}{*}{0.4} & MAE & 0.055 & 0.014 & 0.016 & 0.017 & 0.016 & 0.016 & 0.016 & 0.016 & 0.051 & 0.011 & 0.012 & 0.013 & 0.050 & 0.058 & 0.067 \\
& Med & 0.025 & 0.014 & 0.016 & 0.017 & 0.016 & 0.016 & 0.016 & 0.016 & 0.023 & 0.010 & 0.011 & 0.012 & 0.050 & 0.058 & 0.067\\
& RMSE & 0.252 & 0.016 & 0.018 & 0.019 & 0.018 & 0.018 & 0.018 & 0.018 & 0.234 & 0.013 & 0.014 & 0.015 & 0.051 & 0.058 & 0.067\\
& Bias & 0.027 & 0.014 & 0.016 & 0.017 & 0.016 & 0.016 & 0.016 & 0.016 & 0.024 & 0.010 & 0.012 & 0.012 & 0.050 & 0.058 & 0.067\\
\hline
\multirow{2}{*}{0.6} & MAE & 0.069 & 0.021 & 0.023 & 0.024 & 0.021 & 0.021 & 0.021 & 0.021 & 0.063 & 0.016 & 0.018 & 0.018 & 0.053 & 0.068 & 0.076 \\
& Med & 0.030 & 0.020 & 0.022 & 0.024 & 0.021 & 0.021 & 0.021 & 0.021 & 0.028 & 0.015 & 0.017 & 0.018 & 0.053 & 0.068 & 0.076\\
& RMSE & 0.295 & 0.023 & 0.025 & 0.026 & 0.024 & 0.024 & 0.024 & 0.024 & 0.273 & 0.019 & 0.020 & 0.021 & 0.055 & 0.069 & 0.077\\
& Bias & 0.022 & 0.020 & 0.023 & 0.024 & 0.021 & 0.021 & 0.021 & 0.021 & 0.019 & 0.015 & 0.017 & 0.018 & 0.053 & 0.068 & 0.076\\
\hline
\multirow{2}{*}{0.8} & MAE & 0.111 & 0.030 & 0.032 & 0.033 & 0.029 & 0.029 & 0.029 & 0.029 & 0.098 & 0.024 & 0.026 & 0.026 & 0.059 & 0.082 & 0.088 \\
& Med & 0.038 & 0.030 & 0.032 & 0.034 & 0.029 & 0.029 & 0.029 & 0.030 & 0.036 & 0.024 & 0.025 & 0.026 & 0.060 & 0.082 & 0.088\\
& RMSE & 0.960 & 0.033 & 0.035 & 0.036 & 0.032 & 0.032 & 0.032 & 0.033 & 0.814 & 0.028 & 0.029 & 0.030 & 0.062 & 0.082 & 0.089\\
& Bias & 0.016 & 0.030 & 0.032 & 0.033 & 0.029 & 0.029 & 0.029 & 0.029 & 0.000 & 0.024 & 0.025 & 0.026 & 0.059 & 0.082 & 0.088\\
\hline
\multirow{2}{*}{1} & MAE & 0.072 & 0.033 & 0.035 & 0.036 & 0.030 & 0.031 & 0.031 & 0.031 & 0.065 & 0.026 & 0.028 & 0.028 & 0.061 & 0.089 & 0.095\\
& Med & 0.041 & 0.033 & 0.035 & 0.036 & 0.030 & 0.030 & 0.030 & 0.031 & 0.038 & 0.026 & 0.027 & 0.028 & 0.061 & 0.090 & 0.095\\
& RMSE & 0.214 & 0.037 & 0.038 & 0.040 & 0.035 & 0.035 & 0.035 & 0.035 & 0.179 & 0.031 & 0.032 & 0.033 & 0.065 & 0.090 & 0.095\\
& Bias & 0.036 & 0.033 & 0.035 & 0.036 & 0.030 & 0.030 & 0.030 & 0.031 & 0.036 & 0.025 & 0.027 & 0.027 & 0.061 & 0.089 & 0.095\\
\hline
\multirow{2}{*}{1.2} & MAE & 0.380 & 0.037 & 0.039 & 0.041 & 0.033 & 0.033 & 0.034 & 0.034 & 0.438 & 0.029 & 0.031 & 0.031 & 0.065 & 0.098 & 0.104\\
& Med & 0.045 & 0.036 & 0.039 & 0.041 & 0.032 & 0.032 & 0.033 & 0.034 & 0.043 & 0.027 & 0.029 & 0.030 & 0.064 & 0.099 & 0.104\\
& RMSE & 8.827 & 0.042 & 0.044 & 0.045 & 0.038 & 0.039 & 0.039 & 0.039 & 10.610 & 0.035 & 0.036 & 0.037 & 0.070 & 0.099 & 0.104\\
& Bias & 0.319 & 0.037 & 0.039 & 0.041 & 0.031 & 0.032 & 0.032 & 0.033 & 0.378 & 0.027 & 0.028 & 0.029 & 0.065 & 0.098 & 0.104\\
\hline 
\multirow{2}{*}{1.4} & MAE & 0.130 & 0.044 & 0.047 & 0.049 & 0.036 & 0.037 & 0.038 & 0.038 & 0.129 & 0.033 & 0.035 & 0.036 & 0.072 & 0.111 & 0.115\\
& Med & 0.056 & 0.043 & 0.046 & 0.048 & 0.033 & 0.033 & 0.034 & 0.035 & 0.054 & 0.029 & 0.030 & 0.031 & 0.072 & 0.111 & 0.115\\
& RMSE & 0.773 & 0.050 & 0.052 & 0.054 & 0.044 & 0.044 & 0.045 & 0.045 & 0.677 & 0.041 & 0.042 & 0.043 & 0.078 & 0.112 & 0.116 \\
& Bias & 0.022 & 0.043 & 0.047 & 0.049 & 0.032 & 0.034 & 0.035 & 0.036 & 0.019 & 0.029 & 0.031 & 0.032 & 0.072 & 0.111 & 0.115\\
\hline
\multirow{2}{*}{1.6} & MAE & 0.117 & 0.049 & 0.053 & 0.055 & 0.040 & 0.040 & 0.041 & 0.042 & 0.112 & 0.037 & 0.038 & 0.039 & 0.075 & 0.120 & 0.124\\
& Med & 0.064 & 0.048 & 0.052 & 0.055 & 0.035 & 0.036 & 0.037 & 0.040 & 0.060 & 0.032 & 0.034 & 0.035 & 0.075 & 0.120 & 0.124\\
& RMSE & 0.457 & 0.055 & 0.059 & 0.061 & 0.049 & 0.049 & 0.049 & 0.050 & 0.349 & 0.045 & 0.046 & 0.048 & 0.082 & 0.120 & 0.125\\
& Bias & 0.061 & 0.048 & 0.052 & 0.055 & 0.034 & 0.036 & 0.037 & 0.039 & 0.034 & 0.030 & 0.032 & 0.034 & 0.074 & 0.120 & 0.124\\
\hline
\end{tabular} 
\caption{{Same setting as Table 1, but MAE and Med are shown in raw numbers. Additional RMSE and Bias results are shown.}
}
\label{table: raw}
\end{sidewaystable}

{Moreover, we try using data to probe the plausibility of the \Cref{assump: bridge,assump: bridge2}  in our GAIN case study. According to Example 1, in a discrete setting, \Cref{assump: bridge,assump: bridge2} hold when certain conditional probability matrices have full column rank. We note that the outcomes in the GAIN dataset empirical example are all discrete, so we design some heurstic assessments here to shed some light on Assumptions 6 and 7 in the  empirical study. 
    Based on the GAIN dataset, we estimate the conditional probability matrices $P (\mathbf{S_1} \mid S_2 = s_2, A = a,\mathbf{U})$ and $P (\mathbf{S_3} \mid S_2 = s_2, A = a, \mathbf{U})$ by their empirical frequencies (we do not  condition on  $X$ since this is difficult noting that $X$ is multi-dimensional and some components are continuous), for $s_2 \in \{(0, 0), (1, 0), (0, 1), (1,1)\}$ and $a \in \{0, 1\}$. We vary the dimension of $S_1$ and $S_3$ (\ie, the number of employment status variables included in $S_1, S_3$ respectively) from $2$ to $6$ while fixing the dimension of $S_2$ as $2$ (the number we used in our original numerical study). The smallest singular values of the corresponding empirical probability matrices are calculated and shown in \Cref{table: singular}. 
    We can observe that the smallest singular value gets consistently larger as dimension of $S_1$ and $S_3$ increases, unless when the smallest singular value is already sufficiently large. This heuristically suggests that Assumptions 6 and 7 are more likely to hold if we incorporate more short-term outcomes in $S_1, S_3$ relative to $S_2$, validating our high level intuitions discussed above.}
    {Moreover, we also hope to validate that our assumptions are less plausible as the dimension of $S_2$ grows relative to $S_1, S_3$. We thus increase the dimension of $S_2$ from $2$ to $4$ while fixing the dimension of $S_1, S_3$ as $2$. The resulting  minimum singular value estimates are shown in \Cref{table: singular3,table: singular4}. We observe that as the dimension of $S_2$ grows, the minimum singular values of the corresponding empirical conditional probability matrices tend to be zero or close to zero, indicating violations of the completeness conditions. This means that our assumptions may be less plausible when the dimension of $S_2$ becomes larger relative to the dimension of $S_1,S_3$, thus validating our interpretations from an opposite perspective.
    }

    {Of course, {in \Cref{table: singular,table: singular3,table: singular4}}, some smallest singular values are indeed fairly small, posing threats to Assumptions 6 and 7.
    However, we find that  across all these settings, the performance of our proposed estimator is overall stable and it is significantly better than the existing state-of-art estimator in \citet{athey2020combining}.
    These results show potential benefit of using our method to account for general unobserved confounding, even if our assumptions may not necessarily hold exactly. {This is perhaps because our identification formula involves averaging  over the values of $S_2$, so that even if the completeness condition is violated at certain values of $S_2$, this partial violation, while incurring some bias, may have limited impact on the final averaging result.}} 
        
    {Specifically, we already show the performance of our estimator in \Cref{sec: real-data} \Cref{table: numerical} for the setting $\op{dim}(S_1) = \op{dim}(S_2) = \op{dim}(S_3) = 2$. 
    In \Cref{table: numericalq4}, we generate the data in the same way as in \Cref{table: numerical}, but with $(S_1, S_2, S_3)$ as the employment status in the $1-4$-th quarters, $5-6$-th quarters, and $7-10$-th quarters. In other words, we keep $\op{dim}(S_2) = 2$ but increase the dimension of both $S_1$ and $S_3$ to $\op{dim}(S_2) = \op{dim}(S_3) = 4$. In \Cref{table: numericalq6}, we set $(S_1, S_2, S_3)$ as the employment status in $1-6$-th quarters, $7-8$-th quarters and $9-14$-th quarters, \ie, we increase the dimesnion of $S_1$ and $S_3$ to  $\op{dim}(S_2) = \op{dim}(S_3) = 6$. 
    {In \Cref{table: numericals3,table: numericals4}, we keep the dimension of $S_1$ and $S_3$ as $2$, and increase $\op{dim}(S_2)$ to $3$ and $4$, respectively. In other words,  in \Cref{table: numericals3}, we set $S_1, S_2$ and $S_3$ as the employment status in the $1-2$-th quarters, $3-5$-th quarters and $6-7$-th quarters, respectively; and in \Cref{table: numericals4} we instead take the $1-2$-th quarters, $3-6$-th quarters and $8-9$-th quarters, respectively.}
    Apparently, with ridge regularization (namely the ``.33'', ``.67'' and ``1'' columns), our estimator is still consistently better than~\citet{athey2020combining} by a large margin in all settings, showing that the performance of our estimator is stable with respect to the number of quarters in surrogate construction. Interestingly, with the existence of ridge regularization, our estimator can perform slightly worse as we increase the dimension of surrogates, which may be 
due to the non-uniqueness of bridge functions. 
When the ridge regularization does not exist (namely the ``0'' column), our estimator can be quite unstable, sometimes even worse than the naive estimator. Such phenomenon has also been observed \Cref{table: numerical}.}

{Since the singular values in \Cref{table: singular,table: singular3,table: singular4} are calculated from the probability matrix estimates without conditioning on the covariates $X$, they cannot directly validate the assumptions underlying \Cref{table: numericalq4,table: numericalq6,table: numericals3,table: numericals4} as they all control for the covariates $X$.
We thus further rerun our numerical experiment in the setting of \Cref{table: numerical}, 
without including the covariates $X$. Our previous heuristic diagnostics in  \Cref{table: singular} are directly relevant to this simpler setting. This setting is also reasonable because the treatment in the observational data is constructed to be confounded only by the omitted education variable $U$ so controlling for $X$ is not required. The results are shown in \Cref{table: numericalnoX}. Importantly, our proposed estimator still outperform the benchmark estimators. This provides a setting where our previous diagnostics are relevant and the qualitative findings in our previous experiments also continue to hold.}

\begin{table}[!t]
    \centering
    \begin{tabular}{c|c|c|c|c|c}
    \hline
      {\scriptsize $\dim(S_1)$} & $s_2$  & {\scriptsize $P (\mathbf{S}_1 \mid S_2 = s_2, A = 0, \mathbf{U})$} & {\scriptsize $P (\mathbf{S}_3 \mid S_2 = s_2, A = 0, \mathbf{U})$} & {\scriptsize $P (\mathbf{S}_1 \mid S_2 = s_2, A = 1, \mathbf{U})$} & {\scriptsize $P (\mathbf{S}_3 \mid S_2 = s_2, A = 1, \mathbf{U})$}\\ \hline
      \multirow{8}{*}{2} & $(0,0,0)$ &  0.010 & 0.005 & 0.002 & 0.003\\ 
       & $(1,0,0)$ &  0.142 & 0.319  & 0.058 & 0.017\\
       & $(0,1,0)$ &  0.340 & 0.000 & 0.031 & 0.033\\
       & $(1,1,0)$ &  0.085 & 0.065 & 0.017 &  0.052 \\
       & $(0,0,1)$ &  0.020 & 0.009 & 0.001 & 0.011\\ 
       & $(1,0,1)$ &  0.104 & 0.092  & 0.071 & 0.025\\
       & $(0,1,1)$ &  0.000 & 0.019 & 0.010 & 0.012\\
       & $(1,1,1)$ &  0.011 & 0.007 & 0.003 &  0.012 \\\hline
    \end{tabular}
    \caption{{Same as \Cref{table: singular}, but with $S_1, S_2, S_3$ taking the quarters $1-2$, $3-5$ and $6-7$, respectively.}
    }
    \label{table: singular3}
\end{table}

\begin{table}[!t]
    \centering
    \begin{tabular}{c|c|c|c|c|c}
    \hline
      {\scriptsize $\dim(S_1)$} & $s_2$  & {\scriptsize $P (\mathbf{S}_1 \mid S_2 = s_2, A = 0, \mathbf{U})$} & {\scriptsize $P (\mathbf{S}_3 \mid S_2 = s_2, A = 0, \mathbf{U})$} & {\scriptsize $P (\mathbf{S}_1 \mid S_2 = s_2, A = 1, \mathbf{U})$} & {\scriptsize $P (\mathbf{S}_3 \mid S_2 = s_2, A = 1, \mathbf{U})$}\\ \hline
      \multirow{16}{*}{2} & $(0,0,0,0)$ &  0.003 & 0.010 & 0.000 & 0.005\\ 
       & $(1,0,0,0)$ &  0.120 & 0.083 & 0.051 & 0.018\\
       & $(0,1,0,0)$ &  0.387 & 0.129 & 0.000 & 0.025\\
       & $(1,1,0,0)$ &  0.088 & 0.118 & 0.005 & 0.024\\
       & $(0,0,1,0)$ &  0.145 & 0.280 & 0.047 & 0.051\\
       & $(1,0,1,0)$ &  0.614 & 0.300 & 0.084 & 0.068\\
       & $(0,1,1,0)$ &  0.210 & 0.396 & 0.015 & 0.013\\
       & $(1,1,1,0)$ &  0.176 & 0.194 & 0.004 & 0.030\\
       & $(0,0,0,1)$ &  0.039 & 0.044 & 0.023 & 0.028\\
       & $(1,0,0,1)$ &  0.707 & 1.000 & 0.025 & 0.084\\
       & $(0,1,0,1)$ &  1.414 & 1.414 & 0.242 & 0.375\\
       & $(1,1,0,1)$ &  0.282 & 0.282 & 0.033 & 0.049\\
       & $(0,0,1,1)$ &  0.054 & 0.103 & 0.029 & 0.066\\
       & $(1,0,1,1)$ &  0.297 & 0.171 & 0.105 & 0.024\\
       & $(0,1,1,1)$ &  0.050 & 0.014 & 0.039 & 0.033\\
       & $(1,1,1,1)$ &  0.033 & 0.062 & 0.005 & 0.004\\\hline
    \end{tabular}
    \caption{{Same as \Cref{table: singular}, but with $S_1, S_2, S_3$ taking the quarters $1-2$, $3-6$ and $7-8$, respectively.}
    }
    \label{table: singular4}
\end{table}

\begin{table}[!t]
\centering 
 \begin{tabular}{cc|cccc|cccc|cccc|cc|c}
    \hline
    & & \multicolumn{4}{c}{$\hat\tau_{\out}$} & \multicolumn{4}{c}{$\hat\tau_{\sel}$} & \multicolumn{4}{c}{$\hat\tau_{\dr}$} &  \multicolumn{2}{c}{\scriptsize Athey et al.} & {\scriptsize Naive}  \\
    \hline
    $\eta$ & & 0 & .33 & .67 & 1 &  0 & .33 & .67 & 1 & 0 & .33 & .67 & 1 & NR & CV &  \\
         \hline
    \multirow{2}{*}{0} & MAE & -560 & 74 & 76 & 77 & 81 & 78 & 83 & 85 & -465 & 78 & 81 & 81 & 13 & 33 & 0.053 \\
& Med & -560 & 74 & 76 & 77 & 81 & 78 & 83 & 85 & -465 & 78 & 81 & 81 & 13 & 33 & 0.053\\
\hline
\multirow{2}{*}{0.2} & MAE & -416 & 73 & 74 & 74 & 78 & 78 & 78 & 78 & -363 & 77 & 79 & 79 & 21 & 31 & 0.059 \\
& Med & 33 & 72 & 74 & 74 & 78 & 77 & 77 & 78 & 41 & 77 & 78 & 79 & 21 & 31 & 0.059\\
\hline
\multirow{2}{*}{0.4} & MAE & -150 & 71 & 72 & 72 & 75 & 75 & 75 & 75 & -151 & 76 & 77 & 77 & 27 & 28 & 0.067 \\
& Med & 43 & 71 & 72 & 72 & 75 & 76 & 75 & 76 & 45 & 76 & 77 & 77 & 28 & 29 & 0.067\\
\hline
\multirow{2}{*}{0.6} & MAE & -577 & 68 & 68 & 68 & 72 & 72 & 72 & 72 & -574 & 73 & 74 & 74 & 34 & 25 & 0.076 \\
& Med & 44 & 68 & 68 & 68 & 73 & 72 & 73 & 72 & 47 & 74 & 74 & 74 & 34 & 25 & 0.076 \\
\hline
\multirow{2}{*}{0.8} & MAE & -685 & 63 & 63 & 63 & 67 & 67 & 67 & 67 & -652 & 69 & 70 & 69 & 37 & 21 & 0.088 \\
& Med & 534 & 62 & 62 & 62 & 67 & 67 & 67 & 67 & 37 & 69 & 69 & 69 & 37 & 21 & 0.088\\
\hline
\multirow{2}{*}{1} & MAE & -135 & 63 & 63 & 63 & 68 & 68 & 68 & 68 & -139 & 70 & 70 & 70 & 38 & 18 & 0.095\\
& Med & 35 & 63 & 63 & 62 & 69 & 69 & 69 & 68 & 36 & 70 & 70 & 71 & 39 & 18 & 0.095 \\
\hline
\multirow{2}{*}{1.2} & MAE & -237 & 62 & 62 & 62 & 68 & 68 & 68 & 68 & -213 & 70 & 70 & 70 & 38 & 15 & 0.104 \\
& Med & 28 & 62 & 62 & 62 & 69 & 69 & 69 & 68 & 31 & 71 & 71 & 71 & 39 & 15 & 0.104 \\
\hline 
\multirow{2}{*}{1.4} & MAE & -241 & 61 & 60 & 59 & 69 & 69 & 68 & 68 & -254 & 70 & 70 & 70 & 37 & 11 & 0.115 \\
& Med & 11 & 61 & 61 & 60 & 71 & 71 & 71 & 70 & 14 & 73 & 73 & 72 & 36 & 11 & 0.115 \\
\hline
\multirow{2}{*}{1.6} & MAE & -271 & 59 & 58 & 57 & 68 & 68 & 68 & 68 & -292 & 70 & 70 & 69 & 37 & 10 & 0.124 \\
& Med & 4 & 60 & 59 & 58 & 71 & 71 & 71 & 70 & 5 & 73 & 72 & 72 & 36 & 10 & 0.124\\
\hline
\end{tabular} 
\caption{Same as \Cref{table: numerical}, but with $S_1, S_2, S_3$ taking the quarters $1 - 4$, $5 - 6$ and $7-10$.
}
\label{table: numericalq4}
\end{table}

\begin{table}[!t]
\centering 
 \begin{tabular}{cc|cccc|cccc|cccc|cc|c}
    \hline
    & & \multicolumn{4}{c}{$\hat\tau_{\out}$} & \multicolumn{4}{c}{$\hat\tau_{\sel}$} & \multicolumn{4}{c}{$\hat\tau_{\dr}$} &  \multicolumn{2}{c}{\scriptsize Athey et al.} & {\scriptsize Naive}  \\
    \hline
    $\eta$ & & 0 & .33 & .67 & 1 &  0 & .33 & .67 & 1 & 0 & .33 & .67 & 1 & NR & CV &  \\
         \hline
    \multirow{2}{*}{0} & MAE & -1030 & 74 & 72 & 71 & 75 & 73 & 75 & 72 & -1010 & 77 & 76 & 75 & 17 & 39 & 0.053 \\
& Med & -1030 & 74 & 72 & 71 & 75 & 73 & 75 & 72 & -1010 & 77 & 76 & 75 & 17 & 39 & 0.053\\
\hline
\multirow{2}{*}{0.2} & MAE & -9723 & 72 & 70 & 69 & 75 & 75 & 74 & 74 & -9789 & 76 & 75 & 74 & 23 & 39 & 0.059 \\
& Med & -184 & 72 & 70 & 69 & 75 & 75 & 74 & 74 & -184 & 76 & 75 & 74 & 23 & 38 & 0.059\\
\hline
\multirow{2}{*}{0.4} & MAE & -1102 & 70 & 68 & 67 & 74 & 73 & 73 & 73 & -1113 & 74 & 73 & 73 & 29 & 38 & 0.067 \\
& Med & -152 & 70 & 68 & 67 & 74 & 73 & 73 & 73 & -152 & 75 & 74 & 73 & 29 & 38 & 0.067\\
\hline
\multirow{2}{*}{0.6} & MAE & -861 & 67 & 65 & 64 & 72 & 71 & 71 & 71 & -876 & 72 & 71 & 70 & 35 & 36 & 0.076 \\
& Med & -149 & 67 & 65 & 64 & 72 & 72 & 71 & 71 & -152 & 72 & 72 & 71 & 35 & 36 & 0.076 \\
\hline
\multirow{2}{*}{0.8} & MAE & -8496 & 62 & 60 & 59 & 68 & 68 & 67 & 67 & -8513 & 69 & 67 & 67 & 37 & 32 & 0.088 \\
& Med & -94 & 61 & 59 & 58 & 68 & 68 & 67 & 66 & -92 & 69 & 67 & 67 & 37 & 32 & 0.088\\
\hline
\multirow{2}{*}{1} & MAE & -640 & 59 & 58 & 57 & 68 & 68 & 67 & 67 & -645 & 67 & 66 & 66 & 38 & 30 & 0.095\\
& Med & -111 & 59 & 58 & 57 & 69 & 68 & 67 & 67 & -112 & 67 & 66 & 66 & 38 & 29 & 0.095\\
\hline
\multirow{2}{*}{1.2} & MAE & -459 & 57 & 55 & 55 & 68 & 67 & 67 & 66 & -467 & 65 & 65 & 64 & 37 & 26 & 0.104 \\
& Med & -96 & 57 & 56 & 55 & 69 & 68 & 68 & 68 & -102 & 66 & 66 & 65 & 37 & 26 & 0.104 \\
\hline 
\multirow{2}{*}{1.4} & MAE & -2157 & 53 & 51 & 51 & 66 & 66 & 66 & 65 & -2210 & 63 & 63 & 63 & 33 & 21 & 0.115 \\
& Med & -96 & 53 & 52 & 51 & 69 & 68 & 67 & 67 & -101 & 65 & 64 & 64 & 33 & 21 & 0.115 \\
\hline
\multirow{2}{*}{1.6} & MAE & -683 & 50 & 49 & 48 & 66 & 65 & 65 & 64 & -714 & 62 & 62 & 62 & 31 & 17 & 0.124 \\
& Med & -73 & 50 & 49 & 48 & 68 & 67 & 66 & 66 & -78 & 64 & 63 & 63 & 31 & 17 & 0.124\\
\hline
\end{tabular} 
\caption{Same as \Cref{table: numerical}, but with $S_1, S_2, S_3$ taking the quarters $1 - 6$, $7 - 8$ and $9 - 14$.
}
\label{table: numericalq6}
\end{table}

\begin{table}[!t]
\centering 
 \begin{tabular}{cc|cccc|cccc|cccc|cc|c}
    \hline
    & & \multicolumn{4}{c}{$\hat\tau_{\out}$} & \multicolumn{4}{c}{$\hat\tau_{\sel}$} & \multicolumn{4}{c}{$\hat\tau_{\dr}$} &  \multicolumn{2}{c}{\scriptsize Athey et al.} & {\scriptsize Naive}  \\
    \hline
    $\eta$ & & 0 & .33 & .67 & 1 &  0 & .33 & .67 & 1 & 0 & .33 & .67 & 1 & NR & CV &  \\
         \hline
    \multirow{2}{*}{0} & MAE & 47 & 80 & 79 & 79 & 82 & 81 & 81 & 78 & 45 & 84 & 84 & 84 & 14 & 23 & 0.053 \\
& Med & 47 & 80 & 79 & 79 & 82 & 81 & 81 & 78 & 45 & 84 & 84 & 84 & 14 & 23 & 0.053\\
\hline
\multirow{2}{*}{0.2} & MAE & 2 & 76 & 76 & 75 & 79 & 79 & 79 & 79 & 3 & 81 & 81 & 81 & 21 & 21 & 0.059 \\
& Med & 57 & 76 & 76 & 75 & 79 & 79 & 79 & 79 & 57 & 81 & 81 & 81 & 21 & 21 & 0.059\\
\hline
\multirow{2}{*}{0.4} & MAE & -122 & 73 & 72 & 72 & 76 & 76 & 76 & 76 & -109 & 78 & 78 & 77 & 27 & 19 & 0.067 \\
& Med & 52 & 73 & 73 & 72 & 77 & 76 & 77 & 76 & 54 & 78 & 78 & 78 & 27 & 19 & 0.067\\
\hline
\multirow{2}{*}{0.6} & MAE & -164 & 68 & 68 & 67 & 72 & 72 & 72 & 72 & -147 & 74 & 74 & 74 & 32 & 16 & 0.076 \\
& Med & 40 & 69 & 68 & 68 & 73 & 73 & 73 & 73 & 46 & 75 & 74 & 74 & 32 & 16 & 0.076 \\
\hline
\multirow{2}{*}{0.8} & MAE & -240 & 62 & 62 & 61 & 67 & 67 & 67 & 67 & -191 & 68 & 68 & 68 & 33 & 12 & 0.088 \\
& Med & 36 & 62 & 61 & 61 & 67 & 67 & 67 & 67 & 46 & 69 & 69 & 69 & 33 & 12 & 0.088\\
\hline
\multirow{2}{*}{1} & MAE & -204 & 62 & 61 & 61 & 68 & 68 & 67 & 67 & -161 & 69 & 69 & 68 & 35 & 10 & 0.095\\
& Med & 35 & 62 & 61 & 61 & 68 & 68 & 68 & 68 & 45 & 69 & 69 & 69 & 36 & 10 & 0.095 \\
\hline
\multirow{2}{*}{1.2} & MAE & -205 & 61 & 60 & 59 & 68 & 68 & 67 & 67 & -163 & 69 & 69 & 68 & 36 & 8 & 0.104 \\
& Med & 31 & 61 & 60 & 60 & 69 & 69 & 68 & 68 & 43 & 70 & 70 & 70 & 37 & 8 & 0.104 \\
\hline 
\multirow{2}{*}{1.4} & MAE & -214 & 58 & 57 & 57 & 68 & 68 & 67 & 67 & -151 & 69 & 68 & 68 & 36 & 6 & 0.115 \\
& Med & 29 & 59 & 58 & 57 & 71 & 71 & 70 & 70 & 43 & 72 & 72 & 71 & 36 & 6 & 0.115 \\
\hline
\multirow{2}{*}{1.6} & MAE & -314 & 57 & 56 & 54 & 67 & 67 & 67 & 66 & -253 & 68 & 68 & 67 & 37 & 6 & 0.124 \\
& Med & 30 & 58 & 56 & 54 & 71 & 70 & 70 & 69 & 44 & 71 & 71 & 71 & 38 & 6 & 0.124\\
\hline
\end{tabular} 
\caption{{Same as \Cref{table: numerical}, but with $S_1, S_2, S_3$ taking the quarters $1-2$, $3-5$ and $6-7$.}
}
\label{table: numericals3}
\end{table}

\begin{table}[!t]
\centering 
 \begin{tabular}{cc|cccc|cccc|cccc|cc|c}
    \hline
    & & \multicolumn{4}{c}{$\hat\tau_{\out}$} & \multicolumn{4}{c}{$\hat\tau_{\sel}$} & \multicolumn{4}{c}{$\hat\tau_{\dr}$} &  \multicolumn{2}{c}{\scriptsize Athey et al.} & {\scriptsize Naive}  \\
    \hline
    $\eta$ & & 0 & .33 & .67 & 1 &  0 & .33 & .67 & 1 & 0 & .33 & .67 & 1 & NR & CV &  \\
         \hline
    \multirow{2}{*}{0} & MAE & 27 & 80 & 80 & 80 & 82 & 84 & 83 & 85 & 35 & 85 & 85 & 85 & 15 & 28 & 0.053 \\
& Med & 27 & 80 & 80 & 80 & 82 & 84 & 83 & 85 & 35 & 85 & 85 & 85 & 15 & 28 & 0.053\\
\hline
\multirow{2}{*}{0.2} & MAE & 21 & 77 & 77 & 76 & 80 & 80 & 80 & 80 & 29 & 82 & 82 & 82 & 21 & 26 & 0.059 \\
& Med & 35 & 77 & 77 & 76 & 80 & 80 & 80 & 80 & 43 & 82 & 82 & 82 & 22 & 26 & 0.059 \\
\hline
\multirow{2}{*}{0.4} & MAE & -262 & 74 & 73 & 73 & 77 & 77 & 77 & 77 & -245 & 79 & 79 & 78 & 28 & 24 & 0.067 \\
& Med & 42 & 74 & 73 & 73 & 78 & 78 & 78 & 78 & 50 & 80 & 79 & 79 & 28 & 24 & 0.067 \\
\hline
\multirow{2}{*}{0.6} & MAE & -90 & 69 & 69 & 68 & 73 & 73 & 73 & 73 & -78 & 75 & 75 & 74 & 34 & 21 & 0.076 \\
& Med & 47 & 70 & 69 & 69 & 74 & 74 & 73 & 74 & 53 & 76 & 75 & 75 & 34 & 21 & 0.076 \\
\hline
\multirow{2}{*}{0.8} & MAE & -1 & 63 & 63 & 62 & 68 & 68 & 68 & 68 & 7 & 70 & 69 & 69 & 35 & 16 & 0.088 \\
& Med & 45 & 63 & 62 & 62 & 68 & 68 & 68 & 68 & 52 & 70 & 69 & 69 & 35 & 17 & 0.088 \\
\hline
\multirow{2}{*}{1} & MAE & -26 & 63 & 62 & 62 & 68 & 68 & 68 & 68 & -20 & 70 & 69 & 69 & 38 & 14 & 0.095\\
& Med & 43 & 63 & 62 & 61 & 69 & 69 & 69 & 69 & 50 & 71 & 70 & 70 & 39 & 14 & 0.095 \\
\hline
\multirow{2}{*}{1.2} & MAE & -60 & 62 & 61 & 61 & 68 & 68 & 68 & 68 & -49 & 70 & 69 & 69 & 39 & 11 & 0.104 \\
& Med & 40 & 62 & 62 & 61 & 70 & 69 & 69 & 69 & 48 & 72 & 71 & 71 & 40 & 11 & 0.104 \\
\hline 
\multirow{2}{*}{1.4} & MAE & -37 & 60 & 59 & 58 & 68 & 69 & 68 & 68 & -30 & 70 & 69 & 69 & 39 & 8 & 0.115 \\
& Med & 35 & 61 & 59 & 58 & 72 & 72 & 72 & 70 & 46 & 73 & 73 & 72 & 39 & 8 & 0.115 \\
\hline
\multirow{2}{*}{1.6} & MAE & -51 & 59 & 57 & 56 & 68 & 68 & 68 & 67 & -18 & 69 & 69 & 68 & 40 & 7 & 0.124 \\
& Med & 35 & 59 & 57 & 56 & 72 & 71 & 71 & 70 & 45 & 72 & 72 & 71 & 40 & 7 & 0.124\\
\hline
\end{tabular} 
\caption{{Same as \Cref{table: numerical}, but with $S_1, S_2, S_3$ taking the quarters $1-2$, $3-6$ and $7-8$.}
}
\label{table: numericals4}
\end{table}

\begin{table}[!t]
\centering 
 \begin{tabular}{cc|cccc|cccc|cccc|cc|c}
    \hline
    & & \multicolumn{4}{c}{$\hat\tau_{\out}$} & \multicolumn{4}{c}{$\hat\tau_{\sel}$} & \multicolumn{4}{c}{$\hat\tau_{\dr}$} &  \multicolumn{2}{c}{\scriptsize Athey et al.} & {\scriptsize Naive}  \\
    \hline
    $\eta$ & & 0 & .33 & .67 & 1 &  0 & .33 & .67 & 1 & 0 & .33 & .67 & 1 & NR & CV &  \\
         \hline
    \multirow{2}{*}{0} & MAE & 93 & 72 & 67 & 65 & 84 & 76 & 71 & 68 & 94 & 81 & 76 & 73 & 61 & 29 & 0.053 \\
& Med & 93 & 72 & 67 & 65 & 84 & 76 & 71 & 68 & 94 & 81 & 76 & 73 & 61 & 29 & 0.053\\
\hline
\multirow{2}{*}{0.2} & MAE & 60 & 64 & 60 & 58 & 75 & 68 & 64 & 61 & 59 & 73 & 68 & 65 & 55 & 25 & 0.059 \\
& Med & 86 & 64 & 60 & 58 & 75 & 68 & 63 & 61 & 85 & 73 & 68 & 65 & 55 & 25 & 0.059\\
\hline
\multirow{2}{*}{0.4} & MAE & 43 & 57 & 54 & 52 & 67 & 61 & 57 & 54 & 43 & 65 & 61 & 58 & 49 & 21 & 0.067 \\
& Med & 76 & 58 & 54 & 52 & 68 & 61 & 57 & 55 & 75 & 65 & 61 & 58 & 50 & 21 & 0.067\\
\hline
\multirow{2}{*}{0.6} & MAE & 23 & 50 & 47 & 46 & 59 & 53 & 50 & 48 & 26 & 57 & 53 & 51 & 43 & 16 & 0.076 \\
& Med & 65 & 51 & 48 & 46 & 60 & 54 & 50 & 48 & 65 & 57 & 54 & 51 & 44 & 16 & 0.076 \\
\hline
\multirow{2}{*}{0.8} & MAE & 22 & 43 & 41 & 39 & 50 & 46 & 43 & 41 & 24 & 49 & 45 & 44 & 37 & 12 & 0.088 \\
& Med & 56 & 43 & 40 & 39 & 50 & 46 & 43 & 41 & 54 & 49 & 46 & 44 & 37 & 13 & 0.088\\
\hline
\multirow{2}{*}{1} & MAE & 27 & 40 & 38 & 37 & 48 & 42 & 39 & 38 & 31 & 45 & 42 & 40 & 34 & 11 & 0.095\\
& Med & 51 & 40 & 38 & 37 & 47 & 43 & 40 & 38 & 50 & 46 & 42 & 41 & 35 & 10 & 0.095 \\
\hline
\multirow{2}{*}{1.2} & MAE & 18 & 36 & 34 & 33 & 45 & 38 & 36 & 34 & 23 & 41 & 38 & 36 & 31 & 9 & 0.104 \\
& Med & 47 & 36 & 34 & 33 & 44 & 38 & 36 & 34 & 45 & 41 & 38 & 37 & 31 & 9 & 0.104 \\
\hline 
\multirow{2}{*}{1.4} & MAE & -6 & 32 & 30 & 29 & 41 & 33 & 31 & 30 & 16 & 36 & 33 & 32 & 28 & 8 & 0.115 \\
& Med & 40 & 32 & 30 & 29 & 40 & 33 & 31 & 30 & 41 & 36 & 33 & 32 & 28 & 7 & 0.115 \\
\hline
\multirow{2}{*}{1.6} & MAE & -3 & 29 & 27 & 27 & 39 & 30 & 28 & 27 & -7 & 32 & 30 & 29 & 25 & 7 & 0.124 \\
& Med & 37 & 29 & 27 & 26 & 38 & 30 & 28 & 27 & 38 & 32 & 30 & 29 & 26 & 7 & 0.124\\
\hline
\end{tabular} 
\caption{{Same as \Cref{table: numerical}, but without including the covariates $X$.}
}
\label{table: numericalnoX}
\end{table}

\subsection{Implementation details of the minimax approach in \Cref{sec: simulation}}

{
In this section, we provide implementation details of the minimax approach in~\Cref{sec: simulation}. To construct the outcome bridge function, we set the outer minimization function class as a neural network class with four layers. For $\dim(X) = 10, 15, 20$, we choose the number of neurons in first and second hidden layers to be $50$ and $10$, respectively; for $\dim(X) = 5$, we change the number of neurons in the first hidden layer to $30$. We set the momentum, learning rate, number of epochs of the neural network optimizer to be 0.95, 0.0002 and 40, respectively; and set the size of each batch to be 1 / 10 of the total sample size. We use the ReLU activation function for the first three layers and set the activation function for the last layer as a linear function. For the inner maximization function class, we set it a RKHS class with a product radial basis function
kernel. To construct the selection bridge function, we use a similar neural architecture as in the outcome bridge function construction, except that we set the activation function in the last layer as a softplus activation function; for the inner maximization function class, we use a RKHS with a linear kernel.}

{The simulation results in \Cref{table: simulation} correspond to $U, S_1, S_3$ equal to $5$. We further increase their dimension to $10$ and report the corresponding results in \Cref{table: simulationls}. 
To accommodate the increased dimensionality, we raise the number of neurons in the first layer from $50$ to $70$ for $\dim(X) = 10, 15, 20$. We observe that the confidence interval coverage becomes slightly worse in this higher dimensional setting.}

\begin{table}[t!]
    \centering
    \begin{tabular}{cc|cc|cc|cc|cc}
    \hline
    & & \multicolumn{2}{c}{$\textrm{dim}(X) = 5$}
     & \multicolumn{2}{c}{$\textrm{dim}(X) = 10$} & \multicolumn{2}{c}{$\textrm{dim}(X) = 15$} & \multicolumn{2}{c}{$\textrm{dim}(X) = 20$}\\ \hline
       $q$ & & MinMax & Param. &  MinMax & Param.  & MinMax & Param.  & MinMax & Param. \\ \hline 
       \multirow{4}{*}{1} & CP & 96.0\% & 94.5\% & 91.5\% & 92.5\% & 94.0\% & 94.0\% & 93.0\% & 94.0\% \\
       & CI Len. & 0.793 & 0.820 & 0.793 & 0.817  & 0.787 & 0.817  & 0.781 & 0.810 \\
       & RMSE & 0.212 & 0.216 &  0.211 & 0.212 & 0.205 & 0.213  & 0.205 & 0.209 \\
       & Bias & 0.016 & 0.001 & 0.036 & 0.001  & 0.012 & 0.011  & 0.025 & 0.005 \\ \hline 
       \multirow{4}{*}{1.5} & CP & 92.0\% & 94.0\% & 90.5\% & 93.0\% & 92.0\% & 93.0\% & 93.5\% & 95.0\% \\
       & CI Len. &  0.883 & 1.196 & 0.848 & 1.168 & 0.849 & 1.163 & 0.839 & 1.163 \\
       & RMSE & 0.261 & 0.319 & 0.235 & 0.310 & 0.261 & 0.312  & 0.239 & 0.294 \\
       & Bias & 0.104 & 0.108 & 0.070 & 0.069 & 0.121 & 0.098  & 0.091 & 0.078 \\ \hline 
       \multirow{4}{*}{2} & CP & 90.0\% & 95.5\% & 90.5\% & 94.0\% & 91.5\% & 94.0\% & 91.5\% & 92.5\% \\
       & CI Len. & 0.949 & 3.682 & 0.882 & 3.414 & 0.895 & 3.392 & 0.882 & 3.373 \\
       & RMSE & 0.314 & 1.068 & 0.246 & 0.974 & 0.265 & 0.981  & 0.253 & 1.012 \\
       & Bias & 0.165 & 0.219 & 0.062 & 0.068  & 0.111 & 0.214 & 0.093 &  0.008 \\
       \hline
    \end{tabular}
    \caption{{Same setting as \Cref{table: simulation}, but with the dimension of $S_1, S_3$ and $U$ equal to $10$}.}
    \label{table: simulationls}
\end{table}

\section{Proofs}\label{sec: proof}
\subsection{Supporting Lemmas}\label{sec: support}
\begin{lemma}\label{cor: implication1}
Under \Cref{assump: unconfound-obs,assump: unconfound-exp,assump: ext-valid}, we have 
\begin{align}\label{eq: ext-valid-obs}
\prns{S_3, S_2} \perp G \mid   A, U, X.
\end{align}
\end{lemma}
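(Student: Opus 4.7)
The plan is to reduce the claim to the potential-outcomes level by using the consistency relation $S = S(A)$ componentwise, exploit the unconfoundedness assumptions on each sample to strip the conditioning on $A$, and then apply external validity to equate the two cases $G=E$ and $G=O$.

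First I would fix an arbitrary $a\in\braces{0,1}$ and values $u, x$, and compute $p(S_2, S_3 \mid A=a, U=u, X=x, G=g)$ for each $g\in\braces{E,O}$. Using $(S_2, S_3) = (S_2(A), S_3(A))$ on the event $\braces{A=a}$, this conditional density equals the conditional density of $(S_2(a), S_3(a))$ under the same conditioning event. In the observational case, \Cref{assump: unconfound-obs} implies $(S_2(a), S_3(a)) \perp A \mid U, X, G=O$ (as a coordinate of $S(a)$), so the conditioning on $A=a$ can be dropped, giving $p(S_2(a), S_3(a) \mid U=u, X=x, G=O)$. In the experimental case, \Cref{assump: unconfound-exp} gives the stronger statement $(S(a), U, X) \perp A \mid G=E$, which in particular yields $(S_2(a), S_3(a)) \perp A \mid U, X, G=E$, so again conditioning on $A=a$ is irrelevant and we are left with $p(S_2(a), S_3(a) \mid U=u, X=x, G=E)$.

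Next I would invoke \Cref{eq: external-validity-0} in \Cref{assump: ext-valid}, namely $(S(a), U, X) \perp G$, which implies $(S_2(a), S_3(a)) \mid U, X$ has the same distribution regardless of $G$. Therefore the two densities $p(S_2(a), S_3(a) \mid U=u, X=x, G=E)$ and $p(S_2(a), S_3(a) \mid U=u, X=x, G=O)$ are equal. Chaining the equalities yields $p(S_2, S_3 \mid A=a, U=u, X=x, G=E) = p(S_2, S_3 \mid A=a, U=u, X=x, G=O)$, which is exactly the conditional independence $(S_3, S_2) \perp G \mid A, U, X$.

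There is no real obstacle here; the only thing to be careful about is the initial bookkeeping step that rewrites the conditional law of the realized outcomes on $\braces{A=a\}$ as the conditional law of the potential outcomes $(S_2(a), S_3(a))$ before the three assumptions can be applied in sequence. The overlap parts of \Cref{assump: unconfound-obs,assump: unconfound-exp,assump: ext-valid} are what guarantee that all conditional densities appearing in the argument are well-defined.
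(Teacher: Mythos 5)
Your proposal is correct and follows essentially the same route as the paper's proof: rewrite the conditional density of $(S_3,S_2)$ given $(A=a,U,X,G=g)$ as that of the potential outcomes $(S_3(a),S_2(a))$, drop the conditioning on $A=a$ using \Cref{assump: unconfound-obs} (for $G=O$) and \Cref{assump: unconfound-exp} (for $G=E$), and then use \Cref{eq: external-validity-0} to conclude the result does not depend on $g$. The only cosmetic difference is that the paper treats both values of $g$ in a single chain of equalities rather than handling the two cases separately and then equating them.
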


\begin{proof}
For any $a \in \mathcal{A}$, $s_3\in\Scal_3, s_2 \in \Scal_2$ and $g \in \braces{E, O}$, we have 
\begin{align*}
p_{S_3, S_2}\prns{s_3, s_2 \mid U, X, A = a, G = g}         &=  p_{S_3\prns{a}, S_2\prns{a}}\prns{s_3, s_2 \mid U, X, A = a, G = g} \\
      &=  p_{S_3\prns{a}, S_2\prns{a}}\prns{s_3, s_2 \mid U, X, G = g} \\
      &= p_{S_3\prns{a}, S_2\prns{a}}\prns{s_3, s_2 \mid U, X} \\
      &= p_{S_3, S_2}\prns{s_3, s_2 \mid U, X, A = a},
\end{align*}
where the second equation follows from \Cref{assump: unconfound-exp,assump: unconfound-obs} and the third equation follows from \Cref{assump: ext-valid}.
\end{proof}

\begin{lemma}\label{cor: implication2}
Under \Cref{assump: unconfound-obs,assump: CI}, we have 
\begin{align}
\prns{Y, S_3} \perp S_1 \mid S_2, A, U, X,  G = O. \label{eq: CI-1}
\end{align}  
\end{lemma}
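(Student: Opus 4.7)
The plan is to reduce the conditional independence statement on observed variables to the potential-outcome version given in \Cref{assump: CI}, using the consistency relations $Y = Y(A)$, $S_j = S_j(A)$ and the unconfoundedness in \Cref{assump: unconfound-obs}. The basic bridge is that, given $A = a$, observed variables coincide with potential outcomes at $a$, and unconfoundedness lets us add or drop $A = a$ from the conditioning set whenever we condition on $(U, X)$ and the remaining objects are potential outcomes.

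Concretely, I would fix $a \in \{0,1\}$ and argue the following chain of equalities for the conditional density (all under $G = O$, which I suppress):
\begin{align*}
p(Y, S_3 \mid S_1, S_2, A = a, U, X)
&= p(Y(a), S_3(a) \mid S_1(a), S_2(a), A = a, U, X) \\
&= p(Y(a), S_3(a) \mid S_1(a), S_2(a), U, X) \\
&= p(Y(a), S_3(a) \mid S_2(a), U, X) \\
&= p(Y(a), S_3(a) \mid S_2(a), A = a, U, X) \\
&= p(Y, S_3 \mid S_2, A = a, U, X).
\end{align*}
The first and last equalities are by consistency ($Y = Y(A)$, $S_j = S_j(A)$). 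The second and fourth drop/add $A = a$ from the conditioning set using \Cref{assump: unconfound-obs}, which gives $(Y(a), S(a)) \perp A \mid U, X$. The middle equality is exactly \Cref{assump: CI}. Because the right-hand side does not depend on $s_1$, this establishes \Cref{eq: CI-1} for each $a$, which suffices since $A$ is in the conditioning set.

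I don't foresee any real obstacle here; the only point requiring a bit of care is making sure that when I strip $A = a$ from the conditioning set in the second step, I am entitled to do so: the event $S_1(a) = s_1, S_2(a) = s_2$ is an event on potential outcomes, and \Cref{assump: unconfound-obs} gives $A \perp (Y(a), S(a)) \mid U, X$, which implies the stronger statement that $A$ is conditionally independent of any measurable function of $(Y(a), S(a))$ given $(U,X)$. So conditioning on $\{S_1(a) = s_1, S_2(a) = s_2, U, X\}$ is exchangeable with also conditioning on $\{A = a\}$. The same remark justifies the fourth equality. Nothing beyond these conditional-independence manipulations is needed, and the lemma follows.
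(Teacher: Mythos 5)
Your proposal is correct and follows essentially the same route as the paper: consistency to pass to potential outcomes at $a$, the weak-union consequence of \Cref{assump: unconfound-obs} to drop and re-insert the conditioning on $A=a$, and \Cref{assump: CI} in the middle. The only cosmetic difference is that you phrase conditional independence via invariance of the conditional density in $s_1$, whereas the paper verifies the factorization $\Eb{f(Y,S_3)g(S_1)\mid \cdot}=\Eb{f(Y,S_3)\mid\cdot}\,\Eb{g(S_1)\mid\cdot}$ using bounded test functions, which sidesteps any existence-of-densities caveat.
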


\begin{proof}
For any $a \in \mathcal{A}, s \in \Scal_2$ and any bounded continous functions $f: \mathcal{Y} \times \Scal_3 \to \Rl$ and $g: \mathcal{S}_1 \to \Rl$, we have 
\begin{align*}
&\Eb{f\prns{Y, S_3}g\prns{S_1} \mid S_2 = s, U, X, A = a, G = O} \\
    =& \Eb{f\prns{Y\prns{a}, S_3\prns{a}}g\prns{S_1\prns{a}} \mid S_2\prns{a} = s, U, X, A = a, G = O}  \\
    =& \Eb{f\prns{Y\prns{a}, S_3\prns{a}}g\prns{S_1\prns{a}} \mid S_2\prns{a} = s, U, X, G = O} \\
    =& \Eb{f\prns{Y\prns{a}, S_3\prns{a}} \mid S_2\prns{a} = s, U, X, G = O}\Eb{g\prns{S_1\prns{a}} \mid S_2\prns{a} = s, U, X, G = O} \\
    =& \Eb{f\prns{Y, S_3} \mid S_2 = s, U, X, A = a, G = O} \Eb{g\prns{S_1} \mid S_2 = s, U, X, A = a, G = O},
\end{align*}
where the second equation follows from \Cref{assump: unconfound-obs}, the third equation follows from \Cref{eq: CI-potential-1} in \Cref{assump: CI} , and the fourth equation again follows from \Cref{assump: unconfound-obs}.  
\end{proof}

\begin{lemma}\label{lemma: data-overlap}
Under \Cref{assump: ext-valid}, for any $a \in \Acal$, the following holds almost surely:
\begin{align}
\frac{p\prns{\Sb, U, X \mid A=a, G = E}}{p\prns{\Sb, U, X \mid A=a, G = O}} = \frac{p\prns{U, X \mid A=a, G = E}}{p\prns{U, X \mid A=a, G = O}} < \infty \label{eq: data-overlap2}
\end{align}
\end{lemma}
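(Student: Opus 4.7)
The plan is to factor the joint conditional density using the chain rule and show that the $S_2$-factor cancels between the two samples, reducing the left-hand ratio to the $(U,X)$-ratio whose finiteness is asserted directly by \Cref{eq: data-overlap}. Concretely, I would write
\begin{align*}
\frac{p\prns{\Sb, U, X \mid A=a, G=E}}{p\prns{\Sb, U, X \mid A=a, G=O}}
 = \frac{p\prns{\Sb \mid U, X, A=a, G=E}}{p\prns{\Sb \mid U, X, A=a, G=O}} \cdot \frac{p\prns{U, X \mid A=a, G=E}}{p\prns{U, X \mid A=a, G=O}}.
\end{align*}

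The core step is showing that the first factor on the right equals $1$ almost surely. I would argue this in three moves, all for a fixed $a \in \{0,1\}$. First, by consistency $S_2 = S_2(A)$, so on the event $\{A=a\}$ the conditional law of $S_2$ given $(U,X,A=a,G=g)$ coincides with that of $S_2(a)$ given $(U,X,A=a,G=g)$, for $g \in \{E,O\}$. Second, I would remove the conditioning on $A=a$: \Cref{assump: unconfound-obs} gives $S_2(a) \perp A \mid U,X,G=O$, and \Cref{assump: unconfound-exp} gives the stronger $(S(a),U,X) \perp A \mid G=E$, which in particular implies $S_2(a) \perp A \mid U,X,G=E$. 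Hence
\begin{align*}
p_{S_2(a)}\prns{\cdot \mid U,X,A=a,G=g} = p_{S_2(a)}\prns{\cdot \mid U,X,G=g}, \qquad g\in\{E,O\}.
\end{align*}
Third, \Cref{eq: external-validity-0} in \Cref{assump: ext-valid} states $(S(a),U,X) \perp G$, which in particular yields $S_2(a) \perp G \mid U,X$, so the two right-hand sides above agree. Stringing the three equalities together gives that the $S_2$-ratio equals $1$ a.s.

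Finiteness of the remaining factor is then immediate from \Cref{eq: data-overlap} of \Cref{assump: ext-valid}, which asserts $p(U,X \mid A=a,G=E)/p(U,X \mid A=a,G=O) < \infty$ almost surely. Combining the two conclusions yields both the claimed equality and the claimed finiteness.

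I do not expect any real obstacle here — the argument is purely a chain-rule bookkeeping exercise. The only point requiring mild care is keeping potential and observed outcomes straight when removing the conditioning on $A=a$, which is why I invoke \Cref{assump: unconfound-obs,assump: unconfound-exp} even though the lemma statement only advertises \Cref{assump: ext-valid}; these two unconfoundedness assumptions are maintained throughout the paper and are used implicitly to turn observed $\Sb$ into $\Sb(a)$ before applying external validity.
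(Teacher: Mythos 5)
Your proof is correct and follows essentially the same route as the paper's: rewrite $\Sb$ as $\Sb(a)$ on $\{A=a\}$, factor by the chain rule, cancel the $\Sb(a)$-ratio, and invoke \Cref{eq: data-overlap} for finiteness. You are in fact slightly more careful than the paper, which cancels the $\Sb(a)$-factor citing only \Cref{eq: external-validity-0}, whereas you correctly note that one must first strip the conditioning on $A=a$ via \Cref{assump: unconfound-obs,assump: unconfound-exp} (maintained throughout the paper) before external validity applies.
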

\begin{proof}
This is proved by noting that 
\begin{align*}
\frac{p\prns{\Sb, U, X \mid A=a, G = E}}{p\prns{\Sb, U, X \mid A=a, G = O}} 
    &= \frac{p\prns{\Sb(a), U, X \mid A=a, G = E}}{p\prns{\Sb(a), U, X \mid A=a, G = O}} \\
    &= \frac{p\prns{\Sb(a) \mid U, X,  A=a, G = E}}{p\prns{\Sb(a) \mid U, X , A=a, G = O}} \frac{p\prns{ U, X \mid A=a, G = E}}{p\prns{U, X  \mid A=a, G = O}}  \\
    &= \frac{p\prns{ U, X \mid A=a, G = E}}{p\prns{U, X  \mid A=a, G = O}} < \infty.
\end{align*}
where the last equation follows from \Cref{eq: external-validity-0} in \Cref{assump: ext-valid}.
\end{proof}

\subsection{Proofs for \Cref{sec: identify-OBF}}

\begin{proof}[Proof for \cref{lemma: bridge-obs}]
In \cref{cor: implication2}, we already proved that \Cref{assump: unconfound-obs,assump: CI} imply 
\begin{align*}
\prns{Y, S_3} \perp S_1 \mid S_2, A, U, X,  G = O. 
\end{align*}  
Therefore, for any function $h_0\prns{S_3, S_2, A, X}$, we have 
\begin{align*}
\Eb{Y \mid S_2, S_1, A, G = O} 
    &= \Eb{\Eb{Y \mid S_2, S_1, A, U, X, G = O} \mid S_2, S_1, A, X, G = O} \\
    &= \Eb{\Eb{Y \mid S_2, A, U, X, G = O} \mid S_2, S_1, A, X, G = O},
\end{align*}
and 
\begin{align*}
&\Eb{h_0\prns{S_3, S_2, A, X} \mid S_2, S_1, A, X, G = O} \\  
    =& \Eb{\Eb{h_0\prns{S_3, S_2, A, X} \mid S_2, S_1, A, U, X, G = O} \mid S_2, S_1, A, X, G = O} \\ 
    =& \Eb{\Eb{h_0\prns{S_3, S_2, A, X} \mid S_2, A, U, X, G = O} \mid S_2, S_1, A, X, G = O}.
\end{align*}
Therefore, for any $h_0\prns{S_3, S_2, A, X}$ that satisfies \cref{eq: bridge-U}, we have 
\begin{align*}
0   &= \Eb{Y - h_0\prns{S_3, S_2, A, X} \mid S_2, S_1, A, X, G = O} \\
    &= \Eb{\Eb{Y - h_0\prns{S_3, S_2, A, X} \mid S_2, A, U, X, G = O} \mid S_2, S_1, A, X, G = O}.
\end{align*}
It follows from the completeness condition in \Cref{assump: completeness}  condition \ref{assump: completeness-2} that 
\begin{align*}
\Eb{Y - h_0\prns{S_3, S_2, A, X} \mid S_2, A, U, X, G = O}  = 0,
\end{align*}
Namely, any function $h_0\prns{S_3, S_2, A, X}$ that satisfies \cref{eq: bridge-obs} is a valid outcome bridge function satisfying \cref{eq: bridge-U}. 
\end{proof}

\begin{proof}[Proof for \cref{thm: identification1}]
According to \Cref{lemma: bridge-obs}, any function $h_0$ that solves \Cref{eq: bridge-obs} also satisfies  \Cref{eq: bridge-U}. Thus we only need to show that  for any function $h_0$ that solves \Cref{eq: bridge-U}, we have $\mu(a) = \Eb{h_0\prns{\Sc, \Sb, A, X} \mid A = a, G = E} $. This is proved as follows: 
\begin{align*}
&\Eb{h_0\prns{\Sc, \Sb, A, X} \mid A = a, G = E} \\
    =& \Eb{\Eb{h_0\prns{\Sc, \Sb, a, X} \mid \Sb, A = a, U, X, G = E} \mid A = a, G = E} \\
    =& \Eb{\Eb{h_0\prns{\Sc, \Sb, a, X} \mid \Sb, A = a, U, X, G = O} \mid A = a, G = E} \\
    =& \Eb{\Eb{Y \mid \Sb, A = a, U, X, G = O} \mid A = a, G = E} \\
    =& \Eb{\Eb{Y(a) \mid \Sb(a), U, X, G = O} \mid A = a, G = E} \\
    =& \Eb{\Eb{Y(a) \mid \Sb(a), U, X, G = O} \mid G = E} \\
    =& \Eb{\Eb{Y(a) \mid \Sb(a), U, X, G = O} \mid G = O} = \Eb{Y\prns{a} \mid G = O} = \mu(a),
\end{align*}
where the second equation uses the fact that $G \perp S_3 \mid S_2, A = a, U, X$ (see \Cref{eq: ext-valid-obs} in \Cref{cor: implication1}) and \Cref{eq: data-overlap2} in \Cref{lemma: data-overlap}, the third equation uses the definition of the outcome bridge function, the fourth equation uses the fact that $Y\prns{a} \perp A \mid S_2\prns{a}, U, X, G = O$ according to \Cref{assump: unconfound-obs}, the fifth uses the fact that $\prns{S_2\prns{a}, U, X} \perp A \mid G = E$ according to \Cref{assump: unconfound-exp}, and the sixth equation holds because $G \perp \prns{S_2\prns{a}, U, X}$ in \Cref{assump: ext-valid} and \Cref{eq: data-overlap2} in \Cref{lemma: data-overlap}. 
\end{proof}

\subsection{Proofs for \Cref{sec: identify-SBF}}
\begin{proof}[Proof for \Cref{lemma: bridge2-obs}]
First note that 
\begin{align*}
&p\prns{S_3, S_2, X \mid A, G = E} \\
    =& \int p\prns{S_3 \mid S_2, A, U = u, X, G = E}p\prns{S_2, u, X \mid A, G = E}\diff u \\
    =& \int p\prns{S_3 \mid S_2, A, U = u, X, G = O}p\prns{S_2, u, X \mid A, G = E}\diff u, 
\end{align*}
where the second equation follows  from $S_3\perp G \mid S_2, A, U, X$ that we prove in \Cref{cor: implication1}. 

Next, note that 
\begin{align*}
&p\prns{S_3, S_2, X \mid A, G = O} \Eb{q_0\prns{S_2, S_1, A, X} \mid S_3, S_2, A, X, G = O} \\
=& p\prns{S_3, S_2, X \mid A, G = O} \int p\prns{u \mid  S_3, S_2, A, X, G = O}\Eb{q_0\prns{S_2, S_1, A, X} \mid S_3, S_2, A, U = u, X, G = O} \diff u \\
=& p\prns{S_3, S_2, X \mid A, G = O} \int p\prns{u \mid  S_3, S_2, A, X, G = O}\Eb{q_0\prns{S_2, S_1, A, X} \mid S_2, A, U = u, X, G = O} \diff u  \\
=& \int p\prns{S_3, S_2, u, X \mid  A, G = O}\Eb{q_0\prns{S_2, S_1, A, X} \mid S_2, A, U = u, X, G = O} \diff u \\
=& \int p\prns{S_3 \mid S_2, A, U = u, X, G = O}p\prns{S_2, u, X \mid A, G = O}\Eb{q_0\prns{S_2, S_1, A, X} \mid S_2, A, U = u, X, G = O} \diff u,
\end{align*}
where the second equation follows from $S_1 \perp S_3 \mid S_2, A, U, X, G = O$ that we prove in \Cref{cor: implication2}. 

Therefore, any function $q_0$ that satisfies \Cref{eq: bridge2-obs-1} must satisfy 
\begin{align*}
\int p\prns{S_3 \mid S_2, A, U = u, X, G = O} \Delta\prns{S_2, A, u, X} \diff u = 0,
\end{align*}
where 
\begin{align*}
\Delta\prns{S_2, A, U, X} = p\prns{S_2, U, X \mid A, G = E} - p\prns{S_2, U, X \mid A, G = O}\Eb{q_0\prns{S_2, S_1, A, X} \mid S_2, A, U, X, G = O}.
\end{align*}
By Bayes rule, this is equivalent to 
\begin{align*}
P\prns{S_3 \mid S_2, A, X, G = O}\Eb{\frac{\Delta\prns{S_2, A, U, X}}{p\prns{U \mid S_2, A, X, G = O}} \mid S_3, S_2, A, X, G = O} = 0.
\end{align*}
According to \cref{assump: completeness}  condition \ref{assump: completeness-1}, we have $\Delta\prns{S_2, A, U, X} = 0$ almost surely. 
In other words, if  $q_0$ satisfies \Cref{eq: bridge2-obs-1}, then it must also satisfy \Cref{eq: bridge2-U}. 
\end{proof}

\begin{lemma}\label{lemma: bridge2-equivalence}
Under assumptions in \Cref{lemma: bridge2-obs}, \Cref{eq: bridge2-obs-1} is equivalent to \Cref{eq: bridge2-obs-2}.
\end{lemma}
\begin{proof}
Note that \Cref{eq: bridge2-obs-1} is equivalent to 
\begin{align*}
\Eb{q_0\prns{S_2, S_1, A, X} \mid S_3, S_2, A, X, G = O}  &= \frac{p\prns{\Sc, \Sb, X\mid A, G = E}}{p\prns{\Sc, \Sb, X \mid A, G = O}} \\
    &= \frac{\Prb{G = E \mid S_3, S_2, A, X}\Prb{G = O \mid A}}{\Prb{G = O \mid S_3, S_2, A, X}\Prb{G = E \mid A}} \\
  &= \frac{\prns{1 - \Prb{G = O \mid S_3, S_2, A, X}}\Prb{G = O \mid A}}{\Prb{G = O \mid S_3, S_2, A, X}\Prb{G = E \mid A}}.
\end{align*}
It is equivalent to 
\begin{align*}
&\Prb{G = O \mid S_3, S_2, A, X}\Eb{\frac{\Prb{G = E \mid A}}{\Prb{G = O \mid A}} q_0\prns{S_2, S_1, A, X} \mid S_3, S_2, A, X, G = O}\\
 =& 1 - \Prb{G = O \mid S_3, S_2, A, X},
\end{align*}
or 
\begin{align*}
\Prb{G = O \mid S_3, S_2, A, X}\Eb{\frac{\Prb{G = E \mid A}}{\Prb{G = O \mid A}} q_0\prns{S_2, S_1, A, X} + 1 \mid S_3, S_2, A, G = O}  = 1.
\end{align*}
The conclusion then  follows straightforwardly. 
\end{proof}

\begin{proof}[Proof for \Cref{thm: identification2}]
According to \Cref{lemma: bridge2-obs}, any function $q_0$ that solves \Cref{eq: bridge2-obs-1} or \Cref{eq: bridge2-obs-2} must also satisfy \Cref{eq: bridge2-U}. 
Thus we only need to show that for any $q_0$ that solves \Cref{eq: bridge2-U}, we have $$\mu(a) = \Eb{q\prns{S_2, S_1, A, X}Y \mid A = a, G = O}.$$ 
This is proved as follows: 
\begin{align*}
&\Eb{q_0\prns{S_2, S_1, A, X}Y \mid A = a, G = O} \\ 
    =& \Eb{\Eb{q_0\prns{S_2, S_1, A, X}Y \mid S_2, A, U, X, G= O} \mid A = a, G = O} \\
    =& \Eb{\Eb{q_0\prns{S_2, S_1, A, X} \mid S_2,  A, U, X, G= O}\Eb{Y \mid S_2, A, U, X, G = O} \mid A = a, G = O} \\
    =& \mathbb{E}\bigg[\Eb{q_0\prns{S_2(a), S_1(a), A, X}\mid S_2(a), A = a, U, X, G= O} \\
    &\qquad\qquad\qquad \times \Eb{Y(a) \mid S_2(a), A = a, U, X, G= O} \mid A = a, G = O\bigg] \\
    =& \Eb{\Eb{q_0\prns{S_2, S_1, A, X}\mid S_2, A, U, X , G= O}\Eb{Y(a) \mid S_2(a), U, X, G= O} \mid A = a, G = O} \\
    =& \Eb{\frac{p\prns{S_2, U, X \mid A, G = E}}{p\prns{S_2, U, X \mid A, G = O}}\Eb{Y(a) \mid S_2(a), U, X, G= O} \mid A = a, G = O } \\
    =& \Eb{\frac{p\prns{S_2(a), U, X \mid A = a, G = E}}{p\prns{S_2(a), U, X \mid A = a, G = O}}\Eb{Y(a) \mid S_2(a), U, X, G= O} \mid A = a, G = O }  \\
    =& \Eb{\Eb{Y(a) \mid S_2(a), U, X, G= O} \mid A = a, G = E} \\
    =& \Eb{\Eb{Y(a) \mid S_2(a), U, X, G= O} \mid G = E}  \\
    =& \Eb{\Eb{Y(a) \mid S_2(a), U, X, G= O} \mid G = O} \\
    =& \Eb{Y(a) \mid G = O} = \mu(a). 
\end{align*}
Here the second equation uses the fact that $Y \perp S_1 \mid S_2, A, U, X, G = O$ that we prove in \Cref{cor: implication2}, the fourth equation uses the fact that $Y\prns{a} \perp A \mid S_2\prns{a}, U, X, G = O$ according to \Cref{assump: unconfound-obs}, the fifth equation uses the definition of the selection bridge function $q_0\prns{S_2, S_1, A, X}$, the seventh equation uses change of measure, the eighth equation uses the fact that $A \perp \prns{S_2\prns{a}, U, X} \mid G = E$ according to \Cref{assump: unconfound-exp}, and the ninth equation uses the fact that $G \perp \prns{S_2\prns{a}, U, X}$ according to \Cref{assump: ext-valid}.
\end{proof}

\subsection{Proofs for \Cref{sec: identify-DR}}
\begin{proof}[Proof for \Cref{thm: identification-DR}]
If conditions in \Cref{thm: identification1} hold and $h = h_0$ satisfies \Cref{eq: bridge-obs}, then 
\begin{align*}
&\Eb{h\prns{\Sc, \Sb, A, X} \mid A = a, G = E} + \Eb{q\prns{\Sb, \Sa, A, X}\prns{Y - h\prns{\Sc, \Sb, A, X}} \mid A = a, G = O} \\
=& \Eb{h\prns{\Sc, \Sb, A, X} \mid A = a, G = E} \\
 &\qquad +\Eb{q\prns{\Sb, \Sa, A, X}\Eb{{Y - h\prns{\Sc, \Sb, A, X}} \mid S_2, S_1, A, X, G = O} \mid A = a, G = O} \\
=& \Eb{h\prns{\Sc, \Sb, A, X} \mid A = a, G = E} \\
=& \mu(a),
\end{align*}
where the second equation follows from \Cref{eq: bridge-obs} and the third equation follows from \Cref{thm: identification1}.

If conditions in \Cref{thm: identification2} hold and $q = q_0$ satisfies \Cref{eq: bridge2-obs-1} or \Cref{eq: bridge2-obs-2}, then 
\begin{align*}
&\Eb{h\prns{\Sc, \Sb, A, X} \mid A = a, G = E} + \Eb{q\prns{\Sb, \Sa, A, X}\prns{Y - h\prns{\Sc, \Sb, A, X}} \mid A = a, G = O} \\
=& \Eb{q\prns{\Sb, \Sa, A, X}Y\mid A=a, G=O} \\
&- \Eb{h\prns{S_3, S_2, A, X}\Eb{{q\prns{\Sb, \Sa, A, X} - \frac{p\prns{\Sc, \Sb, X\mid A, G = E}}{p\prns{\Sc, \Sb, X \mid A, G = O}}}\mid S_3, S_2, A, X, G = O} \mid A  = a, G= O} \\
=& \Eb{q\prns{\Sb, \Sa, A, X}Y\mid A=a, G=O} \\
=& \mu(a),
\end{align*}
where the second equation follows from \Cref{eq: bridge2-obs-1} and the third equation follows from \Cref{thm: identification2}.
\end{proof}

\subsection{Proofs for \Cref{sec: estimation}}
\begin{proof}[Proof for \Cref{thm: consistency}]
We first prove statement \eqref{thm: consistency-q}. We define 
\begin{align*}
\tilde \mu_{\sel}\prns{a} = \frac{1}{K}\sum_{k=1}^K \bracks{\frac{1}{\nOka}\sum_{i \in \mathcal{D}_{O, k}}\indic{A_i = a}{\tilde q\prns{S_{2, i}, S_{1, i}, A_i, X_i}Y_i}}.
\end{align*}
Since we assume $\tilde q = q_0$, as $n\to\infty$, it follows from Law of Large Number and \Cref{thm: identification2} that 
\begin{align*}
\tilde \mu_{\sel}\prns{a}  \to \Eb{ q_0\prns{\Sb, \Sa, A, X}Y \mid A= a, G = O} = \mu(a).
\end{align*}
Now we only need to show that $\hat\mu_{\sel}\prns{a} - \tilde \mu_{\sel}\prns{a} = o_{\mathbb{P}}\prns{1}$, as this would imply that $\hat\mu_{\sel}\prns{a} = \mu(a) + o_{\mathbb{P}}\prns{1}$, so that $\hat\tau_{\sel}$ is a consistent estimator for $\tau$. To prove this, note that 
\begin{align*}
\hat\mu_{\sel}\prns{a} - \tilde \mu_{\sel}\prns{a}  = \frac{1}{K}\sum_{k=1}^K {\frac{n_{O, k}}{\nOka} \Delta_{\sel, k}}.
\end{align*}
where 
\begin{align*}
\Delta_{\sel, k} = \frac{1}{n_{O, k}}\sum_{i \in \mathcal{D}_{O, k}}\indic{A_i = a}{\prns{\hat q_k\prns{S_{2, i}, S_{1, i}, A_i, X_i} -  q_0\prns{S_{2, i}, S_{1, i}, A_i, X_i}}Y_i}.
\end{align*}
Then by Cauchy-Schwartz inequality, for any $k\in\braces{1, \dots, K}$, we have 
\begin{align*}
\op{Var}\prns{\Delta_{\sel, k} \mid \mathcal{D}_{O, -k}} 
    &= \frac{1}{n_{O, k}}\op{Var}\prns{\indic{A = a}{\prns{\hat q_k\prns{S_2, S_1, A, X} -  q_0\prns{\Sb, \Sa, A, X}}Y} \mid \mathcal{D}_{O, -k}} \\
    &\le \frac{1}{n_{O, k}}\Eb{\prns{\indic{A = a}{\prns{\hat q_k\prns{S_2, S_1, A, X} -  q_0\prns{\Sb, \Sa, A, X}}Y}}^2 \mid \mathcal{D}_{O, -k}} \\
    &\lesssim  \frac{1}{n_{O, k}} \|\qk - q_0\|^2_{\mathcal{L}_2\prns{\mathbb{P}}} \le \frac{\rho^2_{q, n}}{n_{O, k}}.
\end{align*}
By Markov inequality, we then have that 
$$\abs{\Delta_{\sel, k}} = \Eb{\abs{\Delta_{\sel, k}} \mid  \mathcal{D}_{O, -k} } + O_{\mathbb{P}}\prns{\frac{\rho_{q, n}}{\sqrt{n_{O, k}}}}. $$
Here 
\begin{align*}
\Eb{\abs{\Delta_{\sel, k}} \mid  \mathcal{D}_{O, -k} } \lesssim \|\qk - q_0\|_{\mathcal{L}_2\prns{\mathbb{P}}} = \rho_{q, n}. 
\end{align*}
Therefore 
\begin{align*}
\hat\mu_{\sel}\prns{a} - \tilde \mu_{\sel}\prns{a} 
    &= \frac{1}{K}\sum_{k=1}^K {\frac{n_{O, k}}{\nOka} \Delta_{\sel, k}} = \frac{1}{K}\sum_{k=1}^K \frac{1}{\Prb{A=a\mid G=O}}O_{\mathbb{P}}\prns{\rho_{q, n} + \frac{\rho_{q, n}}{\sqrt{n_{O, k}}}} \\
    &= o_{\mathbb{P}}\prns{1}.
\end{align*} 

Similarly, we can prove that $\hat\mu_{\out}\prns{a} = \mu\prns{a} + o_{\mathbb{P}}\prns{1}$ so that $\hat\tau_{\out}$ is a consistent estimator for $\tau$, \ie, statement \eqref{thm: consistency-h} is true. 

Finally, we can similarly prove that $\hat\mu_{\dr}\prns{a} - \tilde\mu_{\dr}\prns{a} = o_{\mathbb{P}}\prns{1}$, where 
\begin{align*}
 \tilde\mu_{\dr}\prns{a} &=\frac{1}{K}\sum_{k=1}^K \bracks{\frac{1}{\nEa}\sum_{i \in \mathcal{D}_E} \indic{A_i = a}{\tilde h\prns{S_{3,i}, S_{2, i}, A_i, X_i}}}\\
    &\phantom{\hat\mu_\dr(a)} + 
    \frac{1}{K}\sum_{k=1}^K \bracks{\frac{1}{\nOka}\sum_{i \in \mathcal{D}_{O, k}}\indic{A_i = a}{\tilde q\prns{S_{2, i}, S_{1, i}, A_i, X_i}\prns{Y_i - \tilde h\prns{S_{3, i}, S_{2, i}, A_i, X_i}}}}.
\end{align*}
By Law of Large Number, the limit of $ \tilde\mu_{\dr}\prns{a}$ is 
\begin{align*}
\Eb{\tilde h\prns{\Sc, \Sb, A, X} \mid A = a, G = E} + \Eb{\tilde q\prns{\Sb, \Sa, A, X}\prns{Y - \tilde h\prns{\Sc, \Sb, A, X}} \mid A = a, G = O}.
\end{align*}
According to \Cref{thm: identification-DR}, this is equal to $\mu(a)$ if either $\tilde q = q_0$ or $\tilde h = h_0$. Thus if either $\tilde q = q_0$ or $\tilde h = h_0$, $\hat\mu_{\dr}\prns{a} - \mu\prns{a} = o_{\mathbb{P}}\prns{1}$ so that $\hat\tau_{\dr}$ is a consistent estimator for $\tau$. This proves statement \eqref{thm: consistency-dr}.
\end{proof}

\begin{proof}[Proof for \Cref{thm: dist-dr}]
By simple algebra, we can show that 
\begin{align*}
\hat\mu_{\dr}\prns{a} - \tilde \mu_{\dr}\prns{a}  
    &= \frac{1}{K}\sum_{k=1}^K {\frac{n_{O, k}}{\nOka} \Delta^O_{\dr, k}} + \frac{n_E}{\nEa}\Delta^E_{\dr, k} \\
    &= \frac{1}{K}\sum_{k=1}^K \frac{1}{\Prb{A=a \mid G = O} + o_{\mathbb{P}}\prns{1}} \Delta^O_{\dr, k} + \frac{1}{\Prb{A=a\mid G=E} + o_{\mathbb{P}}\prns{1}}\Delta^E_{\dr, k}
\end{align*}
where 
\begin{align*}
\Delta_{\dr, k}^O  
    = \frac{1}{n_{O, k}}\sum_{i \in \mathcal{D}_{O, k}}\bigg[&\indic{A_i = a}{\qk\prns{S_{2, i}, S_{1, i}, A_i, X_i}\prns{Y_i - \hk\prns{S_{3, i}, S_{2, i}, A_i, X_i}}} \\
    - &\indic{A_i = a}{q_0\prns{S_{2, i}, S_{1, i}, A_i, X_i}\prns{Y_i - h_0\prns{S_{3, i}, S_{2, i}, A_i, X_i}}}\bigg], 
\end{align*}
and 
\begin{align*}
\Delta^E_{\dr, k} = \frac{1}{n_E}\sum_{i \in \mathcal{D}_E} \indic{A_i = a}\prns{\hk\prns{S_{3,i}, S_{2, i}, A_i, X_i} - h_0\prns{S_{3,i}, S_{2, i}, A_i, X_i}}. 
\end{align*}
By following the proof for \Cref{thm: consistency}, we can show that 
\begin{align*}
\Delta_{\dr, k}^O = \Eb{\Delta^O_{\dr, k} \mid  \mathcal{D}_{O, -k} } +  O_{\mathbb{P}}\prns{\frac{\max\{\rho_{q, n}, \rho_{h, n}\}}{\sqrt{n_{O, k}}}}
 = \Eb{\Delta^O_{\dr, k} \mid  \mathcal{D}_{O, -k} } + o_{\mathbb{P}}\prns{n^{-1/2}}
\end{align*}
and 
\begin{align*}
\Delta^E_{\dr, k} = \Eb{\Delta^E_{\dr, k} \mid  \mathcal{D}_{O, -k} } +  O_{\mathbb{P}}\prns{\frac{\rho_{h, n}}{\sqrt{n_{E}}}} = \Eb{\Delta^E_{\dr, k} \mid  \mathcal{D}_{O, -k} } + o_{\mathbb{P}}\prns{n^{-1/2}}.
\end{align*}
Moreover, we have   
\begin{align}
&\abs{\frac{1}{\Prb{A=a \mid G = O}}\Eb{\Delta^O_{\dr, k} \mid  \mathcal{D}_{O, -k} } + \frac{1}{\Prb{A=a \mid G = E}}\Eb{\Delta^E_{\dr, k} \mid  \mathcal{D}_{O, -k} }} \nonumber \\
=& \bigg|\Eb{\hk\prns{S_{3}, S_{2}, A, X} - h_0\prns{S_{3}, S_{2}, A, X} \mid A = a, G = E, \mathcal{D}_{O, -k}} \nonumber \\
+& \Eb{{\qk\prns{S_{2}, S_{1}, A, X}\prns{Y - \hk\prns{S_{3}, S_{2}, A, X}}} \mid A = a, G = O, \mathcal{D}_{O, -k}} \nonumber\\
-& \Eb{{q_0\prns{S_{2}, S_{1}, A, X}\prns{Y - h_0\prns{S_{3}, S_{2}, A, X}}} \mid A = a, G = O, \mathcal{D}_{O, -k}} \bigg| \nonumber \\
=& \abs{\mathcal{R}_{k, 1}  +\mathcal{R}_{k, 2} + \mathcal{R}_{k, 3}}. \label{eq: thm-dist-1}
\end{align}
Here 
\begin{align*}
&{\mathcal{R}_{k, 1}} = \Eb{q_0\prns{S_2, S_1, A, X}\prns{\hk\prns{S_{3}, S_{2}, A, X} - h_0\prns{S_{3}, S_{2}, A, X}} \mid A = a, G = O, \mathcal{D}_{O, -k}}  \\
&{\mathcal{R}_{k, 2}}= \Eb{\qk\prns{S_2, S_1, A, X}\prns{h_0\prns{S_{3}, S_{2}, A, X} - \hk\prns{S_{3}, S_{2}, A, X}} \mid A = a, G = O, \mathcal{D}_{O, -k}}  \\
&{\mathcal{R}_{k, 3}} = 0. 
\end{align*}
{
Thus
\begin{align*}
\text{\Cref{eq: thm-dist-1}} 
    &= \abs{{\mathcal{R}_{k, 1}}+{\mathcal{R}_{k, 2}}} \\
    &=\abs{\Eb{\prns{q_0 - \qk}({\hk - h_0}) \mid A = a, G = O, \mathcal{D}_{O, -k}}} \\
    &= \abs{\Eb{\frac{\indic{A=a}}{\Prb{A=a \mid G=O}}\prns{q_0 - \qk}({\hk - h_0}) \mid G = O, \mathcal{D}_{O, -k}}} 
\end{align*}
It follows that 
\begin{align*}
\text{\Cref{eq: thm-dist-1}} 
    &= \abs{\Eb{\frac{\indic{A=a}}{\Prb{A=a \mid G=O}}\Eb{{q_0 - \qk} \mid S_3, S_2, A, X, G = O}({\hk - h_0}) \mid \mathcal{D}_{O, -k}}} \\
    &\le \|P^\star\prns{\qk - q_0}\|_{\mathcal{L}_2\prns{\mathbb{P}}}\|\hk - h_0\|_{\mathcal{L}_2\prns{\mathbb{P}}},
\end{align*}
and 
\begin{align*}
\text{\Cref{eq: thm-dist-1}} 
    &= \abs{\Eb{\frac{\indic{A=a}}{\Prb{A=a \mid G=O}}\prns{q_0 - \qk}\Eb{{\hk - h_0} \mid S_2, S_1, A, X, G = O} \mid \mathcal{D}_{O, -k}}} \\
    &\le \|q_0 - \qk\|_{\mathcal{L}_2\prns{\mathbb{P}}}\|P({\hk - h_0})\|_{\mathcal{L}_2\prns{\mathbb{P}}}.
\end{align*}
This means that 
\begin{align*}
\text{\Cref{eq: thm-dist-1}}  \le \min\braces{\|P^\star\prns{\qk - q_0}\|_{\mathcal{L}_2\prns{\mathbb{P}}}\|\hk - h_0\|_{\mathcal{L}_2\prns{\mathbb{P}}}, \|q_0 - \qk\|_{\mathcal{L}_2\prns{\mathbb{P}}}\|P({\hk - h_0})\|_{\mathcal{L}_2\prns{\mathbb{P}}}} = o_{\mathbb{P}}\prns{n^{-1/2}}.
\end{align*}
Therefore, we have $\hat\mu_{\dr}\prns{a} - \tilde \mu_{\dr}\prns{a}  = o_{\mathbb{P}}\prns{n^{-1/2}}$.
} 

Furthermore, 
\begin{align*}
 &\tilde\mu_{\dr}\prns{a} - \mu(a) \\
    &\qquad =\frac{1}{K}\sum_{k=1}^K \bracks{\frac{1}{\nEa}\sum_{i \in \mathcal{D}_E} \indic{A_i = a}\prns{h_0\prns{S_{3,i}, S_{2, i}, A_i, X_i}-\mu(a)}}\\
    &\qquad + 
    \frac{1}{K}\sum_{k=1}^K \bracks{\frac{1}{\nOka}\sum_{i \in \mathcal{D}_{O, k}}\indic{A_i = a}{ q_0\prns{S_{2, i}, S_{1, i}, A_i, X_i}\prns{Y_i -  h_0\prns{S_{3, i}, S_{2, i}, A_i, X_i}}}} \\
    &\qquad =  \frac{1}{\Prb{A=a\mid G= E}n_E}\sum_{i\in\mathcal{D}_E} {\indic{A_i = a}\prns{h_0\prns{S_{3,i}, S_{2, i}, A_i, X_i} - \mu(a)}}\\
    &\qquad +  \frac{1}{\Prb{A=a\mid G=O}n_O}\sum_{i\in\mathcal{D}_O}\indic{A_i = a}{ q_0\prns{S_{2, i}, S_{1, i}, A_i, X_i}\prns{Y_i -  h_0\prns{S_{3, i}, S_{2, i}, A_i, X_i}}} + o_{\mathbb{P}}\prns{n^{-1/2}} 
\end{align*}

Combine the results above, we have 
\begin{align*}
\hat\tau_{\dr} - \tau 
    &= \frac{1}{n_E}\sum_{i \in \mathcal{D}_E}\bracks{\frac{A_i - \Prb{A_i = 1\mid G_i = E}}{\Prb{A_i = 1 \mid G_i = E}\prns{1-\Prb{A_i = 1\mid G_i = E}}}\prns{h_0\prns{S_{3, i}, S_{2, i}, A_i, X_i} - \mu(A_i)}} \\
    &+ \frac{1}{n_O}\sum_{i \in \mathcal{D}_O}\bracks{\frac{A_i - \Prb{A_i = 1\mid G_i = O}}{\Prb{A_i = 1 \mid G_i = O}}q_0\prns{S_{2, i}, S_{1, i}, A_i, X_i}\prns{Y_i -  h_0\prns{S_{3, i}, S_{2, i}, A_i, X_i}}} + o_{\mathbb{P}}\prns{n^{-1/2}}.
\end{align*}
Then the asserted conclusion follows from Central Limit Theorem. 
\end{proof}

\begin{proof}[Proof for \Cref{thm: CI}]
We only need to prove that $\hat\sigma^2$ is a consistent estimator for $\sigma^2$, since then we can apply Slutsky's theorem to show that  as $n\to\infty$,
\begin{align*}
\frac{\sqrt{n}\prns{\hat\tau_{\dr}-\tau}}{\hat\sigma}  \rightsquigarrow  \mathcal{N}\prns{0, 1}. 
\end{align*}
This in turn implies the desired asymptotic coverage conclusion.

To prove the consistency of $\hat\sigma^2$, we first consider the following (infeasible) estimator:
\begin{align*}
    \tilde\sigma^2 
        &=  \frac{n}{n_EK}\sum_{k=1}^K \braces{\frac{1}{n_E}\sum_{i \in \mathcal{D}_E} \bracks{\frac{A_i - \hat\pi_E}{\hat\pi_{E}}\prns{h_0\prns{S_{3,i}, S_{2, i}, A_i, X_i}-\hat\mu_\dr(A_i)}}^2} \nonumber \\
        &+ \frac{n}{n_OK}\sum_{k=1}^K \braces{\frac{1}{\nOka}\sum_{i \in \mathcal{D}_{O, k}}\bracks{\frac{A_i - \hat\pi_O}{\hat\pi_O}q_0(S_{2, i}, S_{1, i}, A_i, X_i)\prns{Y_i - h_0\prns{S_{3, i}, S_{2, i}, A_i, X_i}}}^2} \\
        &=  \frac{n}{n_E} \braces{\frac{1}{n_E}\sum_{i \in \mathcal{D}_E} \bracks{\frac{A_i - \hat\pi_E}{\hat\pi_{E}}\prns{h_0\prns{S_{3,i}, S_{2, i}, A_i, X_i}-\hat\mu_\dr(A_i)}}^2} \nonumber \\
        &+ \frac{n}{n_O} \braces{\frac{1}{n_O}\sum_{i \in \mathcal{D}_{O}}\bracks{\frac{A_i - \hat\pi_O}{\hat\pi_O}q_0(S_{2, i}, S_{1, i}, A_i, X_i)\prns{Y_i - h_0\prns{S_{3, i}, S_{2, i}, A_i, X_i}}}^2}.
\end{align*}
Since $n/n_E \to (1+\lambda)/\lambda$, $n/n_O \to 1+\lambda$, $\hat\pi_E \to \Prb{A=1\mid G=E}$, and  $\hat\pi_O \to \Prb{A=1\mid G=O}$, we can apply Law of Large Number and  Slutsky's theorem to show that $\tilde\sigma^2$ is a consistent estimator for $\sigma^2$. Therefore, as long as we can prove that $\hat\sigma^2 - \tilde\sigma^2 \to 0$ as $n \to \infty$, we have $\hat\sigma^2 \to \sigma^2$ as $n \to \infty$, which finishes our proof. 

To prove $\hat\sigma^2 - \tilde\sigma^2 \to 0$, we define that 
\begin{align*}
    \psi_{1, i}(h) 
        &= \frac{A_i - \hat\pi_E}{\hat\pi_{E}}\prns{h\prns{S_{3,i}, S_{2, i}, A_i, X_i}-\hat\mu_\dr(A_i)}, \\
    \psi_{2, i}(h, q) 
        &= \frac{A_i - \hat\pi_O}{\hat\pi_O}q(S_{2, i}, S_{1, i}, A_i, X_i)\prns{Y_i - h\prns{S_{3, i}, S_{2, i}, A_i, X_i}}.
\end{align*}
It follows that 
\begin{align*}
    \abs{\hat\sigma^2 - \tilde\sigma^2} 
        &= \frac{n}{n_EK}\sum_{k=1}^K\underbrace{\abs{ {\frac{1}{n_E}\sum_{i \in \mathcal{D}_E} \bracks{\psi^2_{1, i}(\hk) - \psi_{1, i}^2(h_0)}}}}_{\Delta_{1, k}} \\
        &+ \frac{n}{n_OK}\sum_{k=1}^K\underbrace{\abs{ {\frac{1}{\nOka}\sum_{i \in \mathcal{D}_{O, k}}\bracks{\psi^2_{2, i}(\hk, \qk) - \psi^2_{2, i}(h_0, q_0)}}}}_{\Delta_{2, k}}.
\end{align*}
We now analyze $\Delta_{1, k}$:
\begin{align*}
    \Delta_{1, k} 
        &\le \abs{\frac{1}{n_E}\sum_{i \in \mathcal{D}_E}\prns{\psi_{1, i}(\hk) - \psi_{1, i}(h_0)}\prns{2\psi_{1, i}(h_0) + \psi_{1, i}(\hk) - \psi_{1, i}(h_0)}} \\
        &\le \bracks{\frac{1}{n_E}\sum_{i \in \mathcal{D}_E}\prns{\psi_{1, i}(\hk) - \psi_{1, i}(h_0)}^2}^{1/2}\braces{\bracks{\frac{1}{n_E}\sum_{i \in \mathcal{D}_E}\prns{\psi_{1, i}(\hk) - \psi_{1, i}(h_0)}^2}^{1/2} + 2\bracks{\frac{1}{n_E}\sum_{i \in \mathcal{D}_E}{\psi^2_{1, i}(h_0)}}^{1/2}}.
\end{align*}
Moreover, since $\Prb{A=1\mid G=E}$ is  strictly positive according to \Cref{assump: unconfound-exp}, we have that for large enough $n$, $\hat\pi_E \ge \Prb{A=1\mid G=E}/2 > 0$ with high probability. It follows that 
\begin{align*}
    \frac{1}{n_E}\sum_{i \in \mathcal{D}_E}\prns{\psi_{1, i}(\hk) - \psi_{1, i}(h_0)}^2 
        &\lesssim \frac{1}{n_E}\sum_{i \in \mathcal{D}_E}\prns{\hk\prns{S_{3,i}, S_{2, i}, A_i, X_i} - h_0\prns{S_{3,i}, S_{2, i}, A_i, X_i}}^2 \\
        &= \|\hk -  h_0 \|_{\mathcal{L}_2\prns{\mathbb{P}}} + o_{\mathbb{P}}(1) = O_{\mathbb{P}}\prns{\rho_{h, n}} + o_{\mathbb{P}}(1) = o_{\mathbb{P}}(1).
\end{align*}
It follows that $\Delta_{1, k} = o_{\mathbb{P}}(1)$. Similarly, we can show that $\Delta_{2, k} = o_{\mathbb{P}}(1)$. These together ensure that as $n\to\infty$,
\begin{align*}
    \hat\sigma^2 - \tilde\sigma^2 \to 0.
\end{align*}
\end{proof}

\begin{proof}[Proof for \cref{thm: efficiency}]
{We consider a semiparametric model $\mathcal{M}_{\op{sp}}$ that places no restrictions on the data distribution except the existence of a bridge function $h_0$ in \Cref{assump: bridge}. 
Consider a regular parametric submodel indexed by a parameter $t$: $\mathcal{P}_t = \braces{p_t\prns{y, s, a, x, g}: t \in \Rl}$  
where $p_0(y, s, a, x, g)$ equals the true density $p(y, s, a, x, g)$. The associated score function is denoted as $\score(y,s, a, x, g) = \partial_t \log p_t(y, s, a, x, g) \vert_{t = 0}$. The expectation w.r.t the distribution $p_t(y, s, a, x, g)$ is denoted by $\E_t$.}

{By following the proof for Theorem 11 in \cite{kallus2021causal}, under the condition that bridge functions $h_0, q_0$ uniquely exist and the linear operator $T$ is bijective, the tangent space corresponding to $\mathcal{M}_{\op{sp}}$ is given by 
\begin{align}\label{eq: tangent}
\mathcal{S} = \bigg\{
    &\score\prns{Y, S, A, X, G} 
        = \score\prns{S_2, S_1, A, X, G} + \score\prns{Y, S_3 \mid S_2, S_1, A, X, G}: \\
    &\score\prns{S_2, S_1, A, X, G} \in  L_2\prns{S_2, S_1, A, X, G}, ~ \score\prns{Y, S_3 \mid S_2, S_1, A, X, G} \in  L_2\prns{Y, S_3 \mid S_2, S_1, A, X, G}, \nonumber \\
    &\Eb{\score\prns{S_2, S_1, A, X, G} } = 0, \Eb{\score\prns{Y, S_3 \mid S_2, S_1, A, X, G} \mid S_2, S_1, A, X, G} = 0, \nonumber \\
    &\Eb{\prns{Y - h_0(S_3, S_2, A, X)\score\prns{Y, S_3 \mid S_2, S_1, A, X, G}}\mid S_2, S_1, A, X, G=O} \in \text{Range}(T)\bigg\}. \nonumber
\end{align}}

We now analyze the path differentiability of the counterfactual mean parameter $\mu_t\prns{a}$ under a submodel distribution with parameter value $t$. According to \Cref{thm: identification1}, we have 
\begin{align*}
 \mu_t\prns{a} = \E_t\bracks{h_t\prns{S_3, S_2, A, X} \mid A = a, G = E},
 \end{align*} 
 where $h_t\prns{S_3, S_2, A, X}$ is the corresponding outcome bridge function defined by 
\begin{align*}
\mathbb{E}_t\bracks{Y - h_t\prns{S_3, S_2, A, X} \mid S_2, S_1, A, X, G = O} = 0. 
\end{align*}

Note that we have 
\begin{align}
 \frac{\partial}{\partial t}\mu_t\prns{a}\vert_{t = 0} 
    &=  \frac{\partial}{\partial t} \mathbb{E}_t\bracks{h_t\prns{S_3, S_2, A, X} \mid A = a, G = E}\vert_{t = 0} \nonumber \\
    &= \mathbb{E}\bracks{h_0\prns{S_3, S_2, A, X}\score\prns{S_3, S_2 , X\mid A, G} \mid A = a, G = E} \label{eq: diff-1}\\
    &+ \frac{\partial}{\partial t}\Eb{h_t\prns{S_3, S_2, A, X} \mid A = a, G = E}\vert_{t = 0}.\label{eq: diff-2}
 \end{align} 

 We first analyze the term in \Cref{eq: diff-1}.
\begin{align}
 &\mathbb{E}\bracks{h_0\prns{S_3, S_2, A, X}\score\prns{S_3, S_2, X \mid A, G} \mid A = a, G = E}  \label{eq: efficiency-term1} \\
    =& \mathbb{E}\bracks{\prns{h_0\prns{S_3, S_2, A, X} - \mu\prns{a}}\score\prns{S_3, S_2, X\mid A, G} \mid A = a, G = E} \nonumber  \\
    =& \mathbb{E}\bracks{\prns{h_0\prns{S_3, S_2, A, X} - \mu\prns{a}}\score\prns{S_3, S_2, A, X, G} \mid A = a, G = E} \nonumber \\
    =& \mathbb{E}\bracks{\prns{h_0\prns{S_3, S_2, A, X} - \mu\prns{a}}\score\prns{Y, S_3, S_2, S_1, A, X, G} \mid A = a, G = E} \nonumber \\
    =& \mathbb{E}\bracks{\frac{\indic{A = a, G = E}}{\Prb{A= a, G = E}}\prns{h_0\prns{S_3, S_2, A, X} - \mu\prns{a}}\score\prns{Y, S_3, S_2, S_1, A, X, G}} \nonumber
 \end{align}
 where the second equation holds because 
 \begin{align*}
 &\Eb{\prns{h_0\prns{S_3, S_2, A, X} - \mu\prns{a}}\score\prns{A, G} \mid A = a, G = E} \\
 =&   \Eb{\prns{h_0\prns{S_3, S_2, A, X} - \mu\prns{a}} \mid A = a, G = E} \score\prns{A=a, G=E} = 0,
 \end{align*}
 and the third equation holds because 
 \begin{align*}
 &\Eb{\prns{h_0\prns{S_3, S_2, A, X} - \mu\prns{a}}\score\prns{Y, S_1 \mid S_3, S_2, A, X, G} \mid A = a, G = E} \\
 =&  \Eb{\prns{h_0\prns{S_3, S_2, A, X} - \mu\prns{a}}\Eb{\score\prns{Y, S_1 \mid S_3, S_2, A, X, G} \mid S_3, S_2, A, X, G} \mid A = a, G = E} = 0.
 \end{align*}

Next we analyze the term in \Cref{eq: diff-2}.
\begin{align*}
 &\frac{\partial}{\partial t}\Eb{h_t\prns{S_3, S_2, A, X} \mid A = a, G = E}\vert_{t = 0} \\
    &= \frac{\partial}{\partial t}\Eb{\frac{p\prns{S_3, S_2, X \mid A, G = E}}{p\prns{S_3, S_2, X \mid A, G = O}}  h_t\prns{S_3, S_2, A, X} \mid A = a, G = O}\vert_{t = 0} \\
    &= \frac{\partial}{\partial t}\Eb{q_0\prns{S_2, S_1, A, X}h_t\prns{S_3, S_2, A, X} \mid A = a, G = O}\vert_{t = 0} \\
    &= \Eb{q_0\prns{S_2, S_1, A, X}\frac{\partial}{\partial t} \Eb{h_t\prns{S_3, S_2, A, X} \mid S_2, S_1, A, X, G = O}\vert_{t = 0} \mid A = a, G = O},
 \end{align*}
 where the second equation holds because of \Cref{eq: bridge2-obs-1}. 

 Furthermore, by taking the derivative of the left hand side w.r.t $t$ at $t = 0$, we have 
\begin{align}\label{eq: path-diff-1}
&\frac{\partial}{\partial t}\Eb{h_t\prns{S_3, S_2, A, X} \mid S_2, S_1, A, X, G = O}\vert_{t = 0} \nonumber \\
=& \Eb{\prns{Y - h_0\prns{S_3, S_2, A, X}}\score\prns{Y, S_3 \mid S_2, S_1, A,X,  G} \mid S_2, S_1, A,X, G = O} = 0.
\end{align} 
It follows that 
 \begin{align}
 &\frac{\partial}{\partial t}\Eb{h_t\prns{S_3, S_2, A, X} \mid A = a, G = E}\vert_{t = 0} \label{eq: efficiency-term2} \\
    &= \Eb{q_0\prns{S_2, S_1, A, X}{\prns{Y - h_0\prns{S_3, S_2, A, X}}}\score\prns{Y, S_3 \mid S_2, S_1, A, X, G}\mid A = a, G = O} \nonumber \\
    &= \Eb{q_0\prns{S_2, S_1, A, X}{\prns{Y - h_0\prns{S_3, S_2, A, X}}}\score\prns{Y, S_3, S_2, S_1, A, X, G}\mid A = a, G = O} \nonumber \\
    &= \Eb{\frac{\indic{A = a, G = O}}{\Prb{A = a, G = O}}q_0\prns{S_2, S_1, A, X}{\prns{Y - h_0\prns{S_3, S_2, A, X}}}\score\prns{Y, S_3, S_2, S_1, A, X, G}}, \nonumber 
 \end{align}
 where the second equation holds because 
  \begin{align*}
 &\Eb{q_0\prns{S_2, S_1, A, X}{\prns{Y - h_0\prns{S_3, S_2, A, X}}}\score\prns{S_2, S_1, A, X, G}\mid A = a, G = O} \\
 =& \E\big[q_0\prns{S_2, S_1, A, X}\Eb{Y - h_0\prns{S_3, S_2, A, X} \mid S_2, S_1, A, X, G = O} \\
 &\qquad\qquad\qquad\qquad\qquad\qquad  \times \score\prns{S_2, S_1, A,X,  G = O}\mid A = a, G = O\big] = 0.
 \end{align*}

 Combining \Cref{eq: efficiency-term1,eq: efficiency-term2}, we have 
  \begin{align*}
   \frac{\partial}{\partial t}\mu_t\prns{a}\vert_{t = 0} = \Eb{\psi_a\prns{Y, S_3, S_2, S_1, A, X, G}\score\prns{Y, S_3, S_2, S_1, A, X, G}},
  \end{align*} 
where 
\begin{align*}
\psi_a\prns{Y, S_3, S_2, S_1, A, X, G} 
    &= \frac{\indic{A = a, G = E}}{\Prb{A = a, G = E}}\prns{h_0\prns{S_3, S_2, A, X} - \mu\prns{a}} \nonumber \\
    & + \frac{\indic{A = a, G = O}}{\Prb{A = a, G = O}}q_0\prns{S_2, S_1, A, X}\prns{Y - h_0\prns{S_3, S_2, A, X}}.  
\end{align*}
Therefore, 
\begin{align*}
\frac{\partial}{\partial t}\tau_t\vert_{t = 0}  &= \frac{\partial}{\partial t}\mu_t\prns{1}\vert_{t = 0} - \frac{\partial}{\partial t}\mu_t\prns{0}\vert_{t = 0} \\
  &= \Eb{\psi\prns{Y, S_3, S_2, S_1, A, X, G}\score\prns{Y, S_3, S_2, S_1, A, X, G}},
\end{align*}
where 
\begin{align*}
&\psi\prns{Y, S_3, S_2, S_1, A, X, G}  \\
    =& \psi_1\prns{Y, S_3, S_2, S_1, A, X, G} - \psi_0\prns{Y, S_3, S_2, S_1, A, X, G} - \tau \\
    =& \frac{\indic{G = E}}{\Prb{G = E}}\frac{A - \Prb{A = 1 \mid G = E}}{\Prb{A = 1 \mid G = E}}\prns{h_0\prns{S_3, S_2, A, X} - \mu(A)} \\
    +& \frac{\indic{G = O}}{\Prb{G = O}}\frac{A - \Prb{A=1\mid G=O}}{\Prb{A=1\mid G=O}}q_0\prns{S_2, S_1, A, X}\prns{Y - h_0\prns{S_3, S_2, A, X}} - \tau.
\end{align*}

{
We can easily decompose  $\psi\prns{Y, S_3, S_2, S_1, A, X, G}$ into two terms:
\begin{align*}
\psi\prns{Y, S_3, S_2, S_1, A, X, G} 
&= \Eb{\psi\prns{Y, S_3, S_2, S_1, A, X, G} \mid S_2, S_1, A, X, G} - \tau \\
&+ \psi\prns{Y, S_3, S_2, S_1, A, X, G} - \Eb{\psi\prns{Y, S_3, S_2, S_1, A, X, G} \mid S_2, S_1, A, X, G},
\end{align*}
where 
\begin{align*}
&\Eb{\psi\prns{Y, S_3, S_2, S_1, A, X, G} \mid S_2, S_1, A, X, G} - \tau \in L_2(S_2, S_1, A, X, G) \\
&\Eb{\Eb{\psi\prns{Y, S_3, S_2, S_1, A, X, G} \mid S_2, S_1, A, X, G} - \tau} = 0 \\
& \psi\prns{Y, S_3, S_2, S_1, A, X, G} - \Eb{\psi\prns{Y, S_3, S_2, S_1, A, X, G} \mid S_2, S_1, A, X, G} \in L_2(Y, S_3 \mid S_2, S_1, A, X, G) \\
& \Eb{\psi\prns{Y, S_3, S_2, S_1, A, X, G} - \Eb{\psi\prns{Y, S_3, S_2, S_1, A, X, G} \mid S_2, S_1, A, X, G} \mid S_2, S_1, A, X, G} = 0.
\end{align*}
Moreover, since $T$ is surjective, its range space $\text{Range}(T)$ is the whole $L_2(S_2, S_1, A, X)$ space so we automatically have 
\begin{align*}
 &\mathbb{E}\big[\prns{\psi\prns{Y, S_3, S_2, S_1, A, X, G} - \Eb{\psi\prns{Y, S_3, S_2, S_1, A, X, G} \mid S_2, S_1, A, X, G}}\\
 &\qquad\qquad\qquad\qquad \times \prns{Y - h_0(S_3, S_2, A, X)} \mid S_2, S_1, A, X, G = O\big] \in \text{Range}(T). 
 \end{align*} 
 This means that $\psi\prns{Y, S_3, S_2, S_1, A, X, G}$ belongs to the tangent space $\mathcal{S}$. 
Thus $\psi\prns{Y, S_3, S_2, S_1, A, X, G}$ is the efficient influence function for $\tau$, and its variance, which is equal to $\sigma^2$ in \Cref{thm: efficiency}, is the semiparametric efficiency lower bound for $\tau$ relative to the tangent space $\mathcal{S}$ in \Cref{eq: tangent}.
} 
\end{proof}

\subsection{Proofs for \Cref{sec: extension}}

\begin{proof}[Proof for \Cref{corollary: covariate-exp-dr}]
Before proving the theorem, we note that by Bayes rule, we can easily verify that 
\begin{align*}
\frac{\Prb{G=E\mid A=a}\Prb{G=O\mid X}}{\Prb{G=O\mid A=a}\Prb{G=E\mid X}}\frac{\indic{A=a}}{\Prb{A=a\mid X, G=E}} = \frac{\indic{A=a}}{\Prb{A=a \mid X, G=O}} \frac{p(X \mid A=a, G=O)}{p(X \mid A=a, G=E)}.
\end{align*}

\paragraph*{Assume that condition 1 holds so we have $h=h_0$ satisfying \Cref{eq: bridge-obs}.} In this case, for any function $q$, we have 
\begin{align}\label{eq: X-dr-2a}
& \Eb{\frac{\Prb{G=E\mid A=a}\Prb{G=O\mid X}}{\Prb{G=O\mid A=a}\Prb{G=E\mid X}}\frac{\indic{A=a}}{\Prb{A=a\mid X, G=E}}q\left(S_{2}, S_{1}, A, X\right)\left(Y-h\left(S_{3}, S_{2}, A, X\right)\right)\mid G=O} \nonumber\\
=& \mathbb{E}\bigg[\frac{\Prb{G=E\mid A=a}\Prb{G=O\mid X}}{\Prb{G=O\mid A=a}\Prb{G=E\mid X}}\frac{\indic{A=a}}{\Prb{A=a\mid X, G=E}}q\left(S_{2}, S_{1}, A, X\right)\nonumber \\
&\qquad\qquad\qquad\qquad\qquad\qquad \times \Eb{Y-h_0\left(S_{3}, S_{2}, A, X\right) \mid S_2, S_1, A, X, G = O}\mid G=O\bigg] = 0,
\end{align}
where the last equation uses the conditional moment equation in \Cref{eq: bridge-obs}.

Moreover, for function $h = h_0$, 
\begin{align}\label{eq: X-dr-1a}
&\Eb{\frac{\Prb{G=E}\Prb{G=O\mid X}}{\Prb{G=O}\Prb{G=E\mid X}}\frac{\indic{A=a}}{\Prb{A=a\mid X, G=E}}\prns{h\prns{\Sc, \Sb, A, X} - \bar{h}_E\prns{A, X} }\mid G=E} \nonumber \\
=& \Eb{\frac{\Prb{G=E}\Prb{G=O\mid X}}{\Prb{G=O}\Prb{G=E\mid X}}\Eb{h\prns{\Sc, \Sb, a, X} - \bar{h}_E\prns{a, X} \mid A= a, X, G=E }\mid G=E} = 0.
\end{align}

Finally, we only need to prove that 
\begin{align}\label{eq: mu-out-bridge}
\mu(a) = \Eb{\Eb{h_0\prns{\Sc, \Sb, A, X} \mid A = a, X = x, G = E} \mid G= O}
\end{align}
According to \cref{lemma: bridge-obs}, we already know that any function $h_0\prns{S_3, S_2, A, X}$ that satisfies \Cref{eq: bridge-obs} must be a valid bridge function in the sense of \Cref{eq: bridge-U}. Thus we only need to prove \Cref{eq: mu-out-bridge} for $h_0\prns{S_3, S_2, A, X}$ that satisfies \Cref{eq: bridge-U}. By following the proof in \Cref{thm: identification1}, we can show that 
\begin{align*}
\Eb{h_0\prns{S_3, S_2, A, X} \mid A = a, X, G = E} 
    &= \Eb{\Eb{Y(a) \mid S_2(a), U, X, G = O} \mid A = a, X, G = E}.
\end{align*}
Therefore, 
\begin{align*}
&\Eb{\Eb{h_0\prns{S_3, S_2, A, X} \mid A = a, X, G = E}\mid G = O} \\
    =&  \Eb{\Eb{\Eb{Y(a) \mid S_2(a), U, X, G = O} \mid X, G = E} \mid G = O} \\
    =& \Eb{\Eb{\Eb{Y(a) \mid S_2(a), U, X, G = O} \mid X, G = O} \mid G = O}\\
    =& \Eb{\Eb{Y\prns{a} \mid X, G = O} \mid G = O} = \Eb{Y\prns{a} \mid G = O}.
\end{align*}
Here the first equation follows from the fact that $A \perp \prns{S(a), U} \mid X, G = E$ in \Cref{assump: unconfound-exp2}, the second equation follows from  \Cref{eq: data-overlap2} in \Cref{lemma: data-overlap}, and the third equation follows from the fact that $G \perp \prns{S(a), U} \mid X$ in \Cref{assump: ext-valid2}. 

Combining \Cref{eq: X-dr-2a,eq: X-dr-1a,eq: mu-out-bridge} proves the conclusion. 

\paragraph*{Assume that condition 2 holds so we have $q = q_0$ satisfying \Cref{eq: bridge2-obs-1} or \Cref{eq: bridge2-obs-2}.} We first prove that 
\begin{align}\label{eq: mu-sel-bridge}
\mu(a) 
   &= \Eb{\frac{\indic{A=a}}{\Prb{A=a \mid X, G=O}} \frac{p(X \mid A=a, G=O)}{p(X \mid A=a, G=E)}q_0\left(S_{2}, S_{1}, A, X\right)Y \mid G = O} \nonumber \\
   &= \Eb{\frac{\Prb{G=E\mid A=a}\Prb{G=O\mid X}}{\Prb{G=O\mid A=a}\Prb{G=E\mid X}}\frac{\indic{A=a}}{\Prb{A=a\mid X, G=E}}q\left(S_{2}, S_{1}, A, X\right)Y\mid G=O}.
\end{align}
To prove this, note that according to \Cref{lemma: bridge2-obs}, any function $q_0\prns{S_2, S_1, A, X}$ that satisfies \Cref{eq: bridge2-obs-1} or \Cref{eq: bridge2-obs-2} is a valid selection bridge function in the sense of \Cref{eq: bridge2-U}. Thus we only need to prove \Cref{eq: mu-sel-bridge} for any $q_0$ that satisfies \Cref{eq: bridge2-U}. We further note that the right hand side of \Cref{eq: mu-sel-bridge} is equal to the following:
\begin{align*}
&\Eb{\Eb{\frac{p(X \mid A=a, G=O)}{p(X \mid A=a, G=E)}q_0\left(S_{2}, S_{1}, A, X\right)Y \mid A = a, X, G = O}\mid G = O} \nonumber \\
=& \mathbb{E}\bigg[\mathbb{E}\bigg[\mathbb{E}\bigg[\frac{p(X \mid A=a, G=O)}{p(X \mid A=a, G=E)}q_0\left(S_{2}, S_{1}, A, X\right)\mid S_2, A =a, U, X, G = O]  \nonumber\\
&\qquad\qquad\qquad \times \Eb{Y \mid S_2, A =a, U, X, G = O} \mid A = a, X, G = O\bigg]\mid G = O\bigg] \nonumber\\
&= \mathbb{E}\bigg[\mathbb{E}\bigg[\frac{p\prns{\Sb, U \mid A, X, G = E}}{p\prns{\Sb, U \mid A, X, G = O}}\Eb{Y(a) \mid S_2(a), U, X, G = O} \mid A = a, X, G = O\bigg]\mid G = O\bigg] \nonumber\\
&= \mathbb{E}\bigg[\mathbb{E}\bigg[\Eb{Y(a) \mid S_2(a), U, X, G = O} \mid A = a, X, G = E\bigg]\mid G = O\bigg] \nonumber\\
&= \mathbb{E}\bigg[\mathbb{E}\bigg[\Eb{Y(a) \mid S_2(a), U, X, G = O} \mid X, G = E\bigg]\mid G = O\bigg] \nonumber\\
&= \mathbb{E}\bigg[\mathbb{E}\bigg[\Eb{Y(a) \mid S_2(a), U, X, G = O} \mid X, G = O\bigg]\mid G = O\bigg] \nonumber \\
&= \Eb{Y(a) \mid G = O} = \mu(a).
\end{align*}
Here the first equation uses $Y \perp S_1 \mid S_2, A, U, X, G = O$ which we prove in \Cref{cor: implication2}, the second equation uses the fact that $q_0$ satisfies \Cref{eq: bridge2-U} and $Y(a) \perp A \mid S_2(a), U, X, G = O$ according to \Cref{assump: unconfound-obs}, the fourth equation uses that $S_2(a) \perp A \mid X, G = E$ according to \Cref{assump: unconfound-exp2}, the fifth equation uses the fact that $S_2(a) \perp G \mid X$ according to \Cref{assump: ext-valid2}.

Next, we can follow the proof above to show that for any $h$, 
\begin{align}
&\Eb{\frac{\indic{A=a}}{\Prb{A=a \mid X, G=O}} \frac{p(X \mid A=a, G=O)}{p(X \mid A=a, G=E)}q_0\left(S_{2}, S_{1}, A, X\right)h\left(S_{3}, S_{2}, A, X\right)\mid G=O} \nonumber \\
=& \Eb{h\prns{S_3\prns{a}, S_2\prns{a}, a, X} \mid G = O} \label{eq: eliminate-1}
\end{align}
And by change of measure, we can also verify that 
\begin{align}
&\Eb{\frac{\Prb{G=E}\Prb{G=O\mid X}}{\Prb{G=O}\Prb{G=E\mid X}}\frac{\indic{A=a}}{\Prb{A=a\mid X, G=E}}{h\prns{\Sc, \Sb, A, X}}\mid G=E} \nonumber \\
=& \Eb{\Eb{h\prns{\Sc, \Sb, A, X} \mid A = a, X, G = E} \mid G = O} = \Eb{h\prns{S_3\prns{a}, S_2\prns{a}, a, X} \mid G = O}, \label{eq: eliminate-2}
\end{align}
and 
\begin{align*}
&\Eb{\frac{\Prb{G=E}\Prb{G=O\mid X}}{\Prb{G=O}\Prb{G=E\mid X}}\frac{\indic{A=a}}{\Prb{A=a\mid X, G=E}} \bar{h}_E\prns{A, X}\mid G=E} \\
=&\Eb{\Eb{\bar{h}_E\prns{A, X} \mid A = a, X, G = E} \mid G = O} = \Eb{\bar{h}_E\prns{A, X}\mid G=O}
\end{align*}
These show that 
\begin{align}\label{eq: X-dr-2}
0 =& \Eb{h_{E}\prns{a, X} \mid G= O}  \nonumber \\
    +&\Eb{\frac{\Prb{G=E}\Prb{G=O\mid X}}{\Prb{G=O}\Prb{G=E\mid X}}\frac{\indic{A=a}}{\Prb{A=a\mid X, G=E}}\prns{h\prns{\Sc, \Sb, A, X} - \bar{h}_E\prns{A, X} }\mid G=E} \nonumber \\
   -&\Eb{\frac{\indic{A=a}}{\Prb{A=a \mid X, G=O}} \frac{p(X \mid A=a, G=O)}{p(X \mid A=a, G=E)}q_0\left(S_{2}, S_{1}, A, X\right)h\left(S_{3}, S_{2}, A, X\right)\mid G=O}.
\end{align}

Combining \Cref{eq: mu-sel-bridge,eq: X-dr-2} leads to the conclusion. 
\end{proof}

\subsection{Proofs for Appendix}
\begin{proof}[Proof for \Cref{prop: sel-linear}]
First note that 
\begin{align*}
\frac{p\prns{\Sb, U, X\mid A =a , G = E}}{p\prns{\Sb, U, X \mid A = a, G = O}} 
    &= 
\frac{p\prns{ U, X\mid A = a, G = E}}{p\prns{U, X \mid A = a, G = O}}  \\
    &= \frac{\Prb{A=a\mid U, X, G = E}}{\Prb{A=a\mid U, X, G = O}}
    \frac{\Prb{A=a\mid G = O}}{\Prb{A=a\mid G = E}} \\
    &= \frac{\Prb{A=a\mid G = O}}{\Prb{A=a\mid U, X, G = O}},
\end{align*}
where the first equation follows from \Cref{lemma: data-overlap}, the second equation follows from Bayes rule, and the third equation follows from the fact that $\Prb{A=a\mid U, X, G = E} = \Prb{A=a\mid G = E} = \frac{1}{2}$.
Therefore, we have 
\begin{align}\label{eq: density-ratio}
\frac{p\prns{\Sb, U, X\mid A =a , G = E}}{p\prns{\Sb, U, X \mid A = a, G = O}} = 
\frac{\Eb{\bracks{1+\exp\prns{(-1)^a\prns{\kappa_1^\top U + \kappa_2^\top X}}}^{-1}}}{\bracks{1+\exp\prns{(-1)^a\prns{\kappa_1^\top U + \kappa_2^\top X}}}^{-1}}.
\end{align}

Second, $(S_1, S_2) \mid A, U, X, G = O$ follows a joint Gaussian distribution whose conditional expectation is 
\begin{align*}
\begin{bmatrix}
\tau_1 A + \beta_1 X + \gamma_1 U \\
\prns{\tau_2 + \alpha_2 \tau_1}A + \prns{\beta_2 + \alpha_2\beta_1}X + \prns{\gamma_2 + \alpha_2\gamma_1 }U
\end{bmatrix}
\end{align*}
and conditional covariance matrix is 
\begin{align*}
\begin{bmatrix}
\sigma_1^2 I_1 & \sigma_1^2 \alpha_2^\top \\
\sigma_1^2 \alpha_2 & \sigma_1^2 \alpha_2 \alpha_2^\top + \sigma_2^2 I_2
\end{bmatrix}.
\end{align*}
It follows that $S_1 \mid S_2, A, U, X, G = O$ also has a Gaussian distribution function with conditional expectation
\begin{align*}
\lambda_1 S_2 + \lambda_2 A + \lambda_3 X + \lambda_4 U
\end{align*}
and conditional variance 
\begin{align*}
\Sigma_{1\mid 2} = \sigma_1^2 I_1  -\sigma_1^4 \alpha_2^\top\prns{\sigma_1^2 \alpha_2\alpha_2^\top + \sigma_2^2 I_2}^{-1}\alpha_2.
\end{align*}
where 
\begin{align*}
&\lambda_1 = \sigma_1^2\alpha_2^\top\prns{\sigma_1^2 \alpha_2\alpha_2^\top + \sigma_2^2 I_2}^{-1}, \\
&\lambda_2 = \prns{I_1 - \sigma_1^2 \alpha_2^\top\prns{\sigma_1^2 \alpha_2\alpha_2^\top + \sigma_2^2 I_2}^{-1}\alpha_2}\tau_1 - \sigma_1^2\alpha_2^\top\prns{\sigma_1^2 \alpha_2\alpha_2^\top + \sigma_2^2 I_2}^{-1}\tau_2 \\
&\lambda_3 = \prns{I_1 - \sigma_1^2 \alpha_2^\top\prns{\sigma_1^2 \alpha_2\alpha_2^\top + \sigma_2^2 I_2}^{-1}\alpha_2}\beta_1 - \sigma_1^2\alpha_2^\top\prns{\sigma_1^2 \alpha_2\alpha_2^\top + \sigma_2^2 I_2}^{-1}\beta_2 \\
&\lambda_4 = \prns{I_1 - \sigma_1^2 \alpha_2^\top\prns{\sigma_1^2 \alpha_2\alpha_2^\top + \sigma_2^2 I_2}^{-1}\alpha_2}\gamma_1 - \sigma_1^2\alpha_2^\top\prns{\sigma_1^2 \alpha_2\alpha_2^\top + \sigma_2^2 I_2}^{-1}\gamma_2.
\end{align*}

Third, for $a = 1$, we posit a selection bridge function of the following form:
$$q_0\prns{S_2, S_1, 1, X} = c_1\exp\prns{\tilde\theta_2^\top S_2 + \tilde\theta_1^\top S_1 + \tilde \theta_0^\top X} + c_0.$$
It follows that 
\begin{align*}
&\Eb{q_0\prns{S_2, S_1, 1, X} \mid S_2, A= 1, U, X, G=O} \\
=& c_1\exp\prns{\tilde\theta_2^\top S_2  + \tilde \theta_0^\top X}\Eb{\exp\prns{\tilde\theta_1^\top S_1} \mid S_2, A= 1, U, X, G=O} + c_0 \\
=& c_1\exp\prns{\tilde\theta_2^\top S_2  + \tilde \theta_0^\top X}\exp\prns{\tilde\theta_1^\top \prns{\lambda_1 S_2 + \lambda_2 A + \lambda_3 X + \lambda_4 U} + \frac{1}{2}\tilde\theta_1^\top \Sigma_{1\mid 2}\tilde\theta_1} + c_0 \\
=& c_1 \exp\prns{\frac{1}{2}\tilde\theta_1^\top \Sigma_{1\mid 2}\tilde\theta_1}\exp\prns{\prns{\tilde\theta_1^\top\lambda_1 + \tilde\theta_2^\top}S_2 + \tilde\theta_1^\top\lambda_2 A + \prns{\tilde\theta_1^\top \lambda_3 + \tilde\theta_0^\top}X + \tilde\theta_1^\top\lambda_4 U} + c_0
\end{align*}

Thus we only need the above to match \Cref{eq: density-ratio} for $a = 1$. 
This is possible once $\lambda_4$ has full column rank: then there exists $\tilde\theta_1$ such that $\tilde\theta_1^\top\lambda_4 = \kappa_2^\top$. Then we can choose $\tilde\theta_1, \tilde\theta_0, c_1, c_0$ accordingly. 
Analogously, we can also show the existence of a selection bridge function $q_0\prns{S_2, S_1, 0, X} $ of the same form for $a = 0$.
\end{proof}

\begin{proof}[Proof for \Cref{corollary: identify-exp}]
We first prove the conclusion for \Cref{eq: identification-1} in \Cref{thm: identification1}.
Following the proof for \Cref{thm: identification1},  we have 
\begin{align*}
&\Eb{h_0\prns{\Sc, \Sb, A, X} \mid A = a, G = E} \\
    =& \Eb{\Eb{Y(a) \mid \Sb(a), U, X, G = O} \mid G = E} \\
    =& \Eb{\Eb{Y(a) \mid \Sb(a), U, X, G = E} \mid G = E} = \Eb{Y\prns{a} \mid G = E},
\end{align*}
where the second follows from the assumption that $Y(a) \perp G \mid S(a), U, X$.

Next, we prove the conclusion for \Cref{eq: identification-2} in \Cref{thm: identification2}. Following the proof for \Cref{thm: identification2},  we have

\begin{align*}
&\Eb{q_0\prns{S_2, S_1, A, X}Y \mid A = a, G = O} \\ 
    =& \Eb{\Eb{Y(a) \mid S_2(a), U, X, G= O} \mid A = a, G = E} \\
    =& \Eb{\Eb{Y(a) \mid S_2(a), U, X, G= E} \mid G = E}  \\
    =& \Eb{Y(a) \mid G = E} = \mu(a),
\end{align*} 
where the second equation follows from the assumption that $Y(a) \perp G \mid S(a), U, X$.

Finally, according to the proof of \Cref{thm: identification-DR}, if conditions in \Cref{thm: identification1} hold and $h = h_0$ satisfies \Cref{eq: bridge-obs}, then 
\begin{align*}
&\Eb{h\prns{\Sc, \Sb, A, X} \mid A = a, G = E} + \Eb{q\prns{\Sb, \Sa, A, X}\prns{Y - h\prns{\Sc, \Sb, A, X}} \mid A = a, G = O} \\
=& \Eb{h\prns{\Sc, \Sb, A, X} \mid A = a, G = E}.
\end{align*}
If conditions in \Cref{thm: identification2} hold and $q = q_0$ satisfies \Cref{eq: bridge2-obs-1} or \Cref{eq: bridge2-obs-2}, then 
\begin{align*}
&\Eb{h\prns{\Sc, \Sb, A, X} \mid A = a, G = E} + \Eb{q\prns{\Sb, \Sa, A, X}\prns{Y - h\prns{\Sc, \Sb, A, X}} \mid A = a, G = O} \\
=& \Eb{q\prns{\Sb, \Sa, A, X}Y\mid A=a, G=O}.
\end{align*}
Then the conclusion follows from our proof above. 

\end{proof}

{
\begin{proof}[Proof for \Cref{corollary: sel-bridge-dist}]
The proof for \Cref{corollary: sel-bridge-dist} straitforwardly follows from the proof for \Cref{corollary: covariate-exp-dr} and \Cref{corollary: covariate-exp} by replacing all $Y$ with $r(Y)$. 
\end{proof}
}

\begin{proof}[Proof for \Cref{lemma: neyman-orthogonality}]
We denote the map in \Cref{eq: orthogonality-map} as $\Phi(\eta)$. Then we need to prove that 
\begin{align*}
\dot{\Phi}_j(\eta^*)[\eta_j-\eta_j^*] \coloneqq \frac{\partial}{\partial t} \Phi(\eta^*_1, \dots, \eta^*_j + t(\eta_j - \eta^*_j), \dots, \eta_7^*)\vert_{t=0} = 0, ~ \text{for any } \eta_j \text{ and }j \in \braces{1, \dots, 7}.
\end{align*}
First, we note that 
\begin{align*}
&\dot{\Phi}_1(\eta^*)[\eta_1-\eta_1^*] \\
=& \sum_{a\in\braces{0, 1}}\prns{-1}^{1-a}\bigg\{\mathbb{E}\bigg[\frac{\Prb{G=E}\Prb{G=O\mid X}}{\Prb{G=O}\Prb{G=E\mid X}}\frac{\indic{A=a}}{\Prb{A=a\mid X, G=E}}\prns{h - h_0}(S_3, S_2, a, X) \mid G = E\bigg] \\
   -\mathbb{E}\bigg[
   &\frac{\Prb{G=E\mid A=a}\Prb{G=O\mid X}}{\Prb{G=O\mid A=a}\Prb{G=E\mid X}}\frac{\indic{A=a}}{\Prb{A=a\mid X, G=E}}q_0\left(S_{2}, S_{1}, a, X\right)\prns{h - h_0}(S_3, S_2, a, X)\mid G=O\bigg]\bigg\}\\
=& \sum_{a\in\braces{0, 1}}\prns{-1}^{1-a}\braces{\Eb{\prns{h - h_0}(S_3(a), S_2(a), a, X) \mid G=O} - \Eb{\prns{h - h_0}(S_3(a), S_2(a), a, X) \mid G=O}} = 0,
\end{align*}
where the second equation follows from \Cref{eq: eliminate-1,eq: eliminate-2} in the proof for \Cref{corollary: covariate-exp-dr}.

Second, we have that  
\begin{align*}
&\dot{\Phi}_2(\eta^*)[\eta_2-\eta_2^*]\\
&=\sum_{a\in\braces{0,1}}\prns{-1}^{1-a}\bigg\{\Eb{\bar{h}_E(a, X) -\bar{h}_{0, E}(a, X) \mid G = O} \\
&-\Eb{\frac{\Prb{G=E}\Prb{G=O\mid X}}{\Prb{G=O}\Prb{G=E\mid X}}\frac{\indic{A=a}}{\Prb{A=a\mid X, G=E}}\prns{\bar{h}_{E}\prns{a, X}-\bar{h}_{E, 0}\prns{a, X}}\mid G= E}\bigg\} \\
&=\sum_{a\in\braces{0,1}}\prns{-1}^{1-a}\braces{\Eb{\bar{h}_E(a, X) -\bar{h}_{0, E}(a, X) \mid G = O} - \Eb{\bar{h}_E(a, X) -\bar{h}_{0, E}(a, X) \mid G = O}} = 0,
\end{align*}
where the equation follows from the proof for \Cref{corollary: covariate-exp-dr}.

Third, we have 
\begin{align*}
\dot{\Phi}_3(\eta^*)[\eta_3-\eta_3^*] 
   &= \sum_{a\in\braces{0,1}}\prns{-1}^{1-a}\mathbb{E}\bigg[\frac{\Prb{G=E\mid A=a}\Prb{G=O\mid X}}{\Prb{G=O\mid A=a}\Prb{G=E\mid X}}\frac{\indic{A=a}}{\Prb{A=a\mid X, G=E}}\\
   &\qquad \times\prns{q-q_0}\prns{S_2, S_1, a, X}\left(Y-h_0\left(S_{3}, S_{2}, A, X\right)\right)\mid G = O\bigg] \\
   &= \sum_{a\in\braces{0,1}}\prns{-1}^{1-a}\mathbb{E}\bigg[\frac{\Prb{G=E\mid A=a}\Prb{G=O\mid X}}{\Prb{G=O\mid A=a}\Prb{G=E\mid X}}\frac{\indic{A=a}}{\Prb{A=a\mid X, G=E}}\\
   &\qquad \times\prns{q-q_0}\prns{S_2, S_1, a, X}\Eb{Y-h_0\left(S_{3}, S_{2}, A, X\right)\mid S_2, S_1, A=a, X, G=O}\mid G = O\bigg] \\
   &= 0.
\end{align*}
Fourth, we have 
\begin{align*}
&\dot{\Phi}_4(\eta^*)[\eta_4-\eta_4^*]\\ 
   =& \sum_{a\in\braces{0,1}}\prns{-1}^{1-a}\bigg\{\mathbb{E}\bigg[\frac{\Prb{G=E}\Prb{G=O\mid X}}{\Prb{G=O}\Prb{G=E\mid X}}\\
   &\times \frac{\indic{A=a}}{\mathbb{P}^2\prns{A=a\mid X, G=E}}\prns{\eta_4^*-\eta_4}\Eb{h_0\prns{\Sc, \Sb, a, X} - \bar{h}_{E, 0}\prns{a, X}\mid A=a, X, G=O}\mid G=O\bigg] \\
   &+\mathbb{E}\bigg[\frac{\Prb{G=E\mid A=a}\Prb{G=O\mid X}}{\Prb{G=O\mid A=a}\Prb{G=E\mid X}}\frac{\indic{A=a}}{\mathbb{P}^2\prns{A=a\mid X, G=E}}\prns{\eta_4^*-\eta_4} \\
   &\times q_0\left(S_{2}, S_{1}, a, X\right)\Eb{Y-h_0\left(S_{3}, S_{2}, A, X\right)\mid S_2,S_1,A=a, X}\mid G= O\bigg] = 0.
\end{align*}
Following this proof for $\dot{\Phi}_4(\eta^*)[\eta_4-\eta_4^*]=0$, we can similarly show that $\dot{\Phi}_j(\eta^*)[\eta_j-\eta_j^*] = 0$ for $j=5, 6, 7$.
\end{proof}

{
\begin{proof}[Proof for \Cref{thm: DR-est-normality-X}]
We can represent the estimator $\hat\tau$ as $\hat\tau = \frac{1}{K}\sum_{k=1}^K \sum_{a\in\{0, 1\}}(-1)^{1-a}\hat\mu_k(a)$, where 
\begin{align*}
\hat\mu_k(a) 
    &= \frac{1}{n_{E, k}}\sum_{i \in \mathcal{D}_{E, k}}\indic{A_i = a}\hat\alpha_k(A_i, X_i)\prns{\hat h_k(S_{3, i}, S_{2, i}, A_i, X_i) - \hat{\bar{h}}_k(A_i, X_i)} \\
    &+ \frac{1}{n_{O, k}}\sum_{i\in \mathcal{D}_{O, k}}\hat{\bar{h}}_k(a, X_i) +  \indic{A_i = a}\hat\beta_k(A_i, X_i)\hat q_k(S_{2, i}, S_{1, i}, A_i, X_i)\prns{Y_i - \hat h_k(S_{3, i}, S_{2, i}, A_i, X_i)} \\
    &= \frac{1}{n_{E, k}}\sum_{i \in \mathcal{D}_{E, k}}\phi_2(Y_i, S_i, a, X_i; \hat\eta_k) + \frac{1}{n_{O, k}}\sum_{i \in \mathcal{D}_{O, k}}\prns{\phi_1(Y_i, S_i, a, X_i; \hat\eta_k) + \phi_3(Y_i, S_i, a, X_i; \hat\eta_k)}.
\end{align*}
We further denote 
\begin{align*}
\tilde\mu_k(a) = \frac{1}{n_{E, k}}\sum_{i \in \mathcal{D}_{E, k}}\phi_2(Y_i, S_i, a, X_i; \eta^*) + \frac{1}{n_{O, k}}\sum_{i \in \mathcal{D}_{O, k}}\prns{\phi_1(Y_i, S_i, a, X_i; \eta^*) + \phi_3(Y_i, S_i, a, X_i; \eta^*)}.
\end{align*}
We have 
\begin{align*}
\hat\mu_k(a) - \tilde\mu_k(a) =& \frac{1}{n_{E, k}}\sum_{i \in \mathcal{D}_{E, k}}\prns{\phi_2(Y_i, S_i, a, X_i; \hat\eta_k) - \phi_2(Y_i, S_i, a, X_i; \eta^*)} \\
+& \frac{1}{n_{O, k}}\sum_{i \in \mathcal{D}_{O, k}}\prns{\phi_1(Y_i, S_i, a, X_i; \hat\eta_k) - \phi_1(Y_i, S_i, a, X_i; \eta^*)} \\
+& \frac{1}{n_{O, k}}\sum_{i \in \mathcal{D}_{O, k}}\prns{\phi_3(Y_i, S_i, a, X_i; \hat\eta_k) - \phi_3(Y_i, S_i, a, X_i; \eta^*)}.
\end{align*}
By following the proof of \Cref{thm: dist-dr}, we can show that for any consistent estimator $\hat\eta_k$, 
\begin{align*}
\hat\mu_k(a) - \tilde\mu_k(a) =& \Eb{\phi_2(Y, S, a, X; \hat\eta_k) - \phi_2(Y, S, a, X; \eta^*) \mid G = E, \mathcal{D}_{O, -k}\cup \mathcal{D}_{E, -k}} \\
 +& \Eb{\phi_1(Y, S, a, X; \hat\eta_k) - \phi_1(Y, S, a, X; \eta^*) \mid G = O, \mathcal{D}_{O, -k}\cup \mathcal{D}_{E, -k}} \\
 +& \Eb{\phi_3(Y, S, a, X; \hat\eta_k) - \phi_3(Y, S, a, X; \eta^*) \mid G = O, \mathcal{D}_{O, -k}\cup \mathcal{D}_{E, -k}} + o_{\mathbb{P}}(n^{-1/2}),
\end{align*}
where $\mathcal{D}_{O, -k} = \cup_{j \ne k}\mathcal{D}_{O, j}$ and $\mathcal{D}_{E, -k} = \cup_{j \ne k}\mathcal{D}_{E, j}$. 
We can further verify that 
\begin{align}
& \Eb{\phi_2(Y, S, a, X; \hat\eta_k) - \phi_2(Y, S, a, X; \eta^*) \mid G = E, \mathcal{D}_{O, -k}\cup \mathcal{D}_{E, -k}} \nonumber \\
 +& \Eb{\phi_1(Y, S, a, X; \hat\eta_k) - \phi_1(Y, S, a, X; \eta^*) \mid G = O, \mathcal{D}_{O, -k}\cup \mathcal{D}_{E, -k}} \nonumber \\
 +& \Eb{\phi_3(Y, S, a, X; \hat\eta_k) - \phi_3(Y, S, a, X; \eta^*) \mid G = O, \mathcal{D}_{O, -k}\cup \mathcal{D}_{E, -k}}  \nonumber \\
 =& \Eb{\indic{A = a}\prns{\alpha_0(A, X) - \hat\alpha_k(A, X)}\prns{\hat{\bar{h}}_k(A, X) - {\bar{h}}_{E, 0}(A, X)} \mid G = E, \mathcal{D}_{O, -k}\cup \mathcal{D}_{E, -k}}, \label{eq: dr-bound-1} \\
 +& \mathbb{E}[\indic{A = a}\prns{\hat\alpha_k(A, X) - \alpha_0(A, X)}\nonumber\\
 &\qquad\qquad\qquad\qquad \times \prns{\hat{{h}}_k(S_3, S_2, A, X) - {{h}}_0(S_3, S_2, A, X)} \mid G = E, \mathcal{D}_{O, -k}\cup \mathcal{D}_{E, -k}], \label{eq: dr-bound-2} \\
 +& \mathbb{E}[\indic{A = a}\beta_0(A, X)\prns{\hat q_k(S_{2}, S_{1}, A, X) -  q_0(S_{2}, S_{1}, A, X)}  \nonumber \\
 &\qquad\qquad\qquad\qquad \times \prns{h_0(S_{3}, S_{2}, A, X) - \hat h_k(S_{3}, S_{2}, A, X)} \mid G = O, \mathcal{D}_{O, -k}\cup \mathcal{D}_{E, -k}], \label{eq: dr-bound-3} \\
 +& \mathbb{E}[\indic{A = a}\prns{\hat \beta_k(A, X) - \beta_0(A, X)}\hat q_k(S_{2}, S_{1}, A, X), \nonumber\\
 &\qquad\qquad\qquad\qquad \times \prns{h_0(S_{3}, S_{2}, A, X) - \hat h_k(S_{3}, S_{2}, A, X)} \mid G = O, \mathcal{D}_{O, -k}\cup \mathcal{D}_{E, -k}].  \label{eq: dr-bound-4}
\end{align}
By the Cauchy-Schwartz inequality, 
\begin{align*}
\abs{\Cref{eq: dr-bound-1}} 
    &\lesssim \|\hat\alpha_k - \alpha_0\|_{\mathcal{L}_2(\mathbb{P})}\|\hat{\bar{h}}_k - \bar{h}_{E, 0}\|_{\mathcal{L}_2(\mathbb{P})} = o_{\mathbb{P}}(n^{-1/2}), \\
\abs{\Cref{eq: dr-bound-2}} &\lesssim  \|\hat\alpha_k - \alpha_0\|_{\mathcal{L}_2(\mathbb{P})}\|T(\hat h_k - h_0)\|_{\mathcal{L}_2(\mathbb{P})} = o_{\mathbb{P}}(n^{-1/2}), \\
\abs{\Cref{eq: dr-bound-3}} &\lesssim \min\{\|T(\hat h_k - h_0)\|_{\mathcal{L}_2(\mathbb{P})}\|\hat q_k - q_0\|_{\mathcal{L}_2(\mathbb{P})}, \|T^\star(\hat q_k - q_0)\|_{\mathcal{L}_2(\mathbb{P})}\|\hat h_k - h_0\|_{\mathcal{L}_2(\mathbb{P})}\} = o_{\mathbb{P}}(n^{-1/2}), \\
\abs{\Cref{eq: dr-bound-4}} & \lesssim \|\hat\beta_k - \beta_0\|_{\mathcal{L}_2(\mathbb{P})}\|T(\hat h_k - h_0)\|_{\mathcal{L}_2(\mathbb{P})} = o_{\mathbb{P}}(n^{-1/2}).
\end{align*}
Therefore, 
\begin{align*}
\hat\mu_k(a) = \tilde\mu_k(a) + o_{\mathbb{P}}(n^{-1/2}).
\end{align*}
It follows that 
\begin{align*}
\hat\tau - \tau 
    &= \frac{1}{K}\sum_{k=1}^K \sum_{a\in\{0, 1\}}(-1)^{1-a}\hat\mu_k(a) - \tau \\
    &= \frac{1}{K}\sum_{k=1}^K \sum_{a\in\{0, 1\}}(-1)^{1-a}\tilde\mu_k(a) - \tau + o_{\mathbb{P}}(n^{-1/2}) \\
    &= \sum_{a\in\{0, 1\}}(-1)^{1-a}\bigg\{\frac{1}{n_{E}}\sum_{i \in \mathcal{D}_{E}}\phi_2(Y_i, S_i, a, X_i; \eta^*) \\
    &+ \frac{1}{n_{O}}\sum_{i \in \mathcal{D}_{O}}\prns{\phi_1(Y_i, S_i, a, X_i; \eta^*) - \mu(a)} + \phi_3(Y_i, S_i, a, X_i; \eta^*)\bigg\} + o_{\mathbb{P}}(n^{-1/2}).
    \end{align*}
    Then according to the Central Lmit Theorem, we have the asserted conclusion.
\end{proof}}

\begin{proof}[Proof for \cref{corollary: S1-pretreat}]
We can first follow the proof for \Cref{thm: identification1} to show that for any $h_0\prns{S_3, S_2, A, X}$ that satisfies \Cref{eq: bridge-U}, 
\begin{align*}
&\Eb{h_0\prns{S_3, S_2, A, X} \mid S_1, A = a, X, G = E} \\
    &\qquad\qquad = \Eb{\Eb{Y(a) \mid S_2(a), U, X, G = O} \mid S_1, A = a, X, G = E}.
\end{align*}
The rest of the proof is analogous to \Cref{corollary: covariate-exp}. 
\end{proof}

\begin{proof}[Proof for \cref{corollary: S1-S2-pretreat}]
We can first follow the proof for \Cref{thm: identification1} to show that for any $h_0\prns{S_3, S_2, A, X}$ that satisfies \Cref{eq: bridge-U}, 
\begin{align*}
&\Eb{h_0\prns{S_3, S_2, A, X} \mid S_2, S_1, A = a, X, G = E} \\
    &\qquad\qquad = \Eb{\Eb{Y(a) \mid S_2(a), U, X, G = O} \mid S_2, S_1, A = a, X, G = E}.
\end{align*}
The rest of the proof is analogous to \Cref{corollary: covariate-exp}. 
\end{proof}

\begin{proof}[Proof for \Cref{corollary: partial}]
First, note that under \Cref{assump: partial,assump: partial-2}, we can follow the proofs for \Cref{cor: implication1,cor: implication2} to show that $S_3 \perp G \mid S_2, A = a, U, X$, and $\prns{Y, S_3} \perp S_1 \mid S_2, A, U_{\diamond}, X,  G = O$.

Second, following the proof for \Cref{lemma: bridge-obs}, we can show that for any function $h_0$ that satisfies  \Cref{eq: bridge-obs}, it must also satisfy 
\begin{align*}
\Eb{Y \mid \Sb, A, U_{\diamond}, X, G = O} = \Eb{h_0\prns{\Sc, \Sb, A, X} \mid \Sb, A, U_{\diamond}, X, G = O}.
\end{align*}
Finally, we can follow the proof for \Cref{corollary: covariate-exp} to show that for any function $h_0$ that satisfies \Cref{eq: bridge-U-partial}, \Cref{eq: identification-1-X} in \Cref{corollary: covariate-exp} holds. This concludes the proof for \Cref{corollary: partial}. 
\end{proof}

\begin{proof}[Proof of \Cref{prop: experiment}]
We already have $Z_2 \perp G \mid Z_1$. Thus, we only need to verify $G \perp Z_1$. Note that
\begin{align*}
p(z_1 \mid G = 1) = & 
{p(z_1, A = 1 \mid G = 1)} + {p(z_1, A = 0 \mid G = 1)}  \\
   = & \frac{\Prb{G = 1 \mid A = 1, Z_1 = z_1}\Prb{A = 1}p(z_1)}{\Prb{G = 1}} \\
   & + \frac{\Prb{G = 1 \mid A = 0, Z_1 = z_1}\Prb{A = 0}p(z_1)}{\Prb{G = 1}} \\
   = & p(z_1)\frac{C}{\Prb{G = 1}} \propto p(z_1),
 \end{align*} 
 which proves the desired result.
\end{proof}

{
\begin{proof}[Proof of \Cref{thm:ext-valid}]
    From the definition of external validity bridge function, we have
    \[
    \frac{p(U \mid S_2, X, A = a, G = O)}{p(U \mid S_2, X, A = a, G = E)} = \E[\tilde{q}\prns{S_1, S_2, X, A} \mid S_2, X, U, A = a, G = E] \cdot \frac{p(S_2, X \mid A = a, G = E)}{p(S_2, X \mid A = a, G = O)}.
    \]
    Then
    \begin{align*}
        & p(U \mid S_2, X, A = a, G = O) \\
        & = \E[\tilde{q}\prns{S_1, S_2, X, A} \mid S_2, X, U, A = a, G = E] \cdot \frac{p(S_2, X \mid A = a, G = E)}{p(S_2, X \mid A = a, G = O)} \cdot p(U \mid S_2, X, A = a, G = E) \\
        & = \E[\tilde{q}\prns{S_1, S_2, X, A} \mid S_2, X, U, A = a, G = E] \cdot \frac{p(S_2, X \mid G = E)}{p(S_2, X \mid A = a, G = O)} \cdot p(U \mid S_2, X, G = E),
    \end{align*}
    where for the second equality we use $(S_2, X, U) \perp A \mid G = E$. 
    Then
    \begin{align}
        & p(U \mid S_2, X, G = O) \nonumber \\
        & = \sum_a p(U \mid A = a, S_2, X, G = O) \Prb{A = a \mid S_2, X, G = O} \nonumber\\ 
        & = \sum_a \Prb{A = a \mid S_2, X, G = O} \E[\tilde{q}\prns{S_1, S_2, X, A} \mid S_2, X, U, A = a, G = E] \nonumber\\
        & \qquad\qquad \cdot \frac{p(S_2, X \mid G = E)}{p(S_2, X \mid A = a, G = O)} \cdot p(U \mid S_2, X, G = E) \nonumber\\
        & = \sum_a \E[\tilde{q}\prns{S_2, X, A, Z} \mid S_2, X, U, A = a, G = E] \Prb{A = a \mid G = O} \nonumber\\
            &\qquad\qquad\qquad  \times \frac{p(S_2, X \mid G = E)}{p(S_2, X \mid G = O)}  p(U \mid S_2, X, G = E) \label{eq:ext-valid}\\
        & = \E\left[\frac{\Prb{A \mid G = O}}{\Prb{A \mid G = E}} \tilde{q}\prns{S_1, S_2, X, A} \mid X, U, G = E\right] \cdot \frac{p(S_2, X \mid G = E)}{p(S_2, X \mid G = O)} \cdot p(U \mid S_2, X, G = E),\label{eq:ext-valid-2}
    \end{align}
    where to get~\eqref{eq:ext-valid} we use that 
    \[
    \frac{\Prb{A = a \mid S_2, X, G = O}}{p(S_2, X \mid A = a, G = O)} = \frac{\Prb{A = a \mid G = O}}{p(S_2, X \mid G = O)}
    \]
    and for the last equality we use again that $A \perp \prns{S_2, U, X} \mid G = E$  so that 
    \begin{align*}
        & \sum_a \E[\tilde{q}\prns{S_1, S_2, X, A} \mid S_2, X, U, A = a, G = E] \Prb{A = a \mid G = O} \\
        & = \sum_a \E[\tilde{q}\prns{S_1, S_2, X, A} \mid S_2, X, U, A = a, G = E] \frac{\Prb{A = a \mid G = O}}{\Prb{A = a \mid G = E}}  \Prb{A = a \mid G = E} \\
        & = \sum_a \E[\tilde{q}\prns{S_1, S_2, X, A} \mid S_2, X, U, A = a, G = E] \frac{\Prb{A = a \mid G = O}}{\Prb{A = a \mid G = E}}  \Prb{A = a \mid S_2, X, U, G = E} \\
        & =  \E\left[\frac{\Prb{A \mid G = O}}{\Prb{A \mid G = E}} \tilde{q}\prns{S_1, S_2, X, A} \mid S_2, X, U, G = E\right].
    \end{align*}
    From above, we have that
    \begin{align*}
        & \E[Y(a) \mid S_2, X, G = O] = \E [\E[Y(a) \mid S_2, U, X, G = O] \mid S_2, X, G = O ]\\
        & = \E [\E[Y(a) \mid S_2, U, X, G = E] \mid S_2, X, G = O] \\
        & = \E [\E[h(S_3, S_2, X, A) \mid A = a, S_2, U, X, G = E] \mid S_2, X, G = O] \\
        & = \E \left[\E[h(S_3, S_2, X, A) \mid A = a, S_2, U, X, G = E] \frac{p(U \mid S_2, X, G = O)}{p(U \mid S_2, X, G = E)}\mid S_2, X, G = E\right]
    \end{align*}
    Where for the second equality we use that $G \perp Y(a) \mid S_2, U, X$ 
    Then from~\eqref{eq:ext-valid-2}, we further have
    \begin{align*}
        & \E[Y(a) \mid S_2, X, G = O] = \\
        & \E \left[\E[h(S_3, S_2, X, A) \mid U,S_2,  X, G = E, A = a] \E\left[\frac{p(A \mid G = O)}{p(A \mid G = E)} \tilde{q}\prns{S_1, S_2, X, A} \mid S_2, X, U, G = E\right] \mid S_2, X, G = E\right] \\
        & \qquad \cdot \frac{p(S_2, X \mid G = E)}{p(S_2, X \mid G = O)}.
    \end{align*}
    From here, and that for any function $f(S_1, S_2, X, A)$,
    \begin{align*}
        & \E \left[\E[h(S_3, S_2, X, A) \mid U, S_2, X, G = E, A = a] \E\left[f(S_1, S_2, X, A) \mid S_2, X, U, G = E\right] \mid S_2, X, G = E\right] \\
        & = \E \Bigg[\E[h(S_3, S_2, X, A) \mid U, S_2, X, G = E, A = a] \\ 
        & \qquad\quad \cdot \E\left[ \sum_{a'} f(S_1, S_2, X, a') \Prb{A = a' \mid G = E} \mid S_2, X, U, G = E\right] \mid S_2, X, G = E\Bigg] \\
        & = \E \Bigg[\E[h(S_3, S_2, X, A) \mid U, S_2, X, G = E, A = a] \\
        & \qquad\quad \cdot \E\left[ \sum_{a'} f(S_1, S_2, X, a') \Prb{A = a' \mid G = E} \mid S_2, X, U, A = a, G = E\right] \mid S_2, X, G = E, A = a\Bigg] \\
        & = \E\left[ h(S_3, S_2, X, A) \sum_{a'} f(S_1, S_2, X, a') \Prb{A = a' \mid G = E} \mid S_2, X, A = a, G = E\right],
    \end{align*}
    where for the second equality we use that $S_1 \perp A \mid S_2, X, U, G = E$ 
    and that $U \perp A \mid S_2, X, G = E$. Finally, we have that
    \begin{align*}
        & \E[Y(a) \mid S_2, X, G = O] \\
        & = \E \left[h(S_3, S_2, X, A) \sum_{a'}\frac{\Prb{A = a' \mid G = O}}{\Prb{A = a' \mid G = E}} \tilde{q}\prns{S_1, S_2, X, a'} \Prb{A = a' \mid G = E}\ \mid S_2, X, A = a, G = E\right] \\
        & \qquad \cdot \frac{p(S_2, X \mid G = E)}{p(S_2, X \mid G = O)} \\
        & = \E \left[h(S_3, S_2, X, A) \sum_{a'}\Prb{A = a' \mid G = O} \tilde{q}\prns{S_1, S_2, X, a'} \mid S_2, X, A = a, G = E\right] \cdot \frac{p(S_2, X \mid G = E)}{p(S_2, X \mid G = O)},
    \end{align*}
which proves the desired result. 
It then follows that 
\begin{align*}
    \mu(a) & = \E[\E[Y(a) \mid S_2, X, G = O] \mid G = O] = \E\left[m(S_2, a, X) \frac{p\prns{S_2, X \mid G = E}}{p\prns{S_2, X \mid G = O}} \mid G = O\right] \\
    & = \E[m(S_2, a, X) \mid G = E].
\end{align*}
Therefore, we have 
\begin{align*}
\tau = \mu(1) - \mu(0) = \E[m(S_2, 1, X) - m(S_2, 0, X) \mid G = E].
\end{align*}
\end{proof}
}

\begin{proof}[Proof for \Cref{thm: control-fun}]
{
We first define the following stochastic processes: 
\begin{align*}
\Vcal(a) &= \braces{p_{S_3(a)}(s_3 \mid S_2(a), S_1(a), A, X, G = O): s_3 \in \mathcal{S}_3}, \\
\tilde\Vcal(a) &= \braces{p_{S_3(a)}(s_3 \mid S_2(a), S_1(a), A = a, X, G = O): s_3 \in \mathcal{S}_3}, \\
\Wcal(a) &= \braces{p(u \mid S_2(a), S_1(a), A,  X, G = O): u \in \Ucal}, \\
\tilde\Wcal(a) &= \braces{p(u \mid S_2(a), S_1(a), A = a,  X, G = O): u \in \Ucal}. 
\end{align*}
To prove the desired conclusion, we only need to prove that 
\begin{align}\label{eq: proof-target}
\Eb{r(Y(a)) \mid X, G = O} = \Eb{\Eb{r(Y) \mid \Vcal, S_2, A,X, G = O} \mid A = a, X, G = E}.
\end{align}
Then the conclusion follows from the iterated law of conditional expectation. 
We will prove \Cref{eq: proof-target} above by showing that 
\begin{align}\label{eq: proof-target2}
&\Eb{\Eb{r(Y) \mid \Vcal, S_2, A,X, G = O} \mid A = a, X, G = E} \nonumber \\
=& \Eb{\Eb{r(Y(a)) \mid \Vcal(a), S_2(a), X, G = O} \mid A = a, X, G = E}.
\end{align}
Then \Cref{eq: proof-target} follows from the fact that 
\begin{align*}
&\Eb{\Eb{r(Y(a)) \mid \Vcal(a), S_2(a), X, G = O} \mid A = a, X, G = E} \\
=& \Eb{\Eb{r(Y(a)) \mid \tilde\Vcal(a), S_2(a), X, G = O} \mid A = a, X, G = E} \\
=& \Eb{\Eb{r(Y(a)) \mid \tilde\Vcal(a), S_2(a), X, G = O} \mid X, G = E} \\
=& \Eb{\Eb{r(Y(a)) \mid \tilde\Vcal(a), S_2(a), X, G = O} \mid X, G = O} \\
=& \Eb{r(Y(a)) \mid X, G = O},
\end{align*}
where the second equality follows from the fact that $(S_2(a), S_1(a)) \perp A \mid X, G = E$ and that $\tilde \Vcal(a)$ is  determined by $(S_2(a), S_1(a), X)$, the third equality follows from the fact that $(S_2(a), S_1(a)) \perp G = O \mid X$, and the last equality follows from the iterated law of conditional expectation. 
}

{
Now we focus on proving \Cref{eq: proof-target2}. For brevity, we omit $X$ in all derivations so all conditional expectations below should be understood as conditioning on $X$ implicitly. We prove \Cref{eq: proof-target2} in two steps. 
\textbf{Step I: }  we first derive the relation between $\tilde\Wcal(a)$ and $\tilde\Vcal(a)$ and the relation between $\Wcal(a)$ and $\Vcal(a)$, under the completeness condition in 
\Cref{assump: completeness} condition \ref{assump: completeness-1}. By the law of total probability, we have
\begin{align*}
&p_{S_3(a)}({s_3 \mid S_2(a) = s_2, S_1(a) = s_1, A = a, G= O}) \\
=& \int p_{S_3(a)}({s_3 \mid S_2(a) = s_2, S_1(a) = s_1, A = a, U = u, G= O})p(u \mid S_2(a) = s_2, S_1(a) = s_1, A = a, G = O) \diff u \\
=& \int p_{S_3(a)}({s_3 \mid S_2(a) = s_2, U = u, G= O})p(u \mid S_2(a) = s_2, S_1(a) = s_1, A = a, G = O) \diff u \\
=& \Phi_{s_2}\bracks{p_U(\cdot \mid S_2(a) = s_2, S_1(a) = s_1, A = a, G = O)}(s_3),
\end{align*}
where the second equality follows from the fact that $S_3(a) \perp (S_1(a), A) \mid X, U, G = O$, and $\Phi_{s_2}$ is a mapping defined as follows: for any function $g: \Ucal \mapsto \R{}$, 
\begin{align*}
\phi_{s_2}[g(u)](s_3) = \int p_{S_3(a)}({s_3 \mid S_2(a) = s_2, U = u, G= O})g(u) \diff u.
\end{align*}
Now we show that this mapping is injective. To see this, consider  any two functions $g_1: \Ucal \mapsto \R{}$ and $g_2: \Ucal \mapsto \R{}$ such that $\Phi_{s_2}[g_1](s_3) = \Phi_{s_2}[g_2](s_3)$ for all $s_3$ such that $p(s_3 \mid S_2 = s_2, A = a, G = O) > 0$. Note that  we have 
\begin{align*}
\phi_{s_2}[g(u)](s_3) 
    &= \int p_{S_3(a)}({s_3 \mid S_2(a) = s_2, U = u, G= O})g(u) \diff u \\
    &= \int p_{S_3(a)}({s_3 \mid S_2(a) = s_2, A = a, U = u, G= O})g(u) \diff u \\
    &= \int p_{S_3}({s_3 \mid S_2 = s_2, A = a, U = u, G= O})g(u) \diff u \\
    &= \int p(u \mid S_3 = s_3, S_2 = s_2, A = a, G = O)\frac{p(s_3 \mid S_2 = s_2, A = a, G = O)}{p(u \mid S_2 = s_2, A = a, G = O)}g(u)\diff u. 
\end{align*}
According to the completeness condition in 
\Cref{assump: completeness} condition \ref{assump: completeness-1}, $\phi_{s_2}[g(u)](s_3) = 0$ for all $s_3$ such that $p(s_3 \mid S_2 = s_2, A = a, G = O) > 0$ if and only if $g(u) = 0$ for all $u$ such that $p(u \mid S_2 = s_2, A = a, G = O) > 0$. This in turn implies that $\Phi_{s_2}[g_1](s_3) - \Phi_{s_2}[g_2](s_3) = \Phi_{s_2}[g_1 - g_2](s_3) = 0$ for all  $s_3$ such that $p(s_3 \mid S_2 = s_2, A = a, G = O) > 0$ if and only if $g_1(u) = g_2(u)$ for all $u$ such that $p(u \mid S_2 = s_2, A = a, G = O) > 0$. Therefore, $\phi_{s_2}$ is an injective mapping. It follows that there exists another mapping $\Psi_{s_2}$ such that 
\begin{align*}
p(u \mid s_2(a) = s_2, S_1(a) = s_1, A = a, G = O) = \Psi_{s_2}[p_{S_3(a)}({s_3 \mid S_2(a) = s_2, S_1(a) = s_1, A = a, G= O})](u).
\end{align*}
Therefore, we have $\tilde\Wcal(a) = \Psi_{S_2(a)}[\tilde\Vcal(a)]$. 
}

{
By the same token, we also have 
\begin{align*}
&p_{S_3(a)}({s_3 \mid S_2(a) = s_2, S_1(a) = s_1, A, G= O}) = \Phi_{s_2}\bracks{p_U(\cdot \mid S_2(a) = s_2, S_1(a) = s_1, A, G = O)}(s_3).
\end{align*}
We thus also have $\Wcal(a) = \Psi_{S_2(a)}[\Vcal(a)]$.}

{
\textbf{Step II: } We next prove \Cref{eq: proof-target2}. 
Note that 
\begin{align*}
\Eb{\Eb{r(Y) \mid \Vcal, S_2, A, G = O} \mid A = a, G = E} = \Eb{\Eb{r(Y(a)) \mid \Vcal(a), S_2(a), A = a, G = O} \mid A = a, G = E}.
\end{align*}
By the iterated law of conditional expectation, 
\begin{align*}
&\Eb{r(Y(a)) \mid \Vcal(a), S_2(a), A = a, G = O} \\
=& \Eb{\Eb{r(Y(a)) \mid \Vcal(a), S_2(a), S_1(a), A = a, G = O} \mid \Vcal(a), S_2(a), A = a, G = O} \\
=& \Eb{\Eb{r(Y(a)) \mid \Vcal(a), S_2(a), S_1(a), A, G = O} \mid \Vcal(a), S_2(a), A = a, G = O} \\
=& \Eb{\Eb{r(Y(a)) \mid S_2(a), S_1(a), A, G = O} \mid \Vcal(a), S_2(a), A = a, G = O} \\
=& \Eb{\Eb{r(Y(a)) \mid S_2(a), S_1(a), A = a, G = O} \mid \Vcal(a), S_2(a), A = a, G = O} \\
=& \Eb{\Eb{\Eb{r(Y(a)) \mid S_2(a), S_1(a), A = a, U, G = O} \mid S_2(a), S_1(a), A = a, G = O} \mid \Vcal(a), S_2(a), A = a, G = O} \\
=& \Eb{\Eb{\Eb{r(Y(a)) \mid S_2(a), U, G = O} \mid S_2(a), S_1(a), A = a, G = O} \mid \Vcal(a), S_2(a), A = a, G = O},
\end{align*}
where the thid equality holds because $V(a)$ is fully determined by $S_2(a), S_1(a), A$, and the last equality holds because $Y(a) \perp S_1(a) \mid S_2(a), U, X, G = O$.
}

{
Here 
\begin{align*}
&\Eb{\Eb{r(Y(a)) \mid S_2(a), U, G = O} \mid S_2(a), S_1(a), A = a, G = O} \\
=& \int \Eb{r(Y(a)) \mid S_2(a), U = u, G = O}p(u \mid S_2(a), S_1(a), A = a, G = O)\diff u \\
=& \int \Eb{r(Y(a)) \mid S_2(a), U = u, G = O}[\tilde\Wcal(a)](u)\diff u \\
=& \int \Eb{r(Y(a)) \mid S_2(a), U = u, G = O}[\Psi_{S_2(a)}[\tilde\Vcal(a)]](u)\diff u
\end{align*}
It then follows that 
\begin{align*}
&\Eb{\Eb{r(Y(a)) \mid \Vcal(a), S_2(a), A = a, G = O} \mid A = a, G = E} \\
=& \Eb{\Eb{\int \Eb{r(Y(a)) \mid S_2(a), U = u, G = O}[\Psi_{S_2(a)}[\tilde\Vcal(a)]](u)\diff u \mid \Vcal(a), S_2(a), A = a, G = O}\mid A=a, G = E} \\
=& \Eb{\Eb{\int \Eb{r(Y(a)) \mid S_2(a), U = u, G = O}[\Psi_{S_2(a)}[\Vcal(a)]](u)\diff u \mid \Vcal(a), S_2(a), A, G = O}\mid A=a, G = E} \\
=& \Eb{\Eb{\int \Eb{r(Y(a)) \mid S_2(a), U = u, G = O}[\Psi_{S_2(a)}[\Vcal(a)]](u)\diff u \mid \Vcal(a), S_2(a), G = O}\mid A=a, G = E}. 
\end{align*}
where the last equality follows from the fact that conditionally on $S_2(a)$, the inner term within $\Eb{\cdot \mid \Vcal(a), S_2(a), A = a, G = O}$ in the second equality above only depends on $V(a)$. 
}

{
Moreover, we have 
\begin{align*}
&\int \Eb{r(Y(a)) \mid S_2(a), U = u, G = O}[\Psi_{S_2(a)}[\Vcal(a)]](u)\diff u \\
=& \int \Eb{r(Y(a)) \mid S_2(a), S_1(a), A, U = u, G = O}[\Wcal(a)](u)\diff u \\
=& \int \Eb{r(Y(a)) \mid S_2(a), S_1(a), A, U = u, G = O}p(u \mid S_2(a), S_1(a), A, G = O)\diff u \\
=& \int \Eb{r(Y(a)) \mid \Vcal(a), S_2(a), S_1(a), A, U = u, G = O}p(u \mid \Vcal(a), S_2(a), S_1(a), A, G = O)\diff u \\
=&  \Eb{r(Y(a)) \mid \Vcal(a), S_2(a), S_1(a), G = O},
\end{align*}
where the first equality uses the fact that $Y(a) \perp (S_1(a), A) \mid S_2(a), U, X, G = O$, the third equality uses the fact that $\Vcal(a)$ is fully determined by $S_2(a), S_1(a), A$, and the last equality uses the iterated law of conditional expectation.
}

{
This means that 
\begin{align*}
&\Eb{\int \Eb{r(Y(a)) \mid S_2(a), U = u, G = O}[\Psi_{S_2(a)}[\Vcal(a)]](u)\diff u \mid \Vcal(a), S_2(a), G = O} \\
=&\Eb{\Eb{r(Y(a)) \mid \Vcal(a), S_2(a), S_1(a), G = O} \mid \Vcal(a), S_2(a), G = O} \\
=&\Eb{r(Y(a)) \mid \Vcal(a), S_2(a), G = O}.
\end{align*}
It follows that 
\begin{align*}
\Eb{\Eb{r(Y) \mid \Vcal, S_2, A, G = O}\mid A = a, G = E} = \Eb{\Eb{r(Y(a)) \mid \Vcal(a), S_2(a), G = O} \mid A = a, G = E}.
\end{align*}
This finishes proving \Cref{eq: proof-target2} (with $X$ being implicitly conditioned on everywhere).
} 

\end{proof}

\end{document}